\definecolor{weborange}{rgb}{.8,.3,.3}
\definecolor{webblue}{rgb}{0,0,.8}
\definecolor{internallinkcolor}{rgb}{0,.5,0}
\definecolor{externallinkcolor}{rgb}{0,0,.5}
\definecolor{DarkBlue}{rgb}{0,0,0.8}
\definecolor{DarkOrange}{rgb}{0.8,0.4,0}
\def\mylinkcolor{DarkBlue}
\newcommand{\hnote}[1]{{\textcolor{blue}{\small (Henry: #1)}}}
\newcommand{\gnote}[1]{\textcolor{magenta}{\small (Greg: #1)}}
\newcommand{\hlabel}[1]{\phantomsection\label{#1}}
\renewcommand{\hnote}[1]{}
\renewcommand{\gnote}[1]{}
\definecolor{White}{rgb}{1,1,1}
\definecolor{Black}{rgb}{0,0,0}
\definecolor{LightGray}{rgb}{.8,.8,.8}
\colorlet{ChannelColor}{LightGray}
\colorlet{ChannelTextColor}{Black}
\colorlet{ReadoutColor}{White}
\numberwithin{equation}{section}
\theoremstyle{plain}
\newtheorem{thm}{Theorem}[section]
\newtheorem{lem}[thm]{Lemma}
\newtheorem{cor}[thm]{Corollary}
\newtheorem{prp}[thm]{Proposition}
\newtheorem{clm}[thm]{Claim}
\newtheorem*{rmd*}{Reminder}
\crefname{lem}{Lemma}{Lemmas}
\crefname{thm}{Theorem}{Theorems}
\crefname{cor}{Corollary}{Corollaries}
\crefname{clm}{Claim}{Claims}
\theoremstyle{definition}
\newtheorem{dfn}[thm]{Definition}
\newtheorem{prb}[thm]{Problem}
\crefname{prb}{Problem}{Problems}
\theoremstyle{remark}
\newtheorem*{rmk*}{Remark}
\let\originalleft\left
\let\originalright\right
\renewcommand{\left}{\mathopen{}\mathclose\bgroup\originalleft}
\renewcommand{\right}{\aftergroup\egroup\originalright}
\newcommand*{\C}{\ensuremath{\mathbb{C}}}
\newcommand*{\N}{\ensuremath{\mathbb{N}}}
\newcommand*{\Z}{\ensuremath{\mathbb{Z}}}
\newcommand*{\Q}{\ensuremath{\mathbb{Q}}}
\newcommand*{\R}{\ensuremath{\mathbb{R}}}
\newcommand*{\D}[1]{\mathbb D_{#1}}
\newcommand{\microspace}{\mspace{.5mu}}
\newcommand{\ket}[1]{\ensuremath{\left\lvert\microspace #1
		\microspace\right\rangle}}
\newcommand{\bra}[1]{\ensuremath{\left\langle\microspace #1
		\microspace\right\vert}}
\newcommand{\ketbra}[2]{\ensuremath{\left\lvert\microspace #1
		\microspace\right\rangle\! \left\langle \microspace #2 \microspace \right\rvert}}
\newcommand*{\kb}[1]{\ketbra{#1}{#1}}
\newcommand*{\ip}[2]{\left\langle #1 \middle| #2 \right\rangle}
\newcommand*{\adj}[1]{#1^\dagger}
\newcommand*{\bits}{\{0,1\}}
\newcommand*{\cc}[1]{\ensuremath{\mathsf{#1}}}
\newcommand*{\conj}[1]{#1^*}
\newcommand*{\cube}[1]{\bits^{#1}}
\newcommand*{\eps}{\varepsilon}
\newcommand*{\E}{\mathbb{E}}
\newcommand*{\hw}{\mrm{hw}}
\newcommand*{\Ind}[1]{\mathbbm1\!\{#1\}}
\newcommand*{\interact}{\mathord{\leftrightarrows}}
\newcommand*{\mai}[1]{\mathit{#1}}
\newcommand*{\mrm}[1]{\mathrm{#1}}
\newcommand*{\norm}[1]{\|#1\|}
\newcommand*{\Norm}[1]{\left\|#1\right\|}
\newcommand*{\poly}{\mathrm{poly}}
\newcommand*{\pr}[1]{\mathrm{Pr}\{#1\}}
\newcommand*{\PR}[1]{\mathrm{Pr}\left\{#1\right\}}
\newcommand*{\reg}{\mathsf}
\newcommand*{\unif}{\sim}
\newcommand{\uppart}[1]{^{\left(#1\right)}}
\newcommand*{\tr}{\mathrm{tr}}
\DeclareMathOperator*{\td}{td}
\newcommand*\wt\widetilde
\newcommand*{\zs}{0\cdots 0}
\DeclarePairedDelimiter{\ceil}{\lceil}{\rceil}
\DeclarePairedDelimiter{\floor}{\lfloor}{\rfloor}
\newcommand*{\stateQIP}[1]{\ensuremath{\cc{stateQIP}\left[#1\right]}}
\newcommand{\Paren}[1]{\left(#1\right)}
\newcommand{\class}[1]{\mathsf{#1}}
\newcommand{\NP}{\class{NP}}
\newcommand{\IP}{\class{IP}}
\newcommand{\QMA}{\class{QMA}}
\newcommand{\PSPACE}{\class{PSPACE}}
\newcommand{\QIP}{\class{QIP}}
\newcommand{\statePSPACE}{\cc{statePSPACE}\xspace}
\newcommand{\unitaryPSPACE}{\cc{unitaryPSPACE}\xspace}
\newcommand{\alg}[1]{\texttt{#1}}
\newcommand{\id}{I}
\newcommand{\flag}{\mrm{flag}}
\newcommand{\work}{\mrm{work}}
\newcommand{\out}{\mrm{out}}
\crefname{algorithm}{Procedure}{Procedures}
\algnewcommand{\IIf}[1]{\State\algorithmicif\ #1\ \algorithmicthen}
\algnewcommand{\EndIIf}{\unskip\ \algorithmicend\ \algorithmicif}
\begin{document}
	
	\title{Interactive Proofs for Synthesizing \\ Quantum States and Unitaries}
	
	\author{Gregory Rosenthal\thanks{Department of Computer Science, University of Toronto. \texttt{rosenthal@cs.toronto.edu}. Supported by NSERC (PGS D).} 
		\and 
		Henry Yuen\thanks{Department of Computer Science, Columbia University. \texttt{hyuen@cs.columbia.edu}. Supported by an NSERC Discovery Grant, a Google Research Award, and AFOSR award FA9550-21-1-0040.}
	}
	\date{\vspace{-5ex}}
	\maketitle
	
	\begin{abstract}
		Whereas quantum complexity theory has traditionally been concerned with problems arising from \emph{classical} complexity theory (such as computing boolean functions), it also makes sense to study the complexity of inherently \emph{quantum} operations such as constructing quantum states or  performing unitary transformations. With this motivation, we define models of \emph{interactive proofs} for synthesizing quantum states and unitaries, where a polynomial-time quantum verifier interacts with an untrusted quantum prover, and a verifier who accepts also outputs an approximation of the target state (for the state synthesis problem) or the result of the target unitary applied to the input state (for the unitary synthesis problem); furthermore there should exist an ``honest" prover which the verifier accepts with probability 1.
		
		Our main result is a ``state synthesis'' analogue of the inclusion $\PSPACE \subseteq \IP$: any sequence of states computable by a polynomial-space quantum algorithm (which may run for exponential time) admits an interactive protocol of the form described above. Leveraging this state synthesis protocol, we also give a unitary synthesis protocol for polynomial space-computable unitaries that act nontrivially on only a polynomial-dimensional subspace. We obtain analogous results in the setting with multiple entangled provers as well.
	\end{abstract}
	
	\newpage
	
	\tableofcontents

	\section{Introduction}
\label{sec:intro}
%BEGIN_FOLD

In quantum computing and quantum information processing, there are tasks that are ``inherently quantum'', meaning that it does not even make sense for a classical computer to perform them. Such tasks include:
\begin{itemize}[leftmargin=*]
	\renewcommand\labelitemi{--}
	\item \emph{State synthesis}: given an implicit description of a quantum state, construct the state.
	\item \emph{State transformations}: given an implicit description of a quantum operation (e.g.\ a unitary), perform it on a given input state.
\end{itemize}
Many quantum protocols and algorithms are most naturally viewed as synthesizing a state, performing a state transformation, or both. For example, primitives in quantum cryptography such as quantum money~\cite{aaronson2009quantum} or quantum pseudorandom states~\cite{ji2018pseudorandom} revolve around constructing highly entangled, difficult-to-clone states. The class of algorithms known as variational quantum eigensolvers are meant to prepare grounds states of physical systems~\cite{cerezo2020variational}. A decoder for a quantum error-correcting code transforms noise-corrupted states into noise-free states~\cite{lidar2013quantum}. 

This motivates the study of the complexity of state synthesis and state transformations. The central question is the following: how difficult is it to prepare a given state or perform a given unitary transformation?\footnote{We highly recommend \cite{aaronson2016complexity} for an introduction to this nascent subject.} Unlike how search and decision problems can often be reduced to each other in classical computer science, such ``inherently quantum'' tasks cannot obviously be reduced to analogous classical decision or search problems. For example, it is unknown whether the ability to decide the local Hamiltonian problem (a problem that is complete for $\QMA$, the quantum analogue of $\NP$) in polynomial time implies the ability to efficiently \emph{construct} ground states of local Hamiltonians on a quantum computer.\footnote{In fact there is some evidence in the form of an oracle separation that efficient search-to-decision reductions for $\QMA$ do not exist~\cite{irani2021quantum}.} Part of the difficulty comes from the fact that an $n$-qubit quantum state, comprised of $2^n$ amplitudes, is an exponentially more complex object than an $n$-bit string.

We investigate \emph{interactive proofs for synthesizing states and unitaries}. In traditional models of interactive proofs (even the ones associated with quantum complexity classes such as $\cc{QIP}$ and $\cc{MIP}^*$), the goal of the verifier is to solve a decision problem with the help of an all-powerful prover. We propose a model of ``inherently quantum'' interactive proofs where the verifier's goal is to synthesize a quantum state or perform a quantum operation. The challenge is for the verifier to use the help of an untrusted prover to perform these tasks in a \emph{verifiable} way.

Theoretical computer science has been deeply influenced by interactive proofs. Many central concepts and subjects, ranging from zero-knowledge proofs~\cite{goldwasser1989knowledge} to hardness of approximation~\cite{arora1998proof}, came from their study. We believe that studying state synthesis and state transformation tasks in the interactive setting will similarly give us a powerful approach to understanding the complexity of quantum states and quantum operations.

We first discuss interactive state synthesis, and then discuss interactive unitary synthesis in \Cref{sec:intro-transforms}.
%END_FOLD
\subsection{Interactive state synthesis} \label{sec:intro-state}

Consider the following state synthesis problem: given a succinct description of a quantum circuit $C$ acting on $n$ qubits with depth up to $\exp(\poly(n))$, generate the state $\ket{\psi} = C \ket{0^n}$. That is, construct the state that results from applying the circuit $C$ to the all-zeros state. By succinct description, we mean that there is (for example) a Turing machine $M$ that on input $1 \leq j \leq \exp(\poly(n))$ written in binary, uses $\poly(n)$ space and outputs the $j$'th gate of $C$.

This state synthesis problem has a natural physics motivation. Let $H$ denote a local Hamiltonian on $n$ qubits and let $t$ be an integer that is at most $\exp(\poly(n))$. Letting $C = \exp(-iH t)$, the state $\ket{\psi}$ denotes the evolution of the all-zeros state with respect to the Hamiltonian $H$ for time $t$ under the Schr\"{o}dinger equation. An experimental physicist may want, for example, to get their hands on $\ket{\psi}$ in order to study the long-term dynamics of the Hamiltonian $H$ (e.g., perhaps the physicist wants to investigate the equilibration properties of the system after exponential time, or whether it exhibits certain kinds of chaotic behavior~\cite{swingle2018unscrambling}). 

This state synthesis problem can be solved in quantum polynomial space: a quantum computer running in exponential time but using polynomial space can generate the state $\ket{\psi}$ by simply computing the gates of $C$ and applying them one by one (and uncomputing each gate after applying it). Furthermore, this state synthesis problem is $\PSPACE$-hard; it was recently shown that $\PSPACE$-hard problems can be embedded into exponential-time, polynomial-space unitary computations (i.e.\ quantum computations without intermediate measurements)~\cite{fefferman2021eliminating,girish2021quantum}.

Since we do not expect quantum computers to be able to solve $\PSPACE$-hard problems in polynomial time, we do not expect this state synthesis problem to be solvable in polynomial time either. Here we ask whether a polynomial-time quantum computer can synthesize the state $\ket{\psi}$ with the help of an all-powerful, but untrusted, prover. We call this the \emph{Interactive State Synthesis} problem.

In a sense, we are asking whether there is a ``state complexity version'' of the $\IP = \PSPACE$ protocol. By this we mean the following: the celebrated result $\IP = \PSPACE$ of~\cite{lund1992algebraic,shamir1992ip} shows that a polynomial-time \emph{classical} verifier can verify membership in any $\PSPACE$ language by interacting with an all-powerful but untrusted prover. Whereas the protocol developed by~\cite{lund1992algebraic,shamir1992ip} (and what we'll henceforth call the ``$\IP = \PSPACE$ protocol'') is meant to verify \emph{decision problems}, it is straightforward to extend the protocol in order to solve the \emph{function} version of any $\PSPACE$ problem, where the goal is to produce the output string $s \in \{0,1\}^{\poly(n)}$ of a polynomial-space Turing machine $M$ that has run for time $t \leq \exp(\poly(n))$. To do so, the verifier can treat each output bit of $s$ as the answer to a well-defined decision problem (i.e.\ ``Is the $i$'th bit of the output of $M$, when run for time $t$, equal to $1$?''), and summon the $\IP = \PSPACE$ protocol for each of the output bits. 

In our state synthesis problem, the goal is not just to produce a string $s$ on $n$ bits but an entire \emph{quantum state} on $n$ qubits. Note that by leveraging the $\IP = \PSPACE$ protocol, a polynomial-time verifier can compute any amplitude of the target state $\ket{\psi}$ of its choice (up to a phase). This is because (as we show in \Cref{sec:tomography}) the quantity $|\alpha_x|^2 = |\ip{x}{\psi}|^2$ can be approximated in polynomial space; since $\IP = \PSPACE$ a polynomial-time verifier can interact with a prover to compute such an approximation also. However, in our setting the quantum verifier wants to certify that at the end of an interaction with an untrusted prover, a specified collection of $n$ qubits in the possession of the verifier is (close to) the coherent superposition $\ket{\psi} = \sum_x \alpha_x \ket{x}$.

This goal immediately raises a number of challenges: first, the verifier has to somehow obtain information about exponentially many amplitudes in polynomial time. Second, the verifier has to also check that the prover has not maliciously entangled itself with the $n$ target qubits at the end of the interaction; for example, the verifier should ensure that it doesn't have the first $n$ qubits of the entangled state $\sum_x \alpha_x \ket{x} \ket{\phi_x}$ where the states $\{ \ket{\phi_x} \}$ are in the possession of the prover.

Our main result is a positive solution to the Interactive State Synthesis problem. We say that a family $( \ket{\psi_n} )_{n \in \N}$ of quantum states where $\ket{\psi_n}$ has $n$ qubits is in \statePSPACE if there exists a space-uniform family of quantum circuits $(C_n)_{n \in \N}$ such that $C_n$ outputs $\ket{\psi_n}$. In turn, the circuit family $(C_n)_{n \in \N}$ is space-uniform if $C_n$ is of size at most $\exp(\poly(n))$ and uses $\poly(n)$ qubits, and if there exists a (classical) Turing machine that on input $(1^n,j)$, outputs the $j$'th gate of $C_n$ using $\poly(n)$ space.

\begin{thm}[Main theorem, informal]
	\label{thm:main-intro}
	Let $( \ket{\psi_n} )_{n \in \N}$ denote a family of quantum states in \statePSPACE. Then there exists an interactive protocol between a polynomial-time quantum verifier and an untrusted quantum prover that, on input $1^n$, constructs an approximation of $\ket{\psi_n}$. More precisely, the protocol has the following guarantees: for all $n \in \N$, when the verifier receives input $1^n$,
	\begin{itemize}[leftmargin=*]
		\item \emph{(Completeness)} There exists an ``honest'' prover that is accepted by the verifier with probability $1$, and for which the verifier outputs a density matrix $\rho$ such that
		\[
		\td(\rho,\kb{\psi_n}) \leq \exp(-\Omega(n))~.
		\]
		\item \emph{(Soundness)} For all prover strategies, 
		\[
		\Pr \left\{ \text{verifier accepts and outputs $\rho$ such that $\td(\rho,\kb{\psi_n}) \geq \frac{1}{\poly(n)}$} \right\} \leq \exp(-\Omega(n))~.
		\]
		Here, $\td(\cdot,\cdot)$ denotes trace distance. 
	\end{itemize}
\end{thm}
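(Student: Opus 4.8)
The plan is to prove a ``state-synthesis analogue of $\IP=\PSPACE$'' in two conceptual steps: first, to observe that $\ket{\psi_n}$ can be prepared coherently by a polynomial-time quantum machine with access to a (classical) $\PSPACE$ oracle, and second, to replace that oracle by an interaction with the untrusted prover.

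\emph{Step 1: coherent state preparation from a $\PSPACE$ oracle.} Using the fact (from the tomography results in \Cref{sec:tomography}) that amplitude magnitudes $|\ip{x}{\psi_n}|^2$, and more generally expectation values $\bra{\psi_n}O\ket{\psi_n}$ of efficiently describable sparse observables $O$, are computable in $\PSPACE$, I would build $\ket{\psi_n}$ one qubit at a time. For a prefix $w\in\bits^{\le n}$ put $p(w)=\sum_{x\succeq w}|\ip{x}{\psi_n}|^2$; then there is a single-qubit rotation $R_w$ whose data (comprising, say, $p(w0)/p(w)$ together with enough of the relevant amplitudes of $\ket{\psi_n}$ to fix the relative phase of the $\ket 0$- and $\ket 1$-branches) is a $\PSPACE$-computable function of $w$, such that applying $\sum_{w\in\bits^{k-1}}\kb w\otimes R_w$ to the first $k$ qubits (with the projector on the first $k-1$) advances the partially prepared state from stage $k-1$ to stage $k$, and after $n$ stages one has $\ket{\psi_n}$ (with standard handling of vanishing amplitudes, and up to an irrelevant global phase). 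Each stage is thus a controlled rotation whose $\poly(n)$-bit data is a $\PSPACE$-computable function of the $\le n$-bit control register; writing that data into an ancilla with a $\PSPACE$ oracle, applying the rotation, and uncomputing the ancilla implements the stage, so $\ket{\psi_n}\in\statePSPACE$ is prepared by a polynomial-time quantum machine with a $\PSPACE$ oracle.

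\emph{Step 2: replacing the oracle by the prover.} The verifier cannot compute this rotation data itself, but, since each individual value is a fixed $\poly(n)$-bit string computable in $\PSPACE$, it can certify any \emph{single} such value by running the $\IP=\PSPACE$ protocol amplified to soundness error $\exp(-\Omega(n))$. To obtain the data coherently across a superposition of prefixes, I would have the prover supply, as an ``offline'' resource state that it prepares entirely on its own side, a register of the form $\frac1{\sqrt{2^{k-1}}}\sum_{w\in\bits^{k-1}}\ket w\ket{\text{data for }w}$; the verifier combines this with its own work register, applies the controlled rotation, and then uncomputes/returns the ancilla. Crucially the verifier never hands the prover its protected output register; instead it polices the prover's resource states by \emph{cut-and-choose}: with some probability it aborts the coherent computation, measures the prefix register and the claimed data, and checks the latter against the $\IP=\PSPACE$-certified value for the measured prefix; otherwise it proceeds. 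Here the asymmetry between the completeness guarantee (error $\exp(-\Omega(n))$) and the soundness guarantee (only against states $\tfrac1{\poly(n)}$-far from $\ket{\psi_n}$) does the work: a resource state that would induce $\tfrac1{\poly(n)}$ error in the final state is $\tfrac1{\poly(n)}$-far in $\ell_2$ from the intended one, hence fails a single cut-and-choose check with probability $\tfrac1{\poly(n)}$, so $\poly(n)$ repetitions catch it except with probability $\exp(-\Omega(n))$, and a union bound over the $\poly(n)$ stages preserves this. Completeness is immediate: the honest prover supplies the correct resource states and the verifier runs Step~1 verbatim.

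\emph{Main obstacle.} The delicate part is making Step~2 genuinely sound. (i) The cut-and-choose check above pins down only ``magnitude'' information: a resource state corrupted purely by relative phases passes every such check yet ruins the coherent computation, so an additional mechanism is needed to certify the phase data --- e.g.\ by forcing all phases to appear as \emph{explicit classical values} in the registers the prover sends (so that they, too, become measurable and checkable against $\IP=\PSPACE$), rather than as phases of a superposition. (ii) One must prevent the prover from secretly entangling itself with the verifier's output register over the course of the $\poly(n)$ stages, which requires the ``combine and uncompute'' step to be arranged so that any residual entanglement manifests as a failure of the very same checks. (iii) One must then certify the claimed trace-distance bound on the verifier's output, combining quantitative tomography/gentle-measurement estimates with the $\exp(-\Omega(n))$ amplification of $\IP=\PSPACE$ and the union bound. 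I expect essentially all of the technical work to live in (i)--(iii); the reduction in Step~1 and the skeleton of Step~2 are routine by comparison.
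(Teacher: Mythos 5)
Your Step 1 is exactly the paper's starting point: the qubit-by-qubit decomposition via conditional probabilities and phases is Aaronson's oracle synthesis (\cref{alg:aaronson}), and the $\PSPACE$-computability of the required data is the content of \Cref{sec:tomography}. The gaps are in Step 2, and they are genuine. First, the ``offline resource state'' $2^{-(k-1)/2}\sum_w \ket{w}\ket{d_w}$ cannot be consumed in the way you describe: there is no operation on the verifier's side mapping $\bigl(\sum_x \alpha_x \ket{x}\bigr)\otimes \ket{R}$ to $\sum_x \alpha_x \ket{x}\ket{d_x}$ up to an $x$-independent residual, since extracting $d_x$ coherently requires the resource register's prefix to match the verifier's, a branch of amplitude $2^{-(k-1)/2}$; a single copy of such a ``phone book'' state cannot simulate a superposed oracle query. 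If instead the prover supplies the data already correlated with the grown state, the verifier must uncompute the data register, which forces further interaction with the prover and reopens exactly the attacks you are trying to exclude.

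Second, and more fundamentally, your cut-and-choose check is a computational-basis measurement of prefix and data, so it is blind to precisely the attacks that make this problem hard: a prover who supplies the \emph{correct} classical data but keeps a register $\ket{\phi_w}$ entangled with the prefix, or who attaches prefix-dependent phases $\gamma_w$, passes every such check with probability $1$, yet the verifier's output can be fully decohered or even orthogonal to $\ket{\psi_n}$. So the claim that a resource state causing $1/\poly(n)$ output error must fail a single check with probability $1/\poly(n)$ is false, and you yourself list (i)--(ii) as unresolved obstacles rather than solving them; making the phases ``explicit classical values'' does not help, because the malicious phases are not the recorded ones. This is where the paper's essential idea lives and where your proposal has no substitute: the verifier maintains \emph{two} copies of the intermediate state, randomly interleaves ``test'' and ``grow'' rounds (revealing the choice only after the prover has acted), and performs swap tests so that any entanglement or phase deviation creates an asymmetry between the copies that is detected with noticeable probability (\cref{alg:main}); soundness is then proved by a round-by-round telescoping argument (\cref{lem:main-sound}, \cref{clm:sound-progress}) followed by soundness amplification (\cref{lem:stateQIP-amplification}). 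Without an inherently quantum test of this kind, the skeleton of your Step 2 does not yield soundness.
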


Intuitively, a state $\ket{\psi}$ can be synthesized via an interactive protocol between a verifier and a prover if (a) an ``honest'' prover can help the verifier synthesize $\ket{\psi}$ with high probability (i.e.\ the completeness property), and (b) for all provers, if the verifier accepts with non-negligible probability, then the output state is close to $\ket{\psi}$ (i.e.\ the soundness property). 

We note that \Cref{thm:main-intro} is \emph{not} implied by the result $\cc{QIP} = \PSPACE$~\cite{jain2011qip}. This is because $\cc{QIP}$, though defined in terms of quantum verifiers, is still a class of decision problems, and the result does not say anything about the complexity of performing state synthesis. We do, however, use the $\cc{QIP} = \cc{PSPACE}$ protocol as a subroutine in our state synthesis protocol. See \cref{sec:state-synthesis-overview} for a detailed overview of the proof of \cref{thm:main-intro}. 

Our precise statement of \Cref{thm:main-intro} is formulated in terms of a \emph{state complexity class}. Unlike standard complexity classes (such as $\cc{BQP}$ and $\cc{QIP}$) which are sets of languages, a state complexity class is a set of families $(\ket{\psi_n})_n$ of quantum states.\footnote{A more general definition of state complexity classes would allow for sets of states $\{ \ket{\psi_x} \}_{x \in \cube*}$ indexed by binary strings, but for simplicity we focus on families of states indexed by $\N$.} 

We introduce the class $\cc{stateQIP}$, which consists of all state families that can be generated by interactive protocols with completeness and soundness properties like those described in \Cref{thm:main-intro}.\footnote{Actually, strictly speaking we only define $\stateQIP{c,s}$ for completeness and soundness parameters $c$ and $s$.} \Cref{thm:main-intro} can then be succinctly recast as stating that $\statePSPACE \subseteq \cc{stateQIP}$. 

We also prove a partial converse, which can be interpreted as ``$\cc{stateQIP} \subseteq \cc{stateEXP}$":

\begin{thm}[informal]
	\label{thm:converse-intro}
	For all $(\ket{\psi_n})_n \in \cc{stateQIP}$ there exists a uniform family of exponential-size quantum circuits $Q = (Q_n)_n$ such that $Q_n$ outputs an approximation of $\ket{\psi_n}$. 
\end{thm}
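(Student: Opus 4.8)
The plan is to have an exponential-time Turing machine compute (a description of) the state the verifier outputs when interacting with an approximately optimal prover, and then output a circuit that prepares that state directly. The key observation is that, although the optimal prover is computationally unbounded, the entire protocol acts on only $\poly(n)$ qubits, so every quantity associated with it is computable in exponential time.

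Fix the $\poly(n)$-time verifier $V$ for $(\ket{\psi_n})_n$, with its accept/reject flag register and its $n$-qubit output register. First I would invoke the standard structural facts from the theory of quantum interactive proofs that underlie $\QIP \subseteq \PSPACE$~\cite{jain2011qip} (and the semidefinite-programming framework behind it): an $r$-message verifier exchanging $\poly(n)$-qubit messages may be assumed to face a prover using only $\poly(n)$ private qubits, and the maximum over all such provers $P$ of $\Pr[V \text{ accepts } P]$ equals the optimum of a semidefinite program over $\exp(\poly(n))$-dimensional matrices encoding $P$'s round-by-round strategy, from whose optimal solution one reads off the induced final state $\sigma^\star$ on $V$'s flag and output registers. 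This program has $\exp(\poly(n))$-size data and $\poly(n)$ constraints, so an exponential-time machine can solve it to additive precision $\exp(-\Theta(n))$. By the completeness guarantee some prover is accepted with probability $1$, so the optimum is $1$; hence the computed solution yields a state $\sigma^\star$, achievable up to $\exp(-\Omega(n))$ error by an actual prover $P^\star$, under which $V$ accepts with probability at least $1 - \exp(-\Omega(n))$.

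Let $\rho^\star$ be the state of the output register conditioned on ``accept'' in $\sigma^\star$; a description of $\rho^\star$ is computable from $\sigma^\star$ in exponential time. Because $P^\star$ makes $V$ accept with probability $1 - \exp(-\Omega(n))$, which is \emph{not} negligible, the soundness guarantee forces $\td(\rho^\star, \kb{\psi_n}) \le 1/\poly(n)$. I then take $Q_n$ to be a standard amplitude-encoding state-preparation circuit for $\rho^\star$ --- for instance, prepare a purification of $\rho^\star$ on $\poly(n)$ qubits and discard the purifying register, or prepare the top eigenvector of $\rho^\star$, which is itself $1/\poly(n)$-close to $\ket{\psi_n}$. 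Such a circuit uses $\exp(\poly(n))$ gates and is produced from the classically computed description of $\rho^\star$ in exponential time, so $(Q_n)_n$ is a uniform family of exponential-size circuits whose outputs approximate $(\ket{\psi_n})_n$, as desired.

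I expect the bulk of the work to lie in importing the semidefinite-programming machinery carefully: (i) justifying the reduction to $\poly(n)$-qubit provers so that the program really is $\exp(\poly(n))$-dimensional; (ii) arranging the program to track the joint flag--output state rather than only the acceptance probability; and (iii) handling the usual gap between a near-feasible SDP solution and a genuinely achievable $\sigma^\star$, which matters because soundness must be applied to an actual prover whose acceptance probability we need to know is non-negligible. If one is content with an $\exp(\poly(n))$-space (equivalently, doubly-exponential-time) machine generating the circuit family, these points can be sidestepped by brute-force search over a sufficiently fine net of $\poly(n)$-qubit prover strategies, using completeness to guarantee that some net point is accepted with probability $1 - \exp(-\Omega(n))$. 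The remaining steps --- the conditioning bookkeeping relating $\sigma^\star$, $\rho^\star$, and the acceptance probability, and the compilation of the preparation circuit --- are routine.
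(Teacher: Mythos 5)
Your proposal is correct and follows essentially the same route as the paper's proof of this statement: mirror the $\QIP \subseteq \cc{EXP}$ argument by solving, in exponential time, the semidefinite program capturing the verifier--prover interaction (whose feasible points correspond to genuine prover strategies), use completeness to guarantee a high-value solution, apply the soundness condition to conclude the conditional output state is $1/\poly(n)$-close to $\ket{\psi_n}$, and compile its classical description into a uniform exponential-size preparation circuit. The only cosmetic difference is that you phrase the SDP over prover strategies with $\poly(n)$ private qubits while the paper optimizes over the verifier/message snapshot states round by round, which is the same machinery.
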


By uniform, we mean that there is an exponential-time Turing machine $M$ that on input $(1^n,j)$ outputs the $j$-th gate of the circuit $Q_n$. We note that the uniformity condition makes this theorem nontrivial: while \emph{every} quantum state on $n$ qubits has a $\exp(\poly(n))$-size circuit that synthesizes it,\footnote{To a good approximation, using the Solovay-Kitaev theorem~\cite{dawson2006solovay}.} it is not necessarily the case that for an arbitrary state family $(\ket{\psi_n})_n$ there is a single Turing machine that specifies \emph{all} of the exponential size circuits synthesizing each $\ket{\psi_n}$.

This leaves open the intriguing question of whether a full converse can be proven. In other words, is it true that if $(\ket{\psi_n})_n$ is in $\cc{stateQIP}$, then in fact $(\ket{\psi_n})_n$ is (approximately) in \statePSPACE? This would establish a full state synthesis analogue of the $\IP = \PSPACE$ theorem.

We also prove that $\cc{stateR} = \cc{stateQMIP}$, where $\cc{stateR}$ denotes the class of state sequences $(\ket{\psi_n})_{n \in \N}$ such that $\ket{\psi_n}$ is on $n$ qubits and the description of an approximation of $\ket{\psi_n}$ is computable as a function of $n$, and $\cc{stateQMIP}$ is defined similarly to $\cc{stateQIP}$ but with multiple entangled provers instead of a single prover. The proof uses the result $\cc{QMIP} = \cc{RE}$~\cite{ji2020mip}, and is otherwise similar to the proof of \cref{thm:main-intro}. The reason that $\cc{stateQMIP} \subseteq \cc{stateR}$, whereas $\cc{QMIP} \nsubseteq \cc{R}$, relates to the distinction between search and decision problems.

\subsection{Interactive unitary synthesis}
\label{sec:intro-transforms}

We now consider the following unitary synthesis problem: given a pair $(C,\ket{\phi})$ where $C$ is a succinctly-described $n$-qubit circuit with depth at most $\exp(\poly(n))$ and $\ket{\phi}$ is an $n$-qubit state, generate the state $C \ket{\phi}$. This problem is solvable in quantum polynomial space: an algorithm can simply take the input state $\ket{\phi}$ and apply the circuit $C$ to it. 

A difference between this problem and the state synthesis problem presented in \cref{sec:intro-state} is that in the latter problem, the input $C$ provides an implicit classical description of the target state $\ket{\psi} = C \ket{0^n}$.
In the unitary synthesis problem, however, the input $\ket{\phi}$ to the circuit $C$ is provided \emph{in quantum form}. Even if an algorithm were given unlimited time, it would not in general be able to compute a classical description of $C \ket{\phi}$; this is because only one copy of the state $\ket{\phi}$ is provided. For this reason, the unitary synthesis problem appears more challenging than the state synthesis problem. 

Similarly to before, we ask whether a polynomial-time verifier can accomplish the unitary synthesis task with the help of a prover. We call this the \emph{Interactive Unitary Synthesis} problem. In addition to all of the difficulties that arise with the Interactive State Synthesis problem, here we also have to contend with the fact that the one copy of $\ket{\phi}$ is a precious resource: a verifier has to take great care to ensure that a malicious prover does not corrupt the state in an undetectable way.

We are not able to give a general solution to the Interactive Unitary Synthesis problem here. However we present solutions for nontrivial cases. We denote by \unitaryPSPACE the set of space-uniform sequences $(C_n)_{n \in \N}$ of unitary quantum circuits where each $C_n$ acts on $n$ qubits. The first nontrivial case deals with families $(C_n)_{n \in \N}$ in $\unitaryPSPACE$ that have \emph{polynomial action}: this means that $C_n$ acts nontrivially only on a $\poly(n)$-dimensional subspace.

Interesting families of unitaries $(C_n)_{n \in \N}$ that have polynomial action include \emph{reflections} $C_n = \id - 2\kb{\theta_n}$ where $( \ket{\theta_n} )_{n}$ is some family of states. These unitaries act nontrivially on a one-dimensional subspace (namely, the space spanned by $\ket{\theta_n}$). Since the states $( \ket{\theta_n} )_n$ might be extremely complicated (requiring exponential time to synthesize without the help of a prover, for example), applying the unitaries $(C_n)_{n}$ can still be quite nontrivial.

\begin{thm}[Interactive synthesis of polynomial-action unitaries, informal]
	\label{thm:poly-action-intro}
	Let $( C_n )_{n \in \N}$ denote a family of unitaries in \unitaryPSPACE with polynomial action. Then there exists an interactive protocol between a polynomial-time quantum verifier and an untrusted quantum prover that, given input an $n$-qubit state $\ket{\phi}$, constructs an approximation of $\ket{\psi_n} = C_n \ket{\phi}$. More precisely, the protocol has the following guarantees: for all $n \in \N$, 
	\begin{itemize}[leftmargin=*]
		\item \emph{(Completeness)} There exists an ``honest'' prover such that for all input states $\ket{\phi}$, the verifier accepts with probability $1$ and outputs a density matrix $\rho$ such that 
		\[
		\td(\rho,\kb{\psi_n}) \leq 1/\poly(n)~.
		\]
		\item \emph{(Soundness)} For all input states $\ket{\phi}$, for all prover strategies, 
		\[
		\Pr \left\{ \text{verifier accepts and outputs $\rho$ such that $\td(\rho,\kb{\psi_n}) \ge \frac{1}{\poly(n)}$} \right\} \leq \exp(-\Omega(n))~.
		\]
	\end{itemize}
\end{thm}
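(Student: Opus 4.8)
The plan is to reduce interactive unitary synthesis for polynomial-action unitaries to the interactive state synthesis problem already solved by \Cref{thm:main-intro}. The key observation is that if $C_n = \id - 2\kb{\theta_n}$ more generally acts nontrivially only on a $\poly(n)$-dimensional subspace $S_n$, then to apply $C_n$ to an arbitrary input state $\ket{\phi}$ it suffices to (a) compute the projection of $\ket{\phi}$ onto $S_n$ and its orthogonal complement, (b) apply $C_n$'s action within $S_n$, and (c) recombine. Since $S_n$ is only polynomial-dimensional and $(C_n)_n \in \unitaryPSPACE$, a polynomial-space algorithm can produce a classical description of an orthonormal basis $\{\ket{b_n^1},\dots,\ket{b_n^{m}}\}$ of $S_n$ (with $m = \poly(n)$) together with the matrix entries of $C_n$ restricted to $S_n$, all to $\exp(-\poly(n))$ precision. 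In particular the state family consisting of the basis vectors $\ket{b_n^j}$, and related states, lies in \statePSPACE, so the verifier can obtain good approximations of them using the state synthesis protocol of \Cref{thm:main-intro}.

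The protocol I would design proceeds as follows. First, the verifier uses \Cref{thm:main-intro} (together with the polynomial-space computability of the basis) to obtain, with the help of the prover, copies of the basis states $\ket{b_n^j}$ spanning $S_n$; using several copies and the swap test / a coherent measurement it builds the projector $\Pi_{S_n}$ and the complementary projector, and it also obtains the classical matrix $M_n$ of $C_n|_{S_n}$. Second, the verifier coherently measures whether its single copy of $\ket{\phi}$ lies in $S_n$ or $S_n^\perp$: on the $S_n^\perp$ branch $C_n$ acts trivially and nothing needs to be done; on the $S_n$ branch the verifier must apply the known unitary $M_n$ to a vector expressed in the $\{\ket{b_n^j}\}$ basis. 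Since $M_n$ is a classically-described $m\times m$ unitary and the $\ket{b_n^j}$ are known approximately, this last step is an ordinary polynomial-time quantum computation (decompose $M_n$ into $\poly(n)$ elementary gates via Solovay--Kitaev, conjugate by the basis-change isometry implemented using the synthesized $\ket{b_n^j}$). Finally the two branches are recombined, yielding an approximation of $C_n\ket{\phi}$; the verifier accepts iff all invocations of the state synthesis protocol accepted.

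For soundness I would argue as follows: by the soundness of \Cref{thm:main-intro}, conditioned on acceptance, except with probability $\exp(-\Omega(n))$ every synthesized state $\ket{b_n^j}$ is within $1/\poly(n)$ trace distance of the true basis vector, and likewise the recovered classical data $M_n$ is correct (this part is classical and goes through $\cc{QIP}=\PSPACE$, so it is even $\exp(-\Omega(n))$-reliable). A hybrid argument over the $\poly(n)$ synthesized objects then shows the isometry the verifier actually implements is within $1/\poly(n)$ of the ideal basis-change, so the output is within $1/\poly(n)$ of $C_n\ket{\phi}$ in trace distance; the honest prover makes all subprotocols accept with probability $1$, giving completeness with error $1/\poly(n)$ inherited from the approximations in \Cref{thm:main-intro}. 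One subtlety is that the verifier holds only a single copy of $\ket{\phi}$, so the projection onto $S_n$ versus $S_n^\perp$ must be done coherently and only once, and the recombination must be exact up to the basis errors — but since $\Pi_{S_n}$ is built from (many copies of) prover-supplied states, a malicious prover could supply a slightly wrong projector; I handle this by noting that any projector $\wt\Pi$ close to $\Pi_{S_n}$ still yields an isometry close to $C_n$ on \emph{all} inputs, so no input-dependent attack helps.

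The main obstacle I expect is \textbf{the single-copy constraint on $\ket{\phi}$} interacting with the need to extract a classical-ish description of the $S_n$-subspace from the prover: one must ensure that the prover cannot, through the (many) basis-state subprotocols, learn anything about or entangle itself with the verifier's copy of $\ket{\phi}$, and that the coherent ``is $\ket{\phi}$ in $S_n$?'' measurement followed by recombination does not leak or damage $\ket{\phi}$ on the accepting branch. This is exactly the ``precious resource'' difficulty flagged in \Cref{sec:intro-transforms}, and the fix is to run all state-synthesis subprotocols \emph{before} touching $\ket{\phi}$, so that by the time $\ket{\phi}$ is used the verifier holds only trusted (prover-independent, up to small error) classical and quantum data; the prover is disengaged for the final local computation. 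Verifying that this sequencing is consistent with the completeness-$1$ requirement for the honest prover, and that the accumulated $1/\poly(n)$ errors from the $\poly(n)$ subprotocols can be driven below any desired inverse polynomial by amplification inside \Cref{thm:main-intro}, is the bulk of the remaining work.
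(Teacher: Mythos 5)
Your high-level strategy matches the paper's in one important respect: you reduce to the state synthesis protocol of \Cref{thm:main-intro} and arrange for \emph{all} interaction with the prover to happen before the single copy of $\ket{\phi}$ is touched, so that the final step is a local computation on trusted (up to $1/\poly(n)$ error) data. However, there is a genuine gap at the crux of your reduction: the step \qq{conjugate by the basis-change isometry implemented using the synthesized $\ket{b_n^j}$} (and likewise \qq{builds the projector $\Pi_{S_n}$} from copies of prover-supplied states) is not an ordinary polynomial-time quantum computation and is never justified. Holding polynomially many \emph{copies} of the basis states does not give you a circuit for the isometry $\sum_j \ketbra{b_n^j}{j}$, for the coherent measurement $\{\Pi_{S_n}, \id - \Pi_{S_n}\}$, or for reflections about the $\ket{b_n^j}$; indeed no such operation can be implemented exactly from finitely many copies, and implementing it approximately from samples is precisely the nontrivial primitive the theorem needs. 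The paper's proof supplies exactly this missing ingredient: it writes every polynomial-action unitary as a Hamiltonian evolution $C_n = e^{2\pi i\, t_n \rho_n}$ of a \emph{program state} $\rho_n$ with $t_n \le \poly(n)$, shows a purification of $\rho_n$ (together with the evolution time) is in \statePSPACE, synthesizes $\poly(n)$ copies via \Cref{thm:main-intro}, and then applies the sample-based Hamiltonian simulation algorithm of Lloyd--Mohseni--Rebentrost (\Cref{thm:lmr}) to $\ket{\phi}$, incurring the $O(t^2/k) = 1/\poly(n)$ completeness error. Without an analogue of this density-matrix-exponentiation step, your \qq{apply $M_n$ in the synthesized basis} instruction cannot be executed.

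A second, smaller gap is your claim that a polynomial-space algorithm can output a canonical orthonormal basis $\{\ket{b_n^j}\}$ of $S_n$ as a \statePSPACE family together with the restricted matrix $M_n$. This requires resolving individual eigenvectors of an exponentially large unitary, making a canonical choice inside (possibly degenerate or nearly degenerate) eigenspaces, and preparing each chosen vector by a space-uniform quantum circuit with exponentially small error; none of this is argued, and near-degeneracies make it delicate. The paper avoids the issue entirely by working with the canonical program state $\rho_n = \frac1{t_n}\sum_j \theta_j \kb{v_j}$, which is basis-independent and is prepared by running phase estimation on half of a maximally entangled state and postselecting (\cref{alg:Cn,alg:Dn}), with an additional \qq{stability} correction (\cref{clm:712}) to handle eigenvalues exponentially close to $1$ --- a failure mode your eigenbasis-plus-matrix-entries route would also have to confront. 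Your soundness sketch (hybrid over subprotocols, input-independence of the error in the implemented map) is fine in spirit and mirrors the paper's use of \eqref{eq:asdf-sec7}, but it only becomes meaningful once the two gaps above are filled.
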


(Again, a similar statement holds with multiple entangled provers if we write $\cc{unitaryR}$ in place of $\cc{unitaryPSPACE}$.) Extending \Cref{thm:poly-action-intro} to all unitary families in \unitaryPSPACE, i.e.\ not just polynomial-action ones, is an open problem that we leave for future work. (It would even be interesting to do so using \emph{multiple} provers, either entangled or unentangled.) This question seems closely related to an open problem posed by Aaronson and Kuperberg~\cite{aaronson2007quantum}:

\begin{prb}[Unitary Synthesis Problem]
	\label{prb:unitary-synthesis}
	Is it true that for every $n$-qubit unitary $U$ there exists a classical oracle $M$ such that a $\poly(n)$-time quantum algorithm with query access to $M$ can approximately implement $U$?
\end{prb}

On the other hand, the corresponding oracle result for state synthesis \emph{is} known: for every $n$-qubit state $\ket{\psi}$, there is a classical oracle $M$ that can be queried by a polynomial-time quantum algorithm to synthesize $\ket{\psi}$  (see \Cref{prop:oracle-state-synthesis}). As explained in \Cref{sec:state-synthesis-overview}, this is the starting point for our state synthesis protocol of \Cref{thm:main-intro}. Furthermore, by reducing to the case of state synthesis, \Cref{prb:unitary-synthesis} \emph{does} admit a solution when the unitary $U$ to be simulated only acts nontrivially on a polynomial-dimensional subspace --- and \Cref{thm:poly-action-intro} is the analogue of this fact in the setting of interactive unitary synthesis.

We also show how to generalize \Cref{thm:poly-action-intro} beyond polynomial-action unitary families, provided that the verifier also receives a succinct description of a polynomial-dimensional subspace $S$ which contains the input state $\ket{\phi}$. By succinct description, we mean a polynomial-space Turing machine $M$ that on input $j$, outputs the $j$'th gate of a quantum circuit $P$ that implements the reflection about $S$ (i.e.\ $P = \id - 2 \Pi_S$ where $\Pi_S$ is the projection onto $S$).

\begin{cor}[Interactive unitary synthesis with restricted inputs, informal]
	\label{cor:poly-input-intro}
	Let $( C_n )_{n \in \N}$ denote a family of unitaries in \unitaryPSPACE. Then there exists an interactive protocol between a polynomial-time quantum verifier and an untrusted quantum prover that, when the verifier is given a succinctly-described subspace $S$ of $n$-qubit states and a state $\ket{\phi} \in S$ as input, constructs an approximation of $\ket{\psi_n} = C_n \ket{\phi}$. More precisely, the protocol has the following guarantees: for all polynomials $p$, for all $n \in \N$,
	\begin{itemize}[leftmargin=*]
		\item \emph{(Completeness)} There exists an ``honest'' prover such that for all succinctly-described subspaces $S \subset (\C^2)^{\otimes n}$ of dimension at most $p(n)$ and input states $\ket{\phi} \in S$, the verifier accepts with probability $1$ and outputs a density matrix $\rho$ such that
		\[
		\td(\rho,\kb{\psi_n}) \leq 1/\poly(n)~.
		\]
		\item \emph{(Soundness)} For all succinctly described subspaces $S \subset (\C^2)^{\otimes n}$ of dimension at most $p(n)$ and input states $\ket{\phi} \in S$, for all prover strategies, 
		\[
		\Pr \left\{ \text{verifier accepts and outputs $\rho$ such that $\td(\rho,\kb{\psi_n}) \ge \frac{1}{\poly(n)}$} \right\} \leq \exp(-\Omega(n))~.
		\]
	\end{itemize}
\end{cor}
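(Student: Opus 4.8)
The plan is to reduce this corollary to \Cref{thm:poly-action-intro}. The key observation is that, given the succinct description of the $p(n)$-dimensional subspace $S$ via a polynomial-space Turing machine computing the reflection $P = \id - 2\Pi_S$, and given that the unitary family $(C_n)_n$ is in \unitaryPSPACE, we can build a \emph{new} unitary family $(C_n')_n$ that (a) is still in \unitaryPSPACE and (b) has polynomial action, while agreeing with $C_n$ on all of $S$. Concretely, define $C_n' := C_n \Pi_S + W_n (\id - \Pi_S)$ where $W_n$ is chosen so that $C_n'$ is unitary and acts as the identity outside the $2p(n)$-dimensional subspace $S + C_n S$; for instance one can take $C_n'$ to be the identity on $(S + C_n S)^\perp$ and, on $S + C_n S$, to be any unitary extension of the partial isometry $\ket\phi \mapsto C_n\ket\phi$ for $\ket\phi \in S$. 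Since $\dim(S + C_n S) \le 2p(n) = \poly(n)$, the unitary $C_n'$ has polynomial action. To see that $(C_n')_n \in \unitaryPSPACE$, note that a polynomial-space machine can compute (descriptions of) an orthonormal basis of $S$ from $P$, push it through $C_n$ (using the space-uniformity of $(C_n)_n$, and reusing space across the exponentially many gates as in the $\statePSPACE$ synthesis discussion), Gram--Schmidt the combined set to get a basis of $S + C_n S$, and thereby write down the gates of a circuit implementing $C_n'$ in polynomial space.

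Having produced $(C_n')_n$, we simply invoke \Cref{thm:poly-action-intro} on it: given the input state $\ket\phi$, the verifier runs the polynomial-action protocol for $C_n'$, which produces $\rho$ with $\td(\rho, \kb{\psi_n'}) \le 1/\poly(n)$ where $\ket{\psi_n'} = C_n'\ket\phi$, and with the stated exponentially small soundness error. Because $\ket\phi \in S$, we have $C_n'\ket\phi = C_n\ket\phi = \ket{\psi_n}$ exactly, so $\ket{\psi_n'} = \ket{\psi_n}$ and the completeness and soundness guarantees transfer verbatim. The honest prover for the corollary is just the honest prover from \Cref{thm:poly-action-intro} applied to $C_n'$; note that the reduction from $(C_n, S)$ to $C_n'$ is performed by the verifier (and a fortiori can be performed by an honest prover), so the protocol remains well-defined with only the verifier's classical inputs being the description of $(C_n)_n$ and of $S$, plus the polynomial bound $p$.

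The main obstacle is the bookkeeping in the claim that $(C_n')_n \in \unitaryPSPACE$: one must check that the construction of the extension $C_n'$ — choosing an orthonormal basis of $S + C_n S$, defining the action of $C_n'$ on basis vectors, and compiling this into a gate sequence — can genuinely be carried out gate-by-gate in polynomial space, given that intermediate objects such as $C_n S$ are only accessible through an exponentially long, space-uniform circuit. The resolution is the standard one: all amplitudes involved are computable to exponential precision in polynomial space (each entry of a basis vector of $C_n S$ is an amplitude of a \statePSPACE-type state, hence polynomial-space computable), and linear-algebraic primitives like Gram--Schmidt and expressing a $\poly(n)\times\poly(n)$ unitary as a product of elementary gates (via, e.g., the Solovay--Kitaev theorem for the final precision step) are all polynomial-space computable. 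A secondary, minor point to verify is that the $1/\poly(n)$ completeness error and the definition of ``succinct description'' of $S$ compose correctly with the error budget in \Cref{thm:poly-action-intro}, which is immediate since both are of the same $1/\poly(n)$ order and the reduction introduces no additional error.
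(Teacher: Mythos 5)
Your route — extend $C_n\!\restriction_S$ to a unitary $C_n'$ that has polynomial action and then invoke \cref{thm:poly-action} as a black box — is genuinely different from the paper's, but it has a real gap exactly at the step ``$(C_n')_n \in \unitaryPSPACE$, so the polynomial-action theorem applies.'' The class \unitaryPSPACE consists of space-uniform families of \emph{circuits} over the fixed gate set $\{H,\mathit{CNOT},T\}$, so what you can actually exhibit is not the ideal extension $C_n'$ (whose matrix entries involve amplitudes of $C_n$ on a basis of $S$ and Gram--Schmidt square roots, hence are only approximable over the gate set) but a compiled approximation $\widetilde C_n'$ with some error $\eps = 2^{-\poly(n)}$ in operator norm. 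That approximation generically does \emph{not} have polynomial action: it acts slightly nontrivially on the entire $(2^n - O(p(n)))$-dimensional block where $C_n'$ is the identity, so the hypothesis of \cref{thm:poly-action} fails literally. Falling back on \cref{thm:main-7p1} does not help either, because the canonical evolution time is not robust to such perturbations: an eigenvalue $1$ pushed to $e^{-i\eps}$ acquires phase $\theta \approx 1$ and contributes almost $1$ to $t=\sum_j \theta_j$, so the exponentially many identity eigenvalues of $C_n'$ can give $\widetilde C_n'$ canonical evolution time of order $2^n$. This is precisely the sensitivity the paper flags when it introduces ``stability,'' and the stability fix inside \cref{lem:helper-7p1} is proved under the hypothesis that the \emph{given exact circuit family} has small canonical evolution time, so it cannot be cited to rescue the black-box reduction; repairing this forces you to reopen the program-state construction for the restricted unitary rather than quote the theorem. (Smaller issues: $(C_n')_n$ depends on the input subspace $S$, so uniformity must be phrased relative to the describing machine for $S$; and the Gram--Schmidt-in-$\PSPACE$ step needs a conditioning argument to select well-conditioned columns of $\Pi_S$. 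These are manageable but are not addressed in your sketch.)

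The paper avoids the problem by never compiling a new unitary family. It keeps $U_n$ (exactly implementable by assumption), passes to the $(n+1)$-qubit unitary $V$ defined by $V\ket\varphi\ket0 = U_n\ket\varphi\otimes\ket1$ and $V\ket\varphi\ket1 = \adj{U_n}\ket\varphi\otimes\ket0$, which is an exact circuit built from controlled-$U_n$ and controlled-$\adj{U_n}$, and observes that the subspace $S'$ spanned by $S\otimes\ket0$ and $U_nS\otimes\ket1$ is $V$-invariant and is recognized by a polynomial-space circuit $R'$ obtained from $R$ and $U_n$. It then reruns the machinery of \cref{sec:poly-action} with the maximally entangled state replaced by its post-selected projection onto $S'$, so the resulting states are program states for $V$ restricted to $S'$ (post-selection succeeds with probability at least $2^{-n}$, which polynomial space tolerates). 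In this arrangement all approximation error is absorbed into a \statePSPACE state, where exponentially small trace-distance error is allowed and propagates benignly through the LMR step, instead of perturbing the unitary whose action dimension and canonical evolution time must be controlled — which is the point your reduction misses.
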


In other words, compared to \cref{thm:poly-action-intro}, this corollary trades the assumption on the family of unitaries for an assumption that the verifier has a classical description of a low-dimensional subspace which contains the input state.

Similarly to our state synthesis results, the formal statements of \Cref{thm:poly-action-intro,cor:poly-input-intro} are presented in terms of a \emph{unitary complexity class} $\cc{unitaryQIP}$, which consists of all families $(U_n)_n$ of unitaries that can be synthesized via an interactive protocol with completeness and soundness properties like those described in \Cref{thm:poly-action-intro,cor:poly-input-intro}.

\subsection{Open problems and future directions} \label{sec:opfd}

We list some open problems:
\begin{enumerate}[leftmargin=*]
	\item Does it hold that $\cc{stateQIP} \subseteq \cc{statePSPACE}$? In other words, if a family of states can be synthesized interactively via a $\cc{stateQIP}$ protocol, does that mean the family of states can be synthesized in polynomial space? Answering this question may be related to the next two open problems.
	
	\item The state synthesis protocol of \Cref{thm:main-intro} has \emph{state soundness error} that is $1/\poly(n)$. This means that, conditioned on the verifier accepting with probability at least $\exp(-\Omega(n))$, the output state of the verifier is guaranteed to be at least $1/\poly(n)$-close to the ideal target state $\ket{\psi_n}$. Is it possible to also get the closeness guarantee to be exponentially small, even at higher acceptance probabilities?
	
	\item A fundamental result by Watrous~\cite{watrous03pspace} is that all $\QIP$ protocols with a polynomial number of rounds can be parallelized so that they require three rounds (the prover sends the first message, the verifier sends a response, and then prover sends the final message). Does this also hold for $\cc{stateQIP}$ and $\cc{unitaryQIP}$ protocols? \label{item:rr}
	
	\item Does it hold that $\cc{unitaryPSPACE} \subseteq \cc{unitaryQIP}$? In other words, if a family of unitaries can be implemented via polynomial space quantum circuits, then is there a $\cc{unitaryQIP}$ protocol for it? Can one prove this inclusion under the complexity-theoretic assumption that $\cc{P} = \cc{PSPACE}$? While $\cc{P} = \cc{PSPACE}$ implies that states in \statePSPACE can be synthesized in quantum polynomial time (without the help of a prover!), it is not obvious that this helps with synthesizing unitaries in \unitaryPSPACE.
	
	\item A potentially easier goal than the previous item is to show other interesting family of unitaries that can be interactively synthesized. For example, for which classes of Hamiltonians $H$ does the evolution operator $e^{-iHt}$ admit an interactive synthesis protocol for exponentially long $t$? This may be related to questions about ``fast-forwarding of Hamiltonians'' that were raised by~\cite{atia2017fast}.
	
	\item Can our state synthesis results be extended to synthesize families of \emph{mixed states} like those in \statePSPACE?\footnote{It would suffice to show that any such mixed state has a purification in \statePSPACE. This does not immediately follow from the results of \cite{fefferman2021eliminating,girish2021quantum} about purifying space-bounded quantum computation, because these results seem to apply only to decision problems.)} Similarly, one can generalize the task of synthesizing unitaries to synthesizing \emph{channels}. Are there special classes of channels (going beyond unitaries) which can be interactively synthesized?
	
	\item Are there interesting cryptographic applications of interactive state synthesis or unitary synthesis? For example, is there a meaningful notion of \emph{zero-knowledge} interactive state synthesis?
	
	\item Are there interesting classes of states that can be interactively synthesized with an \emph{efficient} prover? Our state synthesis protocol of \Cref{thm:main-intro} requires the honest prover to be capable of solving $\PSPACE$-complete problems. In cryptography, however, interactive protocols require that the honest prover can run in polynomial time (perhaps after having received some advice) and be accepted with high probability. 
	
	\item What state or unitary families can be synthesized by a verifier interacting with multiple \emph{un}entangled quantum provers? As will be explained later, the obstacles appear to be soundness amplification and obtaining an appropriate analogue of the results from \cref{sec:tomography}.
\end{enumerate}

Regarding Question~\ref{item:rr}, we remark that our state and unitary synthesis protocols have polynomially many rounds. However, in \cref{sec:rrsus} we sketch constant-round variants of our state and unitary synthesis protocols, and we conjecture that these constant-round variants satisfy completeness and soundness as well.

The Interactive State and Unitary Synthesis problems can be seen as the confluence of two central research directions in quantum computing and quantum information. One is about investigating protocols for \emph{verification} of quantum computations, and the other is about exploring the \emph{complexity of quantum states and state transformations}. Over the past decade, the field has developed significant insight into how quantum computations can be verified, including the fact that a classical verifier can efficiently check, via an interactive proof, the results of polynomial-time quantum computations~\cite{mahadev2018classical}. The systematic study of the complexity of ``inherently quantum'' tasks is much more nascent, however. 

Questions about state and unitary complexity are often asked in the context of a \emph{specific} state to prepare or a \emph{specific} quantum operation to perform. There is a conspicuous lack, however, of a \emph{general} theory of quantum state and unitary complexity. We hope that our results contribute toward building such a theory.
	\section{Technical Overview}
\label{sec:overview}
\subsection{State synthesis} \label{sec:state-synthesis-overview}
\subsubsection{State synthesis with a \emph{trusted} prover}
The starting point for our proof of \Cref{thm:main-intro} is a solution to an \emph{oracle} version of the state synthesis problem, due to Aaronson~{\cite[Proposition 3.3.5]{aaronson2016complexity}}:\footnote{Aaronson's argument also appears implicitly in \cite{grover2002creating} in the case where $\ket{\psi} = \sum_{x \in \cube n} \alpha_x \ket{x}$ for nonnegative real numbers $\alpha_x$.}

\begin{prp}[\cite{aaronson2016complexity,grover2002creating}, informal]
	\label{prop:oracle-state-synthesis}
	For all $n$-qubit states $\ket{\psi}$ there exists a classical oracle $M$ and a $\poly(n)$-size quantum circuit $C$ with oracle access to $M$ such that $C^M \ket{0^n} \approx \ket{\psi}$.
\end{prp}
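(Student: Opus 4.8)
The plan is to synthesize $\ket\psi = \sum_{x\in\cube n}\alpha_x\ket x$ one qubit at a time by a sequence of register-controlled single-qubit rotations, reading all the rotation parameters off the oracle. For a binary string $p$ of length $k\le n$, let $m_p := \sum_{x\succeq p}\lvert\alpha_x\rvert^2$ be the total probability mass of basis states having $p$ as a prefix (so $m_\emptystring = 1$). The classical oracle $M$ will encode, to $\poly(n)$ bits of precision, two kinds of values: on a query encoding a prefix $p$ it returns an approximation of the angle $\theta_p\in[0,\pi/2]$ with $\cos^2\theta_p = m_{p0}/m_p$ (using the convention $\theta_p = 0$ when $m_p = 0$), and on a query encoding a full string $x\in\cube n$ it returns an approximation of the phase $\varphi_x$, where $\alpha_x = \lvert\alpha_x\rvert\, e^{i\varphi_x}$.

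The circuit $C^M$ acts on $n$ ``data'' qubits plus a $\poly(n)$-qubit ancilla workspace. Starting from $\ket{0^n}$ on the data register, for each $k = 0,1,\dots,n-1$ it (i) queries $M$ with the first $k$ data qubits as the prefix $p$, writing $\theta_p$ into the ancilla; (ii) controlled on the ancilla value, applies to data qubit $k+1$ the rotation $\ket 0 \mapsto \cos\theta_p\ket 0 + \sin\theta_p\ket 1$, which is the standard ``rotation by an angle stored in a register'' gadget built from one controlled-$R_y$ gate per bit of the stored angle, hence $\poly(n)$ size; and (iii) uncomputes the ancilla with a second query to $M$. After these $n$ rounds, a final round queries $M$ with the full contents of the data register, applies the register-controlled phase $\ket x\mapsto e^{i\varphi_x}\ket x$ (again a $\poly(n)$-gate gadget), and uncomputes. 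In total this uses $O(n)$ oracle queries and $\poly(n)$ gates.

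Correctness is a telescoping/induction argument: with exact angles, an induction on $k$ shows that after round $k$ the data register is in the state $\sum_{p'}\sqrt{m_{p'}}\,\ket{p'}$ summed over length-$k$ strings $p'$, so after round $n$ it is
\[
\sum_{x\in\cube n}\sqrt{m_x}\,\ket x = \sum_{x\in\cube n}\lvert\alpha_x\rvert\,\ket x,
\]
which the final phase round turns into $\ket\psi$. With $\delta$-precise angles, each controlled rotation (and the phase step) is within operator norm $O(\delta)$ of its ideal version, so by subadditivity of error under composition the output state is within $O(n\delta)$ of $\ket\psi$; taking $\delta = \exp(-\poly(n))$ — which still leaves $M$ with $\poly(n)$-bit outputs — gives the claimed approximation (and $\delta = 1/\poly(n)$ also suffices if one only wants ``$\approx$''). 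The one place requiring care, and the step I would spell out most carefully, is the division by $m_p$ defining $\theta_p$: when $m_p$ is tiny the conditional probability $m_{p0}/m_p$ is numerically delicate, but this does not actually hurt the output, since the amplitude the circuit places on prefix $p$ is itself only $\approx\sqrt{m_p}$, so a large error in $\theta_p$ contributes only $O(\sqrt{m_p})$ to the total state error — made rigorous by tracking, in the hybrid argument, the norm of the mis-rotated component rather than a worst-case per-gate bound (and when $m_p = 0$ exactly there is no amplitude on $p$ at all, so the convention is harmless). Everything else — the register-controlled gadgets, uncomputing query answers, and the gate/query counts — is routine.
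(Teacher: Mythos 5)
Your proposal is correct and takes essentially the same route as the paper's proof: the oracle stores the conditional prefix probabilities (in your case encoded as rotation angles) together with the final phases, and the circuit grows the state one qubit at a time via superposed oracle queries, register-controlled rotations, and uncomputation, exactly as in the paper's Procedure~1, with the same induction on intermediate states $\sum_y \sqrt{p_y}\,\ket{y}$. The only differences are cosmetic (angles versus probabilities for the oracle's output, and your more careful precision accounting, which the paper explicitly elides).
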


We can equivalently view \Cref{prop:oracle-state-synthesis} as stating that there is an efficient synthesis protocol with a \emph{trusted} prover (namely, the oracle). The proof is as follows. Write the state $\ket{\psi}$ as $\sum_{x \in \{0,1\}^n} \alpha_x \ket{x}$. For all prefixes $y \in \{0,1\}^{\leq n}$, define $p_y$ do be the probability that measuring the first $|y|$ qubits of $\ket{\psi}$ yields the string $y$. For all $1 \leq k \leq n$ define the intermediate states 
\begin{equation}
	\label{eq:intermediate-state}
	\ket{\psi^{(k)}} = \sum_{\mathclap{y \in \{0,1\}^k}} \sqrt{p_y} \,\, \ket{y}~.
\end{equation}
Note that $\ket{\psi^{(n)}}$ is a ``phase-less'' version of the original state $\ket{\psi}$, in that $\ket{\psi^{(n)}} = \sum_{x \in \{0,1\}^*} |\alpha_x| \, \ket{x}$. We now define the corresponding classical oracle $M$ as follows: for all $x \in \{0,1\}^{\leq n}$,
\begin{equation*}
	M(x) =
	\begin{cases}
		p(x0)/p(x) & \text{if } |x|<n, \\
		\alpha_x/|\alpha_x| &\text{if } |x|=n~.
	\end{cases}
\end{equation*}
(In the definition of $M$ we use the convention that if the denominator $p(x)$ or $|\alpha_x|$ is $0$, then the quotient is defined to be $1$.) In other words, on a prefix $x$ that is less than $n$ bits long, the oracle $M$ outputs the conditional probability that measuring the first $|x|+1$ qubits of $\ket{\psi}$ yields the string $x0$ (i.e.\ $x$ appended with $0$), \emph{conditioned} on the first $|x|$ qubits measuring to $x$. If $x$ is the empty string then $M(x) = p(0)$. When given an $n$-bit string $x$ as input, the oracle $M$ outputs the phase $\alpha_x/|\alpha_x|$ (which is a unit-norm complex number). 

In \Cref{alg:aaronson} we describe the circuit $C$ that, given oracle access to $M$, synthesizes the state $\ket{\psi}$. In the description, the notation $\reg{A}_{[k]}$ denotes the concatenation of registers $\reg A_1, \dotsc, \reg A_k$. Note that the oracle $M$ is queried \emph{in superposition} in each iteration. 

\begin{algorithm}
	\caption{Constructing $\ket{\psi}$ with access to a trusted oracle $M$}
	\label{alg:aaronson}
	\begin{algorithmic}[1]
		\State Initialize one-qubit registers $\reg A_1, \dotsc, \reg A_n$ each to $\ket0$, and initialize an ancilla register $\reg{D}$ to all zeros.
		\For{$k=0$ to $n$}
		\Ctrl{the state $\ket x$ of $\reg A_{[k]}$ where $x \in \cube k$,}
		\State Call the oracle $M$ on input $x$, and save the output $M(x)$ in register $\reg{D}$.
		\Ctrl{the state $\ket m$ of $\reg D$ where $m \in \cube{\poly(n)}$ encodes a number,}
		\If{$k < n$}  construct $\sqrt{m} \ket 0 + \sqrt{1-m} \ket 1$ in register $\reg A_{k+1}$.
		\ElsIf{$k=n$} apply the phase $m$.
		\EndIf
		\EndCtrl 
		\State Call the oracle $M$ to uncompute $M(x)$ in register $\reg{D}$. \hlabel{line:au}
		\EndCtrl
		\EndFor
		\State \Return $\reg A_{[n]}$.
	\end{algorithmic}
\end{algorithm}

For this discussion and for the rest of the overview, we ignore issues of precision for the sake of simplicity. By induction, it is easy to see that the state in $\reg A_{[k]}$ at the end of the $k$'th iteration is $\ket{\psi^{(k+1)}}$ if $0 \le k < n$, and $\ket{\psi}$ if $k=n$, which implies \cref{prop:oracle-state-synthesis}. It is important that the oracle output $M(x)$ is uncomputed in \Cref{line:au} because otherwise the register $\reg{A}_{[n]}$ (which is supposed to hold the state $\ket{\psi}$ at the end) will be entangled with the ancilla register $\reg{D}$.

\subsubsection{First attempt at state synthesis with an \emph{untrusted} prover} \label{sec:fassup}
\label{sec:flawed}

In this section we describe a first attempt at converting \Cref{alg:aaronson} into an interactive protocol for state synthesis, assuming that the state $\ket{\psi}$ is in $\cc{statePSPACE}$. Although this first attempt does not work, its shortcomings will illustrate the challenges that need to be overcome.

We first argue that (an approximation of) the function $M$ defined above can be computed via a classical interactive proof. Given an input $x \in \{0,1\}^{\leq n}$, a quantum polynomial-space algorithm can sample a $\mrm{Bernoulli}(p_x)$ random variable by constructing $\ket{\psi}$ (which can be synthesized in polynomial space by assumption), and performing the two-outcome projective measurement $\{\kb x, \id - \kb x\}$ on the first $|x|$ qubits. By repeating this procedure exponentially many times and averaging the results, a polynomial-space\footnote{We assume that we are working with \emph{general} quantum circuits, which allow for non-unitary operations such as tracing out registers and initializing new ones; thus the space used to construct and measure $\ket{\psi}$ can be reused.} quantum algorithm can estimate $p_x$ with high accuracy. Using that $\cc{BQPSPACE} = \PSPACE$~\cite{watrous03complexity}, it follows that $p_x$ can also be approximated by a polynomial-space (classical) Turing machine. This immediately implies that $M(x)$ can be approximated in $\PSPACE$ when $|x|<n$, and the case where $|x|=n$ can be handled similarly. Finally, by the function version of the $\IP = \PSPACE$ theorem, it follows that (an approximation of) $M$ is in the function version of $\IP$. By this, we mean that there is an interactive protocol between a verifier (who receives input $x$) and an untrusted prover, where at the end of the interaction the verifier either rejects with high probability, or outputs an approximation of $M(x)$.

This suggests the following approach to state synthesis with an untrusted prover: the verifier simulates \cref{alg:aaronson}, but instead of querying $M$, the verifier runs the $\IP = \PSPACE$ protocol \emph{in superposition} to compute the $M(x)$ values. This approach is described in \cref{alg:flawed}. 

\begin{algorithm}
	\caption{First attempt at interactive synthesis of $\ket{\psi}$ with an untrusted prover}
	\label{alg:flawed}
	\begin{algorithmic}[1]
		\State Initialize registers $\reg A_1, \dotsc, \reg A_n$ each to $\ket0$.
		\For{$k=0$ to $n$}
		\Ctrl{the state $\ket x$ of $\reg A_{[k]}$ where $x \in \cube k$,} \hlabel{line:ff1}
		\State Run the $\IP = \PSPACE$ protocol to compute $M(x)$, and write the output to a \hlabel{line:ff2}
		\Statex  \hspace{\algorithmicindent} $\ \ \ \, $ fresh register $\reg{D}_k$.
		\EndCtrl \hlabel{line:ff3}
		\State Perform the two-outcome projective measurement $\{\kb{R}, \id - \kb{R}\}$ on $\reg D_k$, and \hlabel{line:fm}
		\Statex \hspace{\algorithmicindent} reject if the outcome ``$R$" occurs.
		\Ctrl{the state $\ket m$ of $\reg D_k$ where $m \in \cube{\poly(n)}$ encodes a number,}
		\If{$k < n$} construct $\sqrt{m} \ket 0 + \sqrt{1-m} \ket 1$ in $\reg A_{k+1}$.
		\ElsIf{$k=n$} apply the phase $m$.
		\EndIf
		\EndCtrl
		\State Reverse the computations in \cref{line:ff1,line:ff2,line:ff3}. \hlabel{line:fr}
		\EndFor
		\State \Return $\reg A_{[n]}$.
	\end{algorithmic}
\end{algorithm}

We model the verifier in the $\IP = \PSPACE$ protocol for $M$ as a sequence of unitary circuits, and at the end of the interaction it outputs outputs ``R'' (for ``reject'') or a string that is supposed to encode the value $M(x)$ (if the input to the verifier was $x$). In \Cref{line:ff1,line:ff2,line:ff3} the $\IP = \PSPACE$ verifier is executed in superposition over different inputs $x \in \{0,1\}^k$. In \cref{line:fr}, we mean that if in \cref{line:ff1,line:ff2,line:ff3} the verifier applies a sequence of unitaries $U_1,\ldots,U_\ell$, then the reverse sequence $(\adj U_\ell, \dotsc, \adj U_1)$ is applied.

It is easy to show that \cref{alg:flawed} satisfies completeness: an honest prover simulates an honest prover for the $\IP = \PSPACE$ protocol in \cref{line:ff1,line:ff2,line:ff3} (i.e.\ ``the forward direction of $\IP = \PSPACE$"), and simulates the ``reverse" of that same honest prover in \cref{line:fr} (``the backward direction of $\IP = \PSPACE$"). Between the forward and backward directions of $\IP = \PSPACE$, controlled on the state $\ket x$ of $\reg A_{[k]}$ where $x \in \cube k$, the register $\reg D_k$ is in the state $\ket{M(x)}$ (since the prover is honest) and the rest of the argument is similar to that for \cref{alg:aaronson}.

However, \cref{alg:flawed} is not sound. Although the measurement in \cref{line:fm} guards against the prover performing \emph{classical} cheating attacks in the forward direction of the $\IP = \PSPACE$ protocol (by which we mean convincing the verifier on input $x$ to accept a value different from $M(x)$), there is nothing preventing the prover from mounting \emph{quantum} attacks on the protocol. More specifically, in what is supposed to be the reverse direction of $\IP = \PSPACE$, the prover need not actually reverse what it did in the forward direction. This allows the prover to entangle itself with the register $\reg A_{[n]}$, causing the verifier to output the first $n$ qubits of an entangled state of the form $\sum_{x \in \cube n} \alpha_x \ket{x} \otimes \ket{\phi_x}$ where the $\{ \ket{\phi_x} \}$ states are in the possession of the prover. If the states $\ket{\phi_x}$ are all orthogonal, for example, then the verifier's output is the density matrix $\sum_{x \in \cube n} |\alpha_x|^2 \, \kb{x}$. We call this an \emph{entanglement attack} by the prover.

Even if the prover does not keep any qubits at the end of the interaction, the prover can still introduce ``spurious phases'' into the state; for example during the forward direction of $\IP = \PSPACE$ in the final iteration, controlled on the input $x$, the prover can apply a phase $\gamma_x$, and do everything else honestly. The resulting state of the verifier will be $\sum_{x \in \cube n} \gamma_x \, \alpha_x \, \ket{x}$, which could even be orthogonal to $\ket{\psi}$. We call this a \emph{phase attack}.

\subsubsection{Defending against quantum attacks} \label{sec:daqa}

In order to defend against quantum attacks there needs to be an inherently quantum test. Our next protocol, described in \Cref{alg:good}, augments the protocol of \Cref{alg:flawed} with the simple but powerful \emph{swap test}. The swap test is a quantum algorithm that, given a bipartite state on two registers $\reg{A}, \reg{B}$ of the same size, performs the two-outcome measurement $\{S,\id - S\}$ where $S$ is the projector onto the symmetric subspace of the two registers. To get some intuition for why the swap test is useful, observe that if one of the registers (say $\reg{B}$) is in a pure state $\ket{\phi}$, but the other register is in a mixed state $\rho$ (because it may be entangled with an external system), then the swap test will reject with probability $\frac{1}{2} - \frac{1}{2} \bra{\phi} \rho \ket{\phi}$, which is large if $\rho$ is more mixed (i.e.\ more entangled with the external system). The swap test only accepts with high probability if $\rho$ is close to being the pure state $\ket{\phi}$.

\Cref{alg:good} is similar to \cref{alg:flawed}, but keeps \emph{two} copies of the state under construction and uses swap tests to guard against the entanglement and phase attacks. There are $3n$ rounds\footnote{The choice of $3$ is arbitrary; the protocol works as long as the number of rounds is $(2 + c)n$ for any constant $c > 0$.} in the protocol, and each round is randomly chosen to be either a ``test'' round or a ``grow'' round. What type the $h$-th round is supposed to be is indicated by the bit $b_h$. The ``grow counter'' $k$ specifies how many qubits have been grown so far by the protocol. The invariant that is (intended to be) maintained is that the registers $\reg{A}_{[k]}$ and $\reg{B}_{[k]}$ each contain a copy of the intermediate state $\ket{\psi\uppart{k}}$ defined in~\eqref{eq:intermediate-state}. 

\begin{algorithm}[t]
	\caption{Interactive synthesis of $\ket{\psi}$ with an untrusted prover (informal)}
	\label{alg:good}
	\begin{algorithmic}[1]
		\State Initialize registers $\reg A_1, \dotsc, \reg A_n$ and $\reg B_1, \dotsc, \reg B_n$ each to $\ket0$.
		\State $k \gets 0$.
		\For{$h=1$ to $3n$}
		\IIf{$k = n+1$} go to \Cref{line:g-end}.
		\EndIIf
		\Ctrl{the state $\ket x$ of $\reg A_{[k]}$ where $x \in \cube k$,} \hlabel{line:gf1}
		\State Run the $\IP = \PSPACE$ protocol to compute $M(x)$, and write the output to a \hlabel{line:gf2}
		\Statex \hspace{\algorithmicindent} $\ \ \ \, $ fresh register $\reg{D}_h$.
		\EndCtrl \hlabel{line:gf3}
		\State Sample a uniformly random $b_h \in \{0,1\}$.
		\If{$b_h = 1$}
		\State Perform the two-outcome projective measurement $\{\kb{R}, \id - \kb{R}\}$ on $\reg D_h$,
		\Statex \hspace{\algorithmicindent} $\ \ \ \, $ and reject if the outcome ``$R$" occurs.
		\Ctrl{the state $\ket m$ of $\reg D_h$ where $m \in \cube{\poly(n)}$ encodes a number,}
		\If{$k < n$} construct $\sqrt{m} \ket 0 + \sqrt{1-m} \ket 1$ in $\reg A_{k+1}$. \hlabel{line:grow-k-small}
		\ElsIf{$k=n$} apply the phase $m$.
		\EndIf
		\EndCtrl
		\EndIf
		\State Reverse the computations in \cref{line:gf1,line:gf2,line:gf3}. Then announce $b_h$ to the prover.  \hlabel{line:gr}
		\State $r \gets \min(k+b_h,n)$.
		\IIf{$b_h = 1$} send the register $\reg B_{[r]}$ to the prover, and receive it back. \EndIIf \hlabel{line:gift}
		\State Perform the swap test between $\reg A_{[r]}$ and $\reg B_{[r]}$, and reject if this fails. \hlabel{line:swap-ov}
		\State $k \gets k + b_h$.
		\EndFor
		\State \Return $\reg A_{[n]}$. \hlabel{line:g-end}
	\end{algorithmic}
\end{algorithm}

This protocol is complete: if the prover is honest, then at the beginning of the $h$'th iteration with $k = b_1 + \dotsb + b_{h-1} \le n$, the registers $\reg A_{[k]}$ and $\reg B_{[k]}$ each hold copies of the intermediate state $\ket{\psi\uppart{k}}$, and similarly once $k$ reaches $n+1$ the registers $\reg A_{[n]}$ and $\reg B_{[n]}$ each hold copies of $\ket{\psi}$. The honest prover behaves similarly to the honest prover from \cref{sec:flawed}: if $h$ is a grow round (i.e.\ $b_h=1$) then the verifier grows the state in register $\reg{A}_{[k+1]}$ by one qubit to get $\ket{\psi\uppart{k+1}}$ (or if $k=n$, applies the phases to the state in $\reg{A}_{[n]}$ to get $\ket{\psi}$). In a grow round, in \cref{line:gift} the prover receives the register $\reg{B}_{[k+1]}$, replaces its contents with $\ket{\psi\uppart{k+1}}$, and sends the register back (or if $k=n$, replaces $\reg{B}_{[n]}$ with $\ket{\psi}$). Thus both registers $\reg{A}_{[r]}$ and $\reg{B}_{[r]}$ with $r = \min(k+b_h,n)$ have the same pure state, and thus pass the swap test. On the other hand, if it is a test round (i.e.\ $b_h = 0$), then the $h$'th iteration has no effect -- the honest prover performs the $\IP = \PSPACE$ protocol both forwards and then reverses its actions. The swap test still passes because the registers $\reg{A}_{[r]}$ and $\reg{B}_{[r]}$ still have the same state.

By a Chernoff bound, the counter $k$ reaches $n+1$ after $3n$ rounds with high probability, which establishes completeness.

Now we give some intuition for why soundness holds. A malicious prover cannot perform either entanglement or phase attacks in the $\IP = \PSPACE$ protocols without being caught with some probability. This is because the prover does not know whether it is a test or grow round. If it is a test round (which happens with probability $1/2$), then an entanglement (or phase) attack would introduce an asymmetry between registers $\reg A_{[k]}$ and $\reg B_{[k]}$, which would be exposed by the swap test. 

Similarly, a malicious prover cannot cheat in the grow round in \Cref{line:gift} either, because if it replaces the register $\reg{B}_{[r]}$ with anything other than a copy of the state in $\reg{A}_{[r]}$, then it would again be detected by the swap test.

We obtain our final $\cc{stateQIP}$ protocol for synthesizing $\ket{\psi}$ by \emph{amplifying} the protocol given in \Cref{alg:good}. In the amplified protocol, polynomially many instances of \Cref{alg:good} are executed, and the verifier accepts only if all instances accept. In that case, the output state of a random instance is returned as output.

A simpler candidate approach (that does not work) would be for the prover to send the verifier a register $\reg A$ which allegedly holds a copy of $\ket\psi$, and then for the verifier to perform \cref{alg:flawed} with output register $\reg B$ and afterwards perform the swap test on registers $\reg A$ and $\reg B$. This would guard against some entanglement attacks, but would not guard against phase attacks, because the prover could ensure that $\reg A$ and $\reg B$ both hold copies of the same state $\sum_x \gamma_x \alpha_x \ket x$ for arbitrary phases $(\gamma_x)_x$.

\subsection{Unitary synthesis}
We now discus the Interactive Unitary Synthesis problem in the special case of unitary families with polynomial action. The proof of \Cref{thm:poly-action-intro} proceeds via reduction to the Interactive State Synthesis problem. We observe that every unitary $U$ that acts nontrivially on a subspace of dimension $d$ can be expressed as the \emph{Hamiltonian evolution} $e^{i \rho t}$ of a mixed state $\rho$ for time $t \le O(d)$. Furthermore, using the Hamiltonian simulation algorithm of Lloyd, Mohseni and Rebentrost~\cite{lloyd2014quantum}, the evolution operator $e^{i \rho t}$ can be efficiently implemented using $\poly(t)$ copies of the state $\rho$. We call the states $\rho$ \emph{program states}, as they represent the ``quantum program'' $U$ in quantum state form. 

We then show that a family $(C_n)_{n \in \N} \in \unitaryPSPACE$ of circuits with polynomial action induces a family of states $(\ket{\rho_n})_{n \in \N} \in \statePSPACE$, where $\ket{\rho_n}$ has $\poly(n)$ qubits\footnote{This is a slight abuse of notation, since the definition of \statePSPACE requires that the $n$'th state in each sequence have exactly $n$ qubits.} and the reduced state on the first $n$ qubits is the program state $\rho_n$ associated with the circuit $C_n$ (i.e.\ $\ket{\rho_n}$ is a \emph{purification} of $\rho_n$). Furthermore, the time $t_n$ associated with $\rho_n$ can be computed in space $\poly(n)$ as well.

Thus, to implement the circuit $C_n$ on an input state $\ket{\phi}$, the verifier simply needs to run the state synthesis protocol of \Cref{thm:main-intro} for the state family $(\ket{\rho_n})_n$ to synthesize $\poly(n)$ copies of $\ket{\rho_n}$ and compute $t_n$, and then use the Hamiltonian simulation algorithm of~\cite{lloyd2014quantum} to approximately implement $e^{i \rho_n t_n}$ on $\ket{\phi}$.

\subsection{Candidate constant-round protocols for state and unitary synthesis} \label{sec:rrsus}

As promised in \cref{sec:opfd}, here we sketch constant-round variants of our state and unitary synthesis protocols, which we conjecture to satisfy completeness and soundness as well. In fact, a constant-round unitary synthesis protocol follows immediately from a constant-round state synthesis protocol, because all interaction between the verifier and prover in our unitary synthesis protocol occurs during the reduction to state synthesis. Thus, all that remains is to sketch a constant-round state synthesis protocol.

Recall the intermediate states $\ket{\psi \uppart 1}, \dotsc, \ket{\psi \uppart n}$ defined in \cref{sec:state-synthesis-overview}. First the prover sends registers $\reg R_1, \dotsc, \reg R_n, \reg S_1, \dotsc, \reg S_n, \reg S_{n+1}$ to the verifier, where each register $\reg R_j$ or $\reg S_j$ is on $j$ qubits, except for $\reg S_{n+1}$ which is on $n$ qubits. If the prover is honest, then each register $\reg R_j$ or $\reg S_j$ holds a copy of $\ket{\psi \uppart j}$, except for $\reg S_{n+1}$ which holds a copy of $\ket\psi$.

Then the verifier proceeds as follows: Sample $k \in [n]$ uniformly at random, and perform the swap test on $\reg R_k$ and $\reg S_k$. If the swap test fails then reject, otherwise perform a single iteration of the for loop from \cref{alg:good} with this parameter $k$, with the following two modifications: First, substitute $\reg R_k$ for $\reg A_{[k]}$ on \cref{line:gf1}. Second, in place of \cref{line:gift,line:swap-ov}, if this is a ``test round" ($b_h = 0$) then perform the swap test on $\reg R_k$ and $\reg S_k$, and if this is a ``grow round" ($b_h = 1$) then perform the swap test on $\reg R_k \reg A_{k+1}$ and $\reg S_{k+1}$, where $\reg R_k \reg A_{k+1}$ denotes the concatenation of registers $\reg R_k$ and $\reg A_{k+1}$. Again, reject if this swap test fails.

If the verifier has not yet rejected, then the verifier accepts and outputs $\reg S_{n+1}$. Finally, our soundness amplification procedure can be parallelized in a natural way as well.

	\section{Preliminaries}
\label{sec:prelims}
%BEGIN_FOLD
Let $\cube{\le n} = \bigcup_{0 \le k \le n} \cube k$ and $\cube{<n} \bigcup_{0 \le k < n} \cube k$ (where $\cube0$ contains the empty string), and let $\cube* = \cube{<\infty}$. For a string $x = (x_1, \dotsc, x_n) \in \cube n$, let $|x| = n$ denote its length, let $\hw(x) = \sum_j x_j$ denote its Hamming weight, let $x_{\le k} = (x_1, \dotsc, x_k)$ (for $0 \le k \le n$), and let $x_{<k} = (x_1, \dotsc, x_{k-1})$ (for $1 \le k \le n+1$). For strings $x,y \in \cube*$ let $xy$ denote their concatenation. For a finite set $S$, we write $x \unif S$ to denote that $x$ is sampled uniformly at random from $S$.

When we refer to polynomials, we mean time-constructible functions $p : \N \to \N$ (with inputs represented in unary) for which there exists a constant $c$ such that for all $n \in \N$ it holds that $p(n) \le cn^c$. For polynomials $p,q$ we write $p \leq q$ if $p(n) \leq q(n)$ for all $n \in \N$.

Throughout this paper, our algorithms represent real numbers using \emph{dyadic rationals}. For $m \in \N$ let
\begin{equation*}
	\D{m} = \{k2^{-m} : k \in \N, 0 \le k < 2^m\} \subset \Q \cap [0,1)~.
\end{equation*}
A number $k2^{-m} \in \D{m}$ can be encoded as a string in $\cube m$ via the standard binary representation of $k$. We will use the fact that for all $x \in [0,1]$, there exists $r \in \D{m}$ such that $|x - r| \le 2^{-m}$. Also for $m \in \N$ let
\newcommand*{\U}[1]{\mathbb U_{#1}}
\begin{equation*}
	\U m = \{\exp(2\pi i r) : r \in \D m\},
\end{equation*}
i.e.\ $\U m$ is a discretization of the set of complex numbers with magnitude 1. A number $\exp(2\pi i r) \in \U m$ can be encoded as a string in $\cube m$ via the aforementioned encoding of $r$.

We use the fact that for $a,b \in \R$,
\begin{equation} \label{eq:integral}
	\left| e^{ib} - e^{ia} \right|
	\le |b-a|,
\end{equation}
which holds because if $a \le b$ then by the triangle inequality
\begin{equation*}
	\left| e^{ib} - e^{ia} \right|
	= \left| \int_a^b i e^{ix} dx \right|
	\le \int_a^b \left| i e^{ix} \right| dx
	= b-a,
\end{equation*}
and the case $b \ge a$ is symmetric. We also use the following fact:

\begin{lem} \label{lem:4p}
	If $a,b,c$ are real numbers satisfying
	\begin{align*}
		&0 \le a \le 1,
		&b \ge 0,
		&&c \ge a-b,
	\end{align*}
	then $c^4 \ge a^4 - 4b$.
\end{lem}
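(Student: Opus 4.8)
The plan is a two-case analysis on the sign of $a-b$. The reason a case split is unavoidable is that $t \mapsto t^4$ is not monotone on all of $\R$, so the hypothesis $c \ge a-b$ by itself controls $c^4$ only when $a-b$ happens to be nonnegative; when $a-b < 0$ one instead falls back on the trivial bound $c^4 \ge 0$ and argues that the right-hand side $a^4 - 4b$ is already nonpositive there.

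First I would dispose of the case $b \ge a$, using only $c^4 \ge 0$. Since $0 \le a \le 1$ we have $a^4 \le a$, and since $b \ge a \ge 0$ we have $a \le b \le 4b$; chaining these gives $a^4 - 4b \le 0 \le c^4$, which is the claim.

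For the remaining case $b < a$ we have $a-b > 0$, hence $c \ge a-b > 0$, and therefore $c^4 \ge (a-b)^4$ because $t \mapsto t^4$ is increasing on $[0,\infty)$. It then suffices to verify the polynomial inequality $(a-b)^4 \ge a^4 - 4b$. Expanding $(a-b)^4$ and regrouping, the difference can be written as
\[
(a-b)^4 - (a^4 - 4b) = b\bigl( 4(1-a^3) + b\,( (2a-b)^2 + 2a^2 ) \bigr),
\]
where I used the identity $6a^2 - 4ab + b^2 = (2a-b)^2 + 2a^2$ for the quadratic in $b$. Every factor on the right is nonnegative: $b \ge 0$ by hypothesis, $1-a^3 \ge 0$ since $a \le 1$, and $(2a-b)^2 + 2a^2 \ge 0$ trivially. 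Thus $c^4 \ge (a-b)^4 \ge a^4 - 4b$, and combining the two cases completes the proof.

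There is no serious obstacle here beyond bookkeeping. The only mildly non-obvious move is finding the grouping of terms that displays the difference as a sum of manifestly nonnegative pieces — equivalently, noticing that the linear-in-$b$ term $+4b$ coming from $-(a^4-4b)$ exactly offsets the potentially negative $-4a^3 b$ produced by expanding $(a-b)^4$, after which what remains is a nonnegative combination in $a$ and $b$. The other point requiring a little care is isolating the regime $a-b < 0$ so that the non-monotonicity of the fourth power is never invoked where it would fail.
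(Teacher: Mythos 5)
Your proof is correct and follows essentially the same route as the paper: a case split on the sign of $a-b$, the trivial bound $c^4 \ge 0 \ge a^4-4b$ when $b \ge a$, and $c^4 \ge (a-b)^4 \ge a^4-4b$ otherwise. The only cosmetic difference is that the paper verifies $(a-b)^4 \ge a^4-4b$ via the Bernoulli-style bound $a^4(1-b/a)^4 \ge a^4(1-4b/a) = a^4 - 4ba^3 \ge a^4-4b$, whereas you expand the quartic and exhibit the difference as a manifestly nonnegative expression.
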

\begin{proof}
	If $a \ge b$ then
	\begin{equation*}
		c^4 \ge (a-b)^4 = a^4 \Paren{1 - \frac{b}{a}}^4 \ge a^4 \Paren{1 - 4\frac{b}{a}} = a^4 - 4ba^3 \ge a^4 - 4b~.
	\end{equation*}
	Alternatively, if $a \le b$ then
	\begin{equation*}
		c^4 \ge 0 \ge a-b \ge a^4 - 4b~.
	\end{equation*}
	Either way the claim holds.
\end{proof}

%END_FOLD
\paragraph{Turing Machines}

In this paper all Turing machines have an input tape, a work tape, and an output tape, where the output tape head is not allowed to move left. We say that a Turing machine on an input $x$ uses space $s$ if the number of non-blank squares on its work tape at any point in the computation is at most $s$ (the input and output tapes do not count toward its space usage). We say that $M$ is a polynomial-space Turing machine if there exists a polynomial $p$ such that on every input $x \in \cube*$, the machine $M$ uses at most $p(|x|)$ space.

\paragraph{Quantum Information Theory}
A \emph{register} $\reg{R}$ is a named finite-dimensional complex Hilbert space. If $\reg{A}, \reg{B}, \reg{C}$ are registers, for example, then the concatenation $\reg{A} \reg{B} \reg{C}$ denotes the tensor product of the associated Hilbert spaces. For a linear transformation $L$ and register $\reg R$, we sometimes write $L_{\reg R}$ to indicate that $L$ acts on $\reg R$, and similarly we sometimes write $\rho_{\reg R}$ to indicate that a state $\rho$ is in the register $\reg R$. We write $\tr(\cdot)$ to denote trace, and $\tr_{\reg R}(\cdot)$ to denote the partial trace over a register $\reg R$.

For a pure state $\ket\varphi$, we write $\varphi$ to denote the density matrix $\kb\varphi$. Let $\id$ denote the identity transformation, and let $\ket+, \ket-$ denote the Hadamard basis states. Let $\norm\cdot$ denote the vector 2-norm, let $\norm\cdot_1$ denote the vector 1-norm, and let $\norm\cdot_{\mrm{op}}$ denote the operator 2-norm (i.e.\ the maximum singular value of the matrix).

Let $\td(\rho,\sigma)$ denote the trace distance between two density matrices $\rho,\sigma$, i.e.\ $\td(\rho, \sigma) = \max_{0 \le P \le \id} \tr((\rho - \sigma)P)$. Trace distance is symmetric (i.e., $\td(\rho,\sigma) = \td(\sigma,\rho)$), convex (i.e., $\td(\alpha_1 \rho_1 + \alpha_2 \rho_2, \sigma) \leq \alpha_1 \td(\rho_1, \sigma) + \alpha_2 \td(\rho_2, \sigma)$ where $\alpha_1,\alpha_2$ are probabilities summing to $1$), and satisfies the triangle inequality (i.e., $\td(\rho,\sigma) \leq \td(\rho,\tau) + \td(\tau,\sigma)$). For a pure state $\ket\varphi$ and mixed state $\sigma$, it holds that $\td(\varphi, \sigma) \le \sqrt{1 - \bra\varphi \sigma \ket\varphi}$~\cite[Chapter 9]{nielsen2000quantum}, and therefore
\begin{equation} \label{eq:td-fid}
	\td(\varphi, \sigma) \le \sqrt{2\Paren{1 - \sqrt{\bra\varphi \sigma \ket\varphi}}}~.
\end{equation}
As a corollary, if $\sigma = \kb{\sigma}$ is a pure state, then
\begin{equation} \label{eq:td-fid2}
	\td(\varphi, \sigma) \leq \norm{\ket{\varphi} - \ket{\sigma}}~.
\end{equation}

\paragraph{Families of Quantum Circuits and States}

For convenience we assume that all quantum circuits use gates from the universal gate set $\{ H, \mathit{CNOT}, T \}$~\cite[Chapter 4]{nielsen2000quantum} (although our results hold for any universal gate set consisting of gates with algebraic entries). A \emph{unitary quantum circuit} is one that consists only of gates from this gate set. A \emph{general quantum circuit} is a quantum circuit that can additionally have non-unitary gates that (a) introduce new qubits initialized in the zero state, (b) trace them out, or (c) measure them in the standard basis. We say that a general quantum circuit uses space $s$ if the total number of qubits involved at any time step of the computation is at most $s$. The description of a general quantum circuit is a sequence of gates (unitary or non-unitary) along with a specification of which qubits they act on.

\begin{dfn}[Polynomial size and space circuit families]
	We say that $(C_n)_{n \in \N}$ is a family of \emph{polynomial-size general quantum circuits} if there exists a polynomial $p$ such that $C_n$ has size at most $p(n)$. We say that $(C_n)_{n \in \N}$ is a family of \emph{polynomial-space general quantum circuits} if there exists a polynomial $p$ such that $C_n$ uses at most $p(n)$ space.
\end{dfn}

\begin{dfn}[Uniform circuit families]
	We say that a family $(C_n)_{n \in \N}$ of polynomial-size general quantum circuits is \emph{time-uniform} (or simply \emph{uniform}) if there exists a polynomial-time Turing machine that on input $1^n$ outputs the description of $C_n$. Similarly, we say that a family $(C_n)_{n \in \N}$ of polynomial-space general quantum circuits is \emph{space-uniform} if there exists a polynomial-space Turing machine that on input $1^n$ outputs the description of $C_n$.
\end{dfn}

We define the following ``state complexity class" and ``unitary complexity class":

\begin{dfn}[\statePSPACE]
	\statePSPACE is the class of all sequences $(\ket{\psi_n})_{n \in \N}$ such that each $\ket{\psi_n}$ is a state on $n$ qubits, and for every polynomial $q$ there exists a space-uniform family of general quantum circuits $(C_n)_{n \in \N}$ such that for all $n \in \N$, the circuit $C_n$ takes no inputs and $C_n$ outputs a density matrix $\rho$ such that $\td(\rho, \psi_n) \leq \exp(-q(n))$.
\end{dfn}

\begin{dfn}[\unitaryPSPACE]
	\unitaryPSPACE is the class of all space-uniform sequences $(U_n)_{n \in \N}$ of unitary quantum circuits such that each $U_n$ acts on $n$ qubits.
\end{dfn}

The error term $\exp(-q(n))$ in the definition of \statePSPACE is necessary for our reduction from unitary synthesis to state synthesis.

	\section{Quantum Interactive Protocols and Related Complexity Classes}
\label{sec:protocols}
\subsection{Quantum interactive protocols}

Since in quantum computing the standard model of computation is the quantum circuit model (rather than quantum Turing machines), we model the verifier in a quantum interactive protocol as a sequence of \emph{verifier circuits}, one for each input length. A verifier circuit is itself a tuple of quantum circuits that correspond to the operations performed by the verifier in each round of the protocol.

More formally, a \emph{$k$-round quantum verifier circuit} $C = (C_j)_{j \in [k]}$ is a tuple of general quantum circuits that each act on a pair of registers $(\reg{V},\reg{M})$. The register $\reg{V}$ is further divided into disjoint sub-registers $(\reg{V}_\work, \reg{V}_\flag, \reg{V}_\out)$. The register $\reg V_\work$ is the verifier circuit's ``workspace", the register $\reg V_\flag$ is a single qubit indicating whether the verifier accepts or rejects, and the register $\reg V_\out$ holds the verifier's output (if applicable). The register $\reg{M}$ is the message register. The size of a verifier circuit $C$ is the sum of the circuit sizes of the $C_j$'s.

A \emph{quantum prover} $P$ for a verifier circuit $C$ is a unitary that acts on $\reg{M}$ as well as a disjoint register $\reg{P}$. 

Let $x \in \{0,1\}^*$ denote a string whose length is at most the number of qubits in $\reg{V}_\work$. We write $C(x) \interact P$ to denote the interaction between the verifier circuit $C$ and the prover $P$ on input $x$, which is defined according to the following process. The initial state of the system is $\ket{\phi_0} = \ket{x,\zs}_{\reg V_\work} \ket\zs_{\reg{V}_\flag \reg{V}_\out \reg{M} \reg{P}}$. Inductively define $\ket{\phi_i} = P \ket{\phi_{i-1}}$ for odd $i \leq 2k$, and $\ket{\phi_i} = C_{i/2} \ket{\phi_{i-1}}$ for even $i \leq 2k$. We say that $C(x) \interact P$ accepts (resp. rejects) if measuring the register $\reg{V}_\flag$ in the standard basis yields the outcome $1$ (resp. $0$). We say that the \emph{output of $C(x) \interact P$ conditioned on accepting} is the density matrix
\[
\frac{\tr_{\reg{V} \reg{M} \reg{P} \setminus \reg{V}_\out} \left( \ketbra{1}{1}_{\reg{V}_\flag} \cdot \phi_{2k} \right)}{\tr \left( \ketbra{1}{1}_{\reg{V}_\flag} \cdot \phi_{2k} \right)}~;
\]
in other words, it is the reduced density matrix of $\ket{\phi_{2k}}$ on register $\reg{V}_\out$, conditioned on $C(x) \interact P$ accepting. (If the probability of accepting is $0$, then we leave the output undefined.) 

A \emph{quantum verifier} $V = (V_n)_{n \in \N}$ is a uniform sequence of polynomial-size and polynomial-round quantum verifier circuits.

\subsection{The class \texorpdfstring{$\cc{QIP}$}{QIP}}

The following is the standard quantum analogue of the complexity class $\IP$:

\begin{dfn}[$\QIP$]
	Let $L \subseteq \{0,1\}^*$ be a language, let $s : \N \to \R_{\geq 0}$ be a function, and let $V = (V_n)_{n \in \N}$ be a quantum verifier. Then $V$ is a \emph{$\QIP[s]$ verifier for $L$} if and only if
	\begin{itemize}[leftmargin=*]
		\renewcommand\labelitemi{--}
		\item \emph{Completeness:} For all $x \in L$, there exists a quantum prover $P$ (called an \emph{honest prover}) such that
		\begin{equation*}
			\pr {V_{|x|}(x)  \interact P \text{ accepts}} = 1.
		\end{equation*}
		\item \emph{Soundness:} For all quantum provers $P$ and all $x \notin L$,
		\begin{equation*}
			\pr {V_{|x|}(x)  \interact P \text{ accepts}} \leq s(|x|).
		\end{equation*}
	\end{itemize}
	Here, the probability is over the randomness of the interaction. 
	
	Finally, let $\QIP[s]$ be the set of languages $L$ such that there exists a $\QIP[s]$ verifier for $L$. Let $\QIP = \QIP[1/2]$.\footnote{The reader may wonder whether the definition of $\QIP$ here is sensitive to the assumption of perfect completeness; it is known that if we use the universal gate set $\{ H, T, \mathit{CNOT} \}$, then we can assume perfect completeness without loss of generality~\cite[Section 4.2]{vidick2016quantum}.}
\end{dfn}

It is known that the soundness error of $\QIP$ protocols can be reduced to be exponentially small, at the cost of repeating the protocol polynomially many times. The next theorem characterizes the class $\QIP$, and is a quantum analogue of the $\IP = \PSPACE$~\cite{lund1992algebraic,shamir1992ip} theorem:

\begin{thm}[\cite{jain2011qip}] \label{thm:qip=pspace}
	$\QIP = \PSPACE$.
\end{thm}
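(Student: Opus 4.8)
The plan is to prove the two inclusions $\PSPACE \subseteq \QIP$ and $\QIP \subseteq \PSPACE$ separately; the first is essentially immediate from $\IP = \PSPACE$, while the second is the substantive direction and the one I expect to be hard.

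For $\PSPACE \subseteq \QIP$, by the classical theorem $\IP = \PSPACE$ of \cite{lund1992algebraic,shamir1992ip} it suffices to observe $\IP \subseteq \QIP$: a classical interactive proof system is a special case of a quantum one, since the quantum verifier can carry out a reversible implementation of the classical verifier's computation in the standard basis and treat classical messages as computational-basis states. The standard transformations yield perfect completeness, and soundness $\le 1/2$ is either already available or obtained by a constant number of parallel repetitions.

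For $\QIP \subseteq \PSPACE$, I would first invoke the round-reduction theorem for quantum interactive proofs (Kitaev--Watrous), which gives $\QIP = \QIP(3)$: every $\QIP$ language has a three-message protocol (prover, then verifier, then prover) with perfect completeness and soundness bounded away from $1$ by a constant. For a fixed verifier circuit, the maximum acceptance probability over all provers in such a three-message protocol can be written as the optimal value of a semidefinite program whose matrix variables are positive semidefinite operators on the message and prover registers --- of dimension $N = 2^{\poly(n)}$ --- with linear constraints encoding that the prover's two messages arise from a common unitary acting on a purifying register, and whose objective and constraint matrices are computable from the polynomial-size verifier circuit. This is Watrous's SDP, which by itself already places $\QIP$ in $\class{EXP}$.

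The key step, and the main obstacle, is solving this exponentially large SDP in polynomial space. Following \cite{jain2011qip}, I would cast the SDP into the canonical form required by the matrix multiplicative weights update method, taking care that the relevant operator norms (the ``width'') are polynomially bounded so that the method converges to additive error $\eps$ within a number of iterations polynomial in $1/\eps$ and $\log N = \poly(n)$. Each iteration performs matrix exponentials (via truncated Taylor series), matrix products, and trace computations on $N \times N$ matrices, all implementable by Boolean circuits of depth polynomial in $\log N$ (matrix multiplication is in $\class{NC}$) and size polynomial in $N$; moreover entries of the fixed SDP matrices are computable from the verifier circuit within these resources. Composing polynomially many iterations yields a uniform family of Boolean circuits of polynomial depth and exponential size that approximates the optimal acceptance probability. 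Finally, any such polynomial-depth, exponential-size circuit family can be evaluated in polynomial space by depth-first evaluation (space $O(\text{depth} \cdot \log \text{size}) = \poly(n)$), so the optimal acceptance probability can be approximated in $\PSPACE$ to within an additive constant; comparing this estimate against the completeness and soundness thresholds decides the language. This gives $\QIP(3) \subseteq \PSPACE$, and together with the round-reduction step, $\QIP \subseteq \PSPACE$.
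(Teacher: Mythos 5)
The paper does not prove this statement at all---it imports it directly from \cite{jain2011qip}---and your outline is precisely the strategy of that cited proof: the $\IP = \PSPACE$ embedding for $\PSPACE \subseteq \QIP$, and round reduction to three messages, Watrous's exponential-size SDP characterization of the optimal acceptance probability, and the matrix multiplicative weights method implemented by uniform polynomial-depth circuits (hence evaluable in $\PSPACE$) for $\QIP \subseteq \PSPACE$. The one place your sketch compresses the real technical content of \cite{jain2011qip} is the reformulation of the SDP so that the width and precision of the multiplicative weights update are controlled, but as a reconstruction of the cited argument it is faithful, so this is essentially the same approach the paper relies on.
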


\subsection{The class \texorpdfstring{$\cc{stateQIP}$}{stateQIP}}

Next, we define quantum interactive proofs for state synthesis and their associated complexity class.

\begin{dfn}[{$\stateQIP{c,s}$}]
	\label{def:stateQIP}
	Let $\Psi = (\ket{\psi_n})_{n \in \N}$ be a family of states where $\ket{\psi_n}$ is on $n$ qubits\footnote{The assumption that $\ket{\psi_n}$ is on $n$ qubits is for convenience; this definition and our results can easily be generalized to the case where each $\ket{\psi_n}$ is a state on $\poly(n)$ qubits.}, let $c: \N \to \R_{\geq 0}$ and $s:\N \times [0,1] \to \R_{\geq 0}$ be functions, and let $V = (V_n)_{n \in \N}$ be a quantum verifier. Then $V$ is a \emph{$\cc{stateQIP}[c,s]$ verifier for $\Psi$} if and only if
	\begin{itemize}[leftmargin=*]
		\renewcommand\labelitemi{--}
		\item \emph{Completeness:} For all $n \in \N$, there exists a quantum prover $P$ (called an \emph{honest prover}) such that
		\begin{equation*}
			\pr {V_n \interact P \text{ accepts}} = 1 \qquad \text{and} \qquad \td(\rho, \psi_n) \leq c(n),
		\end{equation*}
		where $\rho$ is the output of $V_n \interact P$ conditioned on accepting.
		\item \emph{Soundness:} For all $n \in \N$ and for all quantum provers $P$, 
		\begin{equation*}
			\pr {V_n \interact P \text{ accepts}} \leq s(n,\td(\rho, \psi_n)),
		\end{equation*}
		where $\rho$ is the output of $V_n \interact P$ conditioned on accepting.
	\end{itemize}
	Here, the probability is over the randomness of the interaction. We call $c$ the \emph{state completeness} and $s$ the \emph{soundness} of $V$.
	
	Finally, let $\stateQIP{c,s}$ be the set of state sequences $\Psi = (\ket{\psi_n})_{n \in \N}$ such that there exists a $\stateQIP{c,s}$ verifier for $\Psi$.
\end{dfn}

As discussed in the introduction, the class $\stateQIP{c,s}$ is a set of \emph{sequences of quantum states}, rather than languages as is typical in complexity theory. One could define $\stateQIP{c,s}$ more generally to include families of quantum states $(\ket{\psi_x})_{x \in \{0,1\}^*}$ indexed by binary strings, but for simplicity we stick with sequences of states. We call $\stateQIP{c,s}$ a \emph{state complexity class}.

To keep consistent with the definition of $\QIP$, we assume that $\cc{stateQIP}$ protocols have perfect protocol completeness, i.e.\ the honest prover is accepted with probability 1. However, we do not know whether this holds without loss of generality (like it does for $\QIP$); we leave this as an open problem.

The soundness function $s$ in \Cref{def:stateQIP} expresses how the probability of acceptance depends on the closeness of the output state (conditioned on accepting) to the target state $\ket{\psi_n}$. The next lemma shows that soundness amplification holds for $\stateQIP{c,s}$ for soundness functions $s(n,\delta)$ that for all $n \in \N$ are log-concave\footnote{A function $f:[0,1] \to \R_{> 0}$ is \emph{log-concave} if $\log f$ is concave, i.e.\ if $\log f(\alpha \delta_1 + (1 - \alpha) \delta_2) \geq \alpha \log f(\delta_1) + (1 - \alpha) \log f(\delta_2)$ for $0 \le \alpha, \delta_1, \delta_2 \le 1$.}, nonincreasing functions of $\delta$:

\begin{lem}[Soundness amplification for $\cc{stateQIP}$]
	\label{lem:stateQIP-amplification}
	Let $c: \N \to \R_{\geq 0}$ and $s: \N \times [0,1] \to \R_{\geq 0}$ be functions, where for every $n \in \N$ the function $\delta \mapsto s(n, \delta)$ is log-concave and nonincreasing. Then for all polynomials $m \ge 1$, it holds that $\stateQIP{c,s} = \stateQIP{c,s^m}$, where $s^m$ denotes the function $(n,\delta) \mapsto s(n,\delta)^{m(n)}$.
\end{lem}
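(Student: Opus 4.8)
The plan is to run $m=m(n)$ independent copies of a $\stateQIP{c,s}$ verifier $V$ for $\Psi$ in parallel, have the combined verifier accept iff all $m$ copies accept, and then output the $\reg V_\out$ register of a uniformly random copy. Call this combined verifier $V'$. One direction, $\stateQIP{c,s^m} \subseteq \stateQIP{c,s}$, is immediate since $s \ge s^m$ pointwise (soundness is a probability bounded by $1$, so raising to the power $m\ge 1$ only decreases it, assuming we may take $s \le 1$ WLOG; if not, note $s(n,\delta)^{m(n)} \le s(n,\delta)$ whenever $s(n,\delta)\le 1$, and when $s(n,\delta) > 1$ the soundness condition is vacuous anyway). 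So the content is $\stateQIP{c,s}\subseteq\stateQIP{c,s^m}$.

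For \emph{completeness} of $V'$: let $P$ be the honest prover for $V$, and let $P'$ run $P$ independently in each copy. Then each copy accepts with probability $1$, so $V'$ accepts with probability $1$; and conditioned on accepting, the output of each copy is some $\rho$ with $\td(\rho,\psi_n)\le c(n)$, hence the output of the randomly chosen copy is also $\rho$ (or a mixture of such $\rho$'s over the copies, each within $c(n)$ in trace distance of $\psi_n$, so by convexity of trace distance the mixture is too). Thus $V'$ has state completeness $c$.

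For \emph{soundness}: fix an arbitrary prover $P'$ for $V'$ — crucially this need \emph{not} be a product prover across the copies, so the copies' executions are correlated. Let $E_i$ be the event that copy $i$ accepts, let $\rho_i$ be the output density matrix of copy $i$ conditioned on $E_1\cap\cdots\cap E_m$ (the global acceptance event), and let $\delta_i = \td(\rho_i,\psi_n)$. The output of $V'$ conditioned on accepting is $\rho = \frac1m\sum_{i=1}^m \rho_i$, so by convexity $\delta := \td(\rho,\psi_n) \le \frac1m\sum_i \delta_i$. Now the key step is to bound $\pr{V'\interact P'\text{ accepts}} = \pr{E_1\cap\cdots\cap E_m}$ in terms of $\delta$. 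For each $i$, consider the sub-verifier consisting of copy $i$ alone; the rest of $V'$ together with $P'$ constitutes some prover $P_i$ for the single-copy verifier $V$, and one checks that the output of $V\interact P_i$ conditioned on accepting, when we \emph{further} condition on the other copies accepting, is $\rho_i$. The soundness of $V$ gives $\pr{E_i} \le s(n,\td(\rho_i',\psi_n))$ where $\rho_i'$ is the output of copy $i$ conditioned on $E_i$ alone — but this is not quite $\rho_i$. The clean way around this is: condition step by step. I will argue $\pr{E_1\cap\cdots\cap E_m} \le \prod_{i=1}^m s(n,\delta_i)$ by extracting, for each $i$, a prover $P_i$ for $V$ such that conditioned on $E_1,\dots,E_{i-1}$ already holding, $P_i$ realizes an interaction with $V$ whose conditional acceptance probability is $\pr{E_i\mid E_1\cap\cdots\cap E_{i-1}}$ and whose conditional-on-acceptance output is exactly $\rho_i$ — this requires care because conditioning on $E_1,\dots,E_{i-1}$ is a measurement that can be folded into the state $P_i$ prepares, but conditioning on $E_{i+1},\dots,E_m$ cannot be (it lies in the future). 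I would handle the latter by processing the copies in a fixed order and defining $\rho_i$ via the appropriately nested conditioning so that the product telescopes; the honest-prover completeness being perfect means no loss accumulates on that side. Granting the bound $\pr{\text{accept}} \le \prod_i s(n,\delta_i)$, log-concavity of $\delta\mapsto s(n,\delta)$ gives $\prod_i s(n,\delta_i) = \exp\!\big(\sum_i \log s(n,\delta_i)\big) \le \exp\!\big(m \log s(n,\tfrac1m\sum_i\delta_i)\big) = s(n,\tfrac1m\sum_i\delta_i)^m$, and then since $s(n,\cdot)$ is nonincreasing and $\delta \le \frac1m\sum_i\delta_i$, this is at most $s(n,\delta)^m = s^m(n,\delta)$, as required.

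The main obstacle is the soundness step: making rigorous the claim that the global acceptance probability factors as $\prod_i s(n,\delta_i)$ (or is at most that) for a possibly-entangling prover $P'$, i.e.\ correctly setting up the nested conditioning so that each factor is a legitimate instance of the single-copy soundness guarantee of $V$ with the right associated output state. The subtlety is that soundness of $V$ is a statement about $V\interact P_i$ in isolation, so each conditioning event used must be expressible as a measurement performed \emph{by a prover before interacting with $V$'s copy $i$}, which forces the sequential (rather than symmetric) treatment of the copies and a careful definition of the $\rho_i$'s; everything else (completeness, the convexity bound on $\delta$, the log-concavity computation) is routine.
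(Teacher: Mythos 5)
Your overall plan coincides with the paper's: repeat the protocol, accept iff every instance accepts, output the contents of a uniformly random instance's output register; the easy inclusion via $s^{m}\le s$, the completeness argument with independent honest provers, and the closing computation (per-instance soundness, then log-concavity, monotonicity of $s(n,\cdot)$, and convexity of trace distance) are all exactly the paper's steps. The divergence is in how the copies are composed and in whether the soundness step is actually carried out. You run the $m$ copies \emph{in parallel}, but the reduction you describe --- folding the other copies together with the conditioning on previously accepted copies into a single-copy prover $P_i$ --- requires the copies to be run \emph{sequentially}: in a parallel execution every copy's accept/reject measurement happens only at the end, so even the events $E_1,\dots,E_{i-1}$ lie in the future of copy $i$'s interaction and cannot be absorbed into a state that $P_i$ prepares before interacting with $V$ (a prover cannot post-select). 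Your own closing remark that the argument ``forces the sequential treatment'' is correct, and it is precisely why the paper runs the instances one after another: conditioned on the first $j-1$ instances accepting, the state of the entire system at the start of instance $j$ is a legitimate (purifiable) initial state for an induced prover $P_j$, so the single-copy soundness bound applies to the factor $\Pr\{E_j \mid E_1\cap\cdots\cap E_{j-1}\}$.

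More importantly, you never establish the central inequality $\Pr\{\text{all accept}\}\le\prod_i s\bigl(n,\td(\rho_i,\psi_n)\bigr)$ for states $\rho_i$ that agree with what the combined verifier actually outputs; you name this ``the main obstacle'' and defer it, so the soundness proof is incomplete at its key step. For calibration: the tension you flag is real. The per-copy soundness guarantee, applied through $P_j$, speaks about the state of $\reg{O}_j$ conditioned on instances $1,\dots,j$ accepting, whereas the combined verifier's output conditions on \emph{all} $m$ instances accepting, and these reduced states need not coincide if the prover retains entanglement with $\reg{O}_j$ (conditioning on later flag measurements can steer $\reg{O}_j$). The paper's proof handles this by the sequential composition described above and then identifies $\E_{j}\rho_j$ (with $\rho_j$ conditioned only on the earlier instances' acceptance) with the output of $V'_n$ conditioned on accepting all instances in a single sentence --- i.e., it treats as immediate the very identification you worry about. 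So you located the delicate point correctly, but a complete proof must either justify that identification (or a quantitative substitute for it) or reorganize the argument to avoid it, and the proposal does neither.
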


\begin{proof}
	First we argue that $\stateQIP{c,s^{m}} \subseteq \stateQIP{c,s}$. Fix a $\stateQIP{c,s^m}$ verifier $V$ for a state family $\Psi = (\psi_n)_{n \in \N}$; it suffices to prove that $V$ is also a $\stateQIP{c,s}$ verifier for $\Psi$. To this end, for an arbitrary $n \in \N$ and quantum prover $P$, it suffices to prove that $V_n \interact P$ accepts with probability at most $q = s(n, \td(\rho, \psi_n))$ where $\rho$ is the output of $V_n \interact P$ conditioned on accepting. If $q \ge 1$ then this holds vacuously (because probabilities are at most 1), and otherwise it holds because $\pr{V_n \interact P \text{ accepts}} \le q^m \le q$ by the definition of $V$ and the fact that $m \ge 1$.
	
	We now argue the reverse containment $\cc{stateQIP}[c,s] \subseteq \cc{stateQIP}[c,s^{m}]$. Let $V = (V_n)_{n \in \N}$ be a $\stateQIP{c,s}$ verifier for a state family $\Psi = (\ket{\psi_n})_{n \in \N}$. Consider the following verifier $V' = (V_n')_{n \in \N}$. The verifier circuit $V_n'$ runs $m(n)$ independent instances of $V_n$ in sequence, where the $j$'th instance of $V_n$ uses output register $\reg{O}_j$ for $j \in [m(n)]$. If any instance of $V_n$ rejects, then $V'_n$ rejects. Otherwise, $V'_n$ samples a uniformly random $j \in [m(n)]$ and outputs the content of $\reg{O}_j$. This is clearly a polynomial-round, polynomial-size $\cc{stateQIP}$ verifier for $\Psi$.
	
	Completeness is straightforward: an honest prover for $V'_n$ can simulate $m(n)$ independent instances of an honest prover for $V_n$. We now prove soundness. Let $P$ denote an arbitrary prover for the verifier circuit $V_n'$. Let $\rho_j$ denote the reduced state in $\reg{O}_j$ at the end of the execution of the $j$'th instance, conditioned on the first $j-1$ instances accepting. Then we have that 
	\begin{align}
		\pr {V_n \interact P \text{ accepts}} &= \prod_{j=1}^{m(n)} \pr{\text{the $j$'th instance accepts} \mid \text{first $j-1$ instances accepted}}~ \notag \\
		&\leq \prod_{j=1}^{m(n)} s(n,\td(\rho_j,\psi_n)) \label{eq:stateQIP-amp-1},
		\intertext{where the first equality is by the definition of conditional probability. The second line is due to the fact that the $j$'th instance of $V_n$ was run independently of all previous instances, so we can consider the output of the previous instances (conditioned on having accepted) and the state of the prover $P$ at the beginning of the $j$'th instance of $V_n$ to form the state of a prover $P_j$ for $V_n$. So, by the soundness guarantee of $V$, the $j$'th probability of acceptance is at most $s(n,\td(\rho_j,\psi_n) )$. Using the log-concavity of $s(n,\delta)$ as a function of $\delta$, we get that~\eqref{eq:stateQIP-amp-1} can be upper-bounded by}
		&\leq s \left( n , \E_{j \sim [m(n)]} \td(\rho_j,\psi_n) \right)^{m(n)} \notag \\
		\intertext{which, since trace distance is convex and $s(n,\delta)$ is a nonincreasing function of $\delta$, is at most}
		&\leq s \left( n , \td \left(\E_{j \sim [m(n)]}  \rho_j,\psi_n \right) \right)^{m(n)}. \notag
	\end{align}
	The result follows, since $\E_{j \sim [m(n)]}  \rho_j$ is the density matrix that is output by $V_n'$ conditioned on accepting all instances. 
\end{proof}

\begin{rmk*}
	For functions $s: \N \times [0,1] \to \R_{\geq 0}$ that do not satisfy the preconditions of \cref{lem:stateQIP-amplification}, soundness amplification can still be achieved as follows: Find a function $s'$ that is pointwise at least $s$ and that \emph{does} satisfy the preconditions of \cref{lem:stateQIP-amplification}, and then apply soundness amplification with $s'$.
	
	For example, consider a function of the form
	\begin{equation*}
		s(n,\delta) = \left\{
		\begin{array}{ll}
			1  & \mbox{if } \delta \leq a(n) \\
			p(n) & \mbox{otherwise}
		\end{array}
		\right.,
	\end{equation*}
	for which the soundness condition has the following natural interpretation: If the verifier accepts with probability greater than $p(n)$, then the output state conditioned on accepting must be $a(n)$-close to the ideal state $\psi_n$. Then $s \le s'$ pointwise for
	\begin{equation*}
		s'(n,\delta) = \left\{
		\begin{array}{ll}
			1  & \mbox{if } \delta \leq a(n) \\
			p(n)^{(\delta - a(n)) / (1-a(n))} & \mbox{otherwise}
		\end{array}
		\right.,
	\end{equation*}
	and this function $s'$ satisfies the preconditions of \cref{lem:stateQIP-amplification}.
\end{rmk*}

The main result of our paper is the following theorem, which states that every family of states in \statePSPACE has a \cc{stateQIP} protocol. This was informally presented in the introduction as \Cref{thm:main-intro}.

\begin{thm} \label{thm:prim-result}
	For every polynomial $q$, it holds that $\statePSPACE \subseteq \stateQIP{c,s}$ for
	\begin{align*}
		&c(n) = \exp(-q(n)),
		&s(n,\delta) = \exp \Paren{e^{-q(n)} - q(n) \cdot \delta^4}~.
	\end{align*}
\end{thm}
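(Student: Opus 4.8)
The proof strategy is precisely the one sketched in the technical overview: take a state family $(\ket{\psi_n})_n \in \statePSPACE$, build the oracle $M$ from \Cref{prop:oracle-state-synthesis} (the conditional-probability / phase oracle), argue that an approximation of $M$ is computable in $\PSPACE$ and hence (by the function version of \Cref{thm:qip=pspace}) admits a $\QIP$ protocol, and then wrap \Cref{alg:good} around this $\QIP$-for-$M$ subroutine. Concretely, I would first formalize \Cref{alg:aaronson} with full attention to precision (the dyadic approximations of the $p_y$ and the phases $\alpha_x/|\alpha_x|$, and the resulting accumulation of error across the $n$ rounds), proving that with a $\poly(n)$-bit oracle the output is within $\exp(-\Omega(n))$ of $\ket{\psi_n}$ provided the precision $q$ is taken large enough. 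Then I would state and analyze the single-instance protocol of \Cref{alg:good}, and finally apply \Cref{lem:stateQIP-amplification} to amplify soundness to the stated $s(n,\delta) = \exp(e^{-q(n)} - q(n)\delta^4)$ form.

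The completeness analysis is the routine part: the honest prover plays honest-$\QIP$-for-$M$ forwards and its reverse backwards; controlled on $\ket{x}_{\reg A_{[k]}}$ the register $\reg D_h$ holds $\ket{M(x)}$ between the two directions, so a grow round correctly extends $\ket{\psi\uppart k}$ to $\ket{\psi\uppart{k+1}}$ in both the $\reg A$ and $\reg B$ copies (the prover refreshing $\reg B_{[r]}$ in \Cref{line:gift}), a test round is a no-op, and all swap tests pass because $\reg A_{[r]}$ and $\reg B_{[r]}$ carry identical pure states; a Chernoff bound on $3n$ fair coins gives $k$ reaching $n+1$ except with probability $\exp(-\Omega(n))$, and precision gives the $\exp(-q(n))$ closeness.

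The soundness analysis is where the real work lies, and I expect it to be the main obstacle. The difficulty is that the prover is adaptive and the "state" produced is a function of the entire interaction transcript, so I cannot reason round-by-round about a fixed target; instead I would track, after each round $h$, a potential measuring (roughly) the fidelity between register $\reg A_{[k]}$ (traced of everything the prover holds) and the ideal intermediate state $\ket{\psi\uppart k}$, and show that conditioned on all swap tests and $\QIP$-measurements accepting so far, this potential cannot have dropped much. The key sub-claims are: (i) the measurement $\{\kb R, \id - \kb R\}$ on $\reg D_h$ together with the $\QIP$ soundness guarantee forces the computed value to be (close to) $M(x)$ on the accepted branch — here one invokes a "quantum union bound"/gentle-measurement argument because the $\QIP$ verifier is run in superposition over $x$; (ii) if the prover attempts an entanglement or phase attack it induces an asymmetry between $\reg A_{[k]}$ and $\reg B_{[k]}$ that the swap test in a test round (probability $1/2$) detects with probability proportional to the infidelity, so surviving all $3n$ rounds with non-negligible probability forces the state to stay close to the pure ideal state; (iii) cheating in \Cref{line:gift} of a grow round is likewise caught by the ensuing swap test. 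Making (i)–(iii) quantitative and combining them to get a bound of the shape $\pr{\text{accept}} \le \exp(-\Omega(n)) \cdot \text{(something)} + (\text{const})^{-\Omega(n)\cdot\td^{?}}$ on a single instance — so that \Cref{lem:stateQIP-amplification} upgrades it to the $\exp(e^{-q(n)} - q(n)\delta^4)$ bound — is the technical heart, and the fourth-power on $\delta$ (cf.\ \Cref{lem:4p}) is the tell that the argument passes through a square-root-of-fidelity bound like \eqref{eq:td-fid} twice and then converts back via log-concavity in the amplification step. I would organize this as: a lemma bounding single-round damage conditioned on acceptance, a lemma chaining these to bound the single-instance soundness by some $s_0(n,\delta)$, a check that (a pointwise-larger, log-concave, nonincreasing) $s_0$ meets the preconditions of \Cref{lem:stateQIP-amplification}, and finally the amplification to the claimed parameters.
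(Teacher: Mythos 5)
Your proposal follows essentially the same route as the paper's actual proof: the $\PSPACE$ tomography subroutines of \cref{sec:tomography} verified via $\QIP=\PSPACE$, the two-copy grow/test protocol of \cref{alg:good} (formalized in the paper as \cref{alg:main}), completeness via a Chernoff bound plus precision bookkeeping, soundness via a per-round potential claim chained over iterations (the paper's \cref{clm:sound-progress} and \cref{lem:main-sound}, with the fourth power arising exactly as you guess, from \eqref{eq:td-fid} together with the per-round $\sqrt{\text{rejection probability}}$ bounds and Cauchy--Schwarz), and finally \cref{lem:stateQIP-amplification}. The only notable difference is cosmetic: the paper handles the superposition over $x$ by an exact ``truth/lie'' projector decomposition combined with the $\QIP[1/2]$ soundness (\cref{clm:qip-sound}) and the swap-test symmetry (\cref{clm:symmetric-states}), rather than a gentle-measurement/union-bound argument, and the single-instance soundness is only polynomially weak (roughly $1-\delta^4/\poly(n)$) before amplification, but your outline is correct and matches the paper's approach.
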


We prove \Cref{thm:prim-result} in \Cref{sec:synthesis}. \Cref{thm:prim-result} implies that for all polynomials $p$ and sequences $(\ket{\psi_n})_{n \in \N} \in \statePSPACE$, there exists a $\cc{stateQIP}$ verifier such that for all $n$, if the output state conditioned on accepting is further than $1/p(n)$ in trace distance from $\psi_n$, then the verifier accepts with probability less than $1/2$. We leave it as an open problem to improve this statement to negligible functions $p$:
\begin{prb}
	Prove that $\statePSPACE \subseteq \stateQIP{c,s}$ for some functions $c$ and $s$ such that, for some function $\eps(n) = n^{-\omega(1)}$,
	\begin{equation*}
		\forall n \in \N, \delta \in [\eps(n), 1]: \ s(n,\delta) \le 1/2~.
	\end{equation*}
\end{prb}

We also show a partial converse (which was informally presented in the introduction as \Cref{thm:converse-intro}) to \Cref{thm:prim-result}:

\begin{thm} \label{thm:sec-result}
	Let $c,e: \N \to \R_{\geq 0}$ and $s: \N \times [0,1] \to \R_{\geq 0}$ be functions such that $c \geq e$ and $s(n,\delta)$ is a nonincreasing function of $\delta$ for all $n \in \N$. Let $(\ket{\psi_n})_{n \in \N}$ be a state family in $\cc{stateQIP}[c,s]$. Then for all polynomials $m$ there exists a polynomial $q$ and a uniform family of quantum circuits $(C_n)_{n \in \N}$ such that each circuit $C_n$ takes in no inputs and has size at most $\exp(q(n))$, and furthermore the output $\rho_n$ of $C_n$ satisfies
	\[
	\td(\rho_n,\psi_n) \leq \delta(n) + \exp(-m(n))
	\]
	where $\delta: \N \to \R_{\ge 0}$ is a function such that $s(n,\delta(n)) \leq e(n)$.	
\end{thm}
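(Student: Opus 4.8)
The theorem is a \qq{state synthesis} analogue of the fact that the value of a quantum interactive protocol can be computed in exponential time (the \qq{easy direction} $\QIP \subseteq \cc{EXP}$ due to Kitaev and Watrous), and the plan is to adapt the standard semidefinite-programming proof of that fact so that it produces not just the optimal acceptance probability, but an approximate classical \emph{description} of the output density matrix of a near-optimal prover; the circuit $C_n$ then simply prepares that density matrix. Fix a $\stateQIP{c,s}$ verifier $V = (V_n)_n$ for $(\ket{\psi_n})_n$ and a polynomial $m$. The family $(C_n)_n$ will be generated by a single exponential-time Turing machine with the uniform machine for $V$ and the polynomial $m$ built in.

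First I would recall the Kitaev--Watrous characterization of $\max_P \pr{V_n \interact P \text{ accepts}}$. Writing $\reg V \reg M$ for the verifier's workspace and message registers and $\phi_i$ for the (density matrix of the) global state after step $i$, the sequences of reduced states $\rho_i := \tr_{\reg P}(\phi_i)$ on $\reg V \reg M$ that arise from \emph{some} prover are exactly those satisfying the fixed initial condition, the channel constraints $\rho_{2j} = \Phi_j(\rho_{2j-1})$ imposed by the verifier circuits, and the constraints $\tr_{\reg M}(\rho_{2j-1}) = \tr_{\reg M}(\rho_{2j-2})$ expressing that the prover cannot touch $\reg V$. These are linear equalities and positive-semidefiniteness constraints on $\poly(n)$ matrices of dimension $2^{\poly(n)}$, so $\max_P \pr{\text{accept}}$ is the value of a semidefinite program of dimension $2^{\poly(n)}$, approximable to additive error $\eps'$ in $\poly(2^{\poly(n)}, \log(1/\eps'))$ time. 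By the completeness of $V$ this value is exactly $1$ (the honest prover witnesses it), so from the computed feasible point, which by the characterization corresponds to a genuine prover $P^*$, we recover a strategy with acceptance probability $p \ge 1 - \eps'$, and from its final reduced state $\rho_{2k}$ we read off the density matrix $\rho_n = \tr_{\reg V \reg M \reg P \setminus \reg V_\out}(\phi_{2k})$ that $V_n \interact P^*$ outputs (unconditionally), all to $\exp(-\poly(n))$ precision and in $2^{\poly(n)}$ time. The machine then outputs a circuit $C_n$ that prepares an $\exp(-(m(n)+1))$-approximation of $\rho_n$ --- routine, e.g.\ by diagonalizing $\rho_n = \sum_i \lambda_i \kb{v_i}$, preparing the purification $\sum_i \sqrt{\lambda_i}\,\ket i \ket{v_i}$ with a generic amplitude-loading circuit of size $2^{\poly(n)}$, and discarding the ancilla register. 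Thus $(C_n)_n$ is uniform and $|C_n| \le \exp(q(n))$ for a suitable polynomial $q$.

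It remains to bound $\td(\rho_n, \psi_n)$. Let $\rho$ and $\rho_0$ be the outputs of $V_n \interact P^*$ conditioned respectively on accepting and rejecting, so $\rho_n = p\rho + (1-p)\rho_0$ and therefore $\td(\rho_n, \rho) = (1-p)\,\td(\rho_0,\rho) \le \eps'$. Since $P^*$ is a genuine prover, soundness of $V$ gives $1 - \eps' \le p \le s(n, \td(\rho,\psi_n))$. Choosing $\eps' \le \exp(-(m(n)+1))$ and $\eps' < 1 - e(n)$ --- the latter being possible whenever $e$ is bounded away from $1$ by an inverse-exponential, which holds in every regime where this theorem is used --- and using that $\delta \mapsto s(n,\delta)$ is nonincreasing with $s(n,\delta(n)) \le e(n) < 1 - \eps' \le s(n, \td(\rho,\psi_n))$, we conclude $\td(\rho,\psi_n) \le \delta(n)$. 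Combining with the triangle inequality and the $\exp(-(m(n)+1))$ error from preparing $\rho_n$ gives $\td(\rho_n,\psi_n) \le \delta(n) + 2\exp(-(m(n)+1)) \le \delta(n) + \exp(-m(n))$.

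The load-bearing step is the first one: one needs the SDP characterization of the optimal value of a quantum interactive protocol, the fact that its dimension is only $2^{\poly(n)}$ so that it is solvable in exponential time, and --- crucially --- that every feasible SDP point corresponds to an \emph{actual} prover, so that the $\stateQIP$ soundness guarantee applies to $P^*$ and pins down the closeness of $\rho_n$ to $\psi_n$. The rest (preparing an explicitly described density matrix in exponential time, together with the elementary trace-distance estimates and the mild assumption that $e$ is bounded away from $1$) is routine bookkeeping.
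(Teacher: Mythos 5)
Your proposal is correct and follows essentially the same route as the paper: both reduce to the Kitaev--Watrous semidefinite program underlying $\QIP \subseteq \mathsf{EXP}$, solve it to exponentially small precision in $\exp(\poly(n))$ time, invoke the $\cc{stateQIP}$ soundness together with the monotonicity of $s(n,\cdot)$ to conclude that the (near-)optimal prover's output is $\delta(n)$-close to $\psi_n$, and then prepare the resulting explicitly described state with a uniform exponential-size circuit. The only cosmetic differences are that the paper projects the final SDP state onto acceptance and reads off the output register directly, whereas you use the unconditional output and absorb the rejection probability into the error, and your side condition that $e(n)$ be bounded away from $1$ by the achievable precision is implicitly needed in the paper's argument as well (it requires a feasible value strictly exceeding $e(n)$).
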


The theorem is phrased abstractly in terms of the completeness and state soundness functions $c$ and $s$; for intuition, suppose that $s(n,\delta)$ is a function such that for all $\delta \geq 1/\poly(n)$, $s(n,\delta) \leq \exp(-n)$ (which is the parameter regime of \Cref{thm:prim-result}). Then by setting $e(n)$ to be $\exp(-n)$, we get that the circuit $C_n$ synthesizes $\ket{\psi_n}$ up to $1/\poly(n)$ error.

\begin{proof}
	The main idea of the proof follows the argument that $\QIP \subseteq \mathsf{EXP}$. Consider the $\cc{stateQIP}[c,s]$ verifier for the state family $(\ket{\psi_n})_{n \in \N}$. Fix an index $n \in \N$, and let $k$ denote the number of rounds of the verifier on input $1^n$. There is a natural semidefinite program that captures the behavior of the interactive protocol:
	the program optimizes over a set of $2k$ density matrices that encode the state of the verifier and message registers at the beginning and end of each round, and checks that these density matrices are all consistent with each other. For example, in a $k=1$ round protocol, there is a density matrix $\sigma_0$ corresponding to the initial state of the verifier right before it sends its message, and there is a density matrix $\sigma_1$ that denotes the state of the verifier after the prover sends back its message, and the verifier measures $\sigma_1$ to determine whether to accept or reject. It turns out that any set of density matrices satisfying the initialization and consistency conditions corresponds to a valid interaction between the verifier and some prover (see~\cite[Section 4.3]{vidick2016quantum} for a detailed proof). The objective of the semidefinite program is to maximize the acceptance probability of the verifier.
	
	A feasible solution with value $\alpha$ is a sequence of density matrices $(\sigma_0,\sigma_1,\ldots,\sigma_{2k-1})$ where $\sigma_{2k-1}$ is the final state of the verifier, and measuring this state yields the accept state with probability $\alpha$. Suppose that $\alpha > e(n)$. Then by the soundness of the $\cc{stateQIP}$ verifier, the state $\sigma_{2k-1}$ (after being projected into the accept state) must contain in an output register a state $\sigma_{\mrm{out}}$ that is $\delta(n)$-close in trace distance to the target state $\ket{\psi_n}$. This is because $s(n, \delta(n)) \le e(n) < \alpha \le s(n, \td(\sigma_{\mrm{out}}, \psi_n))$, and because the soundness function $s(n, \delta)$ is non-increasing in $\delta$. Furthermore, since $c(n) \geq e(n)$, there must exist a feasible solution with value at least $e(n)$.
	
	Since these density matrices are on $\poly(n)$ qubits, the size of the semidefinite program is $\exp(\poly(n))$. The semidefinite program can be solved up to precision $\exp(-m(n))$ in $\exp(\poly(n,m(n)))$ time to obtain a feasible solution with value at least $e(n)$. Given the last density matrix $\sigma_{2k-1}$, a classical description of a $\left(\delta(n)+\exp(-m(n)) \right)$-approximation of $\ket{\psi_n}$ can be computed, and then using an algorithm like that of \Cref{alg:aaronson}, the state $\ket{\psi_n}$ can be synthesized up to $\delta(n) + \exp(-m(n))$ error.
\end{proof}

This leaves open the intriguing question of whether a full converse to \Cref{thm:prim-result} can be proved --- in other words, whether $\cc{stateQIP} = \cc{statePSPACE}$.

\begin{prb}
	Show that $\cc{stateQIP}[c,s] \subseteq \statePSPACE$ for some nontrivial completeness and soundness functions $c,s$.
\end{prb}

\subsection{The class \texorpdfstring{$\cc{unitaryQIP}$}{unitaryQIP}}

Next, we define quantum interactive proofs for unitary synthesis and their associated complexity class. In the following definition we write $V_n(\ket{\psi})$ to indicate that the input to the verifier circuit $V_n$ is an $n$-qubit state $\ket{\psi}$ rather than an $n$-bit string (as in the case of $\QIP$).

\begin{dfn}[{$\cc{unitaryQIP}[c,s]$}]
	Let $U = (U_n)_{n \in \N}$ be a sequence of unitary transformations where $U_n$ acts on $n$ qubits, let $c: \N \to \R_{\geq 0}$ and $s:\N \times [0,1] \to \R_{\geq 0}$ be functions, and let $V = (V_n)_{n \in \N}$ be a quantum verifier. Then $V$ is a \emph{$\cc{unitaryQIP}[c,s]$ verifier for $U$} if and only if	
	\begin{itemize}[leftmargin=*]
		\renewcommand\labelitemi{--}
		\item \emph{Completeness:} For all $n \in \N$, there exists a quantum prover $P$ (called an \emph{honest prover}) such that for all $n$-qubit states $\ket{\psi}$,
		\begin{equation*}
			\pr {V_n(\ket{\psi}) \interact P \text{ accepts}} = 1 \qquad \text{and} \qquad \td\Paren{\rho, U \psi U^\dagger} \leq c(n),
		\end{equation*}
		where $\rho$ is the output of $V_n (\ket{\psi})\interact P$ conditioned on accepting.
		\item \emph{Soundness:} For all $n \in \N$, for all $n$-qubit states $\ket{\psi}$, and for all quantum provers $P$, 
		\begin{equation*}
			\pr {V_n(\ket{\psi}) \interact P \text{ accepts}} \leq s\Paren{n, \td \Paren{\rho, U \psi U^\dagger}},
		\end{equation*}
		where $\rho$ is the output of $V_n (\ket{\psi}) \interact P$ conditioned on accepting.
	\end{itemize}
	Here, the probability is over the randomness of the interaction. We call $c$ the \emph{unitary completeness} and $s$ the \emph{soundness} of $V$.
	
	Finally, let $\cc{unitaryQIP}[c,s]$ be the set of unitary sequences $U = (U_n)_{n \in \N}$ such that there exists a $\cc{unitaryQIP}[c,s]$ verifier for $U$.
\end{dfn}

We call $\cc{unitaryQIP}$ a \emph{unitary complexity class}. 

One can consider variants of this definition; for example, the verifier could get multiple copies of an input state $\ket{\psi}$, which may make it easier to verifiably perform the unitary transformation on $\ket{\psi}$. One could also consider a variant where the input state $\ket{\psi}$ is sampled from some distribution; thus even an adversarial prover may not have full knowledge of $\ket{\psi}$. Each of these variations opens interesting and well-motivated questions that we leave to future work.

Unlike with $\cc{stateQIP}$, we do not know analogous basic properties of $\cc{unitaryQIP}$ such as whether soundness amplification is possible, or whether the unitary families in $\cc{unitaryQIP}$ are (approximately) implementable by a family of circuits even of exponential size. As alluded to in the introduction, it appears more challenging to establish such statements because of the fragility of the quantum input to the unitary synthesis problem.

For example, consider the problem of amplifying the soundness of a $\cc{unitaryQIP}$ protocol for a unitary family $(U_n)_{n \in \N}$. Typical approaches to soundness amplification, such as repeating the protocol a number of times, don't seem to apply here: the verifier only receives one copy of the input state $\ket{\phi}$ that it has to apply the unitary $U_n$ to, and due to the No-Cloning Theorem the verifier cannot obtain multiple copies of $\ket{\phi}$ to run independent instances of the protocol on.

We thus leave the following as open questions:

\begin{prb}
	Obtain soundness amplification for $\cc{unitaryQIP}$ (or give evidence that it is impossible).
\end{prb}

We note that for the aforementioned variants of the definition of $\cc{unitaryQIP}$ where the verifier has access to multiple copies of the input state, or the input state is sampled from some efficiently sampleable distribution, soundness amplification is possible.

\begin{prb}
	Show that if a family $(U_n)_{n \in \N}$ of unitaries is in $\cc{unitaryQIP}$, then there exists a space-uniform family $(C_n)_{n \in \N}$ of quantum circuits that approximately implements $(U_n)_n$, in the sense that $\norm{C_n - U_n}_{\mrm{op}} \le \delta(n)$ for some function $\delta$ related to the completeness and soundness of the $\cc{unitaryQIP}$ protocol.
\end{prb}

\begin{prb}
	\label{prb:unitaryQIP}
	Show that $\unitaryPSPACE \subseteq \cc{unitaryQIP}[c,s]$ for nontrivial completeness and soundness functions $c,s$. 
\end{prb}

In this paper we solve a special case of \Cref{prb:unitaryQIP}. Let $U = (U_n)_{n \in \N}$ be a family of unitaries where $U_n$ acts on $n$ qubits. We say that $U$ has \emph{polynomial action} if for all $n \in \N$, the unitary $U_n$ acts nontrivially on a subspace $S_n \subset (\C^2)^{\otimes n}$ of dimension at most $\poly(n)$, and acts as the identity on the orthogonal complement of $S_n$. We show that unitary sequences in \unitaryPSPACE with polynomial action admit $\cc{unitaryQIP}$ protocols. The following theorem was informally presented as \Cref{thm:poly-action-intro} in the introduction.

\begin{restatable}{thm}{polyaction}
	\label{thm:poly-action}
	Let $U = (U_n)_{n \in \N}$ be a family in \unitaryPSPACE with polynomial action, and let $q$ be a polynomial. Then $U \in \cc{unitaryQIP}[c,s]$ for
	\begin{align*}
		&c(n) = \frac{1}{q(n)},
		&s(n,\delta) = \exp \Paren{\frac{1}{q(n)} - q(n) \cdot \delta^4}~.
	\end{align*}
\end{restatable}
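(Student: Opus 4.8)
The plan is to reduce interactive unitary synthesis to interactive state synthesis, following the outline in \Cref{sec:overview}. The central object is a family of \emph{program states}. For each $n$ set $H_n = -i\log U_n$, where $\log$ denotes the principal branch with every eigenphase lying in $[0,2\pi)$, so that $H_n \succeq 0$ and $e^{iH_n} = U_n$. Because $U_n$ has polynomial action, $H_n$ is supported on a subspace of dimension $d_n = \poly(n)$, hence $t_n := \tr(H_n) = \sum_j \theta_j \le 2\pi d_n = \poly(n)$, and we may set $\rho_n = H_n/t_n$ (treating $U_n = \id$, equivalently $t_n = 0$, as a trivial case in which the verifier simply returns its input). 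By construction $e^{i\rho_n t_n} = U_n$, and since $\rho_n$ has rank at most $d_n$ it admits a purification $\ket{\rho_n}$ on $n + O(\log n)$ qubits (which we may pad to $\poly(n)$ qubits, mildly abusing the definition of \statePSPACE).

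First I would show that $(\ket{\rho_n})_n \in \statePSPACE$ and that $t_n$ is computable in space $\poly(n)$; this is the main obstacle. For the time, note $t_n / 2^n = \bra{\Omega}(H_n \otimes \id)\ket{\Omega}$ for the maximally entangled state $\ket{\Omega}$, a quantity a polynomial-space quantum algorithm can estimate to exponential additive precision by running phase estimation on $U_n$ against $\ket{\Omega}$ and averaging; invoking $\cc{BQPSPACE} = \PSPACE$~\cite{watrous03complexity} then gives a polynomial-space classical computation of $t_n$. For membership, I would use the spectral representation $H_n = \int_0^{2\pi} Q_n(s)\,ds$, where $Q_n(s)$ is the spectral projector of $U_n$ onto eigenphases in $(s, 2\pi)$. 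A short calculation shows that the (heavily subnormalized) state $\frac{1}{\sqrt{N}} \sum_s \ket{s}_{\reg C}\,(Q_n(s) \otimes \id)\ket{\Omega}_{\reg{AB}}$, with $s$ ranging over a fine discretization of $[0,2\pi)$, is proportional to a purification of $\rho_n$. Each $Q_n(s)$ can be approximately applied by phase estimation on $U_n$ in polynomial space, and the exponentially small norm of this state can be boosted by exponentially many rounds of amplitude amplification — all in polynomial space and (permissibly) exponential time. Driving the phase-estimation precision, the discretization error, and the amplification error down to $\exp(-\poly(n))$, and compressing $\reg{B}$ to $O(\log n)$ qubits, witnesses $(\ket{\rho_n})_n \in \statePSPACE$.

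Given this, the $\cc{unitaryQIP}$ protocol for $U$ is: on input $\ket{\phi}$, run the state synthesis protocol of \Cref{thm:prim-result} for $(\ket{\rho_n})_n$ a total of $K = \poly(n)$ times in sequence to obtain output registers holding states $\tilde\rho_1, \dots, \tilde\rho_K$, rejecting if any instance rejects; compute $t_n$ in polynomial space; and apply the Lloyd--Mohseni--Rebentrost density-matrix-exponentiation algorithm~\cite{lloyd2014quantum} with these $K$ copies and evolution time $t_n$ to the register holding $\ket{\phi}$, outputting the result. Completeness: the honest state-synthesis prover makes every $\tilde\rho_j$ exponentially close to $\rho_n$, and density-matrix exponentiation with $K = \poly(n)$ copies of a state $\exp(-\poly(n))$-close to $\rho_n$ implements $e^{i\rho_n t_n} = U_n$ on $\ket{\phi}$ up to trace distance $O(t_n^2/K) + \exp(-\poly(n)) = 1/\poly(n)$ — the source of the $1/q(n)$ completeness error, and the reason it is only inverse-polynomial. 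Soundness: density-matrix exponentiation is robust in its input copies — replacing $\rho_n$ by $\tilde\rho_j$ for all $j$ changes the output by at most $\poly(n) \cdot \sum_j \td(\tilde\rho_j, \rho_n)$ — so if the verifier accepts and its output is $\delta$-far from $U_n \phi U_n^\dagger$ then some $\tilde\rho_j$ is $\Omega(\delta/\poly(n))$-far from $\rho_n$; a union bound over $j \in [K]$ together with the soundness guarantee of \Cref{thm:prim-result} (whose polynomial we take large enough to absorb the $\log K$ and robustness losses, which matters only in the nonvacuous regime $\delta \gtrsim 1/\poly(n)$) yields a bound of the stated form $\exp(1/q(n) - q(n)\delta^4)$, the quartic dependence being inherited from \Cref{thm:prim-result}.

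Besides the $\statePSPACE$ membership of the program states, the two places demanding care are the error budget — one must check that the $\exp(-\poly(n))$ state-synthesis error survives $\poly(n)$-fold use inside density-matrix exponentiation, which is exactly why \statePSPACE is defined with exponentially small error — and the composition of the probabilistic soundness guarantees of the $K$ state-synthesis instances, which should be handled by sequential conditioning as in the proof of \Cref{lem:stateQIP-amplification} rather than a naive independence argument. Note that the quantum input $\ket{\phi}$ is never exposed to the prover (all interaction occurs inside the state-synthesis subroutine), so the usual fragility of the unitary-synthesis input causes no difficulty here.
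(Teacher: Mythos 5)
Your high-level route (canonical program state plus \alg{LMR}, delivered via the state synthesis protocol of \cref{thm:prim-result}) is the same as the paper's, but the part you yourself flag as the main obstacle --- showing $(\ket{\rho_n})_n \in \statePSPACE$ and computing $t_n$ in $\PSPACE$ --- has a genuine gap: ``polynomial action'' bounds the dimension of the nontrivial subspace, not the size of its eigenphases, so $t_n=\sum_j\theta_j$ can be nonzero yet far smaller than $\exp(-\poly(n))$, and individual eigenphases can be exponentially close to $0$ even when $t_n$ is large. This breaks your construction in three places. (i) The postselected state $\frac{1}{\sqrt N}\sum_s \ket{s}\,(Q_n(s)\otimes\id)\ket{\Omega}$ has squared norm $\approx 2^{-n}t_n/(2\pi)$, which need not be $\exp(-\poly(n))$, so ``exponentially many rounds of amplitude amplification'' do not suffice inside a space-uniform (hence $\exp(\poly(n))$-size) circuit. (ii) The normalization $\rho_n = H_n/t_n$ multiplies every error you drive down to $\exp(-\poly(n))$ by $1/t_n$, which can swamp it. (iii) Phase estimation wraps around at eigenvalue $1$: for an eigenphase $\theta=\Theta(2^{-m})$ (with $m$ the precision) a constant fraction of the estimate's probability mass lies at phases near $1$, so the \emph{expected} estimate is $\Theta(1)$; hence both your estimate of $t_n$ via $\bra{\Omega}(H_n\otimes\id)\ket{\Omega}$ and the weight your $Q_n(s)$-construction assigns to such an eigenvector can be off by $\Theta(1)$, and after dividing by $t_n$ the simulated evolution applies an $\Omega(1)$ spurious phase on that eigenvector --- a constant error on that input state no matter how fine the discretization. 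Your carve-out ``$t_n=0$ is trivial'' does not touch any of this. The paper's proof of \cref{lem:helper-7p1} spends roughly half its length on exactly this point: it defines \emph{stable} unitaries (all eigenphases in $[2^{-3n},1-2^{-3n}]$) and proves \cref{clm:712}, giving a $\PSPACE$-computable global phase $e^{2\pi i\phi(1^n)}$ with $\phi(1^n)=O(2^{-2n})$ that makes $e^{2\pi i\phi(1^n)}U_n$ stable; this forces $\wt t=\Omega(2^{-2n})$, tames the wrap-around, and leaves $U_n\phi U_n^\dagger$ unchanged. You need this step, or an equivalent conditioning argument, for your sketch to go through.

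Two smaller points. A $\cc{unitaryQIP}$ verifier runs in polynomial \emph{time}, so it cannot ``compute $t_n$ in polynomial space'' during the protocol; the paper instead appends $\ket{f(1^n)}$ to the state being synthesized, so the evolution time is delivered and certified by the same $\cc{stateQIP}$ instance. Also, running $K$ separate state-synthesis instances and combining them by a union bound is more delicate than claimed: each instance's soundness constrains only that instance's output conditioned on acceptance, the $K$ outputs can be mutually entangled, and \alg{LMR}'s robustness with correlated copies is not simply a sum of marginal trace distances. The paper sidesteps both issues by synthesizing the single state $\ket{\psi_n}^{\otimes k(n)}\otimes\ket{f(1^n)}$ in one instance and then using monotonicity of trace distance under $\tr_{\reg R}$ and under \alg{LMR}.
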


Note that the completeness error is larger than that of \Cref{thm:prim-result}; with the honest prover, the verifier can only guarantee that its output is $1/\poly(n)$-close to the target state. We prove \Cref{thm:poly-action} in \Cref{sec:transformations}.

	\section{\texorpdfstring{$\PSPACE$}{PSPACE} Algorithms for Quantum State Tomography}
\label{sec:tomography}
%BEGIN_FOLD
In this section we give polynomial-space algorithms for tomography of states in \statePSPACE. This plays a role in our interactive state synthesis protocol, as discussed in \cref{sec:fassup}. The results needed for our state synthesis protocol are proved in \cref{subsec:sa}, as corollaries of a more general result proved in \cref{subsec:gf}.
%END_FOLD
\subsection{General framework} \label{subsec:gf}

\begin{thm}\label{tomog-0}
	Let $\ell, m$ be polynomials, and let $(C_n)_{n \in \N}$ be a space-uniform family of general quantum circuits such that each $C_n$ has $\ell(n)$ input qubits and one output qubit. Then there exists a $\PSPACE$-computable function $f$ such that for all $n \in \N, x \in \cube{\ell(n)}$ it holds that $f(1^n, x) \in \D{m(n)}$ and
	\begin{equation*}
		\left|f(1^n, x) - \bra1 C_n(x) \ket1\right| \le 2 \cdot 2^{-m(n)}~.
	\end{equation*}
\end{thm}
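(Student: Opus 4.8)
The plan is to reduce the computation of $\bra1 C_n(x)\ket1$ to a $\PSPACE$ computation in two steps: first, express this amplitude (equivalently, acceptance probability of the measurement on the output qubit) as a quantity computable by a uniform, polynomially-space-bounded quantum algorithm; second, invoke the equality $\cc{BQPSPACE} = \PSPACE$ of Watrous~\cite{watrous03complexity} to conclude that a classical polynomial-space machine can approximate it, and finally round to a dyadic rational in $\D{m(n)}$. The key observation is that although $C_n$ is a \emph{general} quantum circuit of polynomial \emph{size} (hence polynomial time), the quantity $\bra 1 C_n(x)\ket1$ can be computed exactly by iteratively evolving the density matrix; but since we only want an approximation, it suffices to run $C_n$ on input $x$ with all internal measurements deferred or simulated, read off the output qubit, and estimate the probability of outcome $1$. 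Because $C_n$ uses only $\poly(n)$ space (qubits) by the space-uniformity hypothesis, the natural simulation of $C_n(x)$ is itself a $\poly(n)$-space quantum computation.

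First I would make precise what $\bra1 C_n(x)\ket1$ means for a general circuit: $C_n(x)$ denotes the one-qubit density matrix output by running $C_n$ on input $x$ (padding with zeros as needed), and $\bra1 C_n(x)\ket1 = \tr(\ketbra11 \cdot C_n(x))$ is the probability that measuring the output qubit yields $1$. I would then argue that there is a space-uniform family of general quantum circuits $D_n$ that, on input $(1^n, x)$, simulates $C_n$ on input $x$ and outputs a single bit that equals $1$ with probability exactly $\bra1 C_n(x)\ket1$; this is immediate since $D_n$ can just be $C_n$ itself followed by a standard-basis measurement of the output qubit, and space-uniformity of $(C_n)$ gives a polynomial-space Turing machine producing the gate list. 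Thus the function $(1^n,x)\mapsto \bra1 C_n(x)\ket1$ is the acceptance probability of a $\cc{BQPSPACE}$ machine (on input $(1^n,x)$, where $x$ ranges over $\cube{\ell(n)}$ and the promise is just that the input has the right length).

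Next, I would apply $\cc{BQPSPACE} = \PSPACE$~\cite{watrous03complexity}. More precisely, the relevant consequence is that acceptance probabilities of polynomial-space quantum algorithms can be approximated to within $2^{-p(n)}$ for any polynomial $p$ by a classical Turing machine running in space $\poly(n,p(n))$ --- this follows from Watrous's result together with the fact that such an algorithm can, in polynomial space, evaluate the quantum computation to exponential precision (e.g.\ by matrix-exponentiation of the transition operator, or by the standard observation that a $\PSPACE$ machine can compute sums of exponentially many path amplitudes to high precision). Taking $p$ so that $2^{-p(n)} \le 2^{-m(n)}$ --- say $p = m$ --- we obtain a $\PSPACE$-computable real $g(1^n,x)$ with $|g(1^n,x) - \bra1 C_n(x)\ket1| \le 2^{-m(n)}$. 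Finally, since every real in $[0,1]$ is within $2^{-m(n)}$ of some element of $\D{m(n)}$ (stated in the preliminaries), I would let $f(1^n,x)$ be such a dyadic rounding of $\min(\max(g(1^n,x),0),1)$; computing this rounding is trivially in $\PSPACE$, and by the triangle inequality $|f(1^n,x) - \bra1 C_n(x)\ket1| \le 2\cdot 2^{-m(n)}$, which is the claim.

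The main obstacle --- really the only nonroutine point --- is the precise invocation of $\cc{BQPSPACE} = \PSPACE$ in the form of an \emph{approximation} statement with an inverse-exponential error bound that is itself polynomial-space uniform in $n$ (and in the input $x$, which varies); one must be a little careful that the classical $\PSPACE$ algorithm's space usage remains polynomial when the target precision is $2^{-m(n)}$ with $m$ a polynomial, but this is exactly the regime handled by Watrous's simulation (the space blows up only polynomially in $\log(1/\text{precision}) = m(n)$). Everything else --- the uniformity bookkeeping, the deferral of measurements, the dyadic rounding --- is bookkeeping.
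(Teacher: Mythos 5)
Your route is viable and genuinely different from the paper's, but the key step is asserted rather than derived, and it is exactly the step the paper's proof is designed to supply. The paper uses \cref{thm:watrous03} only in its literal, unbounded-error \emph{decision} form: the language of inputs on which a space-uniform circuit accepts with probability strictly greater than $1/2$ is in $\PSPACE$. To turn this threshold statement into an \emph{approximation} of $\bra1 C_n(x) \ket1$, the paper builds the auxiliary circuit $A_n$ (\cref{alg:An}), which runs $C_n(x)$ a total of $10 \cdot 4^{m(n)}$ times (reusing the same qubits, so the family stays space-uniform even though its size is exponential) and accepts iff the empirical frequency of outcome $1$ is within $\tfrac32 \cdot 2^{-m(n)}$ of a candidate dyadic $r$; a Chernoff bound then shows that membership of $(1^n,x,r)$ in the resulting $\PSPACE$ language certifies $|r - \bra1 C_n(x)\ket1| \le 2\cdot 2^{-m(n)}$, that a good $r$ always exists, and hence that one may take $f(1^n,x)$ to be the lexicographically first accepted $r$. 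You instead invoke a \emph{function} version of the space-bounded simulation: that the acceptance probability of a polynomial-space (possibly exponential-time) general quantum computation can be approximated to within $2^{-m(n)}$ in classical $\PSPACE$. That statement is true, and provable by the transition-operator/iterated-matrix-product argument you gesture at, and granting it your proof is shorter and more direct than the paper's; but it does not ``follow from'' \cref{thm:watrous03} as stated --- the decision theorem plus rounding is not enough, which is precisely why the paper inserts $A_n$ and the search over dyadic candidates. So you must either cite or prove the approximation version explicitly, or else perform the amplification-and-threshold-search step; what the paper's choice buys is that the quantum-to-classical simulation stays a black box in exactly the published form.

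Two smaller inaccuracies are worth fixing. First, space-uniform circuits here can have size $\exp(\poly(n))$ (the uniformity machine's output tape is write-only and unbounded), so your parenthetical ``polynomial size (hence polynomial time)'' is wrong; your argument survives because it only uses the space bound, but this rules out a naive Feynman path sum over a short circuit, and plain path-amplitude sums in any case do not account for the intermediate measurements and trace-outs of a general circuit --- one must work in the density-matrix/superoperator picture, as in your first alternative. Second, ``deferring'' the internal measurements would require a fresh ancilla per measurement, i.e.\ up to exponentially many qubits, violating the space bound; the right move is simply to run the general circuit as is and measure the output qubit, which is what your definition of $D_n$ already does.
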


The proof of \Cref{tomog-0} uses the following special case of a result of Watrous~\cite{watrous03complexity}:\footnote{Watrous's result applies only to circuit families over a finite gate set in which the entries of all gates are algebraic numbers. This is why we imposed a similar requirement in our definition of space-uniform circuit families.}

\begin{thm}[{\cite{watrous03complexity}}] \label{thm:watrous03}
	Let $\ell$ be a polynomial, and let $(C_n)_{n \in \N}$ be a space-uniform family of general quantum circuits such that each $C_n$ has $\ell(n)$ input qubits and one output qubit. Then the language
	\begin{equation*}
		\left\{(1^n, x) : n \in \N, x \in \{0,1\}^{\ell(n)}, \, \bra{1} C_n (x) \ket{1} > 1/2 \right\}
	\end{equation*}
	is in $\PSPACE$.
\end{thm}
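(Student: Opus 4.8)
The plan is to show that the acceptance probability $p := \bra 1 C_n(x)\ket 1$, although an irrational algebraic number whose distance to $\tfrac12$ can be as small as $2^{-\exp(\poly(n))}$ (so that no polynomial-precision estimate of $p$ decides the language), has its comparison with $\tfrac12$ reducible to the sign of an entry of a product of exponentially many, exponentially large, but space-uniformly presented \emph{integer} matrices, and then to invoke the classical fact that such sign tests lie in space polylogarithmic in the instance size. In detail: write $s = \poly(n)$ for the space and $T = 2^{\poly(n)}$ for the number of gates of $C_n$. First put $C_n$ into a standard form --- a sequence of elementary channels $\mathcal E_1,\dots,\mathcal E_T$ each acting on $O(1)$ qubits, which absorbs the ``introduce/trace-out/measure'' gates and any classical control by the usual bookkeeping, without increasing the qubit count past $s$ --- so that $p = \tr(\Pi_{\mathrm{acc}}\,\mathcal E_T\circ\cdots\circ\mathcal E_1(\kb{x\zs}))$ on a fixed $s$-qubit space, where $\Pi_{\mathrm{acc}}$ projects onto ``output qubit $=1$''. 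Vectorizing density matrices, $p$ becomes a fixed $\{0,1\}$-linear functional of an entry of the product $\mathcal G_T\cdots\mathcal G_1$ of the $4^s\times 4^s$ superoperator matrices of the $\mathcal E_j$ (each an identity tensored with a constant-size matrix). The gate-set entries lie in a fixed number field $K$ (here $\Q(e^{i\pi/4})$), and the only denominators come from the $1/\sqrt2$'s in Hadamard gates, so each entry of $2\mathcal G_j$ lies in the ring of integers $\mathcal O_K$; hence $2^{T}p$ is a fixed functional of an entry of $\prod_j(2\mathcal G_j)\in\mathrm{Mat}(\mathcal O_K)$. Since $p$ is real, $2^{T}p\in\mathcal O_K\cap\R=\Z[\sqrt2]$, and writing $2^{T}p=u+v\sqrt2$ with $u,v\in\Z$ gives $p>\tfrac12 \iff (2u-2^{T})+2v\sqrt2>0$, which is settled by the elementary case analysis on $\operatorname{sign}(2u-2^{T})$, $\operatorname{sign}(v)$, and the comparison of $(2u-2^{T})^2$ with $8v^2$.

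It then remains to carry out these finitely many integer sign tests in $\PSPACE$. Fixing a $\Z$-basis of $\mathcal O_K$ makes multiplication in $\mathcal O_K$ a fixed bilinear map given by constant-size integer matrices, so $\prod_j(2\mathcal G_j)$ is (the coordinate tuple of) a product of $T$ integer matrices of dimension $D=O(4^s)$ with $O(1)$-bit entries; the integers $u,v$ are particular entries of this product, and $(2u-2^{T})^2$, $8v^2$ are degree-$\le 2$ polynomials in such entries (together with the explicit integer $2^{T}$, whose $\poly(n)$ binary digits are trivially generated on the fly), hence again entries of a polynomially larger product of integer matrices. So each required comparison is a sign test on an entry of a product of $N=2^{\poly(n)}$ integer matrices of dimension $D=2^{\poly(n)}$ with $O(1)$-bit entries. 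By standard space-efficient integer linear algebra --- iterated matrix multiplication and matrix powering, and comparisons of the resulting entries, lie in $\class{DSPACE}(\log^2 M)$ where $M$ is the total bit-length of the instance, via the $\class{NC}^2$ algorithms for the determinant together with $\class{NC}^2\subseteq\class{DSPACE}(\log^2)$ and Chinese-remaindering over small primes so that no exponentially long integer is ever materialized --- each test is in $\class{DSPACE}(\poly(n))$, since here $M=\poly(N,D)=2^{\poly(n)}$ and $\log^2 M=\poly(n)$. Finally, all bits of the integer matrices in question are computable in $\poly(n)$ space: the $j$-th gate of $C_n$ is produced in $\poly(n)$ space by the space-uniformity machine on input $(1^n,j)$, and from it any entry of $\mathcal G_j$, of $2\mathcal G_j$, and of its integer-blocked form is computed in $\poly(n)$ space. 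Composing this $\poly(n)$-space ``oracle'' for the instance with the above $\poly(n)$-space procedure keeps us in $\PSPACE$; recovering $n$ from $1^n$ (and using $|x|=\ell(n)$) then shows $\{(1^n,x):\bra 1 C_n(x)\ket 1>\tfrac12\}\in\PSPACE$.

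The hard part is the \emph{exactness} of the $\tfrac12$-threshold: because there is no a priori inverse-polynomial (only inverse-doubly-exponential) gap, the language genuinely cannot be decided by estimating $p$, and the whole content of the argument is the arithmetization that replaces ``approximate $p$'' by a sign test on entries of exponentially-large-but-uniform integer matrices, which the $\class{NC}^2$-via-modular-arithmetic algorithms settle in $\poly(n)$ space. A secondary, routine point is verifying the standardization of the general circuit --- deferring or folding intermediate measurements and classical control, and handling qubit introduction/removal as resets, all without growing the qubit count past $s$ --- so that every gate remains a small-arity channel with entries in $\tfrac12\mathcal O_K$ and the gate count remains $2^{\poly(n)}$. (One may alternatively first dilate $C_n$ to a unitary circuit on $O(s)$ qubits and work with amplitudes instead of superoperators; the integer-matrix reduction and the space bound are the same.)
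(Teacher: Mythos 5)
The paper does not prove this statement at all: it is imported verbatim as a black box from Watrous \cite{watrous03complexity} (see the footnote about algebraic gate entries), so there is no in-paper argument to compare against. Your proposal is, in substance, a correct self-contained reconstruction of the cited result for the paper's fixed gate set $\{H,\mathit{CNOT},T\}$, and it follows the same strategy as the literature: the essential point, which you identify correctly, is that the exact threshold $\bra1 C_n(x)\ket1 > 1/2$ cannot be decided by approximation, so one does exact arithmetic --- here over $\Z[\zeta_8]$, noting that the $H$-superoperator entries are $\pm\tfrac12$ so $2^T p \in \Z[\zeta_8]\cap\R = \Z[\sqrt2]$ --- and reduces the comparison to finitely many sign tests on entries of a space-uniformly generated product of $2^{\poly(n)}$ integer matrices of dimension $2^{\poly(n)}$, which are then decided by scaling up uniform $\class{NC}^2$ algorithms (via $\class{NC}^2 \subseteq \class{DSPACE}(\log^2)$) and composing space-bounded transducers. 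Two points deserve slightly more care than your sketch gives them, though neither is a gap: (i) your argument is tied to $\{H,\mathit{CNOT},T\}$, whereas Watrous's theorem (and the paper's footnoted convention) covers any finite gate set with algebraic entries; the same proof goes through by fixing a $\Z$-basis of the ring generated by the gate entries together with a common denominator, but that generalization should be stated if you want the theorem as cited. (ii) The step ``the degree-$2$ comparisons are again entries of a polynomially larger matrix product'' and the ``Chinese-remaindering'' remark need the standard but nontrivial supporting facts --- e.g., the Kronecker-product trick (with an extra $1\times1$ block generating $2^T$) or, more simply, computing binary representations of the product entries directly in the balanced-tree $\class{NC}^2$ computation, since recovering a sign from CRT residues itself requires the division-in-$\class{NC}$ machinery; also note that iterated integer matrix product needs no determinant algorithm. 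Finally, the bookkeeping that keeps the simulated system on a fixed set of $\poly(n)$ wires despite $2^{\poly(n)}$ qubit introductions/trace-outs requires recomputing the wire-assignment by a streaming pass over the gate sequence for each queried gate, which is fine in $\poly(n)$ space but worth saying explicitly.
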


\begin{proof}[Proof of \cref{tomog-0}]	
	The circuit $A_n$ defined in \cref{alg:An} uses $\poly(n)$ space, because it takes $\poly(n)$ space to store the value of $k$ and (by assumption) to apply $C_n$, and because all qubits introduced in an iteration of the for loop are traced out by the end of that iteration. By similar reasoning, the description of $A_n$ can be computed in space $\poly(n)$. Therefore $(A_n)_{n \in \N}$ is space-uniform, so by \cref{thm:watrous03} the language
	\begin{equation*}
		L = \left\{(1^n, x, r): n \in \N, x \in \cube{\ell(n)}, r \in \D{m(n)}, \, \bra1 A_n(x,r) \ket1 > 1/2\right\}
	\end{equation*}
	is in $\PSPACE$. Let $k_n(x)$ be a binomial random variable with parameters $t_n$ and $p_n(x) = \bra1 C_n(x) \ket1$, and observe that
	\begin{equation*}
		\bra1 A_n(x,r) \ket1 = \PR{|k_n(x)/t_n - r| \le 3/2 \cdot 2^{-m(n)}}~.
	\end{equation*}

	\begin{algorithm}
		\caption{The circuit $A_n$} \label{alg:An}
		\begin{algorithmic}[1]
			\Require $x \in \cube{\ell(n)}, r \in \D{m(n)}$
			\State $k \gets 0$.
			\For{$t_n = 10 \cdot 4^{m(n)}$ times}
				\State Construct $C_n(x)$, and measure it in the standard basis.
				\IIf{the measurement outcome is 1} $k \gets k+1$. \EndIIf
				\State Trace out the qubit holding the measurement outcome.
			\EndFor
			\If{$|k/t_n - r| \le \frac32 \cdot 2^{-m(n)}$} \Return 1.
			\Else{} \Return 0.
			\EndIf
		\end{algorithmic}
	\end{algorithm}
	
	Let $f(1^n, x)$ be the lexicographically first number $r \in \D{m(n)}$ such that $(1^n, x, r)$ is in $L$. We first argue that $f$ is well defined, i.e.\ that for all $n,x$ there exists $r$ such that $(1^n, x, r)$ is in $L$. Given $n,x$ let $r \in \D{m(n)}$ be such that $|p_n(x) - r| \le 2^{-m(n)}$. Then, abbreviating $k_n(x), t_n, m(n), p_n(x)$ as $k,t,m,p$ respectively, by a Chernoff bound we have that
	\begin{align*}
		\pr{|k/t - r| > (3/2) \cdot 2^{-m}}
		&\le \pr{|k/t - p| + |p - r| > 3/2 \cdot 2^{-m}} \\
		&\le \pr{|k/t - p| \cdot 2^{-m} > 3/2 \cdot 2^{-m}} \\
		&= \pr{|k/t - p| > 1/2 \cdot 2^{-m}} \\
		&\le 2\exp\Paren{-2t\Paren{1/2 \cdot 2^{-m}}^2} \\
		&= 2\exp\Paren{-5} < 1/2,
	\end{align*}
	so $(1^n, x, r)$ is in $L$. Therefore $f$ is well defined, and since $L$ is in $\PSPACE$ clearly $f$ is computable in $\PSPACE$ as well.
	
	Finally, consider $n,x,r$ with $|p_n(x) - r| \ge 2 \cdot 2^{-m(n)}$. Using the same abbreviations as above, we have that
	\begin{align*}
		\pr{|k/t - r| \le 3/2 \cdot 2^{-m}}
		&\le \pr{|p - r| - |k/t - p| \le 3/2 \cdot 2^{-m}} \\
		&\le \pr{2 \cdot 2^{-m} - |k/t - p| \le 3/2 \cdot 2^{-m}} \\
		&= \pr{|k/t - p| \ge 1/2 \cdot 2^{-m}} \\
		&< 1/2,
	\end{align*}
	where the last inequality holds by the same reasoning as above Therefore $(1^n, x, r)$ is not in $L$, so $r \neq f(1^n, x)$, implying that $|p_n(x) - f(1^n,x)| \le 2 \cdot 2^{-m(n)}$.	
\end{proof}

The following is an easy consequence of \cref{tomog-0}:

\begin{cor} \label{tomog}
	Let $\ell, m$ be polynomials, let $(C_n)_{n \in \N}$ be a space-uniform family of general quantum circuits such that each $C_n$ has $\ell(n) + n$ input qubits and one output qubit, and let $(\ket{\psi_n})_{n \in \N} \in \statePSPACE$. Then there exists a $\PSPACE$-computable function $f$ such that for all $n \in \N, x \in \cube{\ell(n)}$ it holds that $f(1^n, x) \in \D{m(n)}$ and
	\begin{equation*}
		|f(1^n, x) - \bra1 C_n(x, \psi_n) \ket1| \le 3 \cdot 2^{-m(n)}~.
	\end{equation*}
\end{cor}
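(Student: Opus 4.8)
The plan is to reduce \cref{tomog} to \cref{tomog-0} by folding the preparation of $\ket{\psi_n}$ into the circuit. First I would invoke the definition of \statePSPACE{} with the polynomial $q = m$: there exists a space-uniform family $(D_n)_{n \in \N}$ of general quantum circuits, each taking no inputs, such that the output density matrix $\rho_n$ of $D_n$ satisfies $\td(\rho_n, \psi_n) \le \exp(-m(n)) \le 2^{-m(n)}$ (the second inequality since $\ln 2 < 1$).

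Next I would build a new family $(C_n')_{n \in \N}$ of general quantum circuits, where $C_n'$ has $\ell(n)$ input qubits and one output qubit: on input $x \in \cube{\ell(n)}$, the circuit $C_n'$ introduces $n$ fresh qubits, runs $D_n$ on them to produce $\rho_n$, and then runs $C_n$ on the $\ell(n)+n$ qubits holding $(x,\rho_n)$, returning $C_n$'s output qubit. Since $C_n$ and $D_n$ each use $\poly(n)$ space and their composition reuses space, and since the space-uniformity Turing machines for $C_n$ and $D_n$ can be composed (with an index shift applied to $C_n$'s wires) into a single $\poly(n)$-space machine, the family $(C_n')_{n \in \N}$ is space-uniform; and by construction $\bra1 C_n'(x)\ket1 = \bra1 C_n(x,\rho_n)\ket1$.

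Then I would apply \cref{tomog-0} to $(C_n')_{n \in \N}$ with the polynomial $m$ to obtain a $\PSPACE$-computable function $f$ with $f(1^n,x) \in \D{m(n)}$ and $|f(1^n,x) - \bra1 C_n'(x)\ket1| \le 2\cdot 2^{-m(n)}$. To control the cost of having replaced $\psi_n$ by $\rho_n$, I would use that a general quantum circuit implements a quantum channel and that trace distance is contractive under channels, so
\[
\td\Paren{C_n(x,\rho_n),\, C_n(x,\psi_n)} \le \td(\rho_n,\psi_n) \le 2^{-m(n)},
\]
whence $|\bra1 C_n(x,\rho_n)\ket1 - \bra1 C_n(x,\psi_n)\ket1| \le 2^{-m(n)}$ because $\kb1$ is a valid measurement operator. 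The triangle inequality then yields $|f(1^n,x) - \bra1 C_n(x,\psi_n)\ket1| \le 3\cdot 2^{-m(n)}$, as desired.

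\textbf{Main obstacle.} There is no substantive obstacle: the argument is a routine composition. The only points needing care are (i) fixing the approximation polynomial $q$ \emph{before} invoking \statePSPACE{} (taking $q = m$ works precisely because $\exp(-m(n)) \le 2^{-m(n)}$), and (ii) verifying that the composed circuit family remains polynomial-space and space-uniform, which follows since both ingredient families are.
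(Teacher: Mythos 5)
Your proposal is correct and follows essentially the same route as the paper's proof: invoke the \statePSPACE{} definition to get a space-uniform synthesizing family $(D_n)$ with error $2^{-m(n)}$, compose it with $(C_n)$ into a new space-uniform family, apply \cref{tomog-0} to that family, and absorb the state-preparation error via contractivity of trace distance plus the triangle inequality. Your explicit remarks about choosing $q=m$ and about $\kb1$ being a valid measurement operator are just slightly more spelled-out versions of steps the paper treats as immediate.
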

\begin{proof}
	Since $(\ket{\psi_n})_{n \in \N}$ is in \statePSPACE, there exists a space-uniform family of general quantum circuits $(D_n)_{n \in \N}$ such that for all $n \in \N$, the circuit $D_n$ takes no inputs and $D_n$ outputs a density matrix $\rho_n$ such that $\td(\rho_n, \psi_n) \le 2^{-m(n)}$. Let $B_n$ be the general quantum circuit that on input $x \in \cube{\ell(n)}$, first applies $D_n$ to construct $\rho_n$, and then applies $C_n$ on input $\kb{x} \otimes \rho_n$. The family $(B_n)_{n \in \N}$ is clearly space-uniform, so by \cref{tomog-0} there exists a $\PSPACE$-computable function $f$ such that for all $n \in \N, x \in \cube{\ell(n)}$ it holds that $f(1^n,x) \in \D{m(n)}$ and
	\begin{equation*}
		|f(1^n,x) - \bra1 B_n(x) \ket1| \le 2 \cdot 2^{-m(n)}~.
	\end{equation*}
	For all $n \in \N, x \in \cube{\ell(n)}$, by the definition of trace distance
	\begin{equation*}
		|\bra1 \Paren{B_n(x) - C_n(x, \psi_n)} \ket1|
		= |\bra1 \Paren{C_n(x, \rho_n) - C_n(x, \psi_n)} \ket1|
		\le \td(\rho_n, \psi_n)
		\le 2^{-m(n)},
	\end{equation*}
	so by the triangle inequality
	\begin{align*}
		|f(1^n, x) - \bra1 C_n(x, \psi_n) \ket1|
		&\le |f(1^n,x) - \bra1 B_n(x) \ket1| + |\bra1 \Paren{B_n(x) - C_n(x, \psi_n)} \ket1| \\
		&\le 3 \cdot 2^{-m(n)}. \qedhere
	\end{align*}
\end{proof}

\subsection{Specific applications} \label{subsec:sa}

\newcommand{\cp}{\mrm{cp}}
\newcommand{\ph}{\mrm{ph}}
\newcommand{\round}{\mai{round}}

\begin{cor}\label{cor:tomog-relative-weight}
	Let $m$ be a polynomial, let $(\ket{\psi_n})_{n \in \N} \in \statePSPACE$, and for $n \in \N, x \in \cube{\le n}$ let $p_n(x) = \norm{(\bra{x} \otimes \id) \ket{\psi_n}}^2$. Then there exists a $\PSPACE$-computable function $\cp$ (for ``conditional probability") and a polynomial $\ell$ such that for all $n \in \N, x \in \cube{<n}$ it holds that $\cp(1^n, x) \in \D{\ell(n)}$ and
	\begin{equation*}
		|p_n(x) \cdot \cp(1^n,x) - p_n(x0)| \le 2^{-m(n)}~.
	\end{equation*}
\end{cor}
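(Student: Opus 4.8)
The plan is to reduce this to \cref{tomog}, applied to a circuit family that estimates prefix‑measurement probabilities of $\ket{\psi_n}$, and then to extract (an approximation of) the conditional probability $p_n(x0)/p_n(x)$ by a \emph{discretized search} that remains well behaved even when $p_n(x)$ is exponentially small.

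First I would build the circuit family. For a prefix $y \in \cube{\le n}$, encode it as a string $\langle y\rangle \in \cube{\ell_0(n)}$ with $\ell_0(n) = n + \ceil{\log_2(n+1)}$, recording the length $|y|$ together with $y$ padded to $n$ bits, and let $C_n$ be the general quantum circuit that, on classical input $\langle y\rangle$ and a quantum input on $n$ qubits, measures the first $|y|$ of those qubits in the standard basis and outputs a single qubit equal to the indicator of the event that these $|y|$ outcomes agree with $y_1,\dotsc,y_{|y|}$. Then $\bra 1 C_n(\langle y\rangle,\psi_n)\ket 1 = p_n(y)$, and $(C_n)_{n\in\N}$ is space‑uniform since $C_n$ only maintains a counter and performs at most $n$ measurements and comparisons, and its description is computable in $\poly(n)$ space. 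Applying \cref{tomog} (this is where the hypothesis $(\ket{\psi_n})_{n\in\N}\in\statePSPACE$ is used) with output‑precision polynomial $m'(n) := m(n)+6$ yields a $\PSPACE$‑computable function $g$ with $g(1^n,\langle y\rangle)\in\D{m'(n)}$ and $|g(1^n,\langle y\rangle) - p_n(y)| \le 3\cdot 2^{-m'(n)} =: \eps$ for all $y$.

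Next, for $x\in\cube{<n}$ write $\hat p_x = g(1^n,\langle x\rangle)$ and $\hat p_{x0} = g(1^n,\langle x0\rangle)$ (note both $x$ and $x0$ lie in $\cube{\le n}$), and let $\cp(1^n,x)$ be the lexicographically first $r\in\D{\ell(n)}$, where $\ell(n) := m(n)+1$, minimizing $|r\hat p_x - \hat p_{x0}|$; this is $\PSPACE$‑computable because it enumerates the $2^{\ell(n)}$ dyadic rationals in $\D{\ell(n)}$ and does exact arithmetic on $\poly(n)$‑bit numbers. To bound the error I would first check that $\min_{r\in\D{\ell(n)}}|r\hat p_x - \hat p_{x0}| \le 2^{-\ell(n)} + 2\eps$: when $\hat p_{x0}\le\hat p_x$, take $r$ within $2^{-\ell(n)}$ of $\hat p_{x0}/\hat p_x\in[0,1]$ (using that every point of $[0,1]$ lies within $2^{-\ell(n)}$ of $\D{\ell(n)}$), so that $|r\hat p_x - \hat p_{x0}| = \hat p_x\,|r - \hat p_{x0}/\hat p_x| \le 2^{-\ell(n)}$; when $\hat p_{x0}>\hat p_x$, take $r = 1 - 2^{-\ell(n)}$ and use $\hat p_{x0} - \hat p_x \le (p_n(x0)-p_n(x)) + 2\eps \le 2\eps$, which holds because $p_n(x0)\le p_n(x)$. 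Then, since $\cp(1^n,x)\in[0,1)$,
\begin{align*}
	|p_n(x)\,\cp(1^n,x) - p_n(x0)|
	&\le |p_n(x) - \hat p_x| + |\hat p_x\,\cp(1^n,x) - \hat p_{x0}| + |\hat p_{x0} - p_n(x0)| \\
	&\le \eps + \Paren{2^{-\ell(n)} + 2\eps} + \eps = 4\eps + 2^{-\ell(n)} \le 2^{-m(n)}
\end{align*}
for the chosen $\ell$ and $m'$, which is exactly the claimed bound.

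Given \cref{tomog}, this derivation is largely routine; the one point that needs care is that $p_n(x)$ can be exponentially small or zero, so directly dividing $\hat p_{x0}$ by $\hat p_x$ would be dangerous. Defining $\cp(1^n,x)$ as the best dyadic multiplier making $r\hat p_x\approx\hat p_{x0}$, rather than as an approximate quotient, sidesteps this: when $p_n(x)$ is tiny, both $p_n(x)\,\cp(1^n,x)$ and $p_n(x0)\le p_n(x)$ are tiny, so the required additive bound holds regardless of the value of $\cp(1^n,x)$. The only other thing to verify is space‑uniformity of the measurement circuit $C_n$, which is straightforward.
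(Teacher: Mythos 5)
Your proposal is correct, and its first half is exactly the paper's: you build the same space-uniform prefix-measurement circuit family, invoke \cref{tomog} to get a $\PSPACE$-computable estimate of $p_n(y)$ with additive error $3\cdot 2^{-m'(n)}$, and then post-process. Where you diverge is in how $\mrm{cp}$ is defined from the estimates: the paper sets $\mrm{cp}(1^n,x)$ to be the rounding of $\min\bigl(f(1^n,x0)/f(1^n,x),\,1\bigr)$ to $\D{\ell(n)}$ and then does a case analysis on whether $p_n(x)\le 3\delta$ or $p_n(x)>3\delta$ to control the error of the truncated quotient, whereas you define $\mrm{cp}(1^n,x)$ as the dyadic $r\in\D{\ell(n)}$ minimizing $|r\hat p_x-\hat p_{x0}|$, which avoids division altogether and makes the tiny-or-zero-denominator regime automatic (the target quantity $|p_n(x)\,r - p_n(x0)|$ is small for \emph{every} $r$ when $p_n(x)$ is small), at the cost of an exponential-size but clearly $\PSPACE$-implementable search; your existence bound for the minimum (the two cases $\hat p_{x0}\le\hat p_x$ and $\hat p_{x0}>\hat p_x$, the latter using $p_n(x0)\le p_n(x)$) and the final arithmetic with $\ell(n)=m(n)+1$, $m'(n)=m(n)+6$ check out. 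The only cosmetic blemish is that in the case $\hat p_{x0}\le\hat p_x$ you speak of a point near $\hat p_{x0}/\hat p_x$, which is undefined when $\hat p_x=0$; but then $\hat p_{x0}=0$ as well and the minimum is $0$, so nothing breaks --- worth a one-line remark, not a gap. In short: same reduction to \cref{tomog}, with a slightly more robust (division-free) definition of $\mrm{cp}$ replacing the paper's rounded quotient plus case split; both yield the stated $2^{-m(n)}$ bound.
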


The intuition behind the definitions of $p_n$ and cp is that if $\ket{\psi_n}$ is measured in the standard basis, then $p_n(x)$ is the probability that the first $|x|$ qubits measure to $x$, and $\cp(1^n, x) \approx p_n(x0)/p_n(x)$ is approximately the probability that the $(|x|+1)$'st qubit measures to 0 conditioned on the first $|x|$ qubits measuring to $x$.

\begin{proof}
	Let $C_n$ be the general quantum circuit that on input $\kb{x} \otimes \rho$, where $x \in \cube{\le n}$ (encoded in some reasonable way as a string of length depending only on $n$) and $\rho$ is an $n$-qubit state, does the following: measure the first $|x|$ qubits of $\rho$ in the standard basis, output 1 if the outcome $x$ occurs, and otherwise output 0. Clearly $\bra1 C_n(x, \psi_n) \ket1 = p_n(x)$ and $(C_n)_{n \in \N}$ is space-uniform, so by \cref{tomog} there exists a $\PSPACE$-computable function $f$ such that for all $n \in \N, x \in \cube{\le n}$ it holds that $f(1^n, x) \in \D{\ell(n)}$ (we will specify the polynomial $\ell$ later) and
	\begin{equation*}
		|f(1^n, x) - p_n(x)| \le 3 \cdot 2^{-\ell(n)}~.
	\end{equation*}
	Let
	\begin{equation*}
		\cp(1^n,x) = \round\Paren{\frac{f(1^n,x0)}{f(1^n,x)}},
	\end{equation*}
	where the function $\round(\cdot)$ maps its argument to the nearest element of $\D{\ell(n)}$. (If $f(1^n,x) = 0$ then define $\cp(1^n,x)$ arbitrarily.) Clearly $\cp(\cdot)$ is computable in $\PSPACE$.
	
	Fix $n \in \N, x \in \cube{<n}$ and let
	\begin{align*}
		a &= p_n(x0), &b&= p_n(x),& \delta &=  3 \cdot 2^{-\ell(n)},\\
		\wt a &= f(1^n, x0), &\wt b&= f(1^n,x),& \mu &= \min\Paren{\wt a/\wt b, \, 1}~.
	\end{align*}
	It follows from definitions that $|\wt a - a|, |\wt b - b| \le \delta$, that $0 \le a \le b \le 1$, that $0 \le \wt a, \wt b \le 1$, and that $|\cp(1^n,x) - \mu| \le 2^{-\ell(n)}$. By the triangle inequality,
	\begin{equation*}
		|b \cdot \cp(1^n,x) - a|
		\le b \cdot |\cp(1^n,x) - \mu| + |b\mu - a|
		\le 2^{-\ell(n)} + |b\mu - a|~.
	\end{equation*}
	If $b \le 3\delta$ then
	\begin{equation*}
		|b\mu - a|
		\le \max(b\mu, a)
		\le b
		\le 3\delta,
	\end{equation*}
	and if $b > 3\delta$ then
	\begin{align*}
		|b\mu - a|
		&\le \left|b \cdot \frac{\wt a}{\wt b} - a\right|
		= \left|\frac{b \cdot \wt a - a \cdot \wt b}{\wt b}\right|
		= \left|\frac{b (\wt a - a) + a (b - \wt b)}{\wt b}\right| \\
		&\le \left|\frac{b (\wt a - a)}{\wt b}\right| + \left|\frac{a (b - \wt b)}{\wt b}\right|
		\le \frac{b\delta}{b - \delta} + \frac{a\delta}{b - \delta}
		\le \frac{2b\delta}{(2/3) b}
		= 3\delta,
	\end{align*}
	so either way it holds that $\left|b\mu - a\right| \le 3\delta$ and therefore
	\begin{equation*}
		|b \cdot \cp(1^n,x) - a|
		\le 2^{-\ell(n)} + 3\delta
		= 10 \cdot 2^{-\ell(n)}~.
	\end{equation*}
	The result follows by taking $\ell(n) = m(n) + 4$.
\end{proof}

For $\ell \in \N$ recall that we define $\U\ell = \{\exp(2\pi i r) : r \in \D\ell\}$.

\begin{cor}\label{cor:tomog-phase}
	Let $m$ be a polynomial, let $(\ket{\psi_n})_{n \in \N} \in \statePSPACE$, and for $n \in \N, x \in \cube{n}$ let $\alpha_x = \ip{x}{\psi_n}$. Then there exists a polynomial $\ell$, a sequence of unit-magnitude complex numbers $(\gamma_n)_{n \in \N}$, and a $\PSPACE$-computable function $\ph$ (for ``phase") such that for all $n \in \N, x \in \cube n$ it holds that $\ph(x) \in \U{\ell(n)}$ and
	\begin{equation*}
		\big| \ph(x) \cdot \gamma_n \cdot |\alpha_x| - \alpha_x \big| \le  2^{-m(n)}~.
	\end{equation*}
\end{cor}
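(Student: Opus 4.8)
The obstruction to directly computing $\alpha_x$ is that membership in \statePSPACE only guarantees a circuit producing a density matrix close to $\psi_n = \kb{\psi_n}$, and such a density matrix determines $\ket{\psi_n}$ only up to a global phase; the unit scalar $\gamma_n$ in the statement is there precisely to absorb this ambiguity. Accordingly, the plan is to use polynomial-space tomography to fix, in $\PSPACE$, a standard-basis string $x^* = x^*_n$ on which $\ket{\psi_n}$ has non-negligible overlap, to let $\gamma_n = \alpha_{x^*_n}/|\alpha_{x^*_n}|$ be the phase of that overlap, and then for each $x$ to estimate the \emph{relative} phase $\alpha_x \overline{\alpha_{x^*_n}}/|\alpha_x \overline{\alpha_{x^*_n}}|$, whose product with $\gamma_n$ is exactly the desired phase $\alpha_x/|\alpha_x|$.

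Two ingredients are needed. First, since $\sum_{x \in \cube n} |\alpha_x|^2 = 1$, some $x$ has $|\alpha_x|^2 \ge 2^{-n}$; applying \cref{tomog} to the space-uniform family of general quantum circuits that, on classical input $x$ and quantum input $\psi_n$, measures $\psi_n$ in the standard basis and outputs whether the outcome is $x$, we obtain a $\PSPACE$-computable approximation to $|\alpha_x|^2$ of any desired inverse-exponential accuracy, and scanning $x \in \cube n$ lexicographically yields a $\PSPACE$-computable string $x^*_n$ with $|\alpha_{x^*_n}| \ge 2^{-n}$. (The scalar $\gamma_n$ itself is never computed; it is only asserted to exist.) Second, for $x \in \cube n$ let $W_x$ be the $n$-qubit unitary acting as the Hadamard rotation on $\mathrm{span}(\ket x, \ket{x^*_n})$ and as the identity elsewhere; $x \mapsto W_x$ is realized by a space-uniform circuit family (with $x^*_n$ hard-wired into the machine, computed as above). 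Applying $W_x$ to $\ket{\psi_n}$ and measuring in the standard basis, the probabilities of the outcomes $x$ and $x^*_n$ differ by exactly $2\,\mathrm{Re}(\alpha_x \overline{\alpha_{x^*_n}})$, and an analogous rotation using $i$'s yields $\mathrm{Im}(\alpha_x \overline{\alpha_{x^*_n}})$; feeding the circuits that compute these probabilities into \cref{tomog} gives $\PSPACE$-computable $\wt r(1^n,x), \wt\imath(1^n,x) \in \D{p(n)}$ approximating $\mathrm{Re}(\alpha_x \overline{\alpha_{x^*_n}})$ and $\mathrm{Im}(\alpha_x \overline{\alpha_{x^*_n}})$ to within $2^{-p(n)}$, for any polynomial $p$ we like.

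Now set $z_x = \wt r(1^n,x) + i\,\wt\imath(1^n,x)$ (so $|z_x - \alpha_x \overline{\alpha_{x^*_n}}| \le 2^{-p(n)+1}$) and let $\ph(x)$ be the nearest element of $\U{\ell(n)}$ to $z_x/|z_x|$ when $z_x \ne 0$, and $\ph(x) = 1$ otherwise; this is $\PSPACE$-computable. The construction is tailored to the identity
\[
\gamma_n \cdot \frac{\alpha_x \overline{\alpha_{x^*_n}}}{|\alpha_x \overline{\alpha_{x^*_n}}|} \;=\; \frac{\alpha_x}{|\alpha_x|} \qquad (\alpha_x \ne 0),
\]
so $\ph(x)\,\gamma_n\,|\alpha_x|$ is close to $\alpha_x$ exactly when $\ph(x)$ is close to the unit complex number $\alpha_x \overline{\alpha_{x^*_n}}/|\alpha_x \overline{\alpha_{x^*_n}}|$. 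Using the elementary bound $\bigl|\frac{w}{|w|} - \frac{w'}{|w'|}\bigr| \le \frac{2|w-w'|}{|w'|}$ together with $|\alpha_x| \le 1$ and $|\alpha_{x^*_n}| \ge 2^{-n}$, a short calculation shows that $\bigl|\ph(x)\,\gamma_n\,|\alpha_x| - \alpha_x\bigr| \le 2^{O(1)}\bigl(2^{-\ell(n)} + 2^{n - p(n)}\bigr)$, the key point being that the $|\alpha_x|$ prefactor exactly cancels the $1/(|\alpha_x|\,|\alpha_{x^*_n}|)$ amplification incurred by normalizing the possibly tiny number $z_x$; choosing $\ell(n) = m(n) + O(1)$ and $p(n) = m(n) + n + O(1)$ makes this at most $2^{-m(n)}$. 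The remaining cases are harmless: if $\alpha_x = 0$ the bound is trivial, and $z_x = 0$ forces $|\alpha_x| \le 2^{-m(n)-1}$, whence $\bigl|\ph(x)\gamma_n|\alpha_x| - \alpha_x\bigr| \le 2|\alpha_x| \le 2^{-m(n)}$ no matter what $\ph(x)$ is.

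The whole argument is essentially bookkeeping around \cref{tomog}; the only steps that call for any care are verifying that the circuit families used are genuinely space-uniform --- in particular that the machine generating the $n$-th circuit can itself compute the reference string $x^*_n$ in polynomial space --- and the small-amplitude case split just mentioned. I do not anticipate a genuine obstacle.
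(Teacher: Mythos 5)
Your proposal is correct and follows essentially the same route as the paper's proof: fix a reference string with non-negligible amplitude via the $\PSPACE$ tomography of \cref{tomog}, let $\gamma_n$ be its phase, estimate $\alpha_x \conj{\alpha_{x^*_n}}$ by interference measurements on the two-dimensional span of $\ket x$ and $\ket{x^*_n}$ (your real/imaginary-part circuits are just a repackaging of the paper's four-term identity over $u \in \{1,i,-1,-i\}$), and then normalize and round, with the same cancellation of the small-amplitude denominator in the error analysis. The only bookkeeping difference is that a difference of two outcome probabilities is not itself of the form $\bra1 C_n(\cdot)\ket1$, so you must invoke \cref{tomog} on each outcome probability separately and subtract, exactly as the paper does with its four estimates.
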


At a high level we compute $\ph(x)$ as follows: first estimate $\alpha_x \conj \alpha_{y_n}$ for an appropriate string $y_n$ (where $y_n$ depends only on $n$) in a manner similar to one-qubit tomography, and then normalize to unit magnitude.

\begin{proof}
	Let $p(x) = |\alpha_x|^2$ for $x \in \cube*$. Below we define a sequence of strings $(y_n)_{n \in \N}$, with $y_n \in \cube n, p(y_n) \ge \frac12 \cdot 2^{-n}$ and the function $1^n \mapsto y_n$ computable in $\PSPACE$. Let $\gamma_n = \alpha_{y_n} / |\alpha_{y_n}|$.
	
	Let $A_n$ be the general quantum circuit that on input $\kb{x} \otimes \rho$, where $x \in \cube n$ and $\rho$ is an $n$-qubit state, measures $\rho$ in the standard basis and then outputs 1 if the outcome $x$ occurs and 0 otherwise. Clearly $\bra1 A_n(x, \psi) \ket1 = p(x)$ and $(A_n)_{n \in \N}$ is space-uniform, so by \cref{tomog} there exists a $\PSPACE$-computable function $f$ such that for all for all $n \in \N, x \in \cube n$ it holds that $f(x) \in \D{n+4}$ and $|f(x) - p(x)| \le \frac14 \cdot 2^{-n}$. Let $y_n$ be the lexicographically first string in $\cube n$ satisfying $f(y_n) \ge \frac34 \cdot 2^{-n}$. The string $y_n$ is well defined, because there exists $x \in \cube n$ such that $p(x) \ge 2^{-n}$, and for this $x$ it holds that
	\begin{equation*}
		f(x) = p(x) + (f(x) - p(x)) \ge 2^{-n} - 1/4 \cdot 2^{-n} = 3/4 \cdot 2^{-n}~.
	\end{equation*}
	By a similar calculation,
	\begin{equation*}
		p(y_n) = f(y_n) + (p(y_n) - f(y_n)) \ge 3/4 \cdot 2^{-n} - 1/4 \cdot 2^{-n} = 1/2 \cdot 2^{-n},
	\end{equation*}
	and the function $1^n \mapsto y_n$ is computable in $\PSPACE$ because $f$ is computable in $\PSPACE$.
	
	\newcommand*{\uu}{U}
	Let $\uu$ denote the set $\{1,i,-1,-i\}$. For $x,z \in \cube n$,
	\begin{align}
		2 \alpha_x \conj \alpha_z
		&= 2\ip{x}{\psi_n} \ip{\psi_n}z  \nonumber \\
		&= \frac12\sum_{u \in \uu} \Paren{
			u \ip{x}{\psi_n} \ip{\psi_n}{x} +
			\ip{x}{\psi_n} \ip{\psi_n}{z} +
			u^2 \ip{z}{\psi_n} \ip{\psi_n}{x} +
			u \ip{z}{\psi_n} \ip{\psi_n}{z}} \nonumber \\
		&= \sum_{u \in \uu} u \frac{\bra{x} + u \bra{z}}{\sqrt 2} \kb{\psi_n} \frac{\ket{x} + \conj{u} \ket{z}}{\sqrt 2} \nonumber \\
		&= \sum_{u \in \uu} u \left|\frac{\bra{x} + u \bra{z}}{\sqrt 2} \ket{\psi_n}\right|^2. \label{U4sum}
	\end{align}
	
	Let $W_n$ be the set of tuples $(x,z,u)$ with $x,z \in \cube n, u \in \uu$ such that $x \neq z$. We now prove that there is a $\PSPACE$-computable function $g$ with $g(x, z, u) \in \D{\ell(n)}$ for all $(x,z,u) \in W_n$ (we will specify the polynomial $\ell$ later) such that
	\begin{equation} \label{gapprox}
		\forall (x,z,u) \in W_n: \ \left|g(x,z,u) - \left|\frac{\bra{x} + u \bra{z}}{\sqrt 2} \ket{\psi_n}\right|^2 \right| \le 3 \cdot 2^{-\ell(n)}~.
	\end{equation}
	By \cref{tomog}, it suffices to give a space-uniform family of general quantum circuits $(B_n)_{n \in \N}$, where $B_n$ has $3n+2$ input qubits and one output qubit, such that
	\begin{equation*}
		\forall (x,z,u) \in W_n: \ \bra1 B_n(x,z,u, \psi_n) \ket1 = \left|\frac{\bra{x} + u \bra{z}}{\sqrt 2} \ket{\psi_n}\right|^2~.
	\end{equation*}
	Our construction of $B_n$ involves the following unitary quantum circuit $C_n$, which satisfies
	\begin{equation*}
		\forall (x,z,u) \in W_n: \ C_n \ket{x,z,u, 0^{\poly(n)}, 0^n} = \ket{x,z,u, 0^{\poly(n)}} \otimes \frac{\ket x + \conj u \ket z}{\sqrt 2},
	\end{equation*}
	and which acts on input $\ket{x,z,u, 0^{\poly(n)}, 0^n}$ as follows: Construct $\ket+$ in a one-qubit ancilla register $\reg R$; controlled on 0 in $\reg R$, XOR $x$ into the last $n$ qubits; controlled on 1 in $\reg R$, XOR $z$ into the last $n$ qubits and apply a phase of $\conj u$; and controlled on the last $n$ qubits equaling $z$, XOR 1 into $\reg R$. Finally, let $B_n$ act as follows on input $(x,z,u,\phi)$ where $\ket\phi$ is an $n$-qubit state: Construct $\adj C_n \ket{x, z, u, \zs, \phi}$, measure the last $n$ qubits in the standard basis, output 1 if the outcome is all-zeros, and otherwise output 0.
	
	For $n \in \N$, let $\ph(y_n) = 1$, and for $x \in \cube{n} \backslash \{y_n\}$ let
	\begin{equation*}
		\ph(x) = \round \Paren{\frac{\sum_{u \in \uu} u \cdot g(x,y_n,u)} {|\sum_{u \in \uu} u \cdot g(x,y_n,u)|}},
	\end{equation*}
	where the function $\round(\cdot)$ maps its argument to the nearest element of $\U{\ell(n)}$ (and $\round(0/0)$ may be defined arbitrarily). The function $\ph(\cdot)$ is computable in $\PSPACE$ because the functions $1^n \mapsto y_n$ and $g$ are computable in $\PSPACE$.
	
	Fix $n \in \N$, and note that
	\begin{equation*}
		\big| \ph(y_n) \, \gamma_n \, |\alpha_{y_n}| - \alpha_{y_n} \big|
		= |1 \cdot \alpha_{y_n} - \alpha_{y_n}|
		= 0
		<  2^{-m(n)}~.
	\end{equation*}
	Now consider a string $x \in \cube{n} \backslash \{y_n\}$, and let
	\begin{align*}
		&\lambda = 2 \alpha_x \conj \alpha_{y_n} = 2 \alpha_x |\alpha_{y_n}| \conj \gamma_n,
		&\mu = \sum_{u \in \uu} u \cdot g(x,y_n,u).
	\end{align*}
	By \eqref{U4sum}, \eqref{gapprox} and the triangle inequality,
	\begin{equation*}
		|\lambda - \mu|
		= \left|\sum_{u \in \uu} u \cdot \Paren{\left|\frac{\bra{x} + u \bra{y_n}}{\sqrt 2} \ket{\psi_n}\right|^2 - g(x,y_n,u)}\right|
		\le 12 \cdot 2^{-\ell(n)}~.
	\end{equation*}
	Therefore, by the triangle inequality and \eqref{eq:integral} and recalling that $|\alpha_{y_n}| \ge \frac1{\sqrt 2} \cdot 2^{-n/2} \ge \frac12 \cdot 2^{-n/2}$,
	\begin{align*}
		\big| \ph(x) \, \gamma_n \, |\alpha_x| - \alpha_x \big|
		&= \big|\ph(x) \, |\alpha_x|  - \alpha_x \, \conj \gamma_n \big| \\
		&\le \left| \ph(x) \, |\alpha_x| - \frac{\mu}{|\mu|} |\alpha_x| \right| + \left| \frac{\mu}{|\mu|} |\alpha_x| - \alpha_x \, \conj \gamma_n \right| \\
		&\le \left|\ph(x) - \frac\mu{|\mu|}\right| + \frac1{2 |\alpha_{y_n}|} \cdot \left|\frac\mu{|\mu|} |\lambda| - \lambda \right| \\
		&\le 2\pi \cdot 2^{-\ell(n)} + 2^{n/2} \cdot \left|\frac\mu{|\mu|} (|\lambda| - |\mu|) + \mu - \lambda\right| \\
		&\le 7 \cdot 2^{-\ell(n)} + 2^{n/2} \cdot \Paren{\big||\lambda|-|\mu|\big| + |\lambda-\mu|} \\
		&\le 7 \cdot 2^{-\ell(n)} + 2^{n/2} \cdot 2|\lambda-\mu| \\
		&\le 31 \cdot 2^{n/2 - \ell(n)},
	\end{align*}
	so it suffices to take $\ell(n) = m(n) + \ceil{n/2} + 5$.
\end{proof}
	\section{Interactive State Synthesis}
\label{sec:synthesis}
%BEGIN_FOLD
Let $\Psi = (\ket{\psi_n})_{n \in \N}$ be a family of quantum states in \statePSPACE, and let $q$ be a polynomial. In this section we prove that $\Psi \in \stateQIP{c,s}$ with completeness $c(n) = \exp(-q(n))$ and soundness 
\[
s(n,\delta) = \exp \Paren{-\frac{ \delta^4 - \exp(-q(n))}{\Delta(n)}}
\]
for some polynomial $\Delta$. For every fixed $n$, this soundness function is log-concave and nonincreasing as a function of $\delta$. So by \cref{lem:stateQIP-amplification} applied with the polynomial $m(n) = \Delta(n) \cdot q(n)$, it follows that $\Psi \in \stateQIP{c,s'}$ with the same completeness $c$ and with soundness
\[
s'(n,\delta)
= \exp \Paren{-q(n) \cdot (\delta^4 - \exp(-q(n)))}
= \exp \Paren{q(n) e^{-q(n)} - q(n) \cdot \delta^4}~.
\]
This establishes \Cref{thm:prim-result}, because for every polynomial $p$ there exists a polynomial $q \ge p$ such that $q(n) e^{-q(n)} \le e^{-p(n)}$ for all sufficiently large $n$.

%END_FOLD
\subsection{Description of the protocol} \label{subsec:desc-prot}
Before describing the $\cc{stateQIP}$ protocol for the state family $\Psi$, we first describe some of its subroutines:
\paragraph{Interactive protocols for quantum state tomography.}
For $n \in \N, x \in \cube{\le n}$ let
\begin{equation*}
	p_{n}(x) = \norm{(\bra{x} \otimes \id) \ket{\psi_n}}^2,
\end{equation*}
i.e.\ $p_{n}(x)$ is the probability that measuring the first $|x|$ qubits of $\ket{\psi_n}$ yields the string $x$. And for $n \in \N, x \in \cube{n}$ let
\begin{equation*}
	\alpha_x = \ip{x}{\psi_n}~.
\end{equation*}
Let $m$ be a sufficiently large polynomial, to be specified later.

By \cref{cor:tomog-relative-weight} there exists a $\PSPACE$-computable function $\cp$ and a polynomial $\ell_\cp$ such that for all $n \in \N, x \in \cube{<n}$ it holds that $\cp(1^n,x) \in \D{\ell_\cp(n)}$ and
\begin{equation} \label{eq:cp1}
	\left|p_n(x) \cdot \cp(1^n,x) - p_n(x0) \right| \le 2^{-m(n)},
\end{equation}
or equivalently
\begin{equation} \label{eq:cp2}
	\left|p_n(x) \cdot (1-\cp(1^n,x)) - p_n(x1) \right| \le 2^{-m(n)}~.
\end{equation}
Similarly, by \cref{cor:tomog-phase} there exists a polynomial $\ell_\ph$, a sequence of unit-magnitude complex numbers $(\gamma_n)_{n \in \N}$, and a $\PSPACE$-computable function $\ph$ such that for all $n \in \N, x \in \cube n$ it holds that $\ph(x) \in \U{\ell_\ph (n)}$ and
\begin{equation} \label{eq:ph}
	\left| \ph(x) \cdot \gamma_n \cdot \sqrt{p_n(x)} - \alpha_x \right| \le  2^{-m(n)}
\end{equation}
where we used that $\sqrt{p_n(x)} = |\alpha_x|$.  Assume without loss of generality that $\ell_\ph = \ell_\cp = \ell$ for some polynomial $\ell$. Define languages
\begin{align*}
	&L_\cp = \{(1^n, x, \cp(1^n, x)) : n \in \N, x \in \cube{<n}\},
	&L_\ph = \{(x,\ph(x)) : x \in \cube*\}~.
\end{align*}
Since the functions $\cp$ and $\ph$ are computable in $\PSPACE$, the languages $L_\cp$ and $L_\ph$ are in $\PSPACE$, so by \cref{thm:qip=pspace} there exist $\QIP[1/2]$ verifiers $V_\cp$ and $V_\ph$ for $L_\cp$ and $L_\ph$ respectively. Without loss of generality, these verifiers may be assumed to be unitary except for the measurement of the accept/reject qubit at the end.

\paragraph{The swap test.}

The \emph{swap test}~\cite{buhrman2001quantum} is a procedure that takes in as input two registers $\reg{A}$ and $\reg{B}$ that have the same number of qubits, and performs the two-outcome projective measurement $\{ S_{\reg{AB}}, A_{\reg{AB}} \}$ where $S_{\reg{AB}}$ is the projector onto the \emph{symmetric subspace} and $A_{\reg{AB}} = \id - S_{\reg{AB}}$ is the projector onto the \emph{antisymmetric subspace} of registers $\reg{A}, \reg{B}$. The projectors $S_{\reg{AB}}, A_{\reg{AB}}$ can alternatively be expressed as 
\begin{align*}
	&S_{\reg{AB}} = \frac{\id + \mai{Swap}_{\reg{AB}}}2,
	&A_{\reg{AB}} = \frac{\id - \mai{Swap}_{\reg{AB}}}2,
\end{align*}
where $\mai{Swap}_{\reg{AB}}$ is the Hermitian unitary that swaps the contents of registers $\reg{A}$ and $\reg{B}$.

\begin{clm} \label{clm:swap-test}
	There is a uniform family of polynomial-size general quantum circuits, where the $n$'th circuit takes $2n$ input qubits partitioned into $n$-qubit registers $\reg A$ and $\reg B$, and performs the two-outcome projective measurement $\{ S_{\reg{AB}}, A_{\reg{AB}} \}$ on the input.
\end{clm}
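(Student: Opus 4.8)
The plan is to implement the standard swap test circuit of~\cite{buhrman2001quantum} and then check three things: (i) it can be written as a polynomial-size general quantum circuit over the gate set $\{H,\mathit{CNOT},T\}$ together with the allowed non-unitary operations (introducing an ancilla, measuring it, and tracing it out); (ii) the resulting family is uniform; and (iii) the circuit realizes \emph{exactly} the two-outcome projective measurement $\{S_{\reg{AB}}, A_{\reg{AB}}\}$, meaning it reports the correct outcome \emph{and}, conditioned on each outcome, leaves the correct post-measurement state on $\reg{AB}$.

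For the construction: introduce a fresh one-qubit ancilla register $\reg R$ in state $\ket0$, apply $H$ to $\reg R$, apply the controlled-swap $\sum_{b \in \bits} \kb b_{\reg R} \otimes \mai{Swap}_{\reg{AB}}^{\,b}$, apply $H$ to $\reg R$ again, and finally measure $\reg R$ in the standard basis and trace it out, with outcome $0$ labeled $S_{\reg{AB}}$ and outcome $1$ labeled $A_{\reg{AB}}$. The controlled-swap on the $n$-qubit registers decomposes as the product of $n$ Fredkin gates, the $i$'th swapping $\reg A_i$ and $\reg B_i$ controlled on $\reg R$; each Fredkin gate decomposes into a constant number of Toffoli and $\mathit{CNOT}$ gates, and each Toffoli into $O(1)$ gates from $\{H,\mathit{CNOT},T\}$. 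Hence the $n$'th circuit has size $O(n)$, and its regular structure lets a polynomial-time Turing machine print its description on input $1^n$, giving uniformity.

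For correctness it suffices by linearity to analyze a pure input $\ket\varphi$ on $\reg{AB}$ (which may be entangled with an external register, on which the circuit acts as the identity). Tracking the state through the circuit, and using that $\frac{\id + \mai{Swap}_{\reg{AB}}}{2} = S_{\reg{AB}}$ and $\frac{\id - \mai{Swap}_{\reg{AB}}}{2} = A_{\reg{AB}}$, the global state just before the measurement is $\ket0_{\reg R} \otimes \bigl(S_{\reg{AB}}\ket\varphi\bigr) + \ket1_{\reg R} \otimes \bigl(A_{\reg{AB}}\ket\varphi\bigr)$. Measuring $\reg R$ therefore yields outcome $0$ with probability $\bra\varphi S_{\reg{AB}} \ket\varphi$ and post-measurement state on $\reg{AB}$ proportional to $S_{\reg{AB}}\ket\varphi$, and outcome $1$ with probability $\bra\varphi A_{\reg{AB}}\ket\varphi$ and post-measurement state proportional to $A_{\reg{AB}}\ket\varphi$; this is exactly the projective measurement $\{S_{\reg{AB}}, A_{\reg{AB}}\}$, and it extends to arbitrary mixed inputs by linearity.

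This claim is essentially textbook, so there is no real obstacle; the only point warranting a line of care is item (iii), namely checking that the circuit implements the full projective \emph{measurement} — producing the correct conditional post-measurement states, not merely the correct acceptance probability — since the interactive protocols of later sections apply the swap test and then continue the computation using the surviving registers.
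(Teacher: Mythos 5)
Your proposal is correct and is essentially the paper's own proof: the paper also applies a controlled-$\mai{Swap}_{\reg{AB}}$ with a $\ket+$ ancilla and measures the ancilla in the Hadamard basis, which is exactly your $H$--controlled-swap--$H$--standard-basis-measurement circuit. You merely spell out the Fredkin/Toffoli gate decomposition, uniformity, and the post-measurement state check in more detail than the paper does.
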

\begin{proof}
	The circuit first applies a controlled-$\mai{Swap}_{\reg{AB}}$ gate, where the control qubit is an ancilla initialized in the $\ket+$ state. (This can be done by a circuit of linear size, because $\mai{Swap}_{\reg{AB}}$ is the tensor product of $n$ two-qubit swap transformations.) On input $\ket{\phi}_{\reg{AB}}$, the resulting state is
	\begin{equation*}
		\frac1{\sqrt2} \ket0 \otimes \ket{\phi} + \frac1{\sqrt2} \ket1 \otimes \mai{Swap} \ket{\phi}
		= \ket+ \otimes S\ket{\phi} + \ket- \otimes A\ket{\phi},
	\end{equation*}
	so it suffices to measure the ancilla qubit in the Hadamard basis.
\end{proof}

If $\reg A$ and $\reg B$ are zero-qubit registers, then $\mai{Swap}_{\reg{AB}} = \id_{\reg{AB}}$ and $S_{\reg{AB}} = \id_{\reg{AB}}, A_{\reg{AB}} = 0_{\reg{AB}}$, so in this case the measurement from \cref{clm:swap-test} vacuously outputs ``$S$" with probability 1.

\paragraph{The verifier.}

Let $t$ be a sufficiently large polynomial, to be specified later. In the rest of \cref{sec:synthesis} we fix $n$ and write $\ket\psi = \ket{\psi_n}$. \cref{alg:main} describes a $\cc{stateQIP}$ verifier for synthesizing $\ket\psi$. The verifier's workspace includes the following disjoint registers, which are initialized to the all-zeros state:
\begin{itemize}[leftmargin=*]
	\renewcommand\labelitemi{--}
	\item registers $\reg A_1, \dotsc, \reg A_n$ each consisting of a single qubit;
	\item registers $\reg B_1, \dotsc, \reg B_n$ each consisting of a single qubit;
	\item registers $\reg D_1, \dotsc, \reg D_{t(n)}$ each consisting of $\ell(n)$ qubits;
	\item registers $\reg W_1, \dotsc, \reg W_{t(n)}$ each consisting of $\poly(n)$ qubits.
\end{itemize}
For $0 \le k \le n$ we write $\reg A_{[k]}$ to denote the concatenated register $\reg A_1 \reg A_2 \dotsb \reg A_k$, and similarly for $\reg B_{[k]}$. For future convenience we also define zero-qubit registers $\reg A_{n+1}$ and $\reg B_{n+1}$, and write $\reg A_{[n+1]} = \reg A_{[n]}$ and $\reg B_{[n+1]} = \reg B_{[n]}$.

\begin{algorithm}[th!]
	\caption{Interactive synthesis of $\ket{\psi}$ with an untrusted prover.} \label{alg:main}
	\begin{algorithmic}[1]
		\State Set $k=0$, and sample $b = \Paren{b_1, \dotsc, b_{t(n)}} \unif \cube{t(n)}$.
		\For{$h = 1$ to $t(n)$}
		\IIf{$k = n + 1$} go to \Cref{line:m-end}. \EndIIf
		\Ctrl{the state $\ket x$ of $\reg A_{[k]}$ where $x \in \cube k$,} \hlabel{line:m1a} \Comment{\emph{start of ``forward-QIP"}}
		\State Send a copy of $x$ to the prover, and then let the prover act on $\reg D_h$. \hlabel{line:msendx-sec6}
		\Ctrl{the state $\ket\eta$ of $\reg D_h$ where $\eta \in \cube{\ell(n)}$,}
		\State Copy $(x,\eta)$ into a sub-register of $\reg W_h$, and then perform the following in $\reg W_h$:\hlabel{line:mcopyx-sec6}
		\If{$k<n$} unitarily simulate the verifier $V_\cp$ on input $(1^n, x, \eta)$, \hlabel{line:mqip1}
		\ElsIf{$k=n$} unitarily simulate the verifier $V_\ph$ on input $(x, \eta)$, \hlabel{line:mqip2}
		\EndIf
		\EndCtrl
		\EndCtrl \hlabel{line:m1b} \Comment{\emph{end of ``forward-QIP"}}
		\If{$b_h = 1$} \hlabel{line:m2a} \Comment{\emph{start of ``A-grow"}}
		\State Measure the flag qubit in $\reg W_h$ in the standard basis. Reject if the outcome is $0$. \hlabel{line:mmeas}
		\Ctrl{the state $\ket\eta$ of $\reg D_h$ where $\eta \in \cube{\ell(n)}$,}
		\If{$k < n$} interpreting $\eta$ as an element of $\D{\ell(n)}$, construct the state 
		\Statex \hspace{\algorithmicindent} \qquad \qquad $\sqrt\eta \ket0 + \sqrt{1-\eta} \ket1$ in $\reg A_{k+1}$.
		\ElsIf{$k=n$} interpreting $\eta$ as an element of $\U{\ell(n)}$, apply the phase $\eta$.
		\EndIf
		\EndCtrl
		\EndIf \hlabel{line:m2b} \Comment{\emph{end of ``A-grow"}}
		\State Run the forward-QIP step ``in reverse". \hlabel{line:m3} \Comment{\emph{``backward-QIP"}}
		\State Send the value of $b_h$ to the prover.
		\IIf{$b_h = 1$} let the prover act on $\reg B_{[k + b_h]}$. \EndIIf \hlabel{line:m4} \Comment{\emph{``B-grow"}}
		\State Perform the swap test on registers $\reg{A}_{[k+b_h]} \reg{B}_{k+b_h}$, and reject if the outcome is the 
		\Statex \hspace{\algorithmicindent} \qquad antisymmetric subspace.
		\State $k \gets k + b_h$.
		\EndFor
		\State \Return $\reg A_{[n]}$. \hlabel{line:m-end}
	\end{algorithmic}
\end{algorithm}

We define the following groups of lines in \cref{alg:main} for future convenience: the ``forward-QIP step" refers to lines~\ref{line:m1a} to \ref{line:m1b}, the ``A-grow step" refers to lines~\ref{line:m2a} to \ref{line:m2b}, the ``backward-QIP step" refers to line~\ref{line:m3}, and the ``B-grow step" refers to line~\ref{line:m4}.

Rather than referring to an explicit message register, we model the interaction between the verifier and prover as follows: If the verifier owns a $\kappa$-qubit register $\reg R$, then we write ``the verifier lets the prover act on $\reg R$" to denote that first the verifier swaps $\reg R$ with the first $\kappa$ qubits of the message register, then the prover acts on the message register, and finally the verifier swaps the first $\kappa$ qubits of the message register with $\reg R$.

Finally, we clarify the following lines of \cref{alg:main}:
\begin{itemize}[leftmargin=*]
	\renewcommand\labelitemi{--}
	\item In \cref{line:msendx-sec6} ``send a copy of $x$ to the prover" means the following: Write a copy of $x$ (encoded in some reasonable way as a string of length depending only on $n$) to a sub-register of $\reg W_h$ that will not be used for anything else, and then let the prover act on this sub-register.
	\item In \cref{line:mqip1,line:mqip2} ``unitarily simulate" means to simulate using only unitary transformations, and in particular to not measure the flag (accept/reject) qubit at the end.
	\item In \cref{line:mmeas} the ``flag qubit in $\reg W_h$" refers to the flag qubit from the simulation of a $\QIP[1/2]$ verifier in \cref{line:mqip1,line:mqip2}.
	\item In the backward-QIP step we mean the following: If the precise specification of the forward-QIP step is to successively apply unitaries $V_1, V_2, \dotsc, V_\kappa$ interspersed with actions by the prover, then now apply $\adj V_\kappa, \adj V_{\kappa-1}, \dotsc, \adj V_1$ interspersed with actions by the prover.
\end{itemize}

\subsection{Proof of completeness} \label{subsec:proof-comp}

For $x \in \cube{<n}$ define approximate conditional probabilities
\begin{align*}
	&\wt g_{x0} = \cp(1^n, x),
	&\wt g_{x1} = 1-\cp(1^n,x),
\end{align*}
and for $x \in \cube{\le n}$ define the approximate marginal probability
\begin{equation*}
	\wt{p}(x) = \prod_{j=1}^{|x|} \wt g_{x_{\leq j}}
\end{equation*}
(recall that $x_{\leq j}$ denotes the first $j$ bits of $x$). For $0 \le k \le n$ define the approximate intermediate state
\begin{equation*}
	\ket{\wt{\psi}\uppart{k}} = \sum_{\mathclap{x \in \cube k}} \sqrt{\wt{p}(x)} \, \ket{x},
\end{equation*}
and define the approximate target state
\begin{equation*}
	\ket{\wt{\psi}} = \ket{\wt{\psi}\uppart{n+1}} = \sum_{\mathclap{x \in \cube{n}}} \ph(x) \sqrt{\wt{p}(x)} \, \ket{x}~.
\end{equation*}
(The amplitudes of $\wt{\psi}$ are exponentially close to those of the ideal target state $\psi$, because of the exponentially small error induced by the $\PSPACE$ algorithms for tomography.)

The following honest prover helps the verifier synthesize $\ket{\wt{\psi}}$ exactly:
\begin{itemize}[leftmargin=*]
	\renewcommand\labelitemi{--}
	\item In the forward-QIP step of iteration $h$, controlled on receiving $x$ from the verifier,
	\begin{itemize}
		\item[*] If $|x|<n$, then write $\cp(1^n,x)$ to $\reg D_h$, and then simulate an honest prover corresponding to $V_\cp$ on input $(1^n, x, \cp(1^n, x))$.
		\item[*] If $|x|=n$, then write $\ph(x)$ to $\reg D_h$, and then simulate an honest prover corresponding to $V_\ph$ on input $(x, \ph(x))$.
	\end{itemize}
	
	\item In the backward-QIP step of iteration $h$, run the above bullet point ``in reverse", i.e.\ if the precise specification of the above bullet point is to successively apply unitaries $P_1, P_2, \dotsc, P_\kappa$ interspersed with actions by the verifier, then now apply $\adj P_\kappa, \adj P_{\kappa-1}, \dotsc, \adj P_1$ interspersed with actions by the verifier.
	
	\item In the B-grow step of iteration $h$ when $b_h = 1$, swap a copy of $\ket{\wt\psi \uppart{k+1}}$ into $\reg B_{[k+1]}$, where $k = \hw(b_{<h})$ (recall that $\hw(\cdot)$ denotes Hamming weight).
\end{itemize}

With this prover, it is easy to see by induction on $h$ that conditioned on the random string $b \in \cube{t(n)}$, for all $h \in [t(n)]$ with $k = \hw(b_{<h}) \le n$, the state of the register $\reg A_{[k+b_h]} \reg B_{[k+b_h]}$ at the end of iteration $h$ is $\ket{\wt\psi \uppart{k+b_h}}^{\otimes 2}$. Furthermore, the verifier never rejects.\footnote{If the verifier uses the Solovay-Kitaev theorem~\cite{dawson2006solovay} to implement the unitary from the A-grow step \emph{approximately} over a finite gate set rather than exactly, then at the B-grow step when $b_h = 1$, the honest prover should provide in $\reg B_{[k+b_h]}$ a copy of the \emph{actual} state in $\reg A_{[k+b_h]}$ rather than $\ket{\wt\psi \uppart{k+b_h}}$. Then the verifier still accepts with probability exactly 1.}

Thus, averaging over the random string $b \in \cube{t(n)}$, the output state $\rho$ of the verifier after $t(n)$ rounds is
\begin{equation*}
	\rho = \PR{\hw(b) \le n} \cdot \sigma + \PR{\hw(b) \ge n+1} \cdot \wt\psi
\end{equation*}
for some mixed state $\sigma$. By the convexity of trace distance, it follows that
\begin{align*}
	\td(\rho, \psi)
	&\le \PR{\hw(b) \le n} \cdot \td(\sigma, \psi) + \PR{\hw(b) \ge n+1} \cdot \td\Paren{\wt\psi, \psi} \\
	&\le \PR{\hw(b) \le n} + \td\Paren{\wt\psi, \psi}~.
\end{align*}
Along with \cref{lem:term-whp,lem:psi-approx} (stated below), this implies that $\td(\rho, \psi) \le \exp(-q(n))$ as desired, where we take $t(n) = 18q(n) + 3n + 54$ and $m(n) = 4q(n) + 12n$:

\begin{lem} \label{lem:term-whp}
	$\PR{\hw(b) \le n} \le \frac1{12} \cdot \exp(-q(n))$.
\end{lem}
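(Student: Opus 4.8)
The random variable $\hw(b) = \sum_{h=1}^{t(n)} b_h$ is a sum of $t(n)$ independent uniform bits, hence $\mathrm{Binomial}(t(n),1/2)$ with mean $\mu = t(n)/2 = 9q(n) + \tfrac32 n + 27$. Since the threshold $n$ lies well below $\mu$, the plan is simply to apply a standard lower-tail Chernoff bound and then verify that the choice $t(n) = 18q(n) + 3n + 54$ leaves enough slack to absorb the constant $\tfrac1{12}$ and to keep the rate $\exp(-q(n))$.

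Concretely, I would first invoke the multiplicative Chernoff bound: for a sum $X$ of independent $\{0,1\}$-valued random variables with mean $\mu$ and any $\delta \in [0,1]$, $\Pr[X \le (1-\delta)\mu] \le \exp(-\delta^2\mu/2)$. Taking $X = \hw(b)$ and $\delta = (\mu - n)/\mu$ — which lies in $[0,1]$ since $0 \le n < \mu$ (indeed $\mu \ge \tfrac32 n + 27 > n$) — so that $(1-\delta)\mu = n$, this gives
\[
\PR{\hw(b) \le n} \;\le\; \exp\!\left(-\frac{(\mu - n)^2}{2\mu}\right) \;=\; \exp\!\left(-\frac{(\mu - n)^2}{t(n)}\right).
\]

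Then I would bound the exponent using the explicit value of $t(n)$. Writing $\lambda := \mu - n = 9q(n) + \tfrac12 n + 27$, we have $t(n) = 2\mu = 2\lambda + 2n$, and since $n \le 2(\tfrac12 n + 27) \le 2\lambda$ this yields $t(n) \le 6\lambda$; hence $(\mu-n)^2/t(n) = \lambda^2/t(n) \ge \lambda/6 \ge \tfrac32 q(n) + \tfrac92$. Therefore
\[
\PR{\hw(b) \le n} \;\le\; \exp\!\left(-\tfrac32 q(n) - \tfrac92\right) \;\le\; e^{-9/2}\exp(-q(n)) \;\le\; \tfrac1{12}\exp(-q(n)),
\]
using $q(n) \ge 0$ and $e^{-9/2} \approx 0.011 < 1/12$. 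There is no genuine obstacle here: the estimate is entirely routine, and the somewhat arbitrary-looking constants in $t(n)$ were evidently chosen precisely so that this computation closes with room to spare. The only thing requiring care is the bookkeeping of constants (e.g. getting both the $\tfrac1{12}$ prefactor and the $\exp(-q(n))$ decay), for which a cruder split such as using Hoeffding's inequality $\PR{\hw(b)\le \mu-\lambda}\le e^{-2\lambda^2/t(n)}$ would work just as well.
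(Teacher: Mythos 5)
Your proof is correct and follows essentially the same route as the paper: both apply a standard Chernoff-type lower-tail bound to $\hw(b)\sim\mathrm{Binomial}(t(n),1/2)$ and then grind through the constants of $t(n)=18q(n)+3n+54$ (the paper uses the additive Hoeffding form $\exp(-2t(n)(\tfrac12-n/t(n))^2)$, you use the multiplicative form, which gives a factor-2 weaker exponent that still closes with room to spare).
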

\begin{proof}
	By a Chernoff bound,
	\begin{align*}
		\PR{\hw(b) \le n}
		&= \PR{\mrm{Binomial}(t(n), 1/2) \le n}
		\le \exp\Paren{-2 \cdot t(n) \cdot \Paren{\frac12 - \frac{n}{t(n)}}^2} \\
		&\le \exp\Paren{-2 \cdot (18q(n) + 54) \cdot \Paren{\frac12 - \frac{n}{3n}}^2}
		= \exp(-q(n) - 3) \\
		&< \frac{1}{12} \cdot \exp(-q(n))~. \qedhere
	\end{align*}
\end{proof}

\begin{lem} \label{lem:psi-approx}
	$\td\Paren{\wt\psi, \psi} \le \frac1{12} \cdot \exp(-q(n))$.
\end{lem}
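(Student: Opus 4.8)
The statement concerns only the two fixed vectors $\ket{\wt\psi}$ and $\ket\psi$, with no reference to the protocol, so the plan is to bound the Euclidean distance between them up to a global phase and then invoke \eqref{eq:td-fid2}. First note that $\ket{\wt\psi}$ is a genuine unit vector: each $\wt g_{x0} + \wt g_{x1} = 1$, so $\sum_{x \in \cube n}\wt p(x) = 1$, and $|\ph(x)| = 1$. Since $\kb\psi = \kb{\conj{\gamma_n}\psi}$, \eqref{eq:td-fid2} gives $\td(\wt\psi,\psi) \le \norm{\ket{\wt\psi} - \conj{\gamma_n}\ket\psi}$. I would then interpolate through the vector $\ket\chi = \sum_{x \in \cube n}\ph(x)\sqrt{p_n(x)}\,\ket x$ (true marginals, approximate phases) and the phase-less intermediate states $\ket{\psi\uppart k} = \sum_{x \in \cube k}\sqrt{p_n(x)}\,\ket x$, so that by the triangle inequality it suffices to control a ``phase error'' term $\norm{\ket\chi - \conj{\gamma_n}\ket\psi}$ and a ``probability error'' term $\norm{\ket{\wt\psi} - \ket\chi} = \norm{\ket{\wt\psi\uppart n} - \ket{\psi\uppart n}}$, the last equality because applying the same diagonal phase unitary $x \mapsto \ph(x)$ to both $\ket{\psi\uppart n}$ and $\ket{\wt\psi\uppart n}$ preserves the norm of their difference.

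The phase error is immediate from \eqref{eq:ph}: for each $x \in \cube n$ we have $|\ph(x)\sqrt{p_n(x)} - \conj{\gamma_n}\alpha_x| = |\ph(x)\gamma_n\sqrt{p_n(x)} - \alpha_x| \le 2^{-m(n)}$, and summing the squares over the $2^n$ strings of $\cube n$ gives $\norm{\ket\chi - \conj{\gamma_n}\ket\psi} \le 2^{(n-2m(n))/2}$.

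The probability error is the main step. Write $\ket{\psi\uppart{k+1}} = U_k\ket{\psi\uppart k}$ and $\ket{\wt\psi\uppart{k+1}} = \wt U_k\ket{\wt\psi\uppart k}$, where $U_k$ (resp.\ $\wt U_k$) is the unitary that, controlled on $\reg A_{[k]}$ holding $x$, rotates $\reg A_{k+1}$ from $\ket 0$ to $\sqrt{p_n(x0)/p_n(x)}\,\ket 0 + \sqrt{p_n(x1)/p_n(x)}\,\ket 1$ (resp.\ to $\sqrt{\cp(1^n,x)}\,\ket 0 + \sqrt{1-\cp(1^n,x)}\,\ket 1$), extended arbitrarily to a unitary, with the usual convention when $p_n(x) = 0$. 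A standard hybrid argument — inserting these unitaries one at a time and using that unitaries preserve the norm — gives $\norm{\ket{\wt\psi\uppart n} - \ket{\psi\uppart n}} \le \sum_{k=0}^{n-1}\norm{(\wt U_k - U_k)\ket{\psi\uppart k}}$. Expanding in the standard basis of $\reg A_{[k]}$, using the elementary inequality $(\sqrt a - \sqrt b)^2 \le |a-b|$, and using that $\cp(1^n,x)$ and $1 - \cp(1^n,x)$ sum to $1$ just as $p_n(x0)/p_n(x)$ and $p_n(x1)/p_n(x)$ do, one gets $\norm{(\wt U_k - U_k)\ket{\psi\uppart k}}^2 \le \sum_{x \in \cube k} 2\,p_n(x)\,|\cp(1^n,x) - p_n(x0)/p_n(x)| = \sum_{x \in \cube k} 2\,|p_n(x)\cp(1^n,x) - p_n(x0)|$ (the $p_n(x)=0$ terms vanishing on both sides), which is at most $2^{k+1}\cdot 2^{-m(n)}$ by \eqref{eq:cp1}. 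Hence $\norm{\ket{\wt\psi\uppart n} - \ket{\psi\uppart n}} \le n \cdot 2^{(n-m(n))/2}$.

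Combining, $\td(\wt\psi,\psi) \le n \cdot 2^{(n-m(n))/2} + 2^{(n-2m(n))/2}$, and substituting $m(n) = 4q(n) + 12n$, using $q(n) \ge 0$, and treating the degenerate case $n = 0$ (where $\ket{\wt\psi}$ and $\ket\psi$ agree up to phase) separately, a short constant chase gives $\td(\wt\psi,\psi) \le \frac{1}{12}\exp(-q(n))$. The one conceptual point — the reason this does not follow from a naive triangle inequality on conditional probabilities — is that the per-step error $|\cp(1^n,x) - p_n(x0)/p_n(x)|$ can be huge when $p_n(x)$ is tiny, but in $\ket{\psi\uppart k}$ the branch $\ket x$ carries amplitude $\sqrt{p_n(x)}$, so the error always enters weighted by $p_n(x)$, and $p_n(x)\cdot|\cp(1^n,x) - p_n(x0)/p_n(x)| = |p_n(x)\cp(1^n,x) - p_n(x0)|$ is exactly the quantity \cref{cor:tomog-relative-weight} was designed to bound. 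Everything else is bookkeeping about which vectors are exactly normalized and routine estimation.
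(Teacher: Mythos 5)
Your proposal is correct: every step checks out, including the normalization of $\ket{\wt\psi}$, the handling of the $p_n(x)=0$ branches, and the final constant chase with $m(n)=4q(n)+12n$. It uses the same two tomography guarantees, \eqref{eq:cp1}--\eqref{eq:cp2} and \eqref{eq:ph}, as the paper, but aggregates the errors by a genuinely different route. The paper bounds $\td\Paren{\wt\psi,\psi}$ by the vector 1-norm $\sum_{x\in\cube n}\big|\gamma\,\ph(x)\sqrt{\wt p(x)}-\alpha_x\big|$ and controls each coordinate separately, telescoping the product $\sqrt{\wt p(x)}=\prod_j\sqrt{\wt g_{x_{\le j}}}$ against $\sqrt{p_n(x)}$ via $|a-b|^2\le|a^2-b^2|$, which yields $2^n(n+1)2^{-m(n)/2}$. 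You instead stay in the 2-norm, separate the phase error from the amplitude error at the vector level, and treat the amplitude error by a hybrid argument over the $n$ one-qubit ``grow'' unitaries $U_k,\wt U_k$; the per-step $\ell_2$ error is then automatically weighted by the true marginal $p_n(x)$, so the quantity that appears is exactly the multiplicative error $|p_n(x)\cp(1^n,x)-p_n(x0)|$ that \cref{cor:tomog-relative-weight} bounds -- the ``conceptual point'' you flag is indeed the reason that corollary is stated in that form. What your route buys is a quantitatively sharper bound, $n\,2^{(n-m(n))/2}+2^{(n-2m(n))/2}$ versus the paper's $2^n(n+1)2^{-m(n)/2}$ (a saving of roughly $2^{n/2}$); what the paper's route buys is brevity, since it needs no auxiliary unitaries and no hybrid bookkeeping. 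With the paper's choice of $m$ both bounds comfortably give $\frac1{12}\exp(-q(n))$. In a final write-up you should state explicitly that each controlled block of $U_k$ and $\wt U_k$ maps $\ket0$ to a unit vector (using $p_n(x0)+p_n(x1)=p_n(x)$ and $\cp(1^n,x)\in[0,1)$) so that the extensions to unitaries exist, but this is a presentational point, not a gap.
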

\begin{proof}
	Recall the definitions of $\alpha_x, p_n(x), \gamma_n$ from \cref{subsec:desc-prot}, and write $p(\cdot) = p_n(\cdot)$ and $\gamma = \gamma_n$. Since $\gamma \wt\psi \gamma^* = \wt\psi$ (because $\gamma$ has unit magnitude) and then using \eqref{eq:td-fid2}, we have
	\begin{equation*}
		\td\Paren{\wt\psi, \psi}
		= \td\Paren{\gamma \wt\psi \gamma^*, \psi}
		\leq \Norm{\gamma \ket{\wt\psi} - \ket\psi}
		\le \Norm{\gamma \ket{\wt\psi} - \ket\psi}_1
		= \sum_{\mathclap{x \in \cube n}} \left|\gamma  \cdot \ph(x) \cdot \sqrt{\wt{p}(x)}- \alpha_x\right|~.
	\end{equation*}
	Fix $x \in \cube n$. By the triangle inequality and \eqref{eq:ph}, we have
	\begin{align*}
		\left|\gamma\cdot \ph(x) \cdot \sqrt{\wt{p}(x)} - \alpha_x\right|
		&= \left|\gamma  \cdot \ph(x)\cdot \Paren{\sqrt{\wt{p}(x)} - \sqrt{p(x)}} + \Paren{\gamma \cdot \ph(x) \cdot \sqrt{p(x)} - \alpha_x} \right| \\
		&\le \left|\sqrt{\wt{p}(x)} - \sqrt{p(x)} \right| + 2^{-m(n)}~.
	\end{align*}
	Furthermore,
	\begin{align*}
		\left|\sqrt{p(x)} - \sqrt{\wt{p}(x)}\right|
		&= \left| \sqrt{p(x)} - \prod_{j=1}^n \sqrt{\wt{g}_{x_{\leq j}}} \right| \\
		&= \left| \sum_{j = 1}^n \Paren{\sqrt{p(x_{\le j})}  - \sqrt{p(x_{\le j - 1}) \cdot \wt{g}_{x_{\leq j}}}} \cdot \prod_{i=j+1}^n \sqrt{\wt{g}_{x_{\leq i}}} \right| \\
		&\le \sum_{j = 1}^n  \left| \sqrt{p(x_{\le j})}  - \sqrt{p(x_{\le j - 1}) \cdot \wt{g}_{x_{\leq j}} }  \right| \\
		&\le \sum_{j=1}^n \sqrt{\left| p(x_{\le j}) - p(x_{\le j - 1}) \cdot \wt{g}_{x_{\leq j}} \right|} \\
		&\le n 2^{-m(n)/2},
	\end{align*}
	where the first line is by the definition of $\wt{p}(x)$, the second line is by a telescoping sum, the third line is by the triangle inequality and fact that $\wt g_y \le 1$ for all $y$, the fourth line is by the fact that $|a - b|^2 \leq |a^2 - b^2|$ for nonnegative $a,b$, and the last line is by \eqref{eq:cp1} and \eqref{eq:cp2}. Therefore
	\begin{equation*}
		\td\Paren{\wt\psi, \psi} \le 2^n (n+1) 2^{-m(n)/2},
	\end{equation*}
	from which the result follows by the definition of $m$.
\end{proof}

\subsection{Proof of soundness} \label{subsec:proof-sound}
%BEGIN_FOLD

Throughout this subsection we fix an arbitrary prover, and let $b = \Paren{b_1, \dotsc, b_{t(n)}} \unif \cube{t(n)}$ denote the random string used by the verifier as described in \cref{subsec:desc-prot}. Let $\rho$ denote the output state conditioned on accepting, and for $d \in \cube{t(n)}$ let $\rho_d$ denote the output state conditioned on $b=d$ and on accepting.

For $h \in [t(n)], a \in \cube{h}$ define
\begin{equation*}
	r_a = \PR{\text{verifier rejects in iteration $h$} \mid b_{\leq h} = a}~.
\end{equation*}
The crux of the proof of soundness is the following lemma, which given a fixed string $d \in \cube{t(n)}$ of random choices of the verifier, relates the distance between the output state at the end of the protocol (conditioned on accepting) and the ideal state to the rejection probabilities at each iteration of the protocol. Recall the definition of $\ket{\wt\psi}$, which is the state synthesized by the verifier and the honest prover from \cref{subsec:proof-comp}.
\begin{lem} \label{lem:main-sound}
	Let $d \in \cube{t(n)}$ be a string such that $\hw(d) \ge n+1$. Then
	\begin{equation*}
		\td\Paren{\rho_d, \wt\psi} \le 4 \cdot \left( t(n) \cdot \sum_{h=1}^{t(n)} \left(r_{d_{< h} 0} + r_{d_{< h} 1} \right) \right)^{1/4}~.
	\end{equation*}
\end{lem}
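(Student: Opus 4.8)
The plan is a hybrid argument over the $t(n)$ iterations of \Cref{alg:main}: track how far the register $\reg A_{[k]}$ has drifted from the ideal intermediate state $\ket{\wt\psi\uppart k}$ of \Cref{subsec:proof-comp} as the counter $k$ climbs to $n+1$, bound the drift caused by each iteration in terms of (the square roots of) the relevant rejection probabilities, telescope, and finally convert the accumulated infidelity into a trace-distance bound using \Cref{lem:4p}, \eqref{eq:td-fid}, and Cauchy--Schwarz. Fix $d \in \cube{t(n)}$ with $\hw(d) \ge n+1$; we may assume the event ``$b=d$ and the verifier accepts'' has positive probability, since otherwise $\rho_d$ is undefined and there is nothing to prove. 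Let $h^\ast$ be the first iteration by the end of which $k$ reaches $n+1$ (so iterations $h^\ast+1, \dots, t(n)$ are vacuous, with $r_{d_{<h}0} = r_{d_{<h}1} = 0$), and for $0 \le h \le t(n)$ let $k_h = \hw(d_{\le h})$. Conditioning on $b = d$ and on the verifier accepting iterations $1, \dots, h$ leaves the global verifier-plus-prover state pure, since the only non-unitary steps of \Cref{alg:main} are the flag measurements and the swap tests, both post-selected onto their accepting outcomes; write $\ket{\Phi_h}$ for this state and let $\varphi_h \in [0,1]$ be the fidelity $\bra{\wt\psi\uppart{k_h}} \sigma_h \ket{\wt\psi\uppart{k_h}}$, where $\sigma_h$ is the reduced state of $\reg A_{[k_h]}$ in $\ket{\Phi_h}$ and $\ket{\wt\psi\uppart{n+1}} = \ket{\wt\psi}$. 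Then $\varphi_0 = 1$ (the register $\reg A_{[0]}$ has no qubits), and $\bra{\wt\psi} \rho_d \ket{\wt\psi} = \varphi_{h^\ast}$ because $\reg A_{n+1}$ carries no qubits, so the output register $\reg A_{[n]}$ is exactly $\reg A_{[k_{h^\ast}]}$. Moreover, since every iteration ends with the swap test on $\reg A_{[k_h]} \reg B_{[k_h]}$ post-selected onto the symmetric subspace, in $\ket{\Phi_h}$ the reduced states of $\reg A_{[k_h]}$ and $\reg B_{[k_h]}$ coincide, so $\varphi_h$ may equally be read off from $\reg B_{[k_h]}$.

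The heart of the proof is the per-iteration estimate $\varphi_h \ge \varphi_{h-1} - c_0\sqrt{r_{d_{<h}0} + r_{d_{<h}1}}$ for an absolute constant $c_0$, proved by cases on whether iteration $h$ is a test round ($d_h = 0$) or a grow round ($d_h = 1$). In a test round one has $k_{h-1} = k_h = k$, the prover never touches $\reg B_{[k]}$, and the verifier touches $\reg B_{[k]}$ only inside the final swap test, which rejects with probability $r_{d_{<h}0}$; a routine estimate on how far post-selection onto the symmetric subspace of a pair of registers can move the reduced state of one of them then shows that the $\reg B_{[k]}$ marginal moves by at most $O(\sqrt{r_{d_{<h}0}})$ in trace distance during iteration $h$, which gives the bound. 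In a grow round one combines three ingredients: (i) the forward-QIP and backward-QIP steps are run with a prover whose actions cannot depend on $b_h$ (it learns $b_h$ only afterwards), so the guarantee one would extract in the \emph{hypothetical} test round --- namely that forward-QIP followed by backward-QIP acts approximately as the identity on $\reg A_{[k]}$, in particular introduces no spurious phases --- applies here verbatim and costs $O(\sqrt{r_{d_{<h}0}})$; (ii) the flag measurement in the A-grow step, together with the soundness of the $\QIP$ verifiers $V_\cp$, $V_\ph$ (amplified, if necessary, to soundness $2^{-m(n)}$), forces $\reg D_h$ to hold $\cp(1^n, x)$ (resp.\ $\ph(x)$) coherently over the branch $\reg A_{[k]} = x$ up to error $O(\sqrt{r_{d_{<h}1}}) + 2^{-m(n)}$, so the deterministic grow operation carries $\ket{\wt\psi\uppart k}$ to $\ket{\wt\psi\uppart{k+1}}$; and (iii) once the prover has acted on $\reg B_{[k+1]}$ in the B-grow step, the final swap test (rejecting with probability $r_{d_{<h}1}$) and its post-selection force $\reg B_{[k+1]}$ to agree with $\reg A_{[k+1]}$ and both to be close to a pure state, so no entanglement survives. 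Chaining these through the inductive hypothesis ``$\reg A_{[k]}$ has fidelity $\varphi_{h-1}$ to $\ket{\wt\psi\uppart k}$'' yields $\varphi_h \ge \varphi_{h-1} - c_0\sqrt{r_{d_{<h}0} + r_{d_{<h}1}}$.

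Telescoping over $h = 1, \dots, t(n)$ with $\varphi_0 = 1$ gives $\varphi_{h^\ast} \ge 1 - b$ where $b := c_0 \sum_{h=1}^{t(n)} \sqrt{r_{d_{<h}0} + r_{d_{<h}1}}$, hence $c := \sqrt{\bra{\wt\psi} \rho_d \ket{\wt\psi}} = \sqrt{\varphi_{h^\ast}} \ge \varphi_{h^\ast} \ge 1 - b$. Now \Cref{lem:4p} with $a = 1$ gives $\bra{\wt\psi} \rho_d \ket{\wt\psi}^2 = c^4 \ge 1 - 4b$; plugging this into \eqref{eq:td-fid} with $\ket\varphi = \ket{\wt\psi}$, $\sigma = \rho_d$, and using $1 - \sqrt{\bra{\wt\psi}\rho_d\ket{\wt\psi}} = 1 - c \le 1 - c^4 \le 4b$, yields $\td(\rho_d, \wt\psi) \le \sqrt{8b}$. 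Finally Cauchy--Schwarz bounds $\sum_{h=1}^{t(n)} \sqrt{r_{d_{<h}0} + r_{d_{<h}1}}$ by $\sqrt{t(n) \sum_{h=1}^{t(n)} (r_{d_{<h}0} + r_{d_{<h}1})}$, so $\td(\rho_d, \wt\psi) \le \sqrt{8 c_0}\,\Paren{t(n) \sum_{h=1}^{t(n)} (r_{d_{<h}0} + r_{d_{<h}1})}^{1/4}$; tracking the constants in the per-iteration estimate so that $8 c_0 \le 16$ gives exactly the claimed bound with constant $4$.

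I expect the main obstacle to be the grow-round analysis, and within it the passage from ``a test round would have exposed this deviation'' to a statement usable in the grow round: the A-grow step sits between forward-QIP and backward-QIP, so one must argue carefully that inserting it does not spoil the no-op/no-phase guarantee (it touches only the fresh qubit $\reg A_{k+1}$, the register $\reg D_h$, and the flag qubit, none of which backward-QIP uses as a control on $\reg A_{[k]}$), and one must combine the several approximate guarantees --- swap-test closeness, the perturbation caused by post-selection, $\QIP$ soundness of $V_\cp$ and $V_\ph$, and the inductive closeness of $\reg A_{[k]}$ to $\ket{\wt\psi\uppart k}$ --- into a single $O(\sqrt{r_{d_{<h}0} + r_{d_{<h}1}})$ bound in which the errors add rather than compound (the latter is why the analysis is done at the level of fidelities of the $\reg A$/$\reg B$ marginals rather than, say, the joint $\reg A\reg B$ state). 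The rest --- the symmetric-subspace observation, the telescoping, and the final \Cref{lem:4p}/Cauchy--Schwarz computation --- is routine.
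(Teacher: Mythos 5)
Your overall architecture mirrors the paper's (per-iteration hybrid over test/grow rounds, swap symmetry, counterfactual use of the test-round check inside grow rounds, QIP soundness, then Cauchy--Schwarz and a fidelity-to-trace-distance conversion), but the quantity you telescope makes your central step false. You define $\varphi_h$ as the fidelity of the \emph{normalized} conditional state (conditioned on surviving rounds $1,\dots,h$) and claim $\varphi_h \ge \varphi_{h-1} - c_0\sqrt{r_{d_{<h}0}+r_{d_{<h}1}}$, where the $r$'s are the paper's rejection probabilities, which are \emph{not} conditioned on having survived to round $h$: the event ``rejects in iteration $h$'' entails not having rejected earlier, so $r_{d_{<h}c}$ carries a factor of the survival probability $\pi(d_{<h})$. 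This mismatch kills the per-iteration claim: if the prover arranges in an early round to be rejected with probability $1-\eps$ (so $\pi \approx \eps$), then in a later round it can, e.g., mount a phase attack that shifts the normalized $\reg A$-marginal by a constant while the swap test catches it only with constant \emph{conditional} probability, i.e.\ with $r = O(\eps)$ unconditionally; then $\varphi_h$ drops by a constant while $c_0\sqrt{r} = O(\sqrt\eps)$. (The lemma itself is unharmed in this scenario only because the early round already contributes a large $\sqrt{r}$ to the total sum, making the bound vacuous --- but your telescoping never sees that.) The same conflation appears in your test-round ``routine estimate'': how far post-selection onto the symmetric subspace moves a normalized marginal is governed by the conditional rejection probability, not by $r_{d_{<h}0}$. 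The paper's proof is built precisely to avoid this: it telescopes the \emph{subnormalized} overlaps $\Norm{\bra{\wt\psi\uppart{k}}_{\reg A_{[k]}}\ket{\varphi_a}}$, where $\ket{\varphi_a}$ has norm $\sqrt{\pi(a)}$, so each post-selection is just a projection and the per-round loss is genuinely $O(\sqrt{r_{a0}}+\sqrt{r_{a1}})$ (\cref{clm:sound-progress}); only at the very end does it pass to the normalized fidelity via $\sqrt{\bra{\wt\psi}\rho_d\ket{\wt\psi}} \ge \Norm{\bra{\wt\psi}_{\reg A_{[n]}}\ket{\varphi_d}}$. Your argument needs this change of bookkeeping (or an equivalent weighting by survival probabilities) to go through.

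Two smaller points. First, your grow-round paragraph is only a sketch at the place where the real work lies: making ``forward then backward is a no-op on $\reg A_{[k]}$'' usable with the A-grow step inserted in between requires the paper's decomposition into the ``truth''/``lie'' projections and the observation that the ideal grow unitary $\wt G$ commutes with the backward-QIP unitary because the latter is block-diagonal in the $\reg A_{[k]}$ basis (\cref{clm:qip-sound,clm:grow}); your parenthetical about control registers gestures at this but would need to be made precise. Second, the detour through \cref{lem:4p} with $a=1$ is unnecessary and costs you a factor $4$ (from $c \ge 1-b$ you already have $1-c \le b$); with that slack removed your constant bookkeeping would only require a per-round constant of about $8$ rather than $2$, which is closer to what the paper actually achieves (its per-round constant is $6$).
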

\noindent We defer the proof of \cref{lem:main-sound} to \cref{sssec:narrow}, and we first prove the soundness of the protocol assuming it. Recall that our goal is to prove that
\begin{equation} \label{eq:sound-remind}
	\pr{\text{verifier accepts}} \le \exp\Paren{- \frac{\td(\rho, \psi)^4 - \exp(-q(n))} {\poly(n)}}~.
\end{equation}

%END_FOLD
\begin{proof}[Proof of soundness assuming \cref{lem:main-sound}]
	We proceed by upper-bounding $\td(\rho, \psi)$. By the triangle inequality and \cref{lem:psi-approx},
	\begin{equation*}
		\td\Paren{\rho, \psi}
		\le \td\Paren{\rho, \wt\psi} + \td\Paren{\wt\psi, \psi}
		\le \td\Paren{\rho, \wt\psi} + \frac{1}{12} \cdot \exp(-q(n))~.
	\end{equation*}
	For $d \in \cube{t(n)}$ let
	\begin{equation*}
		\mu(d) = \pr{b=d \mid \text{verifier accepts}}~.
	\end{equation*}
	By the convexity of trace distance,
	\begin{equation*}
		\td\Paren{\rho, \wt\psi}
		= \td\Paren{\sum_d \mu(d) \rho_d, \, \wt\psi}
		\le \sum_d \mu(d) \td\Paren{\rho_d, \wt\psi},
	\end{equation*}
	so
	\begin{equation*}
		\td\Paren{\rho, \psi}
		\le \sum_d \mu(d) \td\Paren{\rho_d, \wt\psi} + \frac{1}{12} \cdot \exp(-q(n))~.
	\end{equation*}
	
	We now establish a similar statement for $d$ over the \emph{uniform} distribution rather than $\mu$, under the assumption that the verifier accepts with probability at least $1/2$. (If this assumption is not satisfied, then for all sufficiently large $n$ we have that
	\begin{equation*}
		\pr{\text{verifier accepts}}
		\le 1/2
		\le \exp\Paren{- \frac{1 - \exp(-q(n))} {\poly(n)}}
		\le \exp\Paren{- \frac{\td(\rho, \psi)^4 - \exp(-q(n))} {\poly(n)}}
	\end{equation*}
	and thus \eqref{eq:sound-remind} holds vacuously.) By Bayes' rule, for $d \in \cube{t(n)}$ it holds that
	\begin{equation*}
		\mu(d)
		= \frac{\pr{\text{verifier accepts} \mid b=d} \cdot \pr{b=d}} {\pr{\text{verifier accepts}}}
		\le 2 \, \pr{b=d},
	\end{equation*}
	and therefore
	\begin{equation*}
		\td\Paren{\rho, \psi}
		\le 2 \, \E \td\Paren{\rho_b, \wt\psi} + \frac{1}{12} \cdot \exp(-q(n))
	\end{equation*}
	(where $b \sim \cube{t(n)}$ is uniform random).
	
	By \cref{lem:main-sound,lem:term-whp} and Jensen's inequality,
	\begin{align*}
		\E \td\Paren{\rho_b, \wt\psi}
		&= \E \left[ \td\Paren{\rho_b, \wt\psi} \Ind{\hw(b) \le n} \right] + \E \left[ \td\Paren{\rho_b, \wt\psi} \Ind{\hw(b) > n} \right] \\
		&\le \PR{\hw(b) \le n} + \E \left[4 \cdot \left( t(n) \cdot \sum_{h=1}^{t(n)} \left(r_{b_{< h} 0} + r_{b_{< h} 1} \right) \right)^{1/4} \right] \\
		&\le \frac1{12} \exp(-q(n)) + 4 \cdot t(n)^{1/4} \cdot \E \left[\sum_{h=1}^{t(n)} \left(r_{b_{< h} 0} + r_{b_{< h} 1} \right) \right]^{1/4}~.
	\end{align*}
	Furthermore, recalling that $b$ is uniform random, we have that
	\begin{align*}
		\E \sum_{h=1}^{t(n)} \left(r_{b_{< h} 0} + r_{b_{< h} 1} \right)
		&= 2 \sum_{h=1}^{t(n)} \E \left[ r_{b_{\le h}} \right]
		= 2 \sum_{h=1}^{t(n)} \pr{\text{verifier rejects in iteration $h$}} \\
		&= 2 \, \pr{\text{verifier rejects}}
	\end{align*}
	where the first equality holds by linearity of expectation, the second equality holds by the definition of $r_{b_{\le h}}$, and the last equality holds because the verifier can reject in at most one iteration in any execution of the protocol. Recalling that $t$ is a polynomial, it follows that
	\begin{equation*}
		\td\Paren{\rho, \psi}
		\le \frac14 \cdot \exp(-q(n)) + \poly(n) \cdot \pr{\text{verifier rejects}}^{1/4}~.
	\end{equation*}
	
	Rearranging yields
	\begin{equation*}
		\pr{\text{verifier rejects}}^{1/4} \geq \frac{\td\Paren{\rho, \psi} - \frac14 \cdot \exp(-q(n))} {\poly(n)},
	\end{equation*}
	and by \cref{lem:4p} it follows that
	\begin{equation*}
		\pr{\text{verifier rejects}} \geq \frac{\td\Paren{\rho, \psi}^4 - \exp(-q(n))} {\poly(n)}~.
	\end{equation*}
	Finally, \eqref{eq:sound-remind} follows because
	\begin{align*}
		\pr{\text{verifier accepts}}
		&= 1 - \pr{\text{verifier rejects}}
		\le \exp(-\pr{\text{verifier rejects}}) \\
		&\le \exp\Paren{-\frac{\td\Paren{\rho, \psi}^4 - \exp(-q(n))} {\poly(n)}}~. \qedhere
	\end{align*}
\end{proof}
\subsubsection{Proof of \texorpdfstring{\cref{lem:main-sound}} {Lemma 6.4}} \label{sssec:narrow}
%BEGIN_FOLD
Fix a string $d \in \cube{t(n)}$ with $\hw(d) \ge n+1$, and for $a \in \cube{< t(n)}$ let
\begin{equation*}
	u_a = r_{a0} + r_{a1}~.
\end{equation*}
Recall that our goal is to prove that
\begin{equation*}
	\td\Paren{\rho_d, \wt\psi} \le 4 \cdot \left( t(n) \cdot \sum_{h=1}^{t(n)} u_{d_{< h}} \right)^{1/4}~.
\end{equation*}
Below we prove that
\begin{equation}
	\label{eq:soundness-ip}
	1 - \sqrt{\bra{\wt\psi} \rho_d \ket{\wt\psi}}
	\le 6 \sum_{h=1}^{t(n)} \sqrt{u_{d_{< h}}},
\end{equation}
from which \Cref{lem:main-sound} follows because by \eqref{eq:td-fid} and Cauchy-Schwarz,
\begin{equation*}
	\td\Paren{\rho_d, \wt\psi}
	\le \sqrt{2 \Paren{1 - \sqrt{\bra{\wt\psi} \rho_d \ket{\wt\psi}}}}
	\le \sqrt{12 \sum_{h=1}^{t(n)} \sqrt{u_{d_{< h}}}}
	\le 4 \cdot \Paren{t(n) \cdot \sum_{h=1}^{t(n)} u_{d_{<h}}}^{1/4}~.
\end{equation*}

Observe that the (pure) state of the entire system (i.e.\ both the verifier's and prover's registers) at any point in the protocol depends only on $b$ and on the verifier's measurement outcomes, since the prover is unitary. For $h \in [t(n)]$ and $a = (a_1, \dotsc, a_h) \in \cube{h}$, let
\begin{equation*}
	\pi(a) = \PR{\text{verifier has not rejected by the end of iteration $h$} \mid b_{\le h} = a},
\end{equation*}
and let $\ket{\varphi_a}/\sqrt{\pi(a)}$ denote the state of the system at the end of iteration $h$, conditioned on $b_{\leq h} = a$ and the verifier not having rejected yet. Note that $\ket{\varphi_a}$ may be \emph{subnormalized}, or in other words $\norm{\ket{\varphi_a}} = \sqrt{\pi(a)}$ may be less than 1.

Clearly
\begin{equation*}
	\sqrt{\bra{\wt\psi} \rho_d \ket{\wt\psi}}
	= \frac{1}{\sqrt{\pi(d)}} \Norm{\bra{\wt\psi}_{\reg A_{[n]}} \ket{\varphi_d}} \geq \Norm{\bra{\wt\psi}_{\reg A_{[n]}} \ket{\varphi_d}}~.
\end{equation*}
(To clarify the notation, if we group together all registers except for $\reg A_{[n]}$ into a register $\reg{R}$, then the notation $\bra{\wt\psi}_{\reg A_{[n]}} \ket{\varphi_d}$ indicates a (subnormalized) state vector in $\reg{R}$, because $\ket{\varphi_d}$ is a state in ${\reg A_{[n]}} \reg{R}$ and the bra operator $\bra{\wt\psi}_{\reg A_{[n]}}$ implicitly acts as the identity on $\reg{R}$.) Thus to establish \eqref{eq:soundness-ip} it suffices to prove that
\begin{equation}\label{eq:soundness-goal-1} 
	1 - \Norm{\bra{\wt\psi}_{\reg A_{[n]}} \ket{\varphi_d}}
	\le 6 \sum_{h=1}^{t(n)} \sqrt{u_{d_{<h}}}~.
\end{equation}

Recall the definitions of states $\ket{\wt \psi\uppart k}$ for $0 \le k \le n+1$ from \cref{subsec:proof-comp}, and in particular that $\ket{\wt \psi \uppart{0}}$ is a phase of 1 and $\ket{\wt \psi\uppart{n+1}} = \ket{\wt\psi}$. Also let $\ket{\varphi_\emptyset}$ denote the initial state of the entire system (i.e.\ $\emptyset$ denotes the empty string here). Recalling that $\hw(d) \ge n+1$, let $\tau$ be the least number $h \in [t(n)]$ such that $\hw \Paren{d_{\le h}} = n+1$. Then, since the verifier does nothing after iteration $\tau$, by a telescoping sum we have that
\begin{align*}
	1 - &\Norm{\bra{\wt\psi}_{\reg A_{[n]}} \ket{\varphi_d}} \\
	&= \Norm{\bra{\wt\psi \uppart{0}}_{\reg A_{[0]}} \ket{\varphi_\emptyset}} - \Norm{\bra{\wt\psi \uppart{n+1}}_{\reg A_{[n+1]}} \ket{\varphi_{d_{\le \tau}}}} \\
	&= \sum_{h=1}^\tau  \Paren{ \Norm{\bra{\wt\psi \uppart{\hw \Paren{d_{< h}}}}_{\reg A_{\left[\hw \Paren{d_{< h}}\right]}} \ket{\varphi_{d_{< h}}}} - \Norm{\bra{\wt\psi \uppart{\hw \Paren{d_{\leq h}}}}_{\reg A_{\left[ \hw \Paren{d_{\leq h}} \right]}} \ket{\varphi_{d_{\leq h}}}}}~.
\end{align*}
Thus to establish~\eqref{eq:soundness-goal-1} it suffices to prove the following claim:

\begin{clm} \label{clm:sound-progress}
	For all $a \in \cube{<\tau}$ and $c \in \bits$,
	\begin{equation*}
		\Norm{\bra{\wt\psi \uppart{\hw(a)}}_{\reg A_{[\hw(a)]}} \ket{\varphi_{a}}} -
		\Norm{\bra{\wt\psi \uppart{\hw(ac)}}_{\reg A_{[\hw(ac)]}} \ket{\varphi_{ac}}}
		\le 6\sqrt{u_a}~.
	\end{equation*}
\end{clm}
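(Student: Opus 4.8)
The plan is to track the quantity $v_a := \Norm{\bra{\wt\psi\uppart{\hw(a)}}_{\reg A_{[\hw(a)]}}\ket{\varphi_a}}$ through iteration $h := |a|+1$, writing $k := \hw(a)$ (we may assume $k \le n$, as otherwise iteration $h$ is trivial and $v_{ac} = v_a$) and splitting on whether $c = b_h$ is $0$ (a ``test round'') or $1$ (a ``grow round''). Three elementary estimates are used repeatedly. (i) \emph{Gentle measurement:} for any projection $\Pi$, register $\reg R$, pure state $\ket\theta$ on $|\reg R|$ qubits, and subnormalized $\ket\chi$ with $\Norm{\ket\chi}\le 1$, one has $\big|\Norm{\bra\theta_{\reg R}\ket\chi} - \Norm{\bra\theta_{\reg R}\Pi\ket\chi}\big| \le \Norm{(\id-\Pi)\ket\chi} = \sqrt{\Norm{\ket\chi}^2 - \Norm{\Pi\ket\chi}^2}$, so a measurement that rejects with conditional probability $\rho$ changes the $\ket\theta$-overlap by at most $\sqrt{\rho}$, and a unitary on registers disjoint from $\reg R$ changes it not at all. (ii) \emph{Swap-test symmetry:} if the content of $\reg A_{[r]}\reg B_{[r]}$ lies in the symmetric subspace then $\Norm{\bra\theta_{\reg A_{[r]}}\ket\chi} = \Norm{\bra\theta_{\reg B_{[r]}}\ket\chi}$ for every $r$-qubit $\ket\theta$; combined with (i), if the swap test on $\reg A_{[r]}\reg B_{[r]}$ accepts with conditional probability $1-\rho$ then the $\reg A_{[r]}$- and $\reg B_{[r]}$-overlaps with any fixed pure state agree up to $O(\sqrt\rho)$. (iii) Because the verifier hands the prover only a classical copy of the content $x$ of $\reg A_{[k]}$ and uses $\reg A_{[k]}$ itself (and, in a test round, $\reg B_{[k]}$) solely as a control, the forward-QIP and backward-QIP steps together have the block-diagonal form $\sum_x \kb x_{\reg A_{[k]}}\otimes G_x$ with each $G_x$ unitary and acting trivially on $\reg B_{[k]}$. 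Finally we use the identity $\wt p(x)\,\wt g_{xc} = \wt p(xc)$ from the completeness proof, which says that the A-grow unitary, restricted to the subspace where $\reg D_h$ holds the correct value $\cp(1^n,x)$ (resp.\ $\ph(x)$ when $k=n$), maps $\sum_x\sqrt{\wt p(x)}\,\ket x_{\reg A_{[k]}}\ket{\cp(1^n,x)}_{\reg D_h}$ to a vector whose $\reg A_{[k+1]}$-content is $\ket{\wt\psi\uppart{k+1}}$ once backward-QIP uncomputes $\reg D_h$.

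For a test round ($c=0$) the grow counter is unchanged, and $\ket{\varphi_{a0}}$ is obtained from $\ket{\varphi_a}$ by applying the block-diagonal map $G$ of (iii), then the prover's reaction to being told $b_h=0$ (which touches neither $\reg A_{[k]}$ nor $\reg B_{[k]}$), and then conditioning on the final swap test on $\reg A_{[k]}\reg B_{[k]}$ accepting. Since $G$ and that reaction act trivially on $\reg B_{[k]}$, they commute with $\bra{\wt\psi\uppart k}_{\reg B_{[k]}}$, so the $\reg B_{[k]}$-overlap with $\ket{\wt\psi\uppart k}$ is exactly preserved up to the swap test; by (ii) the swap test transfers this into $v_{a0} \ge \Norm{\bra{\wt\psi\uppart k}_{\reg B_{[k]}}\ket{\varphi_a}} - O(\sqrt{r_{a0}})$, and a second application of (ii) that accounts for the asymmetry between $\reg A_{[k]}$ and $\reg B_{[k]}$ already present in $\ket{\varphi_a}$ (controlled by the rejection probabilities $u_a$) relates $\Norm{\bra{\wt\psi\uppart k}_{\reg B_{[k]}}\ket{\varphi_a}}$ back to $v_a$, giving $v_a - v_{a0}\le 6\sqrt{u_a}$. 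For a grow round ($c=1$) the sequence after the common state $\ket{\varphi_a'}$ (post-forward-QIP, independent of $b_h$) is: measure the QIP flag qubit in $\reg W_h$ (rejection contributes to $r_{a1}$), apply A-grow, apply backward-QIP, let the prover act on $\reg B_{[k+1]}$, run the swap test on $\reg A_{[k+1]}\reg B_{[k+1]}$ (also contributing to $r_{a1}$). Here I use the soundness of the $\QIP[1/2]$ verifiers $V_\cp, V_\ph$ (\cref{thm:qip=pspace} with the definitions of $L_\cp, L_\ph$): after decohering $\reg D_h$ in the standard basis, write $\ket{\varphi_a'}$ as a mixture over pairs $(x,\eta)$; the total weight that survives the flag${}={}1$ projection yet has $\eta$ different from the correct value is $O(r_{a1})$, so on the complementary part the A-grow step is the overlap-preserving isometry above, while the flag projection and the subsequent B-grow plus swap test each cost only $O(\sqrt{r_{a1}})$ by (i)--(ii); summing the constantly many $O(\sqrt{\cdot})$ contributions gives $v_a - v_{a1}\le 6\sqrt{u_a}$ again. (Equivalently, one may phrase both cases by comparing the actual iteration against the honest-prover iteration of \cref{subsec:proof-comp} and bounding the discrepancy on the $\reg A$ and $\reg B$ registers by the rejection probabilities.)

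The main obstacle is the test-round analysis of ``quantum attacks'': in a test round nothing but the swap test guards the protocol, and one must rule out a prover that uses backward-QIP to \emph{not} faithfully reverse forward-QIP --- entangling $\reg A_{[k]}$ with junk in $\reg D_h,\reg W_h$ or with its private register, or inserting a spurious relative phase --- while still passing the swap test against $\reg B_{[k]}$. The argument must exploit both the block-diagonal structure (iii) (so that any corruption of $\reg A_{[k]}$ is diagonal in the computational basis and hence visible to the swap test) and the fact that $\reg B_{[k]}$ carries a reference copy; the delicate point is converting ``the swap test accepts with probability $1-r_{a0}$'' into an overlap bound with the \emph{specific} constant $6$ and exponent $1/2$, which requires chaining the several gentle-measurement and swap-test steps carefully, and is further complicated by the fact that the $\ket{\varphi_a}$ are subnormalized and the $r_a$'s are conditional probabilities, so that conditioning on earlier acceptance must not inflate the error terms --- this is precisely why the claim is stated using norms of subnormalized vectors rather than fidelities of normalized states.
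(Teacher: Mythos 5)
Your preliminary estimates (i)--(iii) and your test-round ($c=0$) analysis essentially reproduce the paper's argument: the joint forward/backward-QIP unitary acts trivially on $\reg B_{[k]}$, so the $\reg B$-overlap with $\ket{\wt\psi\uppart{k}}$ is preserved exactly, the swap test costs $O(\sqrt{r_{a0}})$, and the transfer back to the $\reg A$-overlap uses that $\ket{\varphi_a}$ lies \emph{exactly} in the symmetric subspace (it is in the image of the previous round's swap-test projector); your phrase that the asymmetry of $\ket{\varphi_a}$ is ``controlled by the rejection probabilities $u_a$'' is a harmless but telling imprecision, since $u_a$ concerns iteration $h$ and the symmetry is exact.

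The grow round ($c=1$) is where your sketch has a genuine gap. Your accounting charges every loss to $O(\sqrt{r_{a1}})$: the flag measurement, the weight of wrong $\eta$ surviving the flag (via $\QIP[1/2]$ soundness), the B-grow, and the final swap test. But you never justify how the $\reg A_{[k]}$-overlap with $\ket{\wt\psi\uppart{k}}$ survives the pair forward-QIP plus backward-QIP. These are adversarial unitaries of the block-diagonal form $\sum_x \kb{x}_{\reg A_{[k]}}\otimes(\cdot)$, so the prover can imprint branch-dependent phases (or entangle the branches with its private register) without touching $\reg B_{[k]}$; your gentle-measurement estimate (i) does not apply because these unitaries are controlled on $\reg A_{[k]}$, and nothing measured in the grow round itself detects the attack. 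Concretely, consider a prover that behaves honestly except for inserting a phase $\gamma_x$ on branch $x$ during forward-QIP, and that in the B-grow step places into $\reg B_{[k+1]}$ a copy of the \emph{corrupted} state now in $\reg A_{[k+1]}$: the flag always accepts, the correct $\eta$ is always supplied, and the swap test on $\reg A_{[k+1]}\reg B_{[k+1]}$ passes with probability $1$, so $r_{a1}=0$, yet $\Norm{\bra{\wt\psi\uppart{k+1}}_{\reg A_{[k+1]}}\ket{\varphi_{a1}}}$ can be driven far below $\Norm{\bra{\wt\psi\uppart{k}}_{\reg A_{[k]}}\ket{\varphi_a}}$. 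Hence the bound your chain actually yields, $v_a-v_{a1}\le O(\sqrt{r_{a1}})$, is false; the claim only holds with $u_a=r_{a0}+r_{a1}$ because such attacks are penalized in the \emph{counterfactual} test round. The paper closes exactly this hole: after reducing (via the truth/lie projections and the ideal-grow unitary, which you do have) to the quantity $\Norm{\bra{\wt\psi\uppart{k}}_{\reg A_{[k]}}RF\ket{\varphi_a}}$, it inserts the test-round symmetric/antisymmetric decomposition, uses $\Norm{A\,RF\ket{\varphi_a}}^2=r_{a0}$ together with the exact symmetry of $\ket{\varphi_a}$ to transfer through $\reg B_{[k]}$, and only then recovers $v_a$ up to $2\sqrt{r_{a0}}$. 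Your sketch contains no step playing this role, so as written the grow-round case does not go through.
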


Intuitively, this claim states that if $a$ denotes the random choices of the verifier up to some iteration $h < \tau$ and the state in $\reg{A}_{[\hw(a)]}$ is close to $\ket{\wt \psi \uppart{\hw(a)}}$ (which is the state that should have been synthesized up to that point), then after the next iteration (where the next random choice is the bit $c$), the state in $\reg{A}_{[\hw(ac)]}$ should still be close to $\ket{\wt \psi \uppart{\hw(ac)}}$, and the degradation in closeness is a polynomial function of the rejection probability in iteration $h$ (conditioned on the random choices $a$ up to that point).

Fix a prefix $a \in \cube{<\tau}$, and let $h = |a|+1$. After introducing the necessary notation, we prove the $c=0$ and $c=1$ cases of \cref{clm:sound-progress} for this value of $a$. Specifically, we define the following states, registers, unitary transformations, and orthogonal projections:
%END_FOLD
\paragraph{States.}
Abbreviate $\ket{\theta_0} = \ket{\wt\psi \uppart{\hw(a)}}$ and $\ket{\theta_1} = \ket{\wt\psi \uppart{\hw(a)+1}}$.

\paragraph{Registers.}
Recalling the definitions of registers from \cref{subsec:desc-prot}, let $\reg A = \reg A_{[\hw(a)]} = \reg A_{[\hw(a0)]}$ and $\reg{A^+} = \reg A_{[\hw(a1)]}$, and similarly let $\reg B = \reg B_{[\hw(a)]} = \reg B_{[\hw(a0)]}$ and $\reg{B^+} = \reg B_{[\hw(a1)]}$. Also let $\reg D = \reg D_h$ (for ``dyadic") and $\reg W = \reg W_h$ (for ``workspace"). Let $\reg P$ (for ``prover's workspace, among other things") be the register consisting of all qubits not in $\reg{A^+ B^+ D W}$.

\paragraph{Unitaries.}
Consider the unitary operations applied in iteration $h$ of the protocol. Let $F$ (for ``forward'') denote the unitary jointly applied by the verifier and prover in the forward-QIP step, which acts on $\reg{ADWP}$. Let $G$ (for ``grow'') be the unitary applied by the verifier in the A-grow step (when $b_h = 1$), which acts on $\reg{A^+} \, \reg{D}$. Let $R$ (for ``reverse'') denote the unitary jointly applied by the verifier and prover in the backward-QIP step, which acts on $\reg{ADWP}$. Let $C$ (for ``copy'') denote the unitary applied by the prover in the B-grow step (when $b_h = 1$), which acts on $\reg{B^+} \, \reg{P}$. 

\paragraph{Projections.}
Let $Y$ (for ``yes'') denote the projection $\kb1$ acting on the flag qubit of $\reg W$ (which indicates whether the $\QIP[1/2]$ verifier accepts or rejects). Let $S$ denote the projection onto the symmetric subspace between $\reg{A}$ and $\reg{B}$, and similarly let $S^+$ denote the projection onto the symmetric subspace between $\reg{A^+}$ and $\reg{B^+}$. Let $A = \id - S$ and $A^+ = \id - S^+$. Define projections
\begin{align*}
	T &=
	\begin{cases}
		\sum_{x \in \cube{\hw(a)}} \kb{x}_{\reg A} \otimes \kb{\cp(x)}_{\reg D} &\text{if } \hw(a) < n, \\
		\sum_{x \in \cube{\hw(a)}} \kb{x}_{\reg A} \otimes \kb{\ph(x)}_{\reg D} &\text{if } \hw(a) = n,
	\end{cases} \\
	L &= \id - T
\end{align*}
(for ``truth" and ``lie" respectively) acting on $\reg{AD}$. The operator $T$ projects onto the unique correct ``answers'' of the $\QIP[1/2]$ protocols. 

\paragraph{}

It follows from the definition of the protocol that
\begin{equation} \label{eq:varphi_a0_def}
	\ket{\varphi_{a0}} = SRF \ket{\varphi_a}~.
\end{equation}
This is because $\ket{\varphi_a}/\sqrt{\pi(a)}$ is the state of the system at the beginning of iteration $h$ conditioned on not having rejected yet, and if iteration $h$ is a ``test round'' (i.e. $c = 0$) then the verifier and prover first perform the forward-QIP step (thus applying the unitary $F$), the backward-QIP step (thus applying the unitary $R$), and then the verifier performs a swap test and accepts if the projection onto $S$ succeeds. Therefore
\begin{equation*}
	\pi(a0) = \pi(a) \cdot \Norm{ SRF \frac{\ket{\varphi_a}}{\sqrt{\pi(a)}} }^2 = \| SRF \ket{\varphi_a} \|^2,
\end{equation*}
which implies~\eqref{eq:varphi_a0_def} with the correct normalization factor. Similarly, for the $c = 1$ case, the subnormalized state $\ket{\varphi_{a1}}$ can be written as
\begin{equation} \label{eq:varphi_a1_def}
	\ket{\varphi_{a1}} = S^+ C R G Y F \ket{\varphi_a}~.
\end{equation}

We prove the following claim using the fact that $S$ (and $S^+$) projects onto the symmetric subspace:

\begin{clm} \label{clm:symmetric-states}
	The states $\ket{\varphi_a}, \ket{\varphi_{a0}}, \ket{\varphi_{a1}}$ satisfy
	\begin{gather*}
		\ket{\varphi_a} = \mai{Swap} \cdot \ket{\varphi_a} \\
		\ket{\varphi_{a0}} = \mai{Swap} \cdot \ket{\varphi_{a0}} \\
		\ket{\varphi_{a1}} = \mai{Swap}^+ \cdot \ket{\varphi_{a1}}
	\end{gather*}
	where $Swap$ (resp. $Swap^+$) denotes the swap unitary between registers $\reg{A}$ and $\reg{B}$ (resp.\ $\reg{A^+}$ and $\reg{B^+}$).
\end{clm}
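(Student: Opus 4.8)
\textbf{Proof proposal for \cref{clm:symmetric-states}.}

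The plan is to exploit a single structural feature of \cref{alg:main}: in every iteration the \emph{last} operation the verifier performs is a swap test, and its ``accept'' outcome is a projection \emph{onto} a symmetric subspace. The underlying linear-algebra fact is that since $\mai{Swap}^2 = \id$, the operator $S = \frac{\id + \mai{Swap}}{2}$ is the orthogonal projection onto the $(+1)$-eigenspace of $\mai{Swap}$; equivalently $\mai{Swap} \cdot S = S$, so every vector in the image of $S$ is fixed by $\mai{Swap}$, and likewise every vector in the image of $S^+$ is fixed by $\mai{Swap}^+$. This observation alone yields all three identities.

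For $\ket{\varphi_{a0}}$: by \eqref{eq:varphi_a0_def} we have $\ket{\varphi_{a0}} = S R F \ket{\varphi_a}$, which lies in the image of $S$ and is therefore fixed by $\mai{Swap}$. For $\ket{\varphi_{a1}}$: by \eqref{eq:varphi_a1_def} we have $\ket{\varphi_{a1}} = S^+ C R G Y F \ket{\varphi_a}$, which lies in the image of $S^+$ and is therefore fixed by $\mai{Swap}^+$. Note that the intervening operations $F, R, G, Y, C$ (including all of the prover's actions, honest or not) are irrelevant here: they are simply absorbed into ``whatever vector the final projection is applied to.'' For $\ket{\varphi_a}$ I would run the same argument one iteration earlier. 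If $a = \emptyset$ then $\hw(a) = 0$, the registers $\reg A_{[0]}$ and $\reg B_{[0]}$ have zero qubits, $\mai{Swap} = \id$, and the identity is trivial (indeed $\ket{\varphi_\emptyset}$ is the all-zeros initial state). Otherwise write $a = a'c'$ with $a' = (a_1,\dots,a_{|a|-1})$ and $c' = a_{|a|} \in \bits$; since $|a'| < |a| < \tau$, the identities \eqref{eq:varphi_a0_def}--\eqref{eq:varphi_a1_def} hold, by the same reasoning, with $a'$ in place of $a$. Because $\hw(a'c') = \hw(a') + c' = \hw(a)$, the symmetric-subspace projection applied at the end of iteration $|a|$ is exactly the one between $\reg A_{[\hw(a)]}$ and $\reg B_{[\hw(a)]}$. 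Hence $\ket{\varphi_a} = \ket{\varphi_{a'c'}}$ lies in the image of that projection and is fixed by the swap between $\reg A_{[\hw(a)]}$ and $\reg B_{[\hw(a)]}$, i.e.\ by the $\mai{Swap}$ in the statement.

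There is no real obstacle here: each of the three identities collapses to ``the last operation applied was a projection onto the relevant symmetric subspace, and such a projection lands in the $(+1)$-eigenspace of the corresponding swap.'' The only point needing a moment's care is the $\ket{\varphi_a}$ case, where one must look back to the preceding iteration rather than the current one (or invoke the trivial base case $a = \emptyset$ with empty registers $\reg A_{[0]}, \reg B_{[0]}$).
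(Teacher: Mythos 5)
Your proposal is correct and matches the paper's own argument: both rest on the identity $\mai{Swap}\cdot S = S$ (so any vector in the image of $S$, resp.\ $S^+$, is fixed by the corresponding swap), applied to \eqref{eq:varphi_a0_def} and \eqref{eq:varphi_a1_def}, with the $\ket{\varphi_a}$ case handled by looking one iteration back and treating $a = \emptyset$ as the trivial zero-qubit base case. No gaps.
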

\begin{proof}
	By definition $S = \frac{I + \mai{Swap}} 2$, so $S = \mai{Swap} \cdot S$ and therefore $\ket{\varphi_{a0}} = \mai{Swap} \cdot \ket{\varphi_{a0}}$. Similarly it holds that $\ket{\varphi_{a1}} = \mai{Swap}^+ \cdot \ket{\varphi_{a1}}$.
	
	If $|a| > 0$ then we can write $a = a' c'$ for some string $a' \in \cube{|a|-1}$ and bit $c' \in \bits$, and then similar reasoning implies that $\ket{\varphi_a} = \mai{Swap} \cdot \ket{\varphi_a}$. Alternatively, if $a$ is the empty string then $\reg A$ and $\reg B$ are zero-qubit registers, so $\mai{Swap}$ is the identity operator and therefore $\ket{\varphi_a} = \mai{Swap} \cdot \ket{\varphi_a}$.
\end{proof}

Recall that we define
\begin{equation*}
	r_{ac} = \pr{\text{verifier rejects in iteration $h$} \mid b_{\le h} = ac}
\end{equation*}
for $c \in \bits$. We now prove the $c=0$ case of \cref{clm:sound-progress}, i.e.\ that
\begin{equation*}
	\norm{\bra{\theta_0}_{\reg A} \ket{\varphi_a}} - \norm{\bra{\theta_0}_{\reg A} \ket{\varphi_{a0}}}
	\le 6\sqrt{u_a},
\end{equation*}
via applications of \cref{clm:symmetric-states}.

\begin{proof}[Proof of the $c=0$ case of \cref{clm:sound-progress}]
	By the symmetry of $\ket{\varphi_{a0}}$ (\cref{clm:symmetric-states}) and \eqref{eq:varphi_a0_def}, we have
	\begin{align*}
		\norm{\bra{\theta_0}_{\reg A} \ket{\varphi_{a0}}} 
		&= \norm{\bra{\theta_0}_{\reg A} \, Swap \, \ket{\varphi_{a0}}} 
		= \norm{\bra{\theta_0}_{\reg B} \ket{\varphi_{a0}}}
		= \norm{\bra{\theta_0}_{\reg B} SRF \ket{\varphi_a}} \\
		&= \norm{\bra{\theta_0}_{\reg B} (I - A)RF \ket{\varphi_a}}
		\ge \norm{\bra{\theta_0}_{\reg B} RF \ket{\varphi_a}} - \norm{\bra{\theta_0}_{\reg B} ARF \ket{\varphi_a}},
	\end{align*}
	where the last inequality is by the triangle inequality. By Cauchy-Schwarz,
	\begin{equation*}
		\norm{\bra{\theta_0}_{\reg B} ARF \ket{\varphi_a}}
		\le \norm{ARF \ket{\varphi_a}}
		= \sqrt{r_{a0}}
		\le \sqrt{u_a}
		\le 6 \sqrt{u_a}~.
	\end{equation*}
	Furthermore,
	\begin{equation*}
		\norm{\bra{\theta_0}_{\reg B} RF\ket{\varphi_a}}
		= \norm{\bra{\theta_0}_{\reg B} \ket{\varphi_a}}
		= \norm{\bra{\theta_0}_{\reg B} \cdot Swap \cdot \ket{\varphi_a}}
		= \norm{\bra{\theta_0}_{\reg A} \ket{\varphi_a}},
	\end{equation*}
	where the first equality holds because $RF$ is unitary and does not act on the register $\reg B$ (which is what $\bra{\theta_0}$ acts on), and the second equality holds by the symmetry of $\ket{\varphi_a}$ (\cref{clm:symmetric-states}). Therefore $\norm{\bra{\theta_0}_{\reg A} \ket{\varphi_{a0}}} \ge \norm{\bra{\theta_0}_{\reg A} \ket{\varphi_a}} - 6\sqrt{u_a}$, and rearranging gives the desired inequality.
\end{proof}

Now we prove the $c=1$ case of \cref{clm:sound-progress}, i.e.\ that
\begin{equation*}
	\norm{\bra{\theta_0}_{\reg A} \ket{\varphi_a}} - \norm{\bra{\theta_1}_{\reg A^+} \ket{\varphi_{a1}}}
	\le 6\sqrt{u_a}~.
\end{equation*}
Since the proof of the $c=1$ case is longer than that of the $c=0$ case, we break the proof of the $c=1$ case into a series of claims. The first claim, at a high level, relates $\norm{\bra{\theta_1}_{\reg A^+} \ket{\varphi_{a1}}}$ to the scenario where we assume that the prover computes the conditional probabilities/phases honestly (i.e.\ we project the state of the protocol onto $T$):

\begin{clm} \label{clm:start}
	$
	\norm{\bra{\theta_1}_{\reg A^+} \ket{\varphi_{a1}}}
	\ge \norm{\bra{\theta_1}_{\reg A^+} RGYTF \ket{\varphi_a}}
	- \norm{YLF \ket{\varphi_a}}
	- \sqrt{r_{a1}}$~.
\end{clm}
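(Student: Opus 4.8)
The plan is to peel the projections and unitaries off the right-hand side of \eqref{eq:varphi_a1_def} one factor at a time, using only the triangle inequality, the fact that the partial inner product $\bra{\theta_1}_{\reg{A^+}}$ is norm-nonincreasing (Cauchy-Schwarz), and the invariance of the vector norm under unitaries. The sole quantitative input is a bound on the ``swap-test rejection amplitude'' of a grow round in terms of $r_{a1}$.

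I would first record that input. In an iteration $h$ with $b_h = 1$ and conditioned on $b_{\le h} = a1$, the verifier rejects exactly when the flag-qubit measurement on $\reg W$ fails, or when that measurement succeeds but the subsequent swap test on the $\reg{A^+}\reg{B^+}$ pair gives the antisymmetric outcome; no other step of the iteration can reject. Hence $r_{a1} = \pi(a) - \pi(a1)$ where $\pi(a) = \norm{\ket{\varphi_a}}^2$ and $\pi(a1) = \norm{\ket{\varphi_{a1}}}^2 = \norm{S^+ C R G Y F\ket{\varphi_a}}^2$; expanding $\norm{F\ket{\varphi_a}}^2$ and $\norm{C R G Y F\ket{\varphi_a}}^2$ via the orthogonal decompositions $\id = Y + (\id - Y)$ and $\id = S^+ + A^+$ and using that $C, R, G$ are unitary gives
\[
r_{a1} = \norm{(\id - Y)F\ket{\varphi_a}}^2 + \norm{A^+ C R G Y F\ket{\varphi_a}}^2 \ge \norm{A^+ C R G Y F\ket{\varphi_a}}^2~,
\]
so $\norm{A^+ C R G Y F\ket{\varphi_a}} \le \sqrt{r_{a1}}$.

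Then I would carry out the peeling. Starting from $\ket{\varphi_{a1}} = S^+ C R G Y F\ket{\varphi_a}$ and writing $S^+ = \id - A^+$, the triangle inequality, Cauchy-Schwarz, and the bound just proved give
\[
\norm{\bra{\theta_1}_{\reg{A^+}}\ket{\varphi_{a1}}} \ge \norm{\bra{\theta_1}_{\reg{A^+}} C R G Y F\ket{\varphi_a}} - \sqrt{r_{a1}}~.
\]
Because $C$ acts on $\reg{B^+}\reg{P}$, which is disjoint from $\reg{A^+}$, it commutes with $\bra{\theta_1}_{\reg{A^+}}$ and preserves norms, so $C$ may be dropped from the first term. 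Next, using $\id = T + L$ and the triangle inequality,
\[
\norm{\bra{\theta_1}_{\reg{A^+}} R G Y F\ket{\varphi_a}} \ge \norm{\bra{\theta_1}_{\reg{A^+}} R G Y T F\ket{\varphi_a}} - \norm{\bra{\theta_1}_{\reg{A^+}} R G Y L F\ket{\varphi_a}}~,
\]
and since $\bra{\theta_1}_{\reg{A^+}}$ is a contraction while $R$ and $G$ are unitary, the last term is at most $\norm{R G Y L F\ket{\varphi_a}} = \norm{Y L F\ket{\varphi_a}}$. Chaining these three displays gives exactly the claimed inequality.

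Every step is elementary; the only thing requiring care is tracking which register each of $F, G, R, C, Y, S^+, A^+, T, L$ acts on, so that dropping $C$ (disjoint-register commutation) and dropping $R$ and $G$ (unitarity) are legitimate, and so that the listed rejection possibilities genuinely exhaust the measurements the verifier performs in a grow round. I do not anticipate a real obstacle.
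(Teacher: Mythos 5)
Your proof is correct and follows essentially the same route as the paper's: peel off $S^+ = \id - A^+$, bound the antisymmetric term by $\sqrt{r_{a1}}$, drop $C$ by disjoint-register unitarity, split $\id = T + L$, and bound the $L$-term by $\norm{YLF\ket{\varphi_a}}$ via Cauchy--Schwarz and unitarity of $R,G$. The only (harmless) difference is cosmetic: you justify $\norm{A^+CRGYF\ket{\varphi_a}}^2 \le r_{a1}$ by the exact accounting $r_{a1} = \norm{(\id - Y)F\ket{\varphi_a}}^2 + \norm{A^+CRGYF\ket{\varphi_a}}^2$, whereas the paper simply interprets that norm as the probability of passing the $\QIP[1/2]$ check but failing the swap test, which is at most the overall rejection probability $r_{a1}$.
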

\begin{proof}
	By \eqref{eq:varphi_a1_def}, the fact that $S^+ + A^+ = \id$, and the triangle inequality, we have
	\begin{align*}
		\norm{\bra{\theta_1}_{\reg A^+} \ket{\varphi_{a1}}} &= \norm{\bra{\theta_1}_{\reg A^+} S^+ C R G Y F \ket{\varphi_a}} \ge \norm{\bra{\theta_1}_{\reg A^+} C R G Y F \ket{\varphi_a}} -
		\norm{\bra{\theta_1}_{\reg A^+} A^+ C R G Y F \ket{\varphi_a}}~.
	\end{align*}
	By Cauchy-Schwarz, we have
	\begin{align*}
		\norm{\bra{\theta_1}_{\reg A^+} A^+ C R G Y F  \ket{\varphi_a}}^2
		\le \norm{ A^+ C R G Y F  \ket{\varphi_a}}^2
		\le r_{a1},
	\end{align*}
	where the last inequality follows from the fact that $\norm{ A^+ C R G Y F  \ket{\varphi_a}}^2$ denotes the probability that, conditioned on the random choices being $a1$, in iteration $h$ the verifier accepts the $\QIP[1/2]$ protocols (thus the $Y$ projector) but rejects in the swap test (thus the $A^+$ projector). This is at most the probability of rejecting in iteration $h$ overall, conditioned on the random choices $a1$ (which is $r_{a1}$). 
	
	Next, since the unitary $C$ acts on $\reg{B^+ P}$ but not on $\reg A^+$, it holds that
	\begin{align*}
		\norm{\bra{\theta_1}_{\reg A^+} C R G Y F \ket{\varphi_a}}
		&= \norm{\bra{\theta_1}_{\reg A^+} RGYF \ket{\varphi_a}} \\
		&= \norm{\bra{\theta_1}_{\reg A^+} RGY (T + L)F \ket{\varphi_a}} \\
		&\geq \norm{\bra{\theta_1}_{\reg A^+} RGYTF \ket{\varphi_a}} - \norm{\bra{\theta_1}_{\reg A^+} RGYLF \ket{\varphi_a}} \\
		&\geq \norm{\bra{\theta_1}_{\reg A^+} RGYTF \ket{\varphi_a}} - \norm{YLF \ket{\varphi_a}}
	\end{align*}
	where the last inequality follows from Cauchy-Schwarz. Combining these inequalities yields the claim.
\end{proof}

The next claim is proved using the soundness guarantee of the $\QIP[1/2]$ verifiers $V_\cp$ and $V_\ph$:
\begin{clm} \label{clm:qip-sound}
	$
	\Norm{Y LF \ket{\varphi_a}}^2
	\le \frac{1}{2} \cdot {\Norm{LF \ket{\varphi_a}}}^2
	$.
\end{clm}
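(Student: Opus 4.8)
The plan is to exploit the block structure of the forward-QIP unitary $F$ together with the soundness of the underlying $\QIP[1/2]$ verifiers $V_\cp$ and $V_\ph$. First I would factor $F = F_2 F_1$, where $F_1$ consists of everything in the forward-QIP step up to and including copying $(x,\eta)$ into the input subregister of $\reg W = \reg W_h$ (that is: sending a copy of $x$ to the prover, letting the prover write to $\reg D = \reg D_h$, and copying $(x,\eta)$ into a subregister of $\reg W$), and $F_2$ is the unitary (jointly applied by the simulated verifier and the prover) simulating $V_\cp$ when $\hw(a) < n$ or $V_\ph$ when $\hw(a) = n$. I would then record two structural facts. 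First, $F_2$ acts only on $\reg W$ and the prover's registers, and is block diagonal in the standard basis of $\reg A\reg D$: we may write $F_2 = \sum_{x,\eta} \kb{x}_{\reg A} \otimes \kb{\eta}_{\reg D} \otimes G_{x,\eta}$, where $G_{x,\eta}$ runs the $\QIP[1/2]$ protocol on input $(1^n,x,\eta)$ (resp.\ $(x,\eta)$). Second, the part of $\reg W$ holding the $\QIP$ verifier's workspace and flag qubit is in the all-zeros state right after $F_1$, since $\reg W_h$ is freshly initialized and untouched in earlier iterations.

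Next I would decompose $F_1 \ket{\varphi_a} = \sum_{x,\eta} \ket{x}_{\reg A} \ket{\eta}_{\reg D} \otimes \ket{\zeta_{x,\eta}}$, where each $\ket{\zeta_{x,\eta}}$ is a subnormalized vector on the remaining registers whose restriction to the $\QIP$ verifier's workspace and flag qubit is the all-zeros state. By the definition of $T$, the operator $L = \id - T$ projects exactly onto the span of the vectors $\ket{x}_{\reg A}\ket{\eta}_{\reg D}$ for which $\eta$ is not the correct value ($\cp(1^n,x)$ if $\hw(a)<n$, or $\ph(x)$ if $\hw(a)=n$). Applying $F_2$ and then $L$, and using that $Y$ acts on the flag qubit (inside the remaining registers) and hence commutes with the $\reg A\reg D$ projectors, I obtain that $LF\ket{\varphi_a}$ and $YLF\ket{\varphi_a}$ are orthogonal sums, over the incorrect pairs $(x,\eta)$, of the terms $\ket{x}_{\reg A}\ket{\eta}_{\reg D}\otimes G_{x,\eta}\ket{\zeta_{x,\eta}}$ and $\ket{x}_{\reg A}\ket{\eta}_{\reg D}\otimes Y G_{x,\eta}\ket{\zeta_{x,\eta}}$ respectively. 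Taking squared norms, and using that $G_{x,\eta}$ is unitary so $\Norm{G_{x,\eta}\ket{\zeta_{x,\eta}}} = \Norm{\ket{\zeta_{x,\eta}}}$, this reduces the claim to showing, for each incorrect $(x,\eta)$, that $\Norm{Y G_{x,\eta}\ket{\zeta_{x,\eta}}}^2 \le \tfrac12 \Norm{\ket{\zeta_{x,\eta}}}^2$.

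Finally, for a fixed incorrect $(x,\eta)$, after normalizing $\ket{\zeta_{x,\eta}}$ I would recognize $\Norm{Y G_{x,\eta}\ket{\zeta_{x,\eta}}}^2 / \Norm{\ket{\zeta_{x,\eta}}}^2$ as the probability that $V_\cp$ (resp.\ $V_\ph$) accepts on input $(1^n,x,\eta)$ (resp.\ $(x,\eta)$), started with an all-zeros workspace and interacting with a prover whose initial private-register state is the normalized restriction of $\ket{\zeta_{x,\eta}}$ to the non-verifier registers. Since a fixed initial prover-register state can be folded into the prover, this is bounded by the $\QIP[1/2]$ soundness error of $V_\cp$ (resp.\ $V_\ph$) on the input in question; and because $\eta$ is incorrect, that input lies outside $L_\cp$ (resp.\ $L_\ph$), so the probability is at most $\tfrac12$. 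Summing over incorrect $(x,\eta)$ finishes the proof. I expect the main obstacle to be the bookkeeping in the last two paragraphs: carefully verifying that after $F_1$ the forward-QIP step has indeed left $\reg A\reg D$ alone and the $\QIP$ verifier's workspace is fresh, so that projecting onto a sector of the \qq{lie} subspace genuinely presents the $\QIP$ simulation with a well-formed no-instance and a clean verifier --- which is precisely what makes $\QIP[1/2]$ soundness applicable uniformly over the prover's (possibly heavily entangled) strategy.
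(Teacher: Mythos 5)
Your proposal is correct and takes essentially the same route as the paper's proof: the same factorization of $F$ into a pre-processing unitary and the $\QIP[1/2]$ simulation, commuting the ``lie'' projector $L$ past the simulation, decomposing into orthogonal sectors over the incorrect pairs $(x,\eta)$, and invoking $\QIP[1/2]$ soundness of $V_\cp$/$V_\ph$ with the prover's residual (possibly entangled) state folded into the prover. The bookkeeping you flag (fresh workspace in $\reg W_h$ and the coherent copy of $(x,\eta)$ matching $\reg A \reg D$) is exactly what the paper also relies on, and no gap remains beyond what the paper itself leaves implicit.
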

\begin{proof}
	Write $F = F^{(2)}_{\reg{WP}} F^{(1)}_{\reg{ADWP}}$, where $F^{(1)}$ consists of the ``pre-processing stage" (Lines~\ref{line:msendx-sec6} through~\ref{line:mcopyx-sec6} of the protocol: the verifier sends $x$, receives $\eta$ from the prover, and copies $(x,\eta)$ into $\reg W$) and $F^{(2)}$ consists of the $\QIP[1/2]$ protocol itself. Then since $L$ (which acts on $\reg{AD}$) commutes with $F^{(2)}$, we have that
	\begin{equation*}
		\Norm{Y LF \ket{\varphi_a}}^2 = \Norm{Y F^{(2)} \ket{\zeta}}^2 \quad \text{for} \quad \ket{\zeta} = L F^{(1)} \ket{\varphi_a}~.
	\end{equation*}
	By the definition of the ``lie'' projection $L$, we can write
	\begin{equation*}
		\ket{\zeta} = \sum_{\mathclap{\text{bad } (x,\eta)}} \kappa_{x,\eta} \, \ket{x,\eta}_{\reg{A} \reg{D}} \otimes \ket{x,\eta,\zs}_{\reg{W}_h} \otimes \ket{\zeta_{x,\eta}}
	\end{equation*}
	where $\kappa_{x,\eta}$ are complex numbers, $\ket{\zeta_{x,\eta}}$ are unit-length vectors, and we call $(x,\eta)$ \emph{bad} if $\eta \neq \cp(x)$ (in the case that $\hw(a) < n$) or if $\eta \neq \ph(x)$ (in the case that $\hw(a) = n$). In other words, the workspace register contains a superposition ``no'' instances of the $\QIP$ language $L_\cp$ or $L_\ph$. Thus the $\QIP[1/2]$ verifiers $V_\cp$ or $V_\ph$, when run on these bad instances, will accept with probability at most $1/2$ (by the soundness property of the $\QIP[1/2]$ protocols). Since $Y F^{(2)}$ acts on the register $\reg{WP}$, which is disjoint from $\reg{AD}$, it follows that
	\[
	\Norm{Y F^{(2)} \ket\zeta}^2 = \sum_{\mathclap{\text{bad } (x,\eta)}} |\kappa_{x,\eta}|^2 \cdot \Norm { YF^{(2)} \ket{x,\eta,\zs}_{\reg{W}_h} \ket{\zeta_{x,\eta}} }^2 \leq \sum_{\mathclap{\text{bad } (x,\eta)}} |\kappa_{x,\eta}|^2 \cdot \frac{1}{2} = \frac{1}{2} \cdot \Norm{ \ket{\zeta} }^2~.
	\]
	Finally, since $F^{(2)}$ is unitary and commutes with $L$, we have
	\begin{equation*}
		\Norm{ \ket{\zeta} }^2
		= \Norm{ L F^{(1)} \ket{\varphi_a}}^2
		= \Norm{ L F^{(2)} F^{(1)} \ket{\varphi_a}}^2
		= \Norm{ L F \ket{\varphi_a}}^2~.
	\end{equation*}
	Combining these (in)equalities yields the claim.
\end{proof}

Roughly speaking, the next claim says that if the prover behaves honestly at the beginning of iteration $h$, then the verifier grows the state by a qubit (or if $\hw(a) = n$, applies a phase) in the correct way:
\begin{clm} \label{clm:grow}
	$
	\Norm{\bra{\theta_1}_{\reg A^+} RG YTF \ket{\varphi_a}}
	= \Norm{\bra{\theta_0}_{\reg A} RYTF\ket{\varphi_a}}
	$.
\end{clm}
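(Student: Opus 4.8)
The plan is to prove the claim by showing that, once the protocol has been projected onto the ``honest answer'' subspace (the projection $T$) and onto the simulated inner verifier accepting (the projection $Y$), the A-grow unitary $G$ is forced to grow the register correctly. Set $\ket{\xi} = YTF\ket{\varphi_a}$, and let $\reg A_{\hw(a)+1}$ denote the fresh qubit that is grown, so that $\reg{A^+} = \reg A\,\reg A_{\hw(a)+1}$. I will carry out the case $\hw(a) < n$; the case $\hw(a) = n$ is identical after replacing $\reg A_{\hw(a)+1}$ by the zero-qubit register $\reg A_{n+1}$ and the controlled rotation below by multiplication by the phase $\ph(x)$.

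First I would establish three structural facts. \textbf{(i)} Since $T$ acts on $\reg A\,\reg D$ and $Y$ acts on the flag qubit of $\reg W$, they act on disjoint registers and hence commute, so $T\ket{\xi} = \ket{\xi}$ and $\ket{\xi} = \sum_x \ket{x}_{\reg A}\ket{\cp(x)}_{\reg D}\otimes\ket{\chi_x}$ for some subnormalized vectors $\ket{\chi_x}$. Moreover $\reg A_{\hw(a)+1}$ is still in its initial state $\ket{0}$, decoupled from the rest of the system, in $\ket{\xi}$: a qubit $\reg A_j$ is first acted on only in the iteration in which the grow counter reaches $j$, the counter equals $\hw(a)$ at the start of iteration $h = |a|+1$, and $F,T,Y$ in iteration $h$ do not act on $\reg A_{\hw(a)+1}$, which lies inside $\reg{A^+}$ and hence outside the prover's register $\reg P = (\reg{A^+ B^+ D W})^c$. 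Thus $\ket{\xi} = \ket{\xi'}\otimes\ket{0}_{\reg A_{\hw(a)+1}}$. \textbf{(ii)} In the forward-QIP step the verifier reads $\reg A$ only as a control (to copy $x$ into $\reg W$ and to drive the simulated inner verifier) and never writes it, while the prover acts only on $\reg D$, on sub-registers of $\reg W$, and on its private register $\reg P$ --- never on $\reg A$. Hence $F$ commutes with every $\kb{x}_{\reg A}$, i.e.\ $F = \sum_x \kb{x}_{\reg A}\otimes F_x$, and the same reasoning applied to the reversed operations gives $R = \sum_x \kb{x}_{\reg A}\otimes R_x$; in particular neither $F$ nor $R$ acts on $\reg A_{\hw(a)+1}$. \textbf{(iii)} Let $G' = \sum_x \kb{x}_{\reg A}\otimes V_{\cp(x)}$, where $V_\eta$ is the one-qubit unitary on $\reg A_{\hw(a)+1}$ with $V_\eta\ket{0} = \sqrt{\eta}\ket{0} + \sqrt{1-\eta}\ket{1}$ (its extension to $\ket{1}$ being immaterial). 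On the image of $T$ the dyadic register $\reg D$ holds $\cp(x)$, so $G$ leaves $\reg D$ fixed and rotates $\reg A_{\hw(a)+1}$ by $V_{\cp(x)}$; a short computation gives $GT = G'T$, hence $G\ket{\xi} = G'\ket{\xi}$. Since $G'$ is controlled on $\reg A$ and acts on the disjoint register $\reg A_{\hw(a)+1}$, it commutes with $R = \sum_x \kb{x}_{\reg A}\otimes R_x$. Finally, writing $\ket{\theta_1} = \sum_x \sqrt{\wt p(x)}\,\ket{x}_{\reg A}\otimes V_{\cp(x)}\ket{0}_{\reg A_{\hw(a)+1}}$ and using $V_{\cp(x)}^\dagger V_{\cp(x)} = \id$, one reads off the operator identity $\bra{\theta_1}_{\reg{A^+}}\,G' = \bra{\theta_0}_{\reg A}\otimes\bra{0}_{\reg A_{\hw(a)+1}}$.

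Putting these facts together yields
\begin{align*}
\bra{\theta_1}_{\reg{A^+}}\, RGYTF\ket{\varphi_a}
&= \bra{\theta_1}_{\reg{A^+}}\, RG\ket{\xi}
= \bra{\theta_1}_{\reg{A^+}}\, RG'\ket{\xi}
= \bra{\theta_1}_{\reg{A^+}}\, G'R\ket{\xi} \\
&= \Paren{\bra{\theta_0}_{\reg A}\otimes\bra{0}_{\reg A_{\hw(a)+1}}}\, R\ket{\xi}
= \bra{\theta_0}_{\reg A}\, R\ket{\xi'},
\end{align*}
where the last equality uses $R\ket{\xi} = (R\ket{\xi'})\otimes\ket{0}_{\reg A_{\hw(a)+1}}$ since $R$ does not act on $\reg A_{\hw(a)+1}$. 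Taking norms --- and using that tensoring with $\ket{0}_{\reg A_{\hw(a)+1}}$ leaves the norm unchanged --- gives $\Norm{\bra{\theta_1}_{\reg{A^+}}\, RGYTF\ket{\varphi_a}} = \Norm{\bra{\theta_0}_{\reg A}\, R\ket{\xi}} = \Norm{\bra{\theta_0}_{\reg A}\, RYTF\ket{\varphi_a}}$, which is the claim.

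The delicate point --- and what I expect to be the main obstacle --- is that $G$ and $R$ both act on the dyadic register $\reg D$, so they genuinely do not commute; the argument goes through only because, \emph{restricted to the image of $T$}, the action of $G$ on $\reg D$ collapses and what remains is an $\reg A$-controlled rotation of the fresh, decoupled qubit $\reg A_{\hw(a)+1}$, which does commute with the $\reg A$-block-diagonal $R$. Correspondingly, the facts needing genuine care are exactly that $\ket{\xi}$ really lies in the image of $T$, that $\reg A_{\hw(a)+1}$ really is still in $\ket{0}$ and unentangled from everything else, and that the (arbitrary) prover never touches $\reg A$ or $\reg A_{\hw(a)+1}$ --- the last point being what makes $F$ and $R$ block-diagonal in the $\reg A$ basis and leaves the fresh qubit alone.
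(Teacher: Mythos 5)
Your proof is correct and follows essentially the same route as the paper's: you introduce an idealized $\reg A$-controlled grow unitary $G'$ (the paper's $\wt G$), use $GT=G'T$ together with $T\,YTF\ket{\varphi_a}=YTF\ket{\varphi_a}$, commute $G'$ past the $\reg A$-block-diagonal $R$, and finish with $\bra{\theta_1}_{\reg{A^+}}G'=\bra{\theta_0}_{\reg A}\otimes\bra{0}_{\reg A_{\hw(a)+1}}$ plus the observation that the fresh qubit of $RYTF\ket{\varphi_a}$ is still $\ket0$. Your treatment is just a bit more explicit than the paper's about why $F,R$ are block-diagonal in the $\reg A$ basis and why the fresh qubit is decoupled, and your key identity avoids needing unitarity of $G'$'s extension, but these are presentational differences only.
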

\begin{proof}
	Define a unitary $\wt G$ acting on $\reg A^+$ as follows: If $\hw(a) < n$ then
	\begin{equation*}
		\wt G = \sum_{\mathclap{x \in \cube{\hw(a)}}} \kb{x}_{\reg A} \otimes Q^{(x)}_{\reg A_{\hw(a)+1}}
	\end{equation*}
	for some one-qubit unitaries $Q^{(x)}$ satisfying
	\begin{equation*}
		Q^{(x)} \ket0 = \sqrt{\cp(x)} \ket0 + \sqrt{1-\cp(x)} \ket1~.
	\end{equation*}
	Alternatively, if $\hw(a) = n$ then let
	\begin{equation*}
		\wt G = \sum_{\mathclap{x \in \cube n}} \ph(x) \kb{x}~.
	\end{equation*}
	In other words, the unitary $\wt{G}$ ``grows'' the state based on the true conditional probability or phase\footnote{Up to the overall phase $\gamma_n$ by which $\ket\psi$ differs from some close approximation of $\ket{\wt\psi}$.}, rather than based on the claimed value given by the prover (as with the unitary $G$).
	
	It follows from definitions that $GT = \wt G T$, and that $\wt G$ commutes with $R$. This latter point holds because we can write $R = \sum_{x \in \cube{\hw(a)}} \kb{x}_{\reg A} \otimes U^{(x)}$ for some unitaries $U^{(x)}$ acting on $\reg{DWP}$, and registers $\reg A_{\hw(a)+1}$ and $\reg{DWP}$ are disjoint. Thus, since $Y$ (which acts on $\reg W$) commutes with $T$ (which acts on $\reg{AD}$), we have
	\begin{align*}
		\Norm{\bra{\theta_1}_{\reg A^+} R GYTF \ket{\varphi_a}}
		= \Norm{\bra{\theta_1}_{\reg A^+} R \wt G YTF \ket{\varphi_a}} = \Norm{\bra{\theta_1}_{\reg A^+} \wt G R YT F \ket{\varphi_a}}~.
	\end{align*}
	
	Suppose that $\hw(a) < n$. Then $\wt{G} \ket{\theta_0}_{\reg{A}} \ket{0}_{\reg{A}_{\hw(a)+1}} = \ket{\theta_1}_{\reg{A^+}}$, so
	\begin{equation*}
		\Norm{\bra{\theta_1}_{\reg A^+} \wt G R YT F \ket{\varphi_a}} = \Norm{\bra{\theta_0,0}_{\reg{A^+}} RYTF \ket{\varphi_a}} = \Norm{\bra{\theta_0}_{\reg{A}} RYTF \ket{\varphi_a}},
	\end{equation*}
	where the last equality holds because the register ${\reg{A}_{\hw(a)+1}}$ in the state $RYTF \ket{\varphi_a}$ is in the state $\ket{0}$ (because the protocol has not acted on that register yet). Alternatively, suppose that $\hw(a) = n$. Then $\wt{G} \ket{\theta_0}_{\reg{A}} = \ket{\theta_1}_{\reg{A}}$, so in this case it also holds that
	\begin{equation*}
		\Norm{\bra{\theta_1}_{\reg A^+} \wt G R YT F \ket{\varphi_a}} =
		\Norm{\bra{\theta_0}_{\reg{A}} RYTF \ket{\varphi_a}}~.
	\end{equation*}
	The claim follows from combining the above equalities.
\end{proof}

Finally, we combine these claims to prove the $c=1$ case of \cref{clm:sound-progress}:

\begin{proof}[Proof of the $c=1$ case of \cref{clm:sound-progress}]
	By \cref{clm:start,clm:grow},
	\begin{align*}
		\norm{\bra{\theta_1}_{\reg A^+} \ket{\varphi_{a1}}}
		&\ge \norm{\bra{\theta_1}_{\reg A^+} RG Y TF  \ket{\varphi_a}}
		- \norm{YLF  \ket{\varphi_a}}
		- \sqrt{r_{a1}} \\
		&= \Norm{\bra{\theta_0}_{\reg A} RYTF \ket{\varphi_a}}
		- \norm{YLF \ket{\varphi_a}}
		- \sqrt{r_{a1}}~.
	\end{align*}
	\cref{clm:qip-sound} implies that
	\begin{align*}
		\Norm{YLF \ket{\varphi_a}}^2
		\le \frac{1}{2} {\Norm{LF \ket{\varphi_a}}}^2 = \frac{1}{2}\Norm{YLF \ket{\varphi_a}}^2
		+ \frac{1}{2} \Norm{(\id - Y) LF \ket{\varphi_a}}^2~.
	\end{align*}
	Rearranging, and then using that $(\id - Y)_{\reg W}$ and $L_{\reg{AD}}$ commute and $\norm{L}_{\mrm{op}} \le 1$, we get
	\begin{equation}
		\norm{YLF \ket{\varphi_a}}^2
		\le \norm{(\id - Y)LF \ket{\varphi_a}}^2
		= \norm{L(\id - Y)F \ket{\varphi_a}}^2
		\le \norm{(\id - Y)F \ket{\varphi_a}}^2
		\le r_{a1}~. \label{eq:c1-1}
	\end{equation}
	The last inequality holds because $\norm{(\id - Y)F \ket{\varphi_a}}^2$ denotes the probability that the verifier rejects in the A-grow step of iteration $h$ (conditioned on the random choices $a1$), which is at most the probability that the verifier rejects in iteration $h$ (conditioned on the random choices $a1$), a.k.a.\ $r_a$.
	
	Thus
	\begin{align*}
		\norm{\bra{\theta_1}_{\reg A^+} \ket{\varphi_{a1}}}
		&\ge \norm{\bra{\theta_0}_{\reg A} R Y T F \ket{\varphi_a}} - 2\sqrt{r_{a1}} \\
		&\geq \norm{\bra{\theta_0}_{\reg A} RYF \ket{\varphi_a}}
		- \norm{\bra{\theta_0}_{\reg A} RYLF \ket{\varphi_a}} - 2\sqrt{r_{a1}} \\
		&\ge \norm{\bra{\theta_0}_{\reg A} RF \ket{\varphi_a}}
		- \norm{\bra{\theta_0}_{\reg A} R(\id - Y)F \ket{\varphi_a}}
		- \norm{YLF \ket{\varphi_a}}  - 2\sqrt{r_{a1}} \\
		&\ge \norm{\bra{\theta_0}_{\reg A} RF \ket{\varphi_a}}
		- \norm{(\id - Y) F \ket{\varphi_a}}
		- 3\sqrt{r_{a1}} \\
		&\ge \norm{\bra{\theta_0}_{\reg A}RF \ket{\varphi_a}}
		- 4\sqrt{r_{a1}}~.
	\end{align*}
	The second line follows from using $T = \id - L$ and the triangle inequality. The third line follows from $\id + (Y - \id) = Y$ and the triangle inequality, as well as Cauchy-Schwarz. The fourth line follows from Cauchy-Schwarz, as well as \eqref{eq:c1-1}. The last line follows from the final inequality in \eqref{eq:c1-1} again.	
	
	Finally, by reasoning similar to that in the $c=0$ case,
	\begin{align*}
		\norm{\bra{\theta_0}_{\reg A} RF \ket{\varphi_a}}
		&\ge \norm{\bra{\theta_0}_{\reg A} SRF \ket{\varphi_a}}
		- \norm{\bra{\theta_0}_{\reg A} ARF\ket{\varphi_a}} \\
		&\ge \norm{\bra{\theta_0}_{\reg B} S RF \ket{\varphi_a}}
		- \norm{A RF\ket{\varphi_a}} \\
		&\ge \norm{\bra{\theta_0}_{\reg B} RF \ket{\varphi_a}}
		- \norm{\bra{\theta_0}_{\reg B} ARF \ket{\varphi_a}}
		- \sqrt{r_{a0}} \\
		&\ge \norm{\bra{\theta_0}_{\reg B} \ket{\varphi_a}}
		- \norm{ARF\ket{\varphi_a}}
		- \sqrt{r_{a0}} \\
		&= \norm{\bra{\theta_0}_{\reg A} \ket{\varphi_a}} - 2\sqrt{r_{a0}},
	\end{align*}
	so $\norm{\bra{\theta_1}_{\reg A^+} \ket{\varphi_{a1}}} \ge \norm{\bra{\theta_0}_{\reg A} \ket{\varphi_a}} - 2\sqrt{r_{a0}} - 4\sqrt{r_{a1}}$. Since $r_{a0}, r_{a1} \leq u_{a}$, this implies \cref{clm:sound-progress} in the $c=1$ case as desired.
\end{proof}

	\section{Interactive Unitary Synthesis}
\label{sec:transformations}
%BEGIN_FOLD
In this section we present our unitary synthesis protocol for unitaries in $\cc{unitaryPSPACE}$ with polynomial action, i.e.\ \Cref{thm:poly-action}. We also discuss a unitary synthesis protocol for general unitary families, provided that the verifier also receives a succinct description of a polynomial-dimensional subspace that contains the input state.
%END_FOLD
\subsection{Protocol for unitaries with polynomial action}
\label{sec:poly-action}
%BEGIN_FOLD
Our unitary synthesis protocol is based on the following algorithm, which we denote $\alg{LMR}$ after Lloyd, Mohseni, and Rebentrost who formulated it~\cite{lloyd2014quantum}. 

\begin{thm}[\cite{lloyd2014quantum,kimmel2017hamiltonian}] \label{thm:lmr}
	There exists a quantum polynomial-time algorithm $\alg{LMR}$ that takes as input a state $\tau \otimes \rho^{\otimes k} \otimes \kb{t}$, where $\tau$ and $\rho$ are $n$-qubit mixed states and $t \ge 0$ is a number written in binary, and outputs an $n$-qubit mixed state $\sigma$ such that
	\begin{equation*}
		\td(\sigma, \, W \tau \adj W) \le O(t^2/k) \qquad \text{for} \qquad W = \exp(2\pi i \cdot t \cdot \rho)~.
	\end{equation*}
\end{thm}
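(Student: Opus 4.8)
The plan is to prove this via the original \emph{density matrix exponentiation} construction of Lloyd, Mohseni, and Rebentrost, with the error analysis sharpened as in Kimmel et al.: the target unitary evolution under the Hamiltonian $2\pi\rho$ is built up out of many infinitesimal \emph{partial swaps}. Let $S$ denote the SWAP unitary exchanging two $n$-qubit registers $\reg A$ and $\reg B$. Writing $S = S_1 \cdots S_n$, where $S_i$ swaps the $i$-th qubit of $\reg A$ with the $i$-th qubit of $\reg B$, the $S_i$ pairwise commute, and on a simultaneous eigenvector $S$ has eigenvalue $+1$ (resp.\ $-1$) exactly when an even (resp.\ odd) number of the $n$ qubit-pairs lie in their antisymmetric subspace. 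Hence for any $\delta$ the unitary $e^{iS\delta}$ can be implemented by coherently computing that parity into $O(n)$ ancillas using $O(n)$ two-qubit gates, applying a single controlled phase, and uncomputing the parity; so each step of the algorithm described below is polynomial time (computing $\delta = 2\pi t/k$ from the binary input $t$, and the phase gate, to sufficient precision via Solovay--Kitaev adds only a negligible error, which the $O(t^2/k)$ bound has room to absorb).

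The key lemma is the single-step estimate: there is an absolute constant $C$ such that for all $n$-qubit density matrices $\rho,\xi$ and all $\delta \ge 0$,
\[
\Norm{ \tr_{\reg A}\Paren{ e^{iS\delta}\,(\rho_{\reg A} \otimes \xi_{\reg B})\, e^{-iS\delta} } - e^{i\rho\delta}\,\xi\, e^{-i\rho\delta} }_1 \le C\delta^2 .
\]
I would prove this by Taylor-expanding both sides to second order in $\delta$. Using the partial-trace identities $\tr_{\reg A}(S(\rho\otimes\xi)) = \rho\xi$ and $\tr_{\reg A}((\rho\otimes\xi)S) = \xi\rho$, the first-order term of the left-hand side is $i\delta[\rho,\xi]$, which matches the first-order term $i\delta[\rho,\xi]$ of $e^{i\rho\delta}\xi e^{-i\rho\delta}$; and all second-order remainders are bounded in trace norm by a universal constant since $S$, $\rho$, and $\xi$ all have operator norm at most $1$. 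The one subtlety to be careful about here is that this bound must be genuinely uniform over the (arbitrary, possibly highly entangled) input $\xi$ — which it is, since every operator appearing in the remainder has norm $\le 1$.

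Now set $\delta = 2\pi t/k$ and let $\alg{LMR}$ apply, for $j = 1, \dots, k$ in sequence, the channel $\mc E_j(\cdot) = \tr_{\reg A}(e^{iS\delta}(\rho_j \otimes \cdot)\,e^{-iS\delta})$ using the $j$-th of the $k$ input copies $\rho_j$ of $\rho$, to the register initially holding $\tau$. Let $\mc U(\cdot) = e^{i\rho\delta}(\cdot)\,e^{-i\rho\delta}$ denote the ideal one-step channel. Since trace norm is nonincreasing under channels (in particular under the unitary channels $\mc U^{k-j}$), a telescoping argument gives
\[
\Norm{ (\mc E_k \circ \cdots \circ \mc E_1)(\tau) - \mc U^{k}(\tau) }_1 \le \sum_{j=1}^{k} \Norm{ \mc E_j(\xi_{j-1}) - \mc U(\xi_{j-1}) }_1 \le k C \delta^2 = 4\pi^2 C \, t^2/k,
\]
where $\xi_{j-1} = (\mc E_{j-1}\circ\cdots\circ\mc E_1)(\tau)$ is a state and we applied the single-step lemma to each term. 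Since $\mc U^{k}(\tau) = e^{i k\delta\rho}\,\tau\, e^{-ik\delta\rho} = W\tau \adj W$ for $W = e^{2\pi i t\rho}$, the output $\sigma = (\mc E_k \circ \cdots \circ \mc E_1)(\tau)$ satisfies $\td(\sigma, W\tau\adj W) \le O(t^2/k)$, as claimed.

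The proof has no serious conceptual obstacle; the main work is bookkeeping in the single-step lemma — getting the operator orderings right in the partial-trace identities so that the linear terms cancel, and checking that the $O(\delta^2)$ remainder is uniform in the input state — after which contractivity of trace distance under channels makes the per-step errors accumulate only linearly in $k$, yielding the stated $O(t^2/k)$.
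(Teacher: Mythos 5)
The paper does not prove this theorem at all --- it is imported as a black box from \cite{lloyd2014quantum,kimmel2017hamiltonian} --- and your argument is correct and is essentially the standard density-matrix-exponentiation proof from those references: a partial swap $e^{iS\delta}$ with $\delta = 2\pi t/k$ applied once per copy of $\rho$, a per-step trace-norm error $O(\delta^2)$ uniform in the target state, and linear accumulation of the $k$ per-step errors via a hybrid argument using contractivity of trace distance under the ideal unitary channel. The only simplification worth noting is that since $S^2 = \id$ one has the exact identity $e^{iS\delta} = \cos\delta \cdot \id + i \sin\delta \cdot S$, which combined with $\tr_{\reg A}\bigl(S(\rho \otimes \xi)\bigr) = \rho\xi$ and $\tr_{\reg A}\bigl((\rho \otimes \xi)S\bigr) = \xi\rho$ yields your single-step lemma by inspection, with no Taylor-remainder bookkeeping and with the uniformity in $\xi$ immediate.
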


We introduce the following notation related to \cref{thm:lmr}. If $U$ is a unitary acting on $n$ qubits, then for $t, \eps \ge 0$ we call an $n$-qubit mixed state $\rho$ a \emph{program state for $U$ with evolution time $t$ and error $\eps$} if for all $n$-qubit states $\ket\phi$,
\begin{equation*}
	\td \Paren{U \phi \adj U, \, W \phi \adj W} \le \eps \qquad \text{for} \qquad W = \exp(2 \pi i \cdot t \cdot \rho)~.
\end{equation*}
For example, if 
\begin{equation} \label{eq:def-U-t}
	U = \sum_{j=1}^{2^n} e^{2\pi i \theta_j} \kb{v_j}
\end{equation}
is an eigendecomposition of $U$ where $0 \le \theta_j < 1$ for all $j$, then for
\begin{align} \label{eq:def-rho-t}
	&t = \sum_j \theta_j, &\rho = \frac1t \sum_j \theta_j \kb{v_j}
\end{align}
the state $\rho$ is a program state for $U$ with evolution time $t$ and error 0 (if $t>0$). We call $\rho$ and $t$ respectively the \emph{canonical program state for $U$} and the \emph{canonical evolution time for $U$}, and note that $\rho$ and $t$ do not depend on the chosen eigendecomposition of $U$.

Observe that if $U$ has action $a$ (i.e.\ $U$ acts nontrivially on a subspace of dimension $a$) then the canonical evolution time for $U$ is at most $a$, because
\begin{equation*}
	t = \sum_j \theta_j \le \sum_j \ceil{\theta_j} = a~.
\end{equation*}
Therefore \cref{thm:poly-action} follows from the following theorem, which is identical to \cref{thm:poly-action} but with ``canonical evolution time" in place of ``action":

\begin{thm} \label{thm:main-7p1}
	Let $(U_n)_{n \in \N}$ be a family in \unitaryPSPACE such that $U_n$ has canonical evolution time at most $\poly(n)$ for all $n$, and let $q$ be a polynomial. Then $(U_n)_{n \in \N} \in \cc{unitaryQIP}[c,s]$ for
	\begin{align*}
		&c(n) = \frac{1}{q(n)},
		&s(n,\delta) = \exp \Paren{\frac{1}{q(n)} - q(n) \cdot \delta^4}~.
	\end{align*}
\end{thm}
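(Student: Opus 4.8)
The plan is to reduce interactive unitary synthesis for $(U_n)_n$ to interactive state synthesis for the associated family of \emph{canonical program states}, and then invoke \Cref{thm:prim-result} together with the $\alg{LMR}$ algorithm (\Cref{thm:lmr}). First I would establish that the canonical program state $\rho_n$ for $U_n$, together with its canonical evolution time $t_n$, can be prepared (resp.\ computed) in polynomial space: since $(U_n)_n \in \unitaryPSPACE$, the eigendecomposition of $U_n$ --- and hence the matrix $\rho_n = \frac{1}{t_n}\sum_j \theta_j \kb{v_j}$ and the number $t_n = \sum_j \theta_j$ --- is computable to exponential precision by a $\PSPACE$ algorithm (this is where I would need a short argument, e.g.\ via the fact that $\cc{BQPSPACE} = \PSPACE$ applied to a quantum circuit that prepares a purification of $\rho_n$ by measuring eigenphases of $U_n$ using phase estimation run coherently). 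Concretely, I would exhibit a space-uniform family of general quantum circuits preparing a purification $\ket{\rho_n}$ of $\rho_n$ on $\poly(n)$ qubits up to exponentially small trace distance, so that $(\ket{\rho_n})_n \in \statePSPACE$ (modulo the cosmetic mismatch that the $n$-th state has $\poly(n)$ rather than exactly $n$ qubits, which the definition of $\statePSPACE$ tolerates up to relabeling); and I would observe that $t_n \le \poly(n)$ is computable in $\PSPACE$ by the hypothesis on canonical evolution time.

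Next, the $\cc{unitaryQIP}$ verifier for $(U_n)_n$ on input $\ket\phi$ proceeds as follows: it sets $k = k(n) = \poly(n)$ large enough that $O(t_n^2/k) \le \tfrac{1}{2q(n)}$ in \Cref{thm:lmr}, runs $k$ parallel (independent) instances of the $\cc{stateQIP}$ protocol of \Cref{thm:prim-result} for the family $(\ket{\rho_n})_n$ --- each with its own fresh output register and each with completeness/soundness parameters tuned to a sufficiently large polynomial $q'$ --- accepting only if all instances accept, then traces out the purifying registers of the $k$ output states to obtain $\sigma_1 \otimes \cdots \otimes \sigma_k$, computes $t_n$ in $\PSPACE$, and finally runs $\alg{LMR}$ on $\ket\phi \otimes \sigma_1 \otimes \cdots \otimes \sigma_k \otimes \kb{t_n}$, outputting the result. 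Completeness: with the honest $\cc{stateQIP}$ provers, each $\sigma_i$ is exponentially close to $\rho_n$, so by \Cref{thm:lmr} and the triangle inequality the output is within $O(t_n^2/k) + \exp(-\Omega(n)) \le 1/q(n)$ of $U_n \phi U_n^\dagger$ (choosing $q'$ and $k$ appropriately), and acceptance probability is $1$. Soundness: if the verifier accepts with probability $p$, then $p$ is at most the probability that each $\cc{stateQIP}$ instance accepts, and I would argue via the $\cc{stateQIP}$ soundness guarantee and a hybrid/convexity argument (as in the proof of \Cref{lem:stateQIP-amplification}) that the averaged reduced output state $\bar\sigma = \E_i \sigma_i$ must be $\eps$-close to $\rho_n$ for $\eps$ small enough (a polynomial function of $-\log p$); then I would bound the distance between $\alg{LMR}(\ket\phi \otimes \sigma_1\otimes\cdots\otimes\sigma_k \otimes \kb{t_n})$ and $\alg{LMR}(\ket\phi \otimes \rho_n^{\otimes k}\otimes\kb{t_n})$ by $O(k\eps)$ using the fact that $\alg{LMR}$ is a quantum channel (hence $1$-Lipschitz in trace distance on its input) together with subadditivity of trace distance over tensor factors, and combine with \Cref{thm:lmr} to conclude $\td(\rho_{\mathrm{out}}, U_n\phi U_n^\dagger) \le O(k\eps) + O(t_n^2/k)$. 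Rearranging this bound into the stated form $s(n,\delta) = \exp(1/q(n) - q(n)\delta^4)$ follows the same algebraic template as the soundness analysis in \Cref{sec:synthesis} (in particular \Cref{lem:4p}), after first applying soundness amplification for $\cc{stateQIP}$ with a polynomial $m$ chosen to make the exponents work out.

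The main obstacle I expect is the soundness analysis of the reduction: because the single copy of $\ket\phi$ is handed to the prover-influenced state $\sigma_1\otimes\cdots\otimes\sigma_k$ only at the very end inside $\alg{LMR}$, and $\alg{LMR}$ consumes the (possibly corrupted, possibly mutually entangled) states $\sigma_i$, I need the bound on $\td(\bar\sigma,\rho_n)$ from $\cc{stateQIP}$ soundness to translate into a bound on $\td(\sigma_1\otimes\cdots\otimes\sigma_k,\ \rho_n^{\otimes k})$ that is only linear in $k$; this requires care because the $\sigma_i$ need not be product or even close to product, so I would route through the telescoping identity $\td(\sigma_1\otimes\cdots\otimes\sigma_k,\rho_n^{\otimes k}) \le \sum_i \td(\sigma_i,\rho_n)$ applied \emph{conditioned} on the earlier instances accepting (exactly the $P_j$-hybrid device from \Cref{lem:stateQIP-amplification}), and use log-concavity/monotonicity of $s$ to control $\E_i \td(\sigma_i,\rho_n)$ in terms of the overall acceptance probability. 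A secondary subtlety is precision bookkeeping in the claim that $(\ket{\rho_n})_n \in \statePSPACE$ --- one must ensure the purification is prepared to trace distance $\exp(-q(n))$ for every polynomial $q$, which is exactly what the $\statePSPACE$ definition demands and what the $\PSPACE$ tomography machinery of \Cref{sec:tomography} (or direct coherent phase estimation) delivers.
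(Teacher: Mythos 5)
Your high-level strategy---reduce to interactive synthesis of program states and then run $\alg{LMR}$---is the same as the paper's, but the step you dismiss as ``a short argument,'' namely that the canonical program state $\rho_n$ and canonical evolution time $t_n$ are preparable/computable in polynomial space via coherent phase estimation, is exactly where the bulk of the paper's proof lives, and the naive version of it fails. The canonical objects are not robust: if some eigenphase $\theta_j$ of $U_n$ is (exponentially) close to $0$, the phase-estimation outcomes wrap around to values near $1$ (close in the circle metric $\Delta$ but not as real numbers), so the expected estimate of $\theta_j$---and hence of $t_n$ and of $\rho_n$---can be off by order one; moreover $\rho_n$ is undefined, and any preparation-by-postselection has vanishing success probability, when $t_n$ is zero or tiny (e.g.\ $U_n$ equal or close to the identity), which the hypothesis permits. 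The paper therefore never prepares the canonical program state: \cref{lem:helper-7p1} only yields \emph{some} program state with a possibly different, $\PSPACE$-computable evolution time $f(1^n)\le t_n+e^{-\Omega(n)}$, and obtaining it requires first multiplying $U_n$ by a $\PSPACE$-computable global phase (\cref{clm:712}) so that every eigenphase is at least $2^{-3n}$ away from $0$ (``stability''), and only then running the phase-estimation/postselection circuits and the \cref{tomog-0}-based estimate of the acceptance probability to define $f$. Without this, both your $\statePSPACE$ membership claim and your value of the evolution time are unjustified. A related slip: your verifier ``computes $t_n$ in $\PSPACE$,'' but the verifier is polynomial-time; the paper instead appends $\ket{f(1^n)}$ to the state being synthesized, so the prover supplies (and the protocol certifies) the evolution time.

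There is also a gap in your soundness analysis of the reduction. Running $k$ independent $\cc{stateQIP}$ instances forces you to compare the joint---possibly entangled---output with $\rho_n^{\otimes k}$, and the telescoping bound $\td(\sigma_1\otimes\cdots\otimes\sigma_k,\rho_n^{\otimes k})\le\sum_i\td(\sigma_i,\rho_n)$ you invoke only applies when the joint state actually is the product of its marginals; closeness of marginals to a \emph{mixed} state does not bound the joint distance (classically correlated copies, each with marginal exactly $\rho_n$, can be far from $\rho_n^{\otimes k}$ and make $\alg{LMR}$ implement a very different channel). This can be patched because the $\cc{stateQIP}$ targets are pure purifications---closeness of each conditional marginal to a pure state does control the overlap of the joint state with the product of those pure states, at the cost of a square root---but you would have to make that purity-based argument explicit. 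The paper sidesteps the issue entirely: it runs a \emph{single} $\cc{stateQIP}$ instance for the one pure state $\ket{\psi_n}^{\otimes k(n)}\otimes\ket{f(1^n)}$, so soundness is just the single-instance guarantee of \cref{thm:prim-result} combined with monotonicity of trace distance under the channel ``trace out the purifying registers and run $\alg{LMR}$ with the input $\theta$,'' followed by the algebra with \cref{lem:4p}; no hybrid over instances is needed.
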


In the rest of this subsection we prove \cref{thm:main-7p1}. In what follows we relax the definition of \statePSPACE to allow the $n$'th state in a sequence to be on $\poly(n)$ qubits rather than exactly $n$ qubits. Clearly our state synthesis theorem (\cref{thm:prim-result}) holds even with this more general definition, a fact which we will use. Also let $\D{m}[k]$ be the set of integer multiples of $2^{-m}$ in the interval $[0,k)$ (e.g.\ $\D{m}[1] = \D{m}$), encoded as binary strings in the natural way. Throughout this subsection, asymptotic notation hides universal constants (as opposed to constants that depend on some named polynomial or sequence of unitaries). Our proof uses the following lemma:

\begin{lem} \label{lem:helper-7p1}
	Let $(U_n)_{n \in \N} \in \unitaryPSPACE$. Then there exists
	\begin{itemize}[leftmargin=*]
		\renewcommand\labelitemi{--}
		\item a $\PSPACE$-computable function $f$ such that $f(1^n) \in \D{\poly(n)}[2^n]$ and $f(1^n) \leq t_n + e^{-\Omega(n)}$, where $t_n$ is the canonical evolution time for $U_n$,
		\item a sequence $(\ket{\psi_n})_{n \in \N} \in \statePSPACE$,
	\end{itemize}
	such that for all $n$ the reduced state of $\ket{\psi_n}$ on the first $n$ qubits is a program state for $U_n$ with evolution time $f(1^n)$ and error $e^{-\Omega(n)}$.
\end{lem}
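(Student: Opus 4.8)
\noindent\emph{Proof proposal.} The plan is to build $\ket{\psi_n}$ as a purification of an ``approximate canonical program state'' produced by running phase estimation on $U_n$, and to take $f(1^n)$ to be a $\PSPACE$-approximation of the associated evolution time. Since $(U_n)_{n\in\N}\in\unitaryPSPACE$ is space-uniform, for any polynomial $r$ the controlled powers of $U_n$ up to $U_n^{2^{r(n)}}$ are implementable by a space-uniform family of general quantum circuits: applying $U_n$ up to $2^{r(n)}\le\exp(\poly(n))$ times reuses the same $\poly(n)$ qubits and takes exponential time, which is allowed. Consequently the $r(n)$-qubit phase-estimation unitary $\mai{QPE}_n$ for $U_n$ is space-uniformly implementable (realizing the inverse QFT approximately over $\{H,\mathit{CNOT},T\}$ via Solovay--Kitaev to error $2^{-\Omega(n)}$). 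Fix an eigendecomposition $U_n=\sum_k e^{2\pi i\theta_k}\kb{v_k}$ with $\theta_k\in[0,1)$; writing $\mai{QPE}_n\ket{v_k}\ket{0^{r(n)}}=\ket{v_k}\otimes\sum_\ell c_{k,\ell}\ket{\ell/2^{r(n)}}$ and $\bar\theta_k:=\sum_\ell (\ell/2^{r(n)})\,|c_{k,\ell}|^2$, the standard analysis of phase estimation gives $\bar\theta_k\in[0,1)$ with $|\bar\theta_k-\theta_k|\le 2^{-\Omega(n)}$ once $r$ is a large enough polynomial.

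Next I construct $\ket{\psi_n}$. Let $N=2^n$ and start from the maximally entangled state $N^{-1/2}\sum_j\ket j_{\reg V}\ket j_{\reg J}$ on $2n$ qubits; apply $\mai{QPE}_n$ on $\reg V$ together with a fresh $r(n)$-qubit register $\reg E$; adjoin a flag qubit $\reg F$ and apply the controlled rotation $\ket{\nu}_{\reg E}\mapsto\ket{\nu}_{\reg E}\,(\sqrt{\nu}\ket0+\sqrt{1-\nu}\ket1)_{\reg F}$ for $\nu\in\D{r(n)}$; then postselect on $\reg F=\ket0$ and discard $\reg F$. (Postselection is realized by measuring $\reg F$ and, on outcome $1$, discarding all registers and restarting; one repeats $\exp(\poly(n))$ times, which succeeds except with probability $2^{-\Omega(n)}$ provided the success probability $\sum_k\bar\theta_k/N$ is not too small. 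This, with space-uniformity of all ingredients, places the resulting family in the relaxed class $\statePSPACE$ of states on $\poly(n)$ qubits.) Using the orthonormality of the $\ket{v_k}$ (which annihilates the would-be cross terms after tracing out $\reg E\reg J$), one checks that this pure state on $\reg V\reg E\reg J$ has reduced state on $\reg V$ (the first $n$ qubits) equal to $\bar\rho_n:=\hat t_n^{-1}\sum_k\bar\theta_k\kb{v_k}$, where $\hat t_n:=\sum_k\bar\theta_k$. Since $\exp(2\pi i\,\hat t_n\,\bar\rho_n)=\sum_k e^{2\pi i\bar\theta_k}\kb{v_k}$ lies within $2\pi\max_k|\bar\theta_k-\theta_k|=2^{-\Omega(n)}$ of $U_n$ in operator norm, and $\td(U\phi\adj U,W\phi\adj W)$ is bounded by $\norm{U-W}_{\mrm{op}}$ up to a constant, $\bar\rho_n$ is a program state for $U_n$ with evolution time $\hat t_n$ and error $2^{-\Omega(n)}$; moreover $\hat t_n\le\sum_k(\theta_k+2^{-\Omega(n)})\le t_n+2^{-\Omega(n)}$.

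Finally I choose $f$ and upgrade the evolution time from $\hat t_n$ to $f(1^n)$. The value $\hat t_n/N$ is the probability that the following space-uniform, one-output-qubit general quantum circuit outputs $1$: prepare $\id/N$ on $n$ qubits, run $\mai{QPE}_n$ into $\reg E$, and use a controlled rotation to output $1$ with probability equal to the value encoded in $\reg E$. By \cref{tomog-0} there is a $\PSPACE$-computable $f'$ with $|f'(1^n)-\hat t_n/N|\le 2^{-p(n)}$ for any polynomial $p$; take $f(1^n)$ to be $N\cdot f'(1^n)$ rounded (and clamped) into $\D{\poly(n)}[2^n]$, so that $f(1^n)\in\D{\poly(n)}[2^n]$, $|f(1^n)-\hat t_n|\le 2^{-p(n)}$ for a polynomial $p$ of our choosing, and $f(1^n)\le\hat t_n+2^{-p(n)}\le t_n+2^{-\Omega(n)}$. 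To see that $\bar\rho_n$ is also a program state with evolution time $f(1^n)$ and error $2^{-\Omega(n)}$, argue by cases. If $f(1^n)\le e^{-n}$, then $\hat t_n\le 2e^{-n}$ and hence $\max_k\theta_k\le 2^{-\Omega(n)}$, so $\norm{\exp(2\pi i f(1^n)\bar\rho_n)-U_n}_{\mrm{op}}\le 2\pi f(1^n)\norm{\bar\rho_n}_{\mrm{op}}+\norm{\id-U_n}_{\mrm{op}}=2^{-\Omega(n)}$. If $f(1^n)>e^{-n}$, then $\hat t_n\ge f(1^n)-2^{-p(n)}\ge e^{-n}/2$, so $|f(1^n)/\hat t_n-1|\le 2^{-p(n)}/(e^{-n}/2)=2^{-\Omega(n)}$ (taking $p$ large) and $\norm{\exp(2\pi i f(1^n)\bar\rho_n)-U_n}_{\mrm{op}}\le 2\pi\big(|f(1^n)/\hat t_n-1|+\max_k|\bar\theta_k-\theta_k|\big)=2^{-\Omega(n)}$. (In the first case the postselection above may fail to succeed, but then the identity of $\ket{\psi_n}$ is immaterial: we let the circuit output $\ket{0}^{\otimes\poly(n)}$, whose reduced state is still a valid program state by the same estimate.)

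\emph{Main obstacle.} The delicate point is the tension between the two demands on $f(1^n)$: it must be \emph{exactly} $\PSPACE$-computable (hence knowable only to additive precision $2^{-\poly(n)}$), yet whenever $U_n$ is bounded away from the identity the program-state condition pins down the evolution time of $\bar\rho_n$ to \emph{relative} precision $2^{-\Omega(n)}$. The resolution is the case split above, exploiting that $U_n$ close to $\id$ forces the canonical evolution time to be negligible so that a near-zero $f(1^n)$ then suffices; handling the phase-estimation tail (so that $\bar\theta_k,\hat t_n$ are genuinely defined and close to $\theta_k,t_n$) and the exponentially small postselection probability are the other points requiring care.
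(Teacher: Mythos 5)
There is a genuine gap, and it is exactly the issue the paper's proof spends half its length addressing. Your argument hinges on the claim that the phase-estimation mean $\bar\theta_k=\sum_\ell(\ell/2^{r(n)})|c_{k,\ell}|^2$ satisfies $|\bar\theta_k-\theta_k|\le 2^{-\Omega(n)}$ for every eigenphase. Phase estimation only concentrates in the \emph{torus} metric $\Delta$, with tails decaying like $O(2^{-r}/\eps)$; when $\theta_k$ is within a few grid spacings of $0$ modulo $1$, a non-negligible part of the outcome distribution wraps around, and the expectation over $[0,1)$ is not close to $\theta_k$ at all. For instance, if $\theta_k$ sits half a grid spacing below $1$, the mass splits between outcomes near $1$ and outcomes near $0$ and $\bar\theta_k\approx 1/2$, so the unitary $\exp(2\pi i\,\hat t_n\,\bar\rho_n)$ acquires eigenvalue $\approx -1$ on $\ket{v_k}$ while $U_n$ has eigenvalue $\approx +1$; applying both to a superposition of $\ket{v_k}$ with a well-estimated eigenvector gives nearly orthogonal outputs, so the program-state guarantee fails by a constant. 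Symmetrically, eigenphases half a grid spacing \emph{above} $0$ inflate $\bar\theta_k$ (and hence $\hat t_n$ and $f(1^n)$) by a constant per eigenvalue, which can break the requirement $f(1^n)\le t_n+e^{-\Omega(n)}$. Your case split on whether $f(1^n)\le e^{-n}$ does not rescue this: the failure occurs even when $\hat t_n$ is large, e.g.\ a single wrap-prone eigenphase alongside many generic ones, and in the small-$f$ branch your deduction ``$\hat t_n$ small $\Rightarrow\max_k\theta_k\le 2^{-\Omega(n)}$'' again routes through the same false bound (an eigenphase just below $1$ has small $\bar\theta_k$ but $\theta_k\approx 1$; the conclusion $\norm{\id-U_n}_{\mrm{op}}$ small can be salvaged, but not by the argument you give).

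The paper closes this hole by defining \emph{stable} unitaries (all eigenphases at torus distance at least $2^{-3n}$ from $0$) and first proving the lemma for stable $U_n$, where your expectation bound does hold; the general case is then reduced to the stable case by exhibiting a $\PSPACE$-computable global phase $\phi(1^n)=O(2^{-2n})$ such that $e^{2\pi i\phi(1^n)}U_n$ is stable (\cref{clm:712}), which itself requires a separate space-uniform circuit ($E_n$), the tomography theorem, and a counting argument to find a shift avoiding all $2^n$ eigenphases. This stabilization step also guarantees $\hat t_n\ge\Omega(2^{-2n})$, which is what makes the postselection/repetition and the division by $\hat t_n$ in the error analysis benign — concerns you partially anticipate, but which your proposal does not actually resolve. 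Without some substitute for this stabilization, the proposal as written fails for legitimate families in \unitaryPSPACE.
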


Our unitary synthesis protocol prepares copies of $\ket{\psi_n}$ and computes $f(1^n)$ (for $\ket{\psi_n}, f$ as defined in \cref{lem:helper-7p1}) using our state synthesis protocol, and then applies the $\alg{LMR}$ algorithm to the appropriate reduced state. First we prove \cref{lem:helper-7p1} in a certain special case (explained below), then we prove \cref{lem:helper-7p1} in the general case by reducing to the special case, and finally we prove \cref{thm:main-7p1} using \cref{lem:helper-7p1}.

A ``problem" with the definitions of $\rho$ and $t$ in \eqref{eq:def-rho-t} is that they are sensitive to small perturbations in the unitary $U$ from \eqref{eq:def-U-t}. For example, if $\theta_j = 0$ for some $j$, then an arbitrarily small perturbation to the $j$'th eigenvalue of $U$ could change $\theta_j$ to near 1. Furthermore $\rho$ is undefined when $t=0$, and is sensitive to a slight increase in any of the $\theta_j$ when $t$ is near zero. Motivated by these concerns, we define a notion of \emph{stability} of a unitary, and first prove \cref{lem:helper-7p1} in the case where $U_n$ is stable for all $n$.

\begin{dfn}[Stability of unitaries]
	An $n$-qubit unitary $U$ is \emph{stable} if all of its eigenvalues are of the form $e^{2 \pi i \theta}$ where $2^{-3n} \le \theta \le 1-2^{-3n}$.
\end{dfn}

The following equivalent definition of stability will sometimes be more convenient to work with. Define a metric $\Delta$ on the real numbers as follows:
\begin{equation*}
	\Delta(r,s)
	= \min_{k \in \Z} |r - s + k|
	= \min\Paren{r - s - \floor{r-s}, \, \ceil{r-s} - (r-s)}~.
\end{equation*}
Intuitively, this corresponds to mapping the real line to the unit circle by identifying all integer points with each other, and measuring the distance between two points on the resulting $1$-dimensional torus. Thus, an $n$-qubt unitary $U$ is stable if all of its eigenvalues are of the form $e^{2 \pi i \theta}$ where $\Delta(\theta, 0) \ge 2^{-3n}$.
%END_FOLD
\subsubsection{Proof of \texorpdfstring{\cref{lem:helper-7p1}}{Lemma 7.3} when \texorpdfstring{$U_n$}{Un} is stable}
%BEGIN_FOLD
The proof is organized as follows. First we review the well-known \emph{phase estimation algorithm}. Then, using the phase estimation algorithm, we define and analyze a sequence of quantum circuits $(C_n)_{n \in \N}$ which are used to define both $(\ket{\psi_n})_{n \in \N}$ and $f$. More specifically, we define $\ket{\psi_n}$ as the output state of $C_n$ when $C_n$ accepts, and $f(1^n)$ as $2^n$ times the approximation of the acceptance probability of $C_n$ obtained using \cref{tomog-0}. Then we prove that $(\ket{\psi_n})_{n \in \N}$ is in \statePSPACE. Finally we prove that $f$ satisfies the required properties.
%END_FOLD
\paragraph{The phase estimation algorithm.}
For an $n$-qubit unitary $U$ and number $m \in \N$, we denote by $\alg{PE}^{(U,m)}$ the instance of the phase estimation algorithm that acts on an $n$-qubit register $\reg A$ (the ``eigenvector register") and an $m$-qubit register $\reg B$ (the ``eigenvalue register") and makes oracle calls to $U_{\reg A}$ controlled on the content of $\reg B$. If $\ket v$ is an eigenvector of $U$ with eigenvalue $e^{2\pi i \theta}$, then
\begin{equation*}
	\alg{PE}^{(U,m)} \ket{v}_{\reg A} \ket{0^m}_{\reg B} = \ket{v}_{\reg A} \ket\eta_{\reg B}
\end{equation*}
for some state $\ket\eta$ (depending on $\ket v, U, m$), such that if $r \in \D m$ denotes the outcome of a standard-basis measurement of $\ket\eta$ then
\begin{equation} \label{eq:pe}
	\PR{\Delta(r,\theta) \ge \eps} \le O\Paren{2^{-m} / \eps}
\end{equation}
for all $\eps>0$~\cite[Chapter 5]{nielsen2000quantum}. Let $m(n) = 9n$ and $P_n = \alg{PE}^{\Paren{U_n, m(n)}}$. Since $(U_n)_{n \in \N}$ is space-uniform, the sequence $(P_n)_{n \in \N}$ is also space-uniform, as can be seen by inspection of the phase estimation algorithm.

\paragraph{The circuit $C_n$ and its properties.}

\begin{algorithm}
	\caption{The circuit $C_n$} \label{alg:Cn}
	\begin{algorithmic}[1]
		\State Initialize $n$-qubit registers $\reg A$ and $\reg B$ to the maximally entangled state $\ket{\Phi_n}_{\reg{AB}} = 2^{-n/2} \sum_{x \in \cube n} \ket{x}_{\reg A} \ket{x}_{\reg B}$.
		\State Initialize an $m(n)$-qubit register $\reg C$ to $\ket{0^{m(n)}}$, and apply the phase estimation circuit $P_n$ with eigenvector register $\reg A$ and eigenvalue register $\reg C$. \hlabel{line:Cnl2}
		\State Create a one-qubit register $\reg D$, and controlled on the state $\ket r$ of $\reg C$ where $r \in \D{m(n)}$, construct the state $\sqrt{r} \ket0 + \sqrt{1-r} \ket1$ in $\reg D$. \hlabel{line:Cnl3}
		\State Measure $\reg D$ in the standard basis. If the measurement outcome is 0 then accept and output $\reg{ABC}$, otherwise reject.
	\end{algorithmic}
\end{algorithm}

The circuit $C_n$ is described in \cref{alg:Cn}; clearly $(C_n)_{n \in \N}$ is space-uniform. For a fixed $n \in \N$, when analyzing $C_n$ we use the following notation. Let
\begin{equation*}
	U_n = \sum_{j=1}^{2^n} e^{2\pi i \theta_j} \kb{v_j}
\end{equation*}
be an eigendecomposition of $U_n$ where $0 \le \theta_j < 1$ for all $j$. (The phases $e^{2 \pi i \theta_j}$ and eigenvectors $\ket{v_j}$ depend on $n$, but for notational clarity we leave this dependence implicit.) Let
\begin{align*}
	&t = \sum_j \theta_j,
	&\rho = \frac1t \sum_j \theta_j \kb{v_j}
\end{align*}
respectively denote the canonical evolution time and canonical program state for $U_n$. Let $\ket{\eta_j} = \sum_{r \in \D{m(n)}} \alpha_{jr} \, \ket{r}$ be the $m(n)$-qubit state such that
\begin{equation*}
	P_n \ket{v_j} \ket{0^{m(n)}} = \ket{v_j} \ket{\eta_j}~.
\end{equation*}

Let $\ket{\overline v_j}$ denote the element-wise complex conjugate of $\ket{v_j}$ with respect to the standard basis. It is easily verified\footnote{E.g.\ using the fact that $(V \otimes V^*) \ket{\Phi_n} = \ket{\Phi_n}$ for all unitaries $V$ (in particular $V = \sum_j \ketbra{v_j}{j}$), where $V^*$ denotes the element-wise complex conjugate of $V$.} that
\begin{equation*}
	\ket{\Phi_n} = 2^{-n/2} \sum_{j \in [2^n]} \ket{v_j} \ket{\overline v_j}~.
\end{equation*}
Therefore the state after \cref{line:Cnl2} is $2^{-n/2} \sum_j \ket{v_j} \ket{ \overline v_j} \ket{ \eta_j}$, and so the state after \cref{line:Cnl3} is
\begin{equation*}
	2^{-n/2}	\sum_{\mathclap{j \in [2^n], r \in \D{m(n)}}}  \alpha_{jr} \cdot \ket{v_j} \ket{ \overline v_j} \ket{r} \otimes (\sqrt{r} \ket0 + \sqrt{1-r} \ket1)~.
\end{equation*}
Define
\begin{align*}
	&\wt\theta_j = \sum_{\mathclap{r \in \D{m(n)}}} |\alpha_{jr}|^2 \, r,
	&\wt t = \sum_j \wt \theta_j~.
\end{align*}
Then
\begin{equation}
	\label{eq:cn_acc_prob}
	\pr{C_n \text{ accepts}} = 2^{-n} \cdot \wt t,
\end{equation}
and when accepting $C_n$ outputs the state
\begin{equation*}
	\ket{\psi_n} = \wt{t}^{-1/2} \, \sum_{\mathclap{j \in [2^n], r \in \D{m(n)}}} \alpha_{jr} \cdot \sqrt{r} \cdot \ket{v_j} \ket{ \overline v_j} \ket{ r}~.
\end{equation*}
The reduced state on the first $n$ qubits of $\ket{\psi_n}$ is thus
\begin{equation*}
	\wt\rho = \frac1{\wt t} \sum_j \wt\theta_j \kb{v_j}~.
\end{equation*}

Now we argue that
\begin{equation} \label{eq:theta-approx}
	\left|\wt\theta_j - \theta_j \right| \le 2^{-4n}
\end{equation}
for all $j$, provided $n$ is sufficiently large. Set $\eps = \frac12 \cdot 2^{-4n}$. If $r$ denotes the outcome of a standard-basis measurement of $\ket{\eta_j}$, then
\begin{align*}
	\left| \wt\theta_j - \theta_j \right|
	&= | \E[r] - \theta_j |
	= |\E[r - \theta_j]|
	\le \E|r - \theta_j| \\
	&= \E[|r-\theta_j| \cdot \Ind{|r - \theta_j| \le \eps}] + \E[|r-\theta_j| \cdot \Ind{|r-\theta_j| > \eps}] \\
	&\le \eps + \pr{|r-\theta_j| >\eps}
	=\eps + \PR{\Delta(r,\theta_j) > \eps},
\end{align*}
where the last equality holds because $U_n$ is stable and $\eps < 2^{-3n}$. By \eqref{eq:pe} it follows that
\begin{equation*}
	\left| \wt\theta_j - \theta_j \right|
	\le \eps + O\Paren{2^{-m(n)} / \eps}
	= \eps + O\Paren{2^{-5n}}
	\le 2\eps,
\end{equation*}
which establishes \eqref{eq:theta-approx}.

\paragraph{Proof that $(\ket{\psi_n})_{n \in \N}$ is in \statePSPACE.}

Let $\sigma_n$ denote the output state of the general quantum circuit $D_n$ described in \cref{alg:Dn}, and let $\ell(n) = 2^{4n}$. Observe that $(D_n)_{n \in \N}$ is space-uniform. Clearly
\begin{equation*}
	\sigma_n = \pr{C_n \text{ rejects}}^{\ell(n)} \cdot \kb\zs + \Paren{1 - \pr{C_n \text{ rejects}}^{\ell(n)}} \cdot \kb{\psi_n},
\end{equation*}
so by the convexity of trace distance and \eqref{eq:cn_acc_prob},
\begin{align*}
	\td(\sigma_n, \psi_n)
	&\le \pr{C_n \text{ rejects}}^{\ell(n)} \td(\zs, \psi_n)
	\le \Paren{1 - 2^{-n} \wt t}^{\ell(n)}
	\le \exp\Paren{-2^{-n} \cdot \wt t \cdot \ell(n)} \\
	&= \exp\Paren{-2^{3n} \cdot \wt t}~.
\end{align*}
By \eqref{eq:theta-approx} and the fact that $U_n$ is stable,
\begin{equation} \label{eq:ttilb}
	\wt t
	= \sum_j \wt \theta_j
	= \sum_j \Paren{\theta_j - \Paren{\theta_j - \wt \theta_j}}
	\ge \sum_j (2^{-3n} - 2^{-4n})
	\ge \Omega(2^{-2n}),
\end{equation}
so $\td(\sigma_n, \psi_n) \le \exp\Paren{-\Omega(2^n)} \le \exp\Paren{-n^{\omega(1)}}$ as desired.

\begin{algorithm}
	\caption{The circuit $D_n$} \label{alg:Dn}
	\begin{algorithmic}[1]
		\For{$2^{4n}$ times}
		\State Execute $C_n$.
		\IIf{$C_n$ accepts} \Return the output state of $C_n$ and \textbf{abort}.
		\EndIIf
		\EndFor
		\State \Return $\ket\zs$.
	\end{algorithmic}
\end{algorithm}

\paragraph{The function $f$ and its properties.}

Let $C'_n$ be identical to $C_n$ except that $C'_n$ only outputs an accept/reject bit (as opposed to also outputting registers $\reg{ABC}$). By \cref{tomog-0} applied to $(C'_n)_{n \in \N}$, and \eqref{eq:cn_acc_prob}, there exists a $\PSPACE$-computable function $g$ such that for all $n \in \N$ it holds that $g(1^n) \in \D{\poly(n)}$ and
\begin{equation*}
	\left| g(1^n) - 2^{-n} \wt t \right| \le 2^{-4n}~.
\end{equation*}
Let $f(1^n) = 2^n g(1^n)$, and observe that $f$ is computable in $\PSPACE$ (since $g$ is) and that
\begin{equation} \label{eq:sttil}
	\left|f(1^n) - \wt t \right| = 2^n \left| g(1^n) - 2^{-n} \wt t \right| \le 2^{-3n}~.
\end{equation}
By \eqref{eq:theta-approx}, \eqref{eq:sttil} and the triangle inequality,
\begin{equation*}
	|f(1^n) - t|
	\le \left| f(1^n) - \wt t \right| + \left| \wt t - t \right|
	\le 2^{-3n} + \sum_j \left| \wt \theta_j - \theta_j \right|
	\le 2^{-3n} + 2^n \cdot 2^{-4n} = 2 \cdot 2^{-3n},
\end{equation*}
so $f(1^n) \le t + e^{-\Omega(n)}$ as required.

Fixing $n$, all that remains is to prove that $\wt\rho$ is a program state for $U = U_n$ with evolution time $f = f(1^n)$ and error $e^{-\Omega(n)}$. In other words, given an arbitrary $n$-qubit state $\ket\phi$, we would like to prove that
\begin{equation} \label{eq:fin-goal}
	\td\Paren{U \phi \adj U, \, \wt U \phi \adj{\wt U}} \le e^{-\Omega(n)}
\end{equation}
for
\begin{equation*}
	\wt U = \exp\Paren{2\pi i \cdot f  \cdot \wt\rho}
	= \sum_j \exp\Paren{2\pi i \cdot f \wt\theta_j / \wt t} \kb{v_j}~.
\end{equation*}
By \eqref{eq:td-fid2},
\begin{equation*}
	\td\Paren{U \phi \adj U, \, \wt U \phi \adj{\wt U}}
	\le \Norm{U\ket\phi - \wt U \ket\phi}
	\le \Norm{U - \wt U}_{\mrm{op}}~.
\end{equation*}
Since $\ket{v_1}, \dotsc, \ket{v_{2^n}}$ are orthogonal and the operator norm equals the largest singular value,
\begin{align*}
	\Norm{U - \wt U}_{\mrm{op}}
	&= \Norm{\sum_j \Paren{\exp\Paren{2\pi i \cdot \theta_j} - \exp\Paren{2\pi i \cdot f \wt\theta_j / \wt t}} \kb{v_j}}_{\mrm{op}} \\
	&= \max_j \left| \exp\Paren{2\pi i \cdot \theta_j} - \exp\Paren{2\pi i \cdot f \wt\theta_j / \wt t} \right| \\
	&\le 2\pi \max_j \left| \theta_j - f \wt\theta_j / \wt t \right|,
\end{align*}
where the last inequality is by \eqref{eq:integral}. Finally, for all $j$,
\begin{equation*}
	\left| \theta_j - \frac{f \wt\theta_j} {\wt t} \right|
	= \left| \frac{\theta_j \Paren{\wt t - f} + f \Paren{\theta_j - \wt \theta_j}} {\wt t} \right|
	\le \frac{\theta_j \left|\wt t - f\right| + f \left|\theta_j - \wt\theta_j \right|} {\left| \wt t\right|}
	\le \frac{\left|\wt t - f \right| + 2^n \left|\theta_j - \wt\theta_j \right|} {\left| \wt t\right|},
\end{equation*}
and by \eqref{eq:theta-approx}, \eqref{eq:ttilb}, \eqref{eq:sttil} it holds that
\begin{equation*}
	\frac{\left|\wt t - f \right| + 2^n \left|\theta_j - \wt\theta_j \right|} {\left| \wt t\right|}
	\le O\Paren{\frac{2^{-3n} + 2^n \cdot 2^{-4n}} {2^{-2n}}}
	\le e^{-\Omega(n)}
\end{equation*}
from which \eqref{eq:fin-goal} follows.

\subsubsection{Proof of \texorpdfstring{\cref{lem:helper-7p1}}{Lemma 7.3} in the general case}
%BEGIN_FOLD
Again, given $n$ let $U_n = \sum_{j=1}^{2^n} e^{2\pi i \theta_j} \kb{v_j}$ be an eigendecomposition of $U_n$ where $0 \le \theta_j < 1$ for all $j$. We now consider the case where $U_n$ may not be stable for all $n$. To remedy this, we reduce to the stable case via the following claim:

\begin{clm} \label{clm:712}
	There exists a $\PSPACE$-computable function $\phi$ such that for all $n \in \N$ it holds that $\phi(1^n) \in \D{\poly(n)}, \phi(1^n) \le O\Paren{2^{-2n}}$ and the unitary $e^{2 \pi i \phi(1^n)} U_n$ is stable.
\end{clm}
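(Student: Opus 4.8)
The plan is as follows. Since $U_n$ acts on a $2^n$-dimensional space it has at most $2^n$ distinct eigenvalues $e^{2\pi i\theta_1},e^{2\pi i\theta_2},\dots$, so the set of phases $\phi$ for which $e^{2\pi i\phi}U_n$ has \emph{some} eigenvalue $e^{2\pi i\theta}$ with $\Delta(\theta,0)<3\cdot 2^{-3n}$ is a union of at most $2^n$ open arcs of length $6\cdot 2^{-3n}$, hence has measure at most $6\cdot 2^{-2n}$. Call $\phi$ \emph{robustly good} if it lies in none of these arcs; a robustly good $\phi$ is in particular such that $e^{2\pi i\phi}U_n$ is stable, since stability only demands $\Delta(\theta,0)\ge 2^{-3n}$. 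Take the grid $\Phi_n=\{k\cdot 2^{-3n+3}:0\le k\le 2^n\}\subset\D{3n}$, whose spacing $2^{-3n+3}$ exceeds the arc length $6\cdot 2^{-3n}$, so each arc contains at most one grid point; since $|\Phi_n|=2^n+1$ exceeds the number of arcs, some $\phi^\star\in\Phi_n$ is robustly good. Also $\max\Phi_n=2^{-2n+3}=O(2^{-2n})$ (valid for $n\ge 2$; the $O(1)$ many smaller values of $n$ are inessential and can be handled by the same test over a fine dyadic grid in $[0,1)$). Thus it suffices to give a $\PSPACE$-computable procedure that, on input $1^n$, outputs some $\phi\in\Phi_n$ with $e^{2\pi i\phi}U_n$ stable.

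The key point is that stability is an \emph{exact}, no-margin property that an approximate $\PSPACE$ algorithm obtained from \cref{tomog-0} cannot hope to decide; instead I will approximately decide the robustly-good margin condition, which (i) implies stability and (ii) still holds for $\phi^\star$. For $\phi\in\D{3n}$ consider the space-uniform general quantum circuit $C_n$ that, on input the binary encoding of $\phi$, prepares the maximally entangled state $\ket{\Phi_n}$ on registers $\reg{A}\reg{B}$ (so $\reg A$ is maximally mixed), runs $\alg{PE}^{(U_n,10n)}$ with eigenvector register $\reg A$ and eigenvalue register $\reg C$, measures $\reg C$ to obtain $r\in\D{10n}$, and accepts iff $\Delta(r+\phi,0)\le 2\cdot 2^{-3n}$. (Running phase estimation on $U_n$ itself and then shifting the estimate by $\phi$ keeps the family space-uniform, while still probing the spectrum of $e^{2\pi i\phi}U_n$ up to the usual phase-estimation error.) Writing $e^{2\pi i\phi}U_n=\sum_j e^{2\pi i\sigma_j}\kb{v_j}$ with $\sigma_j=\theta_j+\phi$, the acceptance probability of $C_n$ on input $\phi$ equals $\sum_j p_j\PR{\Delta(r_j+\phi,0)\le 2\cdot 2^{-3n}}$, where the sum is over distinct eigenvalues, $p_j\ge 2^{-n}$ is the total weight of the corresponding eigenspace, and $r_j$ is the phase-estimation output for eigenvalue phase $\theta_j$.

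Applying \eqref{eq:pe} with $\eps=2^{-3n}$ and $m=10n$, and using that $\Delta$ is a translation-invariant metric (so $\Delta(r_j+\phi,\sigma_j)=\Delta(r_j,\theta_j)$ and $\PR{\Delta(r_j,\theta_j)\ge 2^{-3n}}\le O(2^{-7n})$), one gets a two-sided bound: if $e^{2\pi i\phi}U_n$ is \emph{not} stable, choose $j_0$ with $\Delta(\sigma_{j_0},0)<2^{-3n}$; then $\Delta(r_{j_0},\theta_{j_0})<2^{-3n}$ forces $\Delta(r_{j_0}+\phi,0)<2\cdot 2^{-3n}$, so $C_n$ accepts with probability at least $p_{j_0}(1-O(2^{-7n}))\ge 2^{-n-1}$; whereas if $\phi$ is robustly good, then for every $j$, $\Delta(r_j+\phi,0)\le 2\cdot 2^{-3n}$ forces $\Delta(r_j,\theta_j)\ge 3\cdot 2^{-3n}-2\cdot 2^{-3n}=2^{-3n}$, so $C_n$ accepts with probability at most $\sum_j p_j\cdot O(2^{-7n})=O(2^{-7n})$. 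Now apply \cref{tomog-0} to $(C_n)_{n\in\N}$ (with $\ell(n)=3n$ input qubits) to get a $\PSPACE$-computable estimate $\hat a(1^n,\phi)$ of this acceptance probability to within additive error $2^{-4n}$, and define $\phi(1^n)$ to be the lexicographically least $\phi\in\Phi_n$ with $\hat a(1^n,\phi)<2^{-n-2}$. By the robustly-good bound, $\phi^\star$ passes the test (for $n\ge 2$), so the search succeeds; by the not-stable bound, any $\phi\in\Phi_n$ passing the test has $e^{2\pi i\phi}U_n$ stable. Since $|\Phi_n|=O(2^n)$ and each test is in $\PSPACE$, $\phi$ is $\PSPACE$-computable, and by construction $\phi(1^n)\in\D{3n}\subseteq\D{\poly(n)}$ and $\phi(1^n)\le 2^{-2n+3}=O(2^{-2n})$.

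The main obstacle is the conceptual one above: one must resist dec/iding stability directly — impossible for an approximate algorithm, since the property has no slack — and instead certify the margin-robust version, which both implies stability and opens up a gap (roughly $2^{-n-1}$ versus $O(2^{-7n})$) that \cref{tomog-0} can resolve. Once that is set up, checking that the constants in \eqref{eq:pe} actually yield this gap, that the grid genuinely contains a robustly good point, and that the circuit family is space-uniform, is routine.
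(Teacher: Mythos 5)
Your proposal is correct and takes essentially the same approach as the paper: both probe the spectrum of $U_n$ by running phase estimation on half of a maximally entangled state, use \cref{tomog-0} to obtain a $\PSPACE$-computable estimate of the probability that the estimated phase lands near $-\phi$, and combine a counting argument (guaranteeing a ``robustly good'' shift of size $O(2^{-2n})$ with margin strictly larger than the stability threshold) with a two-sided gap argument so that the approximate test is complete for the robustly good shift and sound for stability. The differences are cosmetic — you search an explicit grid of $2^n+1$ candidates while the paper takes the minimum over all of $\D{m(n)}$, and your margins/thresholds differ by constants — and the small-$n$ edge cases are glossed comparably in both arguments.
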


First we prove \cref{lem:helper-7p1} assuming \cref{clm:712}, and then we prove \cref{clm:712}.
%END_FOLD
\begin{proof}[Proof of \cref{lem:helper-7p1} assuming \cref{clm:712}]
	Let $U_n' = e^{2 \pi i \phi(1^n)} U_n$. The family $(U_n')_{n \in \N}$ is space-uniform, because $(U_n)_{n \in \N}$ is space-uniform and $\phi(1^n)$ is $\PSPACE$-computable.\footnote{Technically, the phase $\exp\Paren{-2\pi i \phi \Paren{1^n}}$ may not be implementable exactly using our assumed gate set; however it can be approximated with exponentially small error that does not alter the analysis. For clarity we assume that the phase $\exp\Paren{-2\pi i \phi \Paren{1^n}}$ can be implemented exactly.} Since $U_n'$ is furthermore stable for all $n$, by the special case of \cref{lem:helper-7p1} proved above there exists
	\begin{itemize}[leftmargin=*]
		\renewcommand\labelitemi{--}
		\item a $\PSPACE$-computable function $f$ such that $f(1^n) \in \D{\poly(n)}[2^n]$ and $f(1^n) \leq t_n' + e^{-\Omega(n)}$, where $t_n'$ is the canonical evolution time for $U'_n$,
		\item a sequence $(\ket{\psi_n})_{n \in \N} \in \statePSPACE$,
	\end{itemize}
	such that for all $n$ the reduced state of $\ket{\psi_n}$ on the first $n$ qubits is a program state for $U'_n$ with evolution time $f(1^n)$ and error $e^{-\Omega(n)}$. Since $U_n'$ and $U_n$ differ only by an overall phase, the reduced state of $\ket{\psi_n}$ on the first $n$ qubits is also a program state for $U_n$ with evolution time $f(1^n)$ and error $e^{-\Omega(n)}$. Finally, letting $t_n = \sum_j \theta_j$ denote the canonical evolution time for $U_n$, we have
	\begin{equation*}
		t_n' = \sum_j (\theta_j + \phi(1^n) - \ceil{\theta_j + \phi(1^n)})
		\le \sum_j (\theta_j + \phi(1^n))
		= \sum_j \theta_j + 2^n \cdot \phi(1^n)
		\le t_n + 2^n \cdot O\Paren{2^{-2n}},
	\end{equation*}
	so $f(1^n) \le t_n' + e^{-\Omega(n)} \le t_n + e^{-\Omega(n)}$ as desired.
\end{proof}
%BEGIN_FOLD
Now we prove \cref{clm:712}. The function $\phi$ is defined relative to the circuit $E_n$ described in \cref{alg:En}. Here, $(P_n)_{n \in \N}$ denotes the family of phase estimation circuits (like described in the proof of the stable case of \cref{lem:helper-7p1}), where the eigenvalue register has $m(n)$ qubits for a sufficiently large polynomial $m$ to be specified later. Given an implicit parameter $n$, let $\delta = 2^{-3n}$ and $\eps = 2 \cdot 2^{-3n}$.

\begin{algorithm}
	\caption{The circuit $E_n$} \label{alg:En}
	\begin{algorithmic}[1]
		\Require $r \in \D{m(n)}$
		\State Initialize $n$-qubit registers $\reg A$ and $\reg B$ to the maximally entangled state $\ket{\Phi_n}_{\reg{AB}} = 2^{-n/2} \sum_{x \in \cube n} \ket{x}_{\reg A} \ket{x}_{\reg B}$.
		\State Initialize an $m(n)$-qubit register $\reg C$ to $\ket{0^{m(n)}}$, and apply the phase estimation circuit $P_n$ with eigenvector register $\reg A$ and eigenvalue register $\reg C$. \hlabel{line:Enl2}
		\State Measure $\reg C$ in the standard basis, and let $s \in \D{m(n)}$ denote the measurement outcome.
		\State If $\Delta(s,-r) > \eps$ then accept, otherwise reject.
	\end{algorithmic}
\end{algorithm}

Let $2^{-n/2} \sum_{j \in [2^n]} \ket{v_j} \ket{\overline v_j} \ket{\eta_j}$ denote the state of the circuit after \cref{line:Enl2} (where $\ket{v_j}, \ket{\overline v_j}, \ket{\eta_j}$ are defined as in the analysis of \cref{alg:Cn}). Let $s_j \in \D{m(n)}$ denote the outcome of a standard-basis measurement of $\ket{\eta_j}$; then
\begin{equation*}
	\pr{\text{$E_n(r)$ rejects}} = \E_{j \sim [2^n]} \pr{\Delta(s_j,-r) \le \eps}~.
\end{equation*}
Since $(P_n)_{n \in \N}$ is space-uniform, so is $(E_n)_{n \in \N}$, so by \cref{tomog-0} and the above equality there exists a $\PSPACE$-computable function $h$ such that for all $n \in \N, r \in \D{m(n)}$ it holds that $h(1^n, r) \in \D{\poly(n)}$ and
\begin{equation} \label{eq:h}
	\left| h(1^n, r) - \E_{j \sim [2^n]} \pr{\Delta(s_j, -r) \le \eps} \right| < 2^{-2n}~.
\end{equation}
Define
\begin{equation}
	\label{eq:phi-def}
	\phi(1^n) = \min \left \{r \in \D{m(n)}: h(1^n, r) < 2 \cdot 2^{-2n} \right \}~.
\end{equation}
(It is not immediately clear that $\phi(1^n)$ is well defined, i.e.\ that there exists $r \in \D{m(n)}$ such that $h(1^n, r) < 2 \cdot 2^{-2n}$, but we will see that this is the case.)

First we prove that $\phi(1^n)$ is well defined, at most $9 \cdot 2^{-2n}$, and $\PSPACE$-computable, and then we prove that $U_n' = e^{2 \pi i \phi(1^n)} U_n$ is stable, thus establishing \cref{clm:712}.
%END_FOLD
\begin{proof}[Proof that $\phi(1^n)$ is well-defined, at most $9 \cdot 2^{-2n}$, and $\PSPACE$-computable.]
	We first show by a counting argument that there exists an $r \in \D{m(n)}$ such that $r \le 9 \cdot 2^{-2n}$ and $\Delta(\theta_j, -r) > 2 \eps$ for all $j$. This holds because on one hand, we have
	\begin{equation*}
		\left| \left\{r \in \D{m(n)} \mid r \le 9 \cdot 2^{-2n} \right\} \right| \ge 9 \cdot 2^{m(n) - 2n}~.
	\end{equation*}
	On the other hand, we have
	\begin{align*}
		&\left| \left\{ r \in \D{m(n)} \mid \exists j: \Delta(\theta_j,-r) \le 2 \eps \right\} \right|
		\leq \sum_{j=1}^{2^n} \left| \left\{ r \in \D{m(n)} \mid \Delta(\theta_j, -r)
		\le 2 \eps \right\} \right| \\
		&\qquad \qquad \leq \sum_{j=1}^{2^n} \Paren{2 \cdot 2\eps \cdot 2^{m(n)} + 1}
		= 8 \cdot 2^{m(n) - 2n} + 2^n
		< 9 \cdot 2^{m(n) - 2n},
	\end{align*}
	where in the last inequality we take $m$ to be a sufficiently large polynomial. This implies the existence of such an $r$.
	
	We now prove that $h(1^n, r) < 2 \cdot 2^{-2n}$, which implies that $\phi(1^n)$ is well defined and at most $9 \cdot 2^{-2n}$ as required. By \eqref{eq:h} it holds that
	\begin{equation*}
		h(1^n, r) < \E_{j \sim [2^n]} \pr{\Delta(s_j, -r) \le \eps} + 2^{-2n},
	\end{equation*}
	so it suffices to prove that $\pr{\Delta(s_j,-r) \le \eps} < 2^{-2n}$ for all $j$. By the definition of $r$ and the triangle inequality for $\Delta$, the event $\Delta(s_j, -r) \le \eps$ implies the event
	\begin{equation*}
		2 \eps
		< \Delta(\theta_j, -r)
		\le \Delta(\theta_j, s_j) + \Delta(s_j, -r)
		\le \Delta(\theta_j, s_j) + \eps,
	\end{equation*}
	i.e.\ $\Delta(\theta_j, s_j) > \eps$. So by \eqref{eq:pe},
	\begin{equation*}
		\pr{\Delta(s_j,-r) \le \eps}
		\le \pr{\Delta(\theta_j, s_j) > \eps}
		\le O\Paren{2^{-m(n)} / \eps}
		< 2^{-2n},
	\end{equation*}
	where the last inequality follows by taking $m$ to be a sufficiently large polynomial. 
	
	Finally, since $h$ is $\PSPACE$-computable, $\phi$ is as well.
\end{proof}
\begin{proof}[Proof that $U_n'$ is stable.]
	Since $U_n'$ has eigenvalues $\exp \Paren{2\pi i (\theta_j + \phi(1^n))}$, the condition that $U_n'$ is stable is equivalent to the condition that $\Delta(\theta_j + \phi(1^n),0) \geq \delta$ for all $j \in [2^n]$, which in turn is equivalent to $\Delta(\theta_j,-\phi(1^n)) \geq \delta$ for all $j$. Suppose for contradiction that there exists a $j^*$ such that $\Delta(\theta_{j^*},-\phi(1^n)) < \delta$. By the definitions of $h$ and $\phi(1^n)$ (i.e.\ \eqref{eq:h} and \eqref{eq:phi-def}), we have
	\begin{align*}
		\pr{\Delta(s_{j^*}, -\phi(1^n)) \le \eps}
		&\le 2^n \, \E_{j \sim [2^n]} \pr{\Delta(s_j, -\phi(1^n)) \le \eps}
		\le 2^n \Paren{2^{-2n} + h(1^n, \phi(1^n))} \\
		&\le 2^n \Paren{2^{-2n} + 2 \cdot 2^{-2n}}
		= e^{-\Omega(n)}.
	\end{align*}
	On the other hand,
	\begin{align*}
		\pr{\Delta(s_{j^*},-\phi(1^n)) > \eps}
		&\le \pr{\Delta(s_{j^*}, \theta_{j^*}) + \Delta(\theta_{j^*},-\phi(1^n)) > \eps}
		\le \pr{\Delta(s_{j^*}, \theta_{j^*}) > \eps - \delta} \\
		&= \pr{\Delta(s_{j^*}, \theta_{j^*}) > 2^{-3n}}
		\le O\Paren{2^{-m(n) + 3n}}
		\le e^{-\Omega(n)},
	\end{align*}
	where the first inequality is by the triangle inequality for $\Delta$, the second is by the definition of $j_*$, the third is by the definitions of $\delta$ and $\eps$, the fourth is by \eqref{eq:pe}, and the last is by taking $m$ to be a sufficiently large polynomial. Therefore
	\begin{equation*}
		1 = \pr{\Delta(s_{j^*}, -\phi(1^n)) \le \eps} + \pr{\Delta(s_{j^*},-\phi(1^n)) > \eps} \le e^{-\Omega(n)},
	\end{equation*}
	which gives the desired contradiction.
\end{proof}
\subsubsection{Proof of \texorpdfstring{\cref{thm:main-7p1}}{Theorem 7.2}}

By \cref{lem:helper-7p1} and the fact that $U_n$ has canonical evolution time at most $\poly(n)$, there exists
\begin{itemize}[leftmargin=*]
	\renewcommand\labelitemi{--}
	\item a $\PSPACE$-computable function $f$ such that $f(1^n) \in \D{\poly(n)}[\poly(n)]$ for all $n$, and
	\item a sequence $(\ket{\psi_n})_{n \in \N} \in \statePSPACE$,
\end{itemize}
such that for all $n$ the reduced state $\rho_n$ on the first $n$ qubits of $\ket{\psi_n}$ is a program state for $U_n$ with evolution time $f(1^n)$ and error $e^{-\Omega(n)}$.

Assume without loss of generality that $q(n) \ge n$. Let $k$ be a sufficiently large polynomial to be chosen later, and let
\begin{equation*}
	\ket{\varphi_n} = \ket{\psi_n}^{\otimes k(n)} \otimes \ket{f(1^n)}~.
\end{equation*}
It is easy to see that $(\ket{\varphi_n})_{n \in \N}$ is in \statePSPACE, so by \cref{thm:prim-result} there exists a $\stateQIP{c',s'}$ verifier $V$ for $(\ket{\varphi_n})_{n \in \N}$ where
\begin{align*}
	&c'(n) = \exp(-q(n)),
	&s'(n,\delta) = \exp \Paren{e^{-q(n)} - q(n) \cdot \delta^4}~.
\end{align*}
Let $V_n$ be the $n$'th quantum verifier circuit in $V$ (i.e.\ the one for synthesizing $\ket{\varphi_n}$). Given $n$, let $\reg R$ be the register such that
\begin{equation*}
	\tr_{\reg R} (\varphi_n) = \rho_n^{\otimes k(n)} \otimes \kb{f\Paren{1^n}}~.
\end{equation*}
The following describes a $\cc{unitaryQIP}[c,s]$ verifier for $(U_n)_{n \in \N}$ as applied to an $n$-qubit input state $\ket\theta$: Simulate $V_n$, if $V_n$ rejects then reject, and if $V_n$ accepts and outputs a state $\sigma$ then accept and output $\alg{LMR}(\theta, \tr_{\reg R}(\sigma))$. (This verifier runs in polynomial time because $V$ and $\alg{LMR}$ do.)

Assume for simplicity that $n$ is sufficiently large. First we argue that
\begin{equation} \label{eq:asdf-sec7}
	\td\Paren{\alg{LMR}(\theta, \tr_{\reg R}(\sigma)), U_n \theta \adj U_n}
	\le \td(\sigma, \varphi_n) + \frac1{5 q(n)^2}
\end{equation}
for all $n$-qubit states $\ket\theta$ and all mixed states $\sigma$ on the same number of qubits as $\varphi_n$; we will use \eqref{eq:asdf-sec7} in the proofs of both completeness and soundness. Fix $n$, write $f = f(1^n), \rho = \rho_n$ etc.\ and let $W = \exp(2\pi i \cdot f \cdot \rho)$. By the triangle inequality,
\begin{align*}
	&\td\Paren{\alg{LMR}(\theta, \tr_{\reg R}(\sigma)), U \theta \adj U} \le \td \Paren{\alg{LMR}(\theta, \tr_{\reg R}(\sigma)), \alg{LMR}(\theta, \tr_{\reg R}(\varphi))} \\
	& \qquad \qquad \qquad + \td\Paren{\alg{LMR}(\theta, \tr_{\reg R}(\varphi)), W \theta \adj W}
	+ \td\Paren{W \theta \adj W, U \theta \adj U}~.
\end{align*}
By the definition of trace distance
\begin{equation*}
	\td\Paren{\alg{LMR}(\theta, \tr_{\reg R}(\sigma)), \alg{LMR}(\theta, \tr_{\reg R}(\varphi))} \le \td(\sigma, \varphi),
\end{equation*}
by the definition of $\reg R$ and \cref{thm:lmr}
\begin{equation*}
	\td\Paren{\alg{LMR}(\theta, \tr_{\reg R}(\varphi)), W \theta \adj W}
	= \td\Paren{\alg{LMR}(\theta, \rho^{\otimes k}, f), W \theta \adj W}
	\le O(f^2/k),
\end{equation*}
and since $\rho$ is a program state for $U$ with evolution time $f$ and error $e^{-\Omega(n)}$
\begin{equation*}
	\td\Paren{W \theta \adj W, U \theta \adj U} \le e^{-\Omega(n)},
\end{equation*}
so it follows that
\begin{equation*}
	\td\Paren{\alg{LMR}(\theta, \tr_{\reg R}(\sigma)), U \theta \adj U}
	\le \td(\sigma, \varphi) + O(f^2/k) + e^{-\Omega(n)}~.
\end{equation*}
Therefore, since $f \le \poly(n)$ there exists a polynomial $k$ such that \eqref{eq:asdf-sec7} holds.

Now we prove completeness. An example of an honest prover is one that simulates an honest prover in the $\stateQIP{c',s'}$ protocol. Then $V_n$ accepts with probability 1 and outputs a state $\sigma$ such that $\td(\sigma, \varphi_n) \le c'(n) \le e^{-\Omega(n)}$, so by \eqref{eq:asdf-sec7} for all $n$-qubit states $\ket \theta$ it holds that
\begin{equation*}
	\td\Paren{\alg{LMR}(\theta, \tr_{\reg R}(\sigma)), U_n \theta \adj U_n}
	\le e^{-\Omega(n)} + \frac1{5q(n)^2}
	< \frac1{q(n)}
	= c(n)~.
\end{equation*}

Finally we prove soundness. Consider an arbitrary prover (which may depend on the verifier's $n$-qubit input state $\ket\theta$), let $\sigma$ denote the output state of $V_n$ conditioned on accepting, and let $\delta = \td\Paren{\alg{LMR}(\theta, \tr_{\reg R}(\sigma)), U_n \theta \adj U_n}$. By \eqref{eq:asdf-sec7} and \cref{lem:4p},
\begin{equation*}
	\td(\sigma, \varphi_n)^4 \ge \delta^4 - \frac45 \cdot \frac1{q(n)^2},
\end{equation*}
so
\begin{align*}
	\pr{\text{verifier accepts}}
	&\le s'(n, \td(\sigma, \varphi_n))
	= \exp\Paren{e^{-q(n)} - q(n) \cdot \td(\sigma, \varphi_n)^4} \\
	&\le \exp\Paren{e^{-q(n)} - q(n) \cdot \delta^4 + \frac45 \cdot \frac1{q(n)}} \\
	&\le \exp\Paren{\frac1{q(n)} - q(n) \cdot \delta^4}
	= s(n, \delta)~. \qedhere
\end{align*}

\subsection{Protocol for general unitaries, but with restricted inputs}

Let $U = (U_n)_{n \in \N}$ denote a family of quantum circuits in \unitaryPSPACE, not necessarily with polynomial action (i.e.\ the unitaries can act nontrivially on the entire Hilbert space). We argue that there is a $\cc{unitaryQIP}$ protocol for $U$ provided that the verifier also receives as input a \emph{succinct description} of a polynomial-dimensional subspace $S$ that contains the input state (this was presented as \Cref{cor:poly-input-intro} in the introduction). By succinct description, we mean that there is a polynomial-space Turing machine $M$ that on input $1^n$, outputs the description of a polynomial-space quantum circuit $R$ that on input $\ket{0} \otimes \ket{\phi}$ outputs $\ket{0} \otimes (\id - \Pi_S) \ket{\phi} + \ket{1} \otimes \Pi_S \ket{\phi}$ (for all $n$-qubit states $\ket\phi$)
where $\Pi_S$ is the projection onto $S$. In other words, the Turing machine $M$ succinctly describes a circuit $R$ that ``recognizes'' states from the subspace $S$. 

The protocol for applying $U$ essentially reduces to using the protocol for polynomial-action unitary families from \cref{sec:poly-action}. Consider an input $(1^n, \ket{\phi}, M)$, where $\ket{\phi}$ is an $n$-qubit state and $M$ is a succinct description of a polynomial-space quantum circuit $R$ that recognizes a polynomial-dimensional subspace $S$ that contains $\ket{\phi}$. Let $S^\prime$ be the $(n+1)$-qubit subspace spanned by $\ket\varphi \ket0$ for $\ket\varphi$ in $S$, and by $\ket\varphi \ket1$ for $\ket\varphi$ in the subspace $U_n S$. Let $V$ be the $(n+1)$-qubit unitary defined by $V \ket\varphi \ket0 = U_n \ket\varphi \otimes \ket1$ and $V \ket\varphi \ket1 = \adj U_n \ket\varphi \otimes \ket0$ (for all $n$-qubit states $\ket\varphi$) and observe that $S^\prime$ is closed under action by $V$. Furthermore, since $R$ and $U_n$ are polynomial-space quantum circuits, there clearly exists a polynomial-space quantum circuit $R^\prime$ that recognizes $S^\prime$. The verifier can run the protocol to synthesize the unitary $V$ (on an input in $S^\prime$) which acts as $U_n \otimes I$ on the subspace $S \otimes \ket0$. The key to this reduction is to show that the program state corresponding to the unitary $V$ can be generated in quantum polynomial space. We sketch an argument below.

The same phase-estimation-based approach can be used, except before applying phase estimation to the $n$-qubit maximally entangled state $\ket{\Phi}$, an ancilla $\ket{0}$ qubit is adjoined and the polynomial-space circuit $R^\prime$ is run on the first $n+1$ qubits of $\ket{0} \otimes \ket{\Phi}$. Then, the first qubit is measured and the circuit post-selects on it being in the state $\ket{1}$. In other words, the maximally entangled state $\ket{\Phi}$ is projected to be supported only on the subspace $S^\prime$, and the resulting entangled state is
\[
\ket{\Phi_{S}^\prime} = \frac{1}{\sqrt{\dim S^\prime}} \sum_k \ket{w_k} \otimes \ket{w_k^*}
\]
where $\{ \ket{w_k} \}$ is a basis for the subspace $S^\prime$ in which $V$ is diagonal (i.e.\ they are eigenvectors of $V$). All other steps of \Cref{alg:Cn}, \Cref{alg:Dn}, and \Cref{alg:En} are the same. Thus, the resulting program states represent the unitary $V$ restricted to the subspace $S^\prime$, and thus are program states for $V$. These program states can be generated in polynomial space because the post-selection probability of projecting $\ket{\Phi}$ to $\ket{\Phi_S}$ is at least $2^{-n}$ (assuming that the subspace $S$ is at least one-dimensional). 

	\section{Multiple Entangled Provers}
\label{sec:multiple}

The class $\cc{QMIP}$ is defined analogously to $\cc{QIP}$, but with multiple provers who may share arbitrarily many entangled qubits. The following theorem characterizes this class:

\begin{thm}[\cite{ji2020mip}] \label{thm:qmip=re}
	$\cc{QMIP} = \cc{RE}$.
\end{thm}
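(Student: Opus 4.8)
The plan is to prove the two inclusions $\cc{RE} \subseteq \cc{QMIP}$ and $\cc{QMIP} \subseteq \cc{RE}$ separately, routing the argument through the class $\cc{MIP}^*$ of languages decidable by a classical polynomial-time verifier interacting with multiple provers who share entanglement. One first reduces $\cc{QMIP}$ to $\cc{MIP}^*$: a quantum verifier interacting with entangled provers can be simulated by a classical one, by using one prover as a ``device'' that is certified via rigidity (self-testing) of a suitable nonlocal game or a self-test for a universal gate set, and then delegating the quantum verification circuit to that certified device. So it suffices to show $\cc{MIP}^* = \cc{RE}$.

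For the easy containment $\cc{MIP}^* \subseteq \cc{RE}$: if $x$ is in the language then, by perfect completeness, there is a finite-dimensional entangled strategy that makes the verifier accept with probability $1$, hence for some dimension $d$ there is a $d$-dimensional strategy achieving acceptance probability within $\eps$ of $1$. A Turing machine can therefore dovetail over all rational descriptions of finite-dimensional strategies, compute a lower bound on the resulting acceptance probability by a grid search / convex optimization, and halt as soon as it finds a strategy beating the soundness threshold. This semi-decides the language. (Note this argument crucially exploits that the quantum value is only lower semicomputable — which is exactly why the answer is $\cc{RE}$ and not a decidable class.)

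For the hard containment $\cc{RE} \subseteq \cc{MIP}^*$ one reduces the halting problem to estimating the quantum value of nonlocal games. The engine is a gap-preserving \emph{compression theorem}: an efficient transformation $G \mapsto G'$ on descriptions of nonlocal games such that $G'$ has exponentially smaller question and answer lengths, $\mathrm{val}^*(G) = 1 \Rightarrow \mathrm{val}^*(G') = 1$ and $\mathrm{val}^*(G) < 1/2 \Rightarrow \mathrm{val}^*(G') < 1/2$, and moreover $G'$ can internally simulate running a Turing machine for doubly-exponentially many steps. Iterating compression and invoking Kleene's recursion theorem to take a fixed point, one obtains from each Turing machine $M$ a game $G_M$ (computable from $M$) with $\mathrm{val}^*(G_M) = 1$ if $M$ halts and $\mathrm{val}^*(G_M) < 1/2$ otherwise; a $\cc{MIP}^*$ verifier for the halting problem just plays $G_M$ with the provers. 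Finally, $\cc{RE} = \cc{MIP}^* = \cc{QMIP}$ and soundness amplification gives the stated (constant-gap) form.

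The main obstacle is the compression theorem. Its proof builds a ``quantum PCP''-style toolkit for nonlocal games: (i) \emph{question reduction / introspection}, in which the provers sample low-entropy encodings of their own questions using shared randomness, cross-checked by self-testing of Pauli measurements on EPR pairs; (ii) \emph{oracularization and answer reduction}, replacing the provers' long answers by probabilistically checkable proofs of proximity that the verifier spot-checks; and (iii) carefully propagating the entanglement bounds and the completeness/soundness gap through each step — delicate because, unlike the classical PCP theorem, one cannot appeal to compactness of strategy spaces and must track finite-dimensional approximations explicitly. The recursion-theorem fixed point and the reductions among $\cc{QMIP}$, $\cc{MIP}^*$ are comparatively routine by contrast.
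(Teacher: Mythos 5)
This statement is not proven in the paper at all: it is imported by citation as an external result (the $\cc{MIP}^* = \cc{RE}$ theorem of Ji--Natarajan--Vidick--Wright--Yuen \cite{ji2020mip}, combined with the equality $\cc{QMIP} = \cc{MIP}^*$ due to Reichardt--Unger--Vazirani), so there is no in-paper argument to compare your write-up against. Judged on its own terms, your outline faithfully reproduces the architecture of the cited proofs: the reduction of a quantum verifier with entangled provers to a classical verifier via rigidity/self-testing; the inclusion $\cc{MIP}^* \subseteq \cc{RE}$ by dovetailing over finite-dimensional strategies, which is exactly the lower semicomputability of the entangled value; and the inclusion $\cc{RE} \subseteq \cc{MIP}^*$ via a gap-preserving compression theorem (question reduction by introspection, answer reduction via PCPs of proximity, explicit tracking of entanglement bounds) closed off by a Kleene recursion-theorem fixed point applied to the halting problem.

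The caveat is that, as a proof, essentially all of the substance is deferred: the compression theorem you invoke is the entire content of a several-hundred-page paper, and the step $\cc{QMIP} = \cc{MIP}^*$ (the ``classical leash'' result) is itself a nontrivial rigidity theorem rather than ``comparatively routine,'' as is making the soundness gap uniform across the recursion. So your proposal is best read as a correct roadmap of the known proof rather than something checkable as written; within the present paper the appropriate treatment is precisely what the authors do, namely cite the result and use it as a black box (here, only to replace the $\QIP = \PSPACE$ subroutine in the state-synthesis protocol by its multiprover analogue).
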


Similarly, we define classes $\cc{stateQMIP}[c,s]$ and $\cc{unitaryQMIP}[c,s]$ analogously to $\stateQIP{c,s}$ and $\cc{unitaryQIP}[c,s]$ respectively, but with multiple provers who may share arbitrarily many entangled qubits. We also define the following analogues of the class $\cc{R}$ of computable languages, where by a ``computable sequence of quantum circuits" we mean a sequence in which the description of the $n$'th circuit can be computed as a function of $n$:

\begin{dfn}[$\cc{stateR}$]
	$\cc{stateR}$ is the class of all sequences $(\ket{\psi_n})_{n \in \N}$ such that each $\ket{\psi_n}$ is a state on $n$ qubits, and for every polynomial $q$ there exists a computable family of general quantum circuits $(C_n)_{n \in \N}$ such that for all $n \in \N$, the circuit $C_n$ takes no inputs and $C_n$ outputs a density matrix $\rho$ such that $\td(\rho, \psi_n) \leq \exp(-q(n))$.
\end{dfn}

\begin{dfn}[$\cc{unitaryR}$]
	$\cc{unitaryR}$ is the class of all computable sequences $(U_n)_{n \in \N}$ of unitary quantum circuits such that each $U_n$ acts on $n$ qubits.
\end{dfn}

We prove a statement identical to \cref{thm:prim-result} but with $\cc{stateR}$ and $\cc{stateQMIP}$ in place of $\cc{statePSPACE}$ and $\cc{stateQIP}$ respectively, and we also prove the converse statement:

\begin{thm} \label{thm:qmip-prim-result}
	$\cc{stateR} = \bigcap_{\textnormal{polynomials $q$}} \cc{stateQMIP}[c_q, s_q]$ for
	\begin{align*}
		&c_q(n) = \exp(-q(n)),
		&s_q(n,\delta) = \exp \Paren{e^{-q(n)} - q(n) \cdot \delta^4}~.
	\end{align*}
\end{thm}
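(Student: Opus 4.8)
The plan is to prove the two inclusions separately, mirroring the structure of the single-prover result (Theorem~\ref{thm:prim-result}) and its partial converse (Theorem~\ref{thm:sec-result}), but replacing $\PSPACE$-style ingredients with their computable (recursively enumerable) analogues. For the forward inclusion $\cc{stateR} \subseteq \bigcap_q \cc{stateQMIP}[c_q,s_q]$, I would retrace the entire argument of \cref{sec:synthesis} essentially verbatim, making the following substitutions. The role of \cref{thm:qip=pspace} ($\QIP = \PSPACE$) is played instead by \cref{thm:qmip=re} ($\cc{QMIP} = \cc{RE}$). The role of the $\PSPACE$ tomography results of \cref{sec:tomography} (Corollaries~\ref{cor:tomog-relative-weight} and~\ref{cor:tomog-phase}) is played by their computable analogues: given $(\ket{\psi_n})_n \in \cc{stateR}$, the conditional probabilities $\cp(1^n,x)$ and the phases $\ph(x)$ defined there are \emph{computable} functions (rather than merely $\PSPACE$-computable), because one can compute a classical description of a sufficiently good approximation of $\ket{\psi_n}$ as a function of $n$ and then carry out the same arithmetic as in the proofs of those corollaries. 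Hence the languages $L_\cp$ and $L_\ph$ are in $\cc{RE} = \cc{QMIP}$, and one obtains $\cc{QMIP}$ verifiers $V_\cp, V_\ph$ (with two entangled provers) to plug into \cref{alg:main}. The protocol, the completeness analysis of \cref{subsec:proof-comp}, and the soundness analysis of \cref{subsec:proof-sound} (including \cref{lem:main-sound,clm:sound-progress,clm:qip-sound} etc.) all go through unchanged, since none of those arguments use anything about the provers beyond the completeness/soundness guarantees of the underlying decision protocol and the fact that the prover(s) act unitarily on their side; the soundness-amplification lemma \cref{lem:stateQIP-amplification} also applies verbatim (it is purely information-theoretic and indifferent to the number of provers). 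This yields $\cc{stateR} \subseteq \cc{stateQMIP}[c_q, s_q]$ for every polynomial $q$, hence the containment in the intersection.

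For the reverse inclusion $\bigcap_q \cc{stateQMIP}[c_q,s_q] \subseteq \cc{stateR}$, I would adapt the proof of \cref{thm:sec-result}. The key point — flagged in the introduction as ``the reason that $\cc{stateQMIP} \subseteq \cc{stateR}$, whereas $\cc{QMIP} \nsubseteq \cc{R}$, relates to the distinction between search and decision problems'' — is that although deciding whether a given $\cc{QMIP}$ protocol has value $\ge c$ or $\le s$ is undecidable, for a $\cc{stateQMIP}$ protocol we do not need to decide anything: we need only \emph{produce} some feasible strategy whose acceptance probability exceeds the soundness threshold, and then read off the output state. Concretely, fix $(\ket{\psi_n})_n \in \bigcap_q \cc{stateQMIP}[c_q, s_q]$ and a $\cc{stateQMIP}[c_q,s_q]$ verifier for it (for a suitable $q$, e.g. $q(n) = n$). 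For each $n$, by completeness there is an honest entangled-prover strategy accepted with probability $1 \ge c_q(n)$. One can enumerate over all finite-dimensional tensor-product (or, more carefully, commuting-operator) strategies of increasing dimension/precision, computing the acceptance probability of each to increasing accuracy, until one finds a strategy accepted with probability strictly greater than $s_q(n, 1/\poly(n))$; such a strategy exists (the honest one) and will be found after finitely many steps, so this search halts, making the whole procedure computable in $n$. By the soundness guarantee, the output state of that strategy (conditioned on accepting) is $1/\poly(n)$-close to $\ket{\psi_n}$; computing a classical description of it and invoking an algorithm like \cref{alg:aaronson} produces a computable family of circuits synthesizing $\ket{\psi_n}$ to within $1/\poly(n)$ (and one gets arbitrarily small polynomial error by choosing $q$ large and repeating). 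This establishes $(\ket{\psi_n})_n \in \cc{stateR}$.

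The main obstacle is the convergence/effectivity argument in the reverse inclusion: one must be careful that acceptance probabilities of entangled-prover strategies can be \emph{computed} (not just approximated from one side) to the required accuracy, and that restricting attention to finite-dimensional strategies does not lose the honest strategy. The cleanest route is to observe that the honest prover strategy guaranteed by completeness is itself finite-dimensional (it is built, via \cref{thm:qmip=re} and the honest-prover construction of \cref{subsec:proof-comp}, from $\cc{QMIP}$ honest provers for the $\cc{RE}$ languages $L_\cp, L_\ph$, which can be taken finite-dimensional on each input), so it suffices to enumerate finite-dimensional strategies; and for finite-dimensional strategies the acceptance probability is a computable function of the (rational approximations to the) strategy's data, so a brute-force search with shrinking tolerance terminates. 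Everything else — the forward direction and the soundness/amplification machinery — is a mechanical transcription of \cref{sec:synthesis} with ``$\PSPACE$'' replaced by ``computable'' and ``$\QIP$'' by ``$\cc{QMIP}$'', and I expect no new difficulties there.
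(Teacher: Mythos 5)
Your forward inclusion is essentially the paper's own argument: replace the $\PSPACE$-computable tomography functions $\cp,\ph$ by merely computable ones, note $L_\cp, L_\ph \in \cc{R} \subseteq \cc{RE} = \cc{QMIP}$ (\cref{thm:qmip=re}), rerun the protocol and analysis of \cref{sec:synthesis} with the $\cc{QMIP}$ verifiers, and observe that \cref{lem:stateQIP-amplification} carries over to $\cc{stateQMIP}$. No issue there.

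The reverse inclusion, as you have written it, has a genuine quantitative gap. You search for a strategy accepted with probability strictly greater than $s_q(n, 1/\poly(n))$ and conclude that the conditional output state is $1/\poly(n)$-close to $\ket{\psi_n}$, claiming ``arbitrarily small polynomial error.'' But membership in $\cc{stateR}$ requires, for \emph{every} polynomial $p$, a computable circuit family whose output is $\exp(-p(n))$-close to $\psi_n$; inverse-polynomial accuracy does not suffice, and taking $q$ larger does not help with your threshold, because the closeness you extract from soundness is governed by the $\delta$ at which you place the acceptance threshold, not by $q$. The fix --- and what the paper does --- is to exploit perfect completeness: with $q(n) = 4p(n)+10$, brute-force over a discretization of the (finite-dimensional) provers until you find one accepted with probability at least $\exp(-e^{-q(n)})$, a threshold that a fine enough approximation of the honest prover meets (the honest prover is accepted with probability $1$), so the search halts. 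Then soundness gives $\exp(-e^{-q(n)}) \le \exp(e^{-q(n)} - q(n)\,\td(\rho,\psi_n)^4)$, hence $\td(\rho,\psi_n) \le (2e^{-q(n)}/q(n))^{1/4} \le \exp(-p(n))$, as required. With that threshold the acceptance probability is essentially $1$, and the paper simply outputs the output register of $V_n \interact P$ conditioned on accepting; your extra detour through a classical description of the output state and an \cref{alg:aaronson}-style synthesis is unnecessary (though not wrong). Your attention to finite-dimensionality of the honest strategy and computability of acceptance probabilities of discretized strategies matches what the paper implicitly assumes, so the only substantive repair needed is the choice of acceptance threshold and the resulting accuracy.
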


\begin{proof}
	First we prove that $\cc{stateR} \subseteq \bigcap_{\textnormal{polynomials $q$}} \cc{stateQMIP}[c_q, s_q]$. That is, given a state family $(\ket{\psi_n})_{n \in \N}$ in $\cc{stateR}$ and a polynomial $q$, we prove that $(\ket{\psi_n})_{n \in \N}$ is in $\cc{stateQMIP}[c_q, s_q]$. Clearly there exist functions $\cp$ and $\ph$ like those defined in \cref{subsec:desc-prot}, except that these functions are merely computable rather than $\PSPACE$-computable. Then the associated languages $L_\cp$ and $L_\ph$ (again, defined in \cref{subsec:desc-prot}) are in $\cc{R}$, and hence admit $\cc{QMIP}$ verifiers $V_\cp$ and $V_\ph$ by \cref{thm:qmip=re} and the fact that $\cc{R} \subseteq \cc{RE}$.\footnote{We remark that with multiple \emph{un}entangled provers, obtaining the appropriate analogous statement appears to be nontrivial.} The rest of the proof is essentially the same as that of \cref{thm:prim-result}, observing that our $\cc{stateQIP}$ soundness amplification lemma (\cref{lem:stateQIP-amplification}) also holds for $\cc{stateQMIP}$ by essentially the same proof.\footnote{However, the analogous statement for multiple \emph{un}entangled provers does \emph{not} obviously hold, because during the first iteration of the soundness amplification procedure the provers could obtain entangled qubits for use in future iterations.}
	
	Now we prove that $\bigcap_{\textnormal{polynomials $q$}} \cc{stateQMIP}[c_q, s_q] \subseteq \cc{stateR}$. Let $(\ket{\psi_n})_{n \in \N}$ be a state family in $\bigcap_{\textnormal{polynomials $q$}} \cc{stateQMIP}[c_q, s_q]$, and let $p$ be an arbitrary polynomial. By the definition of $\cc{stateR}$, it suffices to prove that there exists a computable family of general quantum circuits $(C_n)_{n \in \N}$ such that for all $n \in \N$, the circuit $C_n$ takes no inputs and $C_n$ outputs a density matrix $\rho$ such that $\td(\rho, \psi_n) \leq \exp(-p(n))$. Let $q(n) = 4p(n) + 10$. The circuit $C_n$ does the following, where $V$ is a $\cc{stateQMIP}[c_q, s_q]$ verifier:
	\begin{enumerate}[leftmargin=*]
		\item Brute force over a discretization of the set of all provers for a $\cc{stateQMIP}[c_q, s_q]$ verifier, until finding a prover $P$ that $V_n$ accepts with probability at least $\exp\Paren{-e^{-q(n)}}$.
		\item Output the state $\rho$ produced by $V_n \interact P$ conditioned on accepting.
	\end{enumerate}
	Such a prover $P$ can be found because there exists an ``honest" prover that $V_n$ accepts with probability 1, and there exists an arbitrarily good approximation of the honest prover in the discretization of the set of provers.\footnote{For comparison, the fact that $\cc{QMIP} \subseteq \cc{RE}$ also follows by brute-forcing over a discretization of the set of all provers. But unlike the state synthesis algorithm described above, this $\cc{QMIP} \subseteq \cc{RE}$ algorithm is not guaranteed to terminate, which is why \cref{thm:qmip-prim-result} is not directly analogous to \cref{thm:qmip=re}.} By the soundness guarantee of $\cc{stateQMIP}[c_q, s_q]$ verifiers it holds that
	\begin{equation*}
		\exp\Paren{-e^{-q(n)}}
		\le \PR{\text{$V_n$ accepts $P$}}
		\le s_q\Paren{n, \td\Paren{\rho, \psi_n}}
		= \exp \Paren{e^{-q(n)} - q(n) \cdot \td\Paren{\rho, \psi_n}^4}~.
	\end{equation*}
	Rearranging yields
	\begin{equation*}
		\td\Paren{\rho, \psi_n}
		\le \Paren{\frac{2 e^{-q(n)}}{q(n)}}^{1/4}
		\le 2^{1/4} \exp(-q(n)/4)
		\le \exp(-p(n))
	\end{equation*}
	as required.
\end{proof}

We also prove a statement identical to \cref{thm:poly-action} but with $\cc{unitaryR}$ and $\cc{unitaryQMIP}$ in place of $\cc{unitaryPSPACE}$ and $\cc{unitaryQIP}$ respectively:

\begin{restatable}{thm}{qmip-polyaction}
	\label{thm:qmip-poly-action}
	Let $U = (U_n)_{n \in \N}$ be a family in $\cc{unitaryR}$ with polynomial action, and let $q$ be a polynomial. Then $U \in \cc{unitaryQMIP}[c,s]$ for
	\begin{align*}
		&c(n) = \frac{1}{q(n)},
		&s(n,\delta) = \exp \Paren{\frac{1}{q(n)} - q(n) \cdot \delta^4}~.
	\end{align*}
\end{restatable}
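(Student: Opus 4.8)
The plan is to mimic the proof of \cref{thm:poly-action} (via \cref{thm:main-7p1}) essentially line-by-line, replacing every appeal to polynomial space by an appeal to computability. As in \cref{sec:poly-action}, it suffices to prove the analogue of \cref{thm:main-7p1} with ``canonical evolution time at most $\poly(n)$'' in place of ``polynomial action'', since the reduction from the latter to the former is the observation $t_n = \sum_j \theta_j \le \sum_j \ceil{\theta_j} = a$, which makes no reference to space bounds. So fix $(U_n)_{n \in \N} \in \cc{unitaryR}$ whose $n$'th member has canonical evolution time at most $\poly(n)$, together with a polynomial $q$.

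The first and only substantive new ingredient is the computable analogue of \cref{lem:helper-7p1}: there is a \emph{computable} function $f$ with $f(1^n) \in \D{\poly(n)}[2^n]$ and $f(1^n) \le t_n + e^{-\Omega(n)}$, together with a sequence $(\ket{\psi_n})_{n \in \N} \in \cc{stateR}$ (allowing $\poly(n)$-qubit states, with \cref{thm:qmip-prim-result} holding for this relaxed definition by the same reasoning as for \cref{thm:prim-result}) whose reduced state on the first $n$ qubits is a program state for $U_n$ with evolution time $f(1^n)$ and error $e^{-\Omega(n)}$. Its proof is word-for-word that of \cref{lem:helper-7p1}: one handles the stable case using phase estimation and the circuits $C_n, D_n$ of \cref{alg:Cn,alg:Dn} (now \emph{computable} circuit families, since $(U_n)_n$ is computable), then reduces the general case to the stable case via the circuit $E_n$ of \cref{alg:En} and the phase-shift function $\phi$. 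The only point where \cref{lem:helper-7p1} uses polynomial space is in invoking \cref{tomog-0} to extract $f$ and $\phi$ from acceptance probabilities of $C'_n$ and $E_n$; here I instead use the trivial fact that a computable family of general quantum circuits over a gate set with algebraic entries has computable acceptance probabilities, so a dyadic-rational approximation to $\bra1 C_n(x)\ket1$ to any prescribed precision is computable as a function of $(1^n,x)$. In other words, \cref{tomog-0} (and hence \cref{tomog} and the corollaries of \cref{sec:tomography} needed here) hold verbatim with ``$\PSPACE$-computable'' weakened to ``computable''. Every other estimate in the proof of \cref{lem:helper-7p1} --- the Chernoff-type bounds, \eqref{eq:theta-approx}, \eqref{eq:ttilb}, \eqref{eq:sttil}, the counting argument for $\phi$, the stability analysis --- is unchanged.

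Given this, the rest of the argument copies \cref{sssec:narrow}'s proof of \cref{thm:main-7p1}. Set $\ket{\varphi_n} = \ket{\psi_n}^{\otimes k(n)} \otimes \ket{f(1^n)}$ for a suitable polynomial $k$; then $(\ket{\varphi_n})_n \in \cc{stateR}$, so by \cref{thm:qmip-prim-result} (in place of \cref{thm:prim-result}) there is a $\cc{stateQMIP}[c', s']$ verifier $V$ for $(\ket{\varphi_n})_n$ with $c'(n) = \exp(-q(n))$ and $s'(n,\delta) = \exp(e^{-q(n)} - q(n)\delta^4)$. The $\cc{unitaryQMIP}$ verifier for $(U_n)_n$ on input $\ket\theta$ simulates $V_n$ and, if $V_n$ accepts with output $\sigma$, outputs $\alg{LMR}(\theta, \tr_{\reg R}(\sigma))$, exactly as before. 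Completeness and soundness then follow from the bound \eqref{eq:asdf-sec7} --- whose proof uses only \cref{thm:lmr} and the program-state property, both intact --- together with \cref{lem:4p}, yielding precisely the stated $c$ and $s$.

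The main obstacle is essentially bookkeeping: one must verify that \emph{every} use of polynomial space in \cref{sec:transformations} and in the supporting tomography results of \cref{sec:tomography} can be replaced by computability. As indicated above, the only non-cosmetic point is the computable analogue of \cref{tomog-0}, and this is immediate because exact (hence computable) simulation of a fixed general quantum circuit over an algebraic gate set is possible; once that is in hand the whole chain of arguments goes through unchanged. As with \cref{thm:qmip-prim-result}, this crucially exploits that the provers are \emph{entangled}, so that $\cc{QMIP} = \cc{RE}$ and the $\cc{stateQMIP}$ soundness-amplification lemma are available; the unentangled-prover case would require different ideas.
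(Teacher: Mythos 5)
Your proposal is correct, and its overall skeleton --- reduce polynomial action to bounded canonical evolution time, synthesize program states plus the evolution time via \cref{thm:qmip-prim-result}, and finish with $\alg{LMR}$ and the bound \eqref{eq:asdf-sec7} --- is exactly the paper's. The one place you diverge is the key lemma: you port the entire machinery of \cref{lem:helper-7p1} (phase estimation, the circuits of \cref{alg:Cn,alg:Dn,alg:En}, the stability reduction, and a computable analogue of \cref{tomog-0}) to the computable setting. That does go through, and your justification of the computable analogue of \cref{tomog-0} --- exact/arbitrary-precision classical simulation of a computable circuit family over an algebraic gate set --- is the right observation. The paper, however, simply notes that the computable analogue of \cref{lem:helper-7p1} holds \emph{trivially}: with no space bound, a classical algorithm can diagonalize $U_n$ (a finite circuit with algebraic-entry gates) to within error $e^{-\Omega(n)}$, compute the canonical evolution time $t_n$ and canonical program state $\rho_n$ directly from \eqref{eq:def-rho-t}, and output a computable circuit family preparing a purification of $\rho_n$; no phase estimation, no stability argument, and no tomography analogue are needed. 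So your route is valid but carries substantial unnecessary baggage at that step, while the paper's is a one-line observation; the only genuine care your route requires beyond the paper's is the (correct) remark that acceptance probabilities of computable circuit families are computable.
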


\begin{proof}
	The proof is essentially the same as that of \cref{thm:poly-action}, except that the natural analogue of \cref{lem:helper-7p1} (i.e.\ with ``$\cc{stateR}$" in place of ``$\statePSPACE$" and with ``computable" in place of ``$\PSPACE$-computable") holds trivially, and one should apply \cref{thm:qmip-prim-result} in place of \cref{thm:prim-result}.
\end{proof}

	\printbibliography[heading=bibintoc]
	
\end{document}